\newcommand{\qvdots}{% \vdots for qcircuit
  \raisebox{0.3em}{\ensuremath{\vdots}}%
}
\newcommand{\qvdotss}{% \vdots for qcircuit
  \raisebox{0.7em}{\ensuremath{\vdots}}%
}
\definecolor{myblue}{HTML}{60B3E4}
\DeclareMathOperator{\langyes}{\mathcal{L}_{yes}}
\DeclareMathOperator{\langno}{\mathcal{L}_{no}}
\DeclareMathOperator{\Span}{span}
\DeclareMathOperator{\piinitpair}{\Pi_{\it init}^{\ket{\Phi^{\scaleto{+}{3.5pt}}}}}
\DeclareMathOperator{\piinitzero}{\Pi_{\it init}^{\ket{0}}}
\DeclareMathOperator{\piinitdot}{\Pi_{\it init}^{\ket{\cdot}}}
\DeclareMathOperator{\piinitcopy}{\Pi_{\it init}^{\scaleto{\ket{00,11}}{7pt}}}
\DeclarePairedDelimiter{\norm}{\lvert}{\rvert}
\newcommand{\etal}{\emph{et~al.}\xspace} 
\newcommand{\cupdot}{\mathbin{\mathaccent\cdot\cup}}
\crefname{thm}{Theorem}{Theorems}
\crefname{defn}{Definition}{Definitions}
\newtheorem{thm}{Theorem}
\newtheorem{defn}{Definition}[section]
\newtheorem{remark}{Remark}
\newtheorem{prop}{Proposition}[section]
\newtheorem{claim}{Claim}[section]
\newtheorem{coro}{Corollary}[section]
\newtheorem{lemma}{Lemma}[section]
\newcolumntype{M}[1]{>{\centering\arraybackslash}m{#1}}
\title{Quantum SAT Problems with Finite Sets of Projectors are Complete for a Plethora of Classes}
\author{Ricardo Rivera Cardoso$^{1,}$\thanks{ricardo.rivera@savba.sk} \,, Alex Meiburg$^{2,3, }$\thanks{bqp@ohaithe.re} \,, and Daniel Nagaj$^{1, }$\thanks{daniel.nagaj@savba.sk} \ \\
\hfill \break \\
\normalsize
  \textit{$^1$RCQI, Institute of Physics, Slovak Academy of Sciences, Bratislava, Slovakia} \\ \normalsize
  \textit{$^2$Perimeter Institute for Theoretical Physics, Waterloo, Canada} \\ \normalsize
  \textit{$^3$Institute for Quantum Computing, University of Waterloo}
}
\date{}
\begin{document}

\maketitle

\begin{abstract}
  Previously, all known variants of the Quantum Satisfiability (QSAT) problem---consisting of determining whether a $k$-local ($k$-body) Hamiltonian is frustration-free---could be classified as being either in $\mathsf{P}$; or complete for $\mathsf{NP}$, $\mathsf{MA}$, or $\mathsf{QMA_1}$. Here, we demonstrate new qubit variants of this problem that are complete for $\mathsf{BQP_1}$, $\mathsf{coRP}$, $\mathsf{QCMA}$, $\mathsf{PI(coRP,NP)}$, $\mathsf{PI(BQP_1,NP)}$, $\mathsf{PI(BQP_1,MA)}$, $\mathsf{SoPU(coRP,NP)}$, $\mathsf{SoPU(BQP_1,NP)}$, and $\mathsf{SoPU(BQP_1,MA)}$. Our result implies that a complete classification of quantum constraint satisfaction problems (QCSPs), analogous to Schaefer's dichotomy theorem for classical CSPs, must either include these 13 classes, or otherwise show that some are equal. Additionally, our result showcases two new types of QSAT problems that can be decided efficiently, as well as the first nontrivial $\mathsf{BQP_1}$-complete problem.

  We first show there are QSAT problems on qudits that are complete for $\BQP_1$, $\coRP$, and $\QCMA$. We construct these problems by restricting the finite set of Hamiltonians to consist of elements similar to $H_{init}$, $H_{prop}$, and $H_{out}$, seen in the circuit-to-Hamiltonian transformation. Usually, these are used to demonstrate hardness of QSAT and \textit{Local Hamiltonian} problems, and so our proofs of hardness are simple. We modify these terms to involve high-dimensional data and clock qudits, ternary logic, and either monogamy of entanglement or specific clock encodings to ensure that all Hamiltonians generated with these three elements can be decided in their respective classes. These problems can then expressed in terms of qubits, as we prove the non-trivial fact that any QCSP can be reduced to a qubit problem while maintaining the same complexity---something believed not to be possible classically. The remaining six problems are obtained by considering ``sums'' and ``products'' of the previous seven QSAT problems mentioned here. Before this work, the QSAT problems generated in this way resulted in complete problems for $\mathsf{PI}$ and $\mathsf{SoPU}$ classes that were trivially equal to $\NP$, $\MA$, or $\QMA_1$. We thus commence the study of these new and seemingly nontrivial classes.

  While [Meiburg, 2021] first sought to prove completeness for $\coRP$, $\BQP_1$, and $\QCMA$, we note that his constructions are flawed, leading to incorrect proofs of these statements. Here, we rework these constructions, as well as obtain improvements on the required qudit dimensionality.
\end{abstract}

\newpage

\section{Introduction}

Many of the interesting and puzzling phenomena in many-body physics occurs at the ground state of materials. One way to study quantum systems in this state is through their ground state energy, as this quantity can be used to provide information about some of the physical and chemical properties of the system. It is thus of great interest to calculate or even estimate this quantity. This task is embodied by the $k$-\textsc{Local Hamiltonian} ($k$-LH) problem. Specifically, given a \textit{k-local} ($k$-body) Hamiltonian---an operator of the form $H = \sum_{i} h_i$ where each $h_i$ acts on at most $k$ qubits---and two numbers $a,b \in \mathbb{R}$ with $b - a \geq 1/poly(n)$, this problem consists of distinguishing between the cases where $H$ has an eigenvalue less than $a$ or greater than $b$. Kitaev \cite{kitaev2002classical} showed that $k$-LH with $k \geq 5$ (and later improved to $k \geq 2$ \cite{2lh}) is unlikely to be decided efficiently with a classical or quantum computer. In complexity theory terms, $k$-LH with $k \geq 2$ is $\QMA$-complete.\footnote{The class $\QMA$ can be thought of as the quantum analog of $\NP$, or more accurately $\MA$ since the class has probabilistic acceptance and rejection.}

The LH problem is considered a ``weak'' quantum constraint satisfaction problem (QCSP) as states with energy less than $a$ do not necessarily minimize the energy of each $h_i$. For this reason, LH is often compared to MAX-$k$-SAT instead of the ``strong'' CSP $k$-SAT. Due to the immense importance of SAT in classical complexity and other hard sciences, Bravyi \cite{bravyi2006efficient} defined the \textsc{Quantum} $k$-SAT ($k$-QSAT) problem. Given a set of $k$-local projectors (also referred as \textit{clauses} or \textit{constraints}) and a number $b \in \mathbb{R}$, this problem consists of distinguishing between the cases where there exists a state that simultaneously lies in the null space of all projectors, or for all states, the penalty incurred by violations of the constraints is greater than $b$.\footnote{Alternatively, this problem can be defined with local Hamiltonians instead of projectors, in which case, the problem is equivalent to determining whether the Hamiltonian is frustration-free.} Bravyi showed that $2$-QSAT on qubits is in $\P$ while $k$-QSAT with $k \geq 4$ (and later improved to $k \geq 3$ \cite{gosset2016quantum}) is $\QMA_1$-complete when using the Clifford+T gate set $\mathcal{G}_8 = \{H, {\rm CNOT}, T\}$.\footnote{$\QMA_1$ is the one-sided error variation of $\QMA$ with perfect completeness, i.e.\ instances for which the answer is ``yes'' (in this case frustration-free Hamiltonians) are accepted with certainty. The notation $\mathcal{G}_8$ stems from Ref.\ \cite{cyclotomicset} and denotes the Clifford-cyclotomic gate set of degree of $8$. The reason why it is necessary to specify the gate set for classes with perfect completeness is discussed in \cref{subsubsection:perfect}.}

Interestingly, these two problems have in common that they are in $\P$ for a certain $k$ but appear to become much harder for $k+1$: LH is in $\P$ for $k \leq 1$ and becomes $\QMA$-complete for $k > 1$, while QSAT is in $\P$ for $k \leq 2$ and $\QMA_1^{\mathcal{G}_8}$-complete for $k > 2$. This is not entirely surprising since the Hamiltonians considered in the problems have no restriction other than their locality, and perhaps the difficulty lies in deciding ``unphysical'' Hamiltonians. Following this line of thought, others have considered variations of these problems where the $h_i$ are drawn from more realistic and relevant sets that satisfy some property or correspond to a physical model. To name a few, these may be stoquastic \cite{stoqma}, commuting \cite{lhcommuting}, fermionic \cite{lhfermionic}, bosonic \cite{lhbosonic}, or from models like the Heisenberg \cite{lhheisenberg} and Bose-Hubbard \cite{lhbosehubbard}. In addition, one might also consider placing restrictions on the geometry of the problem \cite{lh2dlattice,lhline8,2dclh,2dclhimprovement,2dstoquasticlh}.

In a landmark result, Cubitt and Montanaro \cite{cubitt2016complexity} showed that any LH problem where the $h_i$ are drawn from a finite set of at most $2$-local qubit Hermitian matrices can be classified as being either in $\P$, $\NP$-complete, $\mathsf{StoqMA}$-complete, or $\QMA$-complete.\footnote{$\mathsf{StoqMA}$ is the class of problems equivalent to estimating the ground state energy of the transverse-field Ising model \cite{timstoqma}.} As decision problems in the latter three classes are not known to be efficiently solvable in either classical or quantum computers, they showed that the only Hamiltonians of this type for which the LH problem can be solved efficiently are those with only $1$-local terms. This is significant, as many relevant Hamiltonians in nature can be approximated by $2$-local Hamiltonians of this type (e.g.\ all those supported on Pauli operators like Heisenberg and Ising spin glass models), and it is then likely that estimating their ground state energy efficiently lies outside of reach. Moreover, their result has led to a much larger repertoire of problems from which to construct reductions and potentially show the complexity of other computational problems.

Prior to our work, all known QSAT problems with finite or infinite sets of local interactions could be classified as being either in $\P$, $\NP$-complete, $\MA$-complete, or $\QMA_1$-complete, but this list is not known to be exhaustive in either case. The fact that QSAT has resisted classification can be attributed to two factors. First, is that since most relevant instances of QSAT can be decided classically (2-QSAT is in $\P$), there is a lack of interest to search for a classification of QSAT problems with $k > 2$. This is unlike in the LH problem where most relevant instances were hard (2-LH is $\QMA$-complete), motivating the study of Cubitt and Montanaro. Second, is the fact that QSAT problems are usually complete for classes that are harder to work with as they seem to depend on gate sets. In this work, it is our goal to concretize the implications that such a theorem may have, and hence motivate its study.

In this work, we show there are $9$ new QSAT problems with a finite set of $\mathcal{O}(1)$-local qubit clauses that are complete for $\BQP_1$, $\coRP$, and $\QCMA$, as well as six new classes that we introduce: $\mathsf{PI(coRP,NP)}$, $\mathsf{PI(BQP_1,NP)}$, $\mathsf{PI(BQP_1,MA)}$, $\mathsf{SoPU(coRP,NP)}$, $\mathsf{SoPU(BQP_1,NP)}$, and $\mathsf{SoPU(BQP_1,MA)}$. Together with the problems 2-SAT, 3-SAT, \textsc{Stoquastic 6-SAT}, and 3-QSAT that are complete for $\P$, $\NP$, $\MA$, and $\QMA_1$-complete, our results imply that a complete classification for strong QCSPs must either include these $13$ classes, or otherwise demonstrate that some of these are equal.\footnote{We refer to $k$-SAT as a QSAT problem since all of its instances can be realized using diagonal projectors.}$^,$\footnote{As defined, \textsc{Stoquastic 6-SAT} and 3-QSAT allow an infinite set of projectors. However, their proofs of hardness use a finite set of projectors, and hence there is a finite set of projectors for which the problems are $\MA$- and $\QMA_1$-complete.} We show that there are several interesting nontrivial relationships between these classes, and so a classification theorem demonstrating that there are fewer than $13$ classes could present exciting results. We thus motivate the search for such a theorem. As a corollary of our result, we show that there are new types of QSAT problems that can be solved efficiently, as well as the first nontrivial problem known to be complete for $\BQP_1$. Finally, as we discuss further in the text, the classes $\mathsf{PI(A,B)}$ and $\mathsf{SoPU(A,B)}$ are rarely mentioned in literature, and on the few occasions they have been considered, it has been for different $\mathsf{A}$ and $\mathsf{B}$ from those shown here \cite{Papadimitriou1984,Cai1988}. Our work then initiates the study of these classes, providing complete problems for each.

\subsection{Summary of results}

The notation used here is given in \cref{subsection:notation}. Our main result establishes that the QSAT problem SLCT-QSAT is $\BQP_1^{\mathcal{G}_8}$-complete. However, as the construction and analysis of this problem is contrived, we first show that the simpler and less optimized version of this problem, LCT-QSAT, is also complete for this class.

\begin{thm} \label{thm:bqplarge}
  The problem {\rm \textsc{Linear-Clock-Ternary-QSAT}} \textnormal{(LCT-QSAT; \cref{defn:lctqsat})} with $4$-local clauses acting on $17$-dimensional qudits is $\BQP_1^{\mathcal{G}_8}$-complete.
\end{thm}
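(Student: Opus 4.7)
The plan is to prove \cref{thm:bqplarge} by establishing hardness and containment separately, where the hardness direction is a largely standard circuit-to-Hamiltonian reduction and the containment direction is the technically demanding piece.

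For $\BQP_1^{\mathcal{G}_8}$-hardness, I would use a Feynman--Kitaev-style reduction. Given a language in $\BQP_1^{\mathcal{G}_8}$ with verifying circuit $U = U_T\cdots U_1$ over the Clifford+T gate set on an input $x$, I would build an LCT-QSAT instance on a history register whose qudits carry both data and a linear clock, such that the admissible clauses encode (i) an initialization projector forcing ancilla qudits to $\ket{0}$ and the input register to $\ket{x}$ at clock time $0$; (ii) propagation projectors $\frac{1}{2}\bigl(\ket{t}\!\bra{t}\otimes I + \ket{t{+}1}\!\bra{t{+}1}\otimes I - \ket{t{+}1}\!\bra{t}\otimes U_t - \ket{t}\!\bra{t{+}1}\otimes U_t^\dagger\bigr)$, expressed in the allowed 4-local, 17-dimensional form; and (iii) an output projector penalizing the accept qubit being $\ket{0}$ at clock time $T$. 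Because $\mathcal{G}_8$ is the gate set fixed for $\BQP_1$ in the circuit-to-Hamiltonian framework, each $U_t$ can be realized by a constant-local projector in the finite clause family, and the history state $\frac{1}{\sqrt{T+1}}\sum_t \ket{t}\otimes U_t\cdots U_1\ket{x,0}$ makes the Hamiltonian frustration-free exactly when the circuit accepts with certainty. The argument that NO instances leave the Hamiltonian frustrated with an inverse-polynomial gap is the standard Kitaev-style spectral analysis.

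For containment in $\BQP_1^{\mathcal{G}_8}$, the key is that the restrictions baked into LCT-QSAT (linear clock, ternary logic, monogamy/specific clock encoding) force every legal instance to admit a \emph{canonical} decoding into a circuit. Concretely, I would argue that in any instance, the clause family is expressive enough only to describe (a) a bounded-depth initialization on a data register, (b) unitary propagation across consecutive clock values, and (c) a single output check, because the linear clock and ternary encoding rule out clauses that could ``glue'' clock values non-sequentially or produce unintended frustration-free states outside the history-state form. A polynomial-time classical preprocessor would (i) read off the sequence of gates $U_1,\ldots,U_T$ from the propagation clauses, (ii) read off the initialization and output projectors, and (iii) feed the resulting circuit to a $\BQP_1^{\mathcal{G}_8}$ verifier that simply runs it on the fixed initial state and measures the output. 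YES instances (frustration-free) accept with probability $1$ by construction of the history state; NO instances are rejected with bounded error by the Kitaev spectral lower bound, which can be amplified by repetition without damaging perfect completeness.

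The main obstacle I anticipate is the containment proof: specifically, showing that no exotic frustration-free state exists outside the intended history-state family. A naive clock encoding admits spurious zero-energy states that mix unrelated clock values, which would break the decoding procedure and push the problem up to $\QMA_1$. The ternary logic together with the monogamy-of-entanglement property of the $17$-dimensional qudit structure is precisely what should rule this out, and the bulk of the technical work will be a careful case analysis of kernel intersections showing that any zero-energy state of a legal Hamiltonian must be the history state of the decoded circuit on the decoded initial data. Once that structural lemma is in place, both directions close cleanly, and the only remaining bookkeeping is verifying that $4$-locality and dimension $17$ actually suffice to host all three clause types in the finite clause family defining LCT-QSAT.
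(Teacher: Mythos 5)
Your hardness direction is essentially the paper's argument: the reduction builds a single linear clock chain with all data qudits initialized, and frustration-freeness tracks acceptance via the history state. (One small slip: the paper never encodes $\ket{x}$ into the initial state---all data qudits start at $\ket{0}$, and $x$ is baked into the circuit $U_x$---but this does not affect correctness.)

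The containment direction has a genuine gap. You frame it as: decode the unique gate sequence $U_1,\ldots,U_T$ from the clauses, then run the resulting circuit. But the monogamy constraints that force a linear clock chain do \emph{not} rule out multiple $\Pi_{prop}$ clauses carrying different unitaries between the same two consecutive clock qudits (``simultaneous propagation''), because such clauses re-use the same Bell-pair subspaces rather than demanding new entanglement. Hence there is no well-defined ``the sequence of gates'' to read off, and a classical preprocessor cannot verify that competing unitaries $U_{t+1,i}\ket{\phi_t}=U_{t+1,j}\ket{\phi_t}$ agree without itself running a quantum circuit. The paper flags this as precisely the flaw in Meiburg's earlier proof and spends a dedicated subroutine (eigenvalue measurement of the offending $\Pi_{prop}$ on the privileged history state, compiled into a Clifford+T circuit) to handle it. Related to this, your account also does not engage with the role of the undefined state $\ket{?}$: a large class of instances (those with uninitialized logical qudits, which would otherwise be $\QMA$-type) must be recognized as trivially satisfiable via a truncated history state, so the classical front-end is doing real case analysis (forks, shared qudits across clock components, truncated chains) and not merely ``reading off a circuit.'' Your structural lemma idea is pointing in the right direction, but as stated it would not close the argument.
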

\noindent
An interesting feature of this problem, and one that may be of independent interest, is that this problem makes clever use of the principle of monogamy of entanglement to strongly constrain the structure of input instances, facilitating the task of deciding whether they are frustration-free.\footnote{This construction is the most faithful to those considered by Meiburg in Ref.\ \cite{meiburg2021quantum}.} Unfortunately, this trick comes at a price of high qudit dimensionality. Our main result shows that by relaxing the constraint on the instance's structure and instead study the instances more closely, we can obtain a similar problem with the same complexity but with reduced qudit dimensionality.

\begin{thm} \label{thm:bqpsmall}
  The problem {\rm \textsc{Semilinear-Clock-Ternary-QSAT}} \textnormal{(SLCT-QSAT; \cref{defn:slctqsat})} with $4$-local clauses acting on $6$-dimensional qudits is $\BQP_1^{\mathcal{G}_8}$-complete.
\end{thm}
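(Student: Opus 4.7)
The plan is to establish both hardness and containment for SLCT-QSAT with respect to $\BQP_1^{\mathcal{G}_8}$. For hardness, I would follow the abstract's blueprint: given a $\BQP_1^{\mathcal{G}_8}$ verifier circuit $U = U_T \cdots U_1$ on $n$ qubits, I construct an SLCT-QSAT instance using analogues of the standard $H_{init}$, $H_{prop}$, and $H_{out}$ terms embedded into the 6-dimensional qudits via the ternary-logic/semilinear-clock encoding. Each gate $U_t \in \mathcal{G}_8 = \{H,{\rm CNOT},T\}$ induces a 4-local propagation projector whose null space is spanned by the corresponding Feynman history state; $H_{init}$ pins the input register to $\ket{0}^{\otimes n}$ at clock value $0$, and $H_{out}$ penalizes rejection at the final clock value. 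Because $\mathcal{G}_8$ is the Clifford-cyclotomic set of degree $8$, the propagation coefficients are exact algebraic numbers, so YES-instances of $\BQP_1^{\mathcal{G}_8}$ yield a genuine zero-energy history state, while NO-instances have strictly positive ground energy by the usual Kitaev clock argument.

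For containment, my plan is to give a $\BQP_1^{\mathcal{G}_8}$ algorithm that decides frustration-freeness on \emph{arbitrary} SLCT-QSAT instances, not merely the ones arising from the hardness construction. Without the monogamy-of-entanglement trick of LCT-QSAT, I cannot immediately force the input hypergraph to look like a circuit-to-Hamiltonian instance; instead, I would perform a classical preprocessing that identifies the clock structure implicit in the chosen finite set of 4-local clauses. Concretely, the semilinear-clock projectors partition the qudits into a data sub-register (ternary) and a clock sub-register (of small constant size), and the only clauses that can coexist frustration-free are those that respect valid clock transitions. I would then argue that the hypergraph of clock clauses decomposes into a union of chains identifiable in polynomial time, and that on each chain the surviving data-register clauses simulate a concrete $\mathcal{G}_8$ circuit acting on $\ket{0}$-initialized inputs. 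Frustration-freeness of the whole instance reduces to asking that every extracted circuit accepts with probability $1$, which is precisely a polynomial number of $\BQP_1^{\mathcal{G}_8}$ queries and may be answered by a single uniform circuit with perfect completeness.

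The main obstacle will be the structural lemma underlying containment: with only $6$ dimensions per qudit, the instance enjoys less rigidity than in the LCT-QSAT construction, so there is a genuine risk of ``parasitic'' frustration-free states in which data and clock labels mix, or in which chains overlap in unexpected ways. My plan is to enumerate the admissible local transitions between the $6$ basis states under all four-tuple subsets of the fixed finite clause set, and to use the ternary encoding to show that any superposition crossing different clock values is penalized by some propagation projector unless it matches the Feynman form. Once I have characterized the admissible configurations, the reduction to a $\BQP_1^{\mathcal{G}_8}$ circuit evaluation is direct, and the matching of perfect completeness on both sides follows because zero-energy ground states of the instance correspond bijectively to history states of accepting executions of the extracted circuit(s).
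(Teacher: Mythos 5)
Your high-level architecture matches the paper's: standard circuit-to-Hamiltonian hardness, and for containment a classical preprocessing step that extracts circuit-like structure followed by a $\BQP_1^{\mathcal{G}_8}$ subroutine. Your hardness sketch is essentially right. However, the containment argument has genuine gaps that the paper's Section 4 works hard to close and that you do not anticipate.

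First, you claim that ``the hypergraph of clock clauses decomposes into a union of chains identifiable in polynomial time.'' This is false for SLCT-QSAT. Without the monogamy constraints of LCT-QSAT, a satisfiable clock component can be an arbitrary web of $\Pi_{prop}$ clauses, with forks, cycles, and mixed directions. The actual statement the paper proves (Lemmas \ref{lemma:linearchain}, \ref{lemma:direction}, \ref{lemma:outsidetacc}, \ref{lemma:outsidedot}) is much more delicate: only the \emph{truly active clock chains} (TACCs) --- the segments forced by $\Pi_{start}$/$\Pi_{stop}$ to carry an active marker $\ket{a}$ --- must be one-dimensional and unidirectional, and this holds because $\Pi_{clock,D}$ penalizes any branching of the active spot. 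The remainder of the clock component can be completely arbitrary because it is trivially satisfied by setting all surplus clock qudits to $\ket{r}$, which leaves the attached logical qudits unconstrained. Your enumeration strategy does not recover this distinction between the ACC and its TACC, nor does it use the undefined state $\ket{?}$, whose entire purpose is to truncate TACCs when a free logical qudit is encountered, so that the problem does not escalate to $\QMA_1$-hardness.

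Second, and more seriously, you do not account for \emph{simultaneous propagation clauses}: two or more $\Pi_{prop}$ clauses, with different gates or acting on different data qudits, sharing the same pair of clock qudits. Nothing in the local structure of SLCT-QSAT rules these out, and they cannot be triaged by classical preprocessing since satisfiability depends on whether $U_{t+1,i}\ket{\phi_t}=U_{t+1,j}\ket{\phi_t}$ for the actual intermediate state of the computation (Proposition \ref{prop:simultaneousprop}). This is precisely the subtlety that Meiburg's original construction missed, and it forces the quantum subroutine to run equality checks via the circuit of \cref{fig:measurecircuit}, with care taken to restart so as to preserve perfect completeness. Your proposal also omits the analysis of logical qudits shared across distinct sub-instances (Lemmas \ref{lemma:sharedhprop}--\ref{lemma:mutualsingle}), which is needed before a sub-instance can be declared trivially satisfiable. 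Without these ingredients, the claim that ``frustration-freeness reduces to a polynomial number of $\BQP_1^{\mathcal{G}_8}$ queries'' is not supported, and in particular perfect completeness of the decision procedure does not follow from the history-state bijection you invoke.
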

\indent
Recently, among many other interesting results, Rudolph \cite{rudolph2024onesided} demonstrated that $\BQP_1^{\mathcal{G}_{2^i}} = \BQP_1^{\mathcal{G}_{2^j}}$ for any $i,j \in \mathbb{N}$ (Theorem 3.4). In other words, any problem in $\BQP_1$ using a Clifford-cyclotomic gate set of degree $2^i$ can be perfectly simulated with one of degree $2^j$ for all $i,j \in \mathbb{N}$.

\begin{coro}
  The problems {\rm LCT-QSAT} and {\rm SLCT-QSAT} are $\BQP_1$-complete with any gate set $\mathcal{G}_{2^l}$ with $l \in \mathbb{N}$.
\end{coro}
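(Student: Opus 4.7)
The plan is simply to chain the completeness statements of Theorems \ref{thm:bqplarge} and \ref{thm:bqpsmall} with Rudolph's theorem quoted in the surrounding text. Observe that the gate set $\mathcal{G}_8$ is by definition $\mathcal{G}_{2^3}$, so Rudolph's Theorem 3.4 asserts that $\BQP_1^{\mathcal{G}_8} = \BQP_1^{\mathcal{G}_{2^l}}$ as classes of languages for every $l \in \mathbb{N}$. Since $\BQP_1$-completeness (whether for LCT-QSAT or SLCT-QSAT) is just the conjunction of (i) membership in the class and (ii) hardness under polynomial-time many-one reductions, and since both are statements about the class as a set of languages, each of these two properties transfers verbatim along the equality $\BQP_1^{\mathcal{G}_8} = \BQP_1^{\mathcal{G}_{2^l}}$.

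Concretely, I would write: for membership, a QSAT instance that lies in $\BQP_1^{\mathcal{G}_8}$ by Theorem \ref{thm:bqplarge} (resp.\ Theorem \ref{thm:bqpsmall}) lies in $\BQP_1^{\mathcal{G}_{2^l}}$ because the two classes coincide; for hardness, any language $L \in \BQP_1^{\mathcal{G}_{2^l}}$ also belongs to $\BQP_1^{\mathcal{G}_8}$ and thus reduces in polynomial time to LCT-QSAT (resp.\ SLCT-QSAT) by the completeness already established. The reductions themselves are classical polynomial-time maps, so no dependence on the circuit gate set appears in them, and nothing needs to be redone.

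There is no real obstacle here, since all the heavy lifting (simulating one Clifford-cyclotomic gate set by another with perfect completeness) is done inside Rudolph's theorem. The only mild point of care is to make explicit that $\mathcal{G}_8 = \mathcal{G}_{2^3}$ falls within the hypotheses of that theorem, and that completeness is preserved under taking a class equality, which is immediate from the definition of many-one reducibility. A one- to two-sentence proof should suffice in the final text.
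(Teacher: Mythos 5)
Your proposal is correct and takes exactly the same route the paper intends: the paper presents the corollary as an immediate consequence of \cref{thm:bqplarge,thm:bqpsmall} together with Rudolph's equality $\BQP_1^{\mathcal{G}_{2^i}}=\BQP_1^{\mathcal{G}_{2^j}}$, and offers no further argument. Your observation that both membership and hardness transfer verbatim along the class equality, because the reductions are classical and gate-set independent, is precisely the (unstated) content.
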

\indent
Subsequently, by performing slight modifications to the clauses of SLCT-QSAT, we also obtain $\QCMA$-complete and $\coRP$-complete problems:

\begin{thm} \label{thm:qcma}
  The problem {\rm \textsc{Witnessed SLCT-QSAT}} \textnormal{(\cref{defn:wslctqsat})} with $4$-local clauses acting on $8$-dimensional qudits is $\QCMA$-complete.
\end{thm}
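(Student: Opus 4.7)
My plan is to prove \cref{thm:qcma} by a direct adaptation of the proof of \cref{thm:bqpsmall}, differing only in how a classical witness is accommodated. For containment in $\QCMA$, I would have Merlin send a classical string $y$ specifying the initial values of the designated witness qudits; Arthur then runs the $\BQP_1$ algorithm underlying \cref{thm:bqpsmall} on the SLCT-QSAT instance obtained by fixing the witness register to $y$. Perfect completeness and bounded soundness are inherited directly from that algorithm.

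For hardness, I would reduce from an arbitrary $\QCMA$ language $L$ whose verifier $V$ can be assumed to have perfect completeness on ``yes'' instances (via standard amplification arguments together with the choice of Clifford-cyclotomic gate set), acting on a classical witness $y \in \{0,1\}^m$ plus an all-zero ancilla. I apply the SLCT-QSAT circuit-to-Hamiltonian transformation to $V$, but drop the initialization constraint on the witness register so that Merlin's witness can live there. The qudit dimension is enlarged from $6$ to $8$, and the two extra basis states per site are used to add new clauses that record redundant copies of the witness value at several clock times, synchronized with the witness register via propagation-style clauses.

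The main obstacle is to force the witness to be classical using only $4$-local projectors, since no individual projector can exclude a superposition of $\ket{0}$ and $\ket{1}$. My plan is to exploit the monogamy-of-entanglement mechanism already deployed in the LCT-QSAT proof: by requiring the witness value to be in a specific correlated state with several auxiliary sites simultaneously, no-cloning forces every frustration-free state to decompose as a convex combination of history states indexed by \emph{classical} witnesses. Frustration-freeness of the constructed instance is then equivalent to the existence of some $y$ with $\Pr[V(x,y)=1]=1$, which is exactly the $\QCMA$ acceptance condition. The remaining bookkeeping---verifying that the new clauses remain $4$-local on the $8$-dimensional qudits, that the SLCT-QSAT clock structure is preserved, and that the monogamy constraints genuinely eliminate non-classical witnesses---closely parallels the proof of \cref{thm:bqpsmall} and requires no fundamentally new ingredients.
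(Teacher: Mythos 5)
Your containment sketch is essentially on the right track (Merlin sends a classical string for the witness register, Arthur fixes those values and runs the SLCT\nobreakdash-QSAT decision procedure, with some extra classical checks for conflicting witness assignments), but the hardness side contains a genuine conceptual error that would prevent the proof from going through as proposed.

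You set up the problem as ``force the witness to be classical using monogamy of entanglement / no-cloning.'' This is not what the paper does, and, more fundamentally, it is not something any finite set of projectors can do: the satisfying set of a QSAT instance is always a linear subspace, so if two basis states $\ket{0}$ and $\ket{1}$ on the witness register both extend to frustration-free states, so does any superposition of them. You cannot ``exclude superpositions'' of satisfying witnesses. The paper's $\piinitcopy$ clause does not attempt this. Its projector is $(I - \ket{00}\bra{00} - \ket{11}\bra{11}) \otimes \ket{a}\bra{a}$, which \emph{permits} arbitrary superpositions $\alpha_0 \ket{00} + \alpha_1\ket{11}$ of the logical--witness pair; no Bell-pair or monogamy constraint is imposed (contrast this with the $\coRP$ construction, whose $\piinitpair$ clause \emph{does} demand $\ket{\Phi^+}$, and whose analysis does invoke monogamy, but precisely because it wants to remove all witness freedom, not preserve it). The actual mechanism by which \textsc{Witnessed SLCT-QSAT} lands in $\QCMA$ rather than $\QMA_1$ is that the witness qudits are correlated with the logical register only at clock time $0$ and are never acted on by any $\Pi_{prop}$ clause thereafter. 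This makes the history state block-diagonal in the witness computational basis, so the output probability and each simultaneous-propagation check decompose as convex combinations $\sum_b \|\alpha_b\|^2 \,(\cdot)_b$. If a quantum witness gives acceptance probability $1$, every classical basis state $\ket{b}$ with $\alpha_b \neq 0$ also gives acceptance probability $1$, which is what makes a classical witness suffice. This is an argument of the form ``a good quantum witness implies a good classical witness,'' not ``only classical witnesses are allowed.''

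Relatedly, your proposal to ``record redundant copies of the witness value at several clock times, synchronized via propagation-style clauses'' is not the paper's construction and is unnecessary. The two extra dimensions (going from $6$ to $8$) form a dedicated orthogonal witness subspace $\mathrm{span}\{\ket{0_W},\ket{1_W}\}$, and a single new $3$-local initialization clause $\piinitcopy$ ties one witness qudit to one logical qudit at time $0$; no copies, no extra propagation, and no monogamy on the witness side. The soundness analysis then hinges on the same Kitaev-style geometric-lemma bound as in the $\BQP_1$ case, with the modification that the time-$0$ nullspace of $H_1$ is $\ket{0}^{\otimes q}\otimes\mathrm{span}\{\ket{b}\otimes\ket{b} : b \in\{0,1\}^p\}$, and the overlap with $\mathcal{S}_{1,3}$ is bounded by $\epsilon$ precisely because the $\QCMA$ soundness promise only bounds acceptance for \emph{classical} witnesses $b$ --- this is where the block-diagonality is used again. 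To fix your proposal you should replace the ``force classicality via monogamy'' step with this convex decomposition argument and drop the propagation-style witness-copy machinery.
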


\begin{thm} \label{thm:corp}
  The problem {\rm \textsc{Classical SLCT-QSAT}} \textnormal{(\cref{defn:cslctqsat})} with $5$-local clauses acting on $8$-dimensional qudits is $\coRP$-complete.
\end{thm}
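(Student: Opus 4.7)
I would establish both a containment and a hardness direction, parallel to the proofs of \cref{thm:bqpsmall} and \cref{thm:qcma}. Based on the naming, I expect \textsc{Classical SLCT-QSAT} to be the restriction of SLCT-QSAT in which the permissible propagation clauses encode only classically reversible gates (e.g.\ CNOT, Toffoli, Fredkin) together with a distinguished ``coin-flip'' clause that prepares a fresh ancilla in $H\ket{0} = \ket{+}$. The extra qudit states (from $6$ to $8$) and extra locality (from $4$ to $5$) presumably carry the tag that marks a qudit as a random-bit ancilla, and allow the propagation clause to couple the data qudit, the coin ancilla, and the clock registers in one constraint so that the history state remains literally a uniform superposition of classical probabilistic trajectories.

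\textbf{Hardness ($\coRP$-hard).} Starting from a $\coRP$ verifier $V$ with perfect completeness and soundness error $\leq 1/2$, I would write $V$ as a polynomial-size reversible classical circuit $C$ acting on the input, workspace, and a polynomial-sized register of coin ancillae, each produced by a single Hadamard at the beginning of the computation. I would then instantiate the Feynman--Kitaev-style Hamiltonian from \cref{thm:bqpsmall} with the propagation clauses drawn from the restricted classical gate library, using $H_{init}$ to pin the input, workspace, and coin registers to $\ket{x, 0^*, +^*}$, and $H_{out}$ to penalize rejection at the final clock step. Perfect completeness of $V$ guarantees that the uniform history state over all branches of $C(x)$ is a zero-energy state of the Hamiltonian for any YES instance; conversely, for a NO instance, at least half of the $H_{out}$ branches project out, producing a strictly positive frustration energy because the SLCT structure forces any ground state to be a convex combination of these per-branch history states.

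\textbf{Containment (in $\coRP$).} For the upper bound, I would leverage the rigidity already proved for SLCT-QSAT: propagation plus clock clauses force any frustration-free state to be a history state of some admissible computation. Since the only admissible gates are classical reversible ones plus Hadamards on freshly initialized ancillae, the underlying ``computation'' is a classical randomized circuit $C$ of polynomial size that can be read off from the instance in polynomial time. Deciding frustration-freeness then reduces to checking whether $C$ accepts on every branch of its random coins, and the natural $\coRP$ procedure is to sample a uniformly random assignment to the coin ancillae, deterministically simulate $C$, and accept iff the $H_{out}$ check is passed; a YES instance passes with probability $1$, and the promise gap ensures that a NO instance fails with probability at least $1/2$ (after standard amplification).

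\textbf{Main obstacle.} The hardness direction will be routine once the restricted clause set is specified. The delicate step is the containment: I need to show that no quantum ground state can ``slip past'' the restriction and encode genuinely non-classical branching, i.e.\ that the monogamy/clock-encoding tricks used in \cref{thm:bqpsmall} continue to rigidly project onto the classical-history subspace even with the additional qudit dimensions. Arguing this while keeping the clause locality at $5$ on $8$-dimensional qudits is where I expect most of the technical work to go, because the rigidity proof must handle the interaction of the coin-flip clause with the propagation and clock clauses without reintroducing any interference pathway that would bump the problem back up into $\BQP_1$.
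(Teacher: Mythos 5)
Your high-level plan (reversible classical gates in the propagation clauses, randomness introduced at initialization, circuit-to-Hamiltonian for hardness, read off the circuit and sample for containment) matches the paper's strategy, and you correctly flag the containment direction as where the technical substance lives. However, your specific mechanism for introducing randomness does not match the paper and has a genuine gap. You propose a ``coin-flip'' clause that prepares a fresh ancilla in $H\ket{0}=\ket{+}$, i.e.\ you allow a Hadamard somewhere in the circuit (either as a special $\Pi_{prop,H}$ clause or as a distinguished initialization to $\ket{+}$). The problem with this is precisely the obstacle you name at the end: once a Hadamard appears anywhere in the clause set, an adversarial instance can sprinkle Hadamard clauses throughout a propagation chain rather than only at time zero on fresh ancillae, and the ``classical'' restriction on the gate library no longer rigidly pins the computation to a classical randomized circuit. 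There is no local syntactic check that forces the Hadamards to the front, so frustration-freeness of a general instance would no longer be decidable by sampling coins and simulating a reversible circuit --- you could in principle be handed an instance encoding an arbitrary $\BQP_1$ computation. You acknowledge this danger ("reintroducing any interference pathway that would bump the problem back up into $\BQP_1$") but do not supply the mechanism that closes it.

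The paper closes this exactly by \emph{not} introducing any non-classical gate at all. Instead of a $\ket{+}$ state on a single qudit, the construction uses a new $3$-local clause $\Pi_{init}^{\ket{\Phi^+}}$ that initializes a \emph{logical} qudit and a separate \emph{auxiliary} qudit (a fresh $2$-dimensional subspace --- this is where the jump from $6$ to $8$ dimensions comes from, not a "random-bit tag") to the Bell state $\ket{\Phi^+}$, while the propagation gate set $\{X, (X\otimes X\otimes X)\text{Toffoli}\}$ remains purely classical. Writing $\ket{\Phi^+}^{\otimes p}=2^{-p/2}\sum_b\ket{b}_L\otimes\ket{b}_{\mathrm{Aux}}$ and noting the auxiliary register is never touched by any gate, the history state factors into mutually orthogonal classical branches labelled by $b$, which is what makes the classical sampling check for both the $\Pi_{out}$ clauses and (crucially) the simultaneous propagation clauses sound. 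Additionally, the auxiliary qudit gives access to monogamy-of-entanglement arguments that immediately rule out pathological instances with clashing $\Pi_{init}^{\ket{\Phi^+}}$ clauses; a plain $\ket{+}$ clause would offer no such local obstruction. Two other small corrections: the $5$-locality results from the Toffoli gate being $3$-local on logical qudits (plus two clock qudits), not from the coin-flip clause, and the hardness reduction goes through an intermediate class $\coRP_q$ (defined with $\ket{\Phi^+}$ inputs) shown equal to $\coRP$, rather than directly instantiating the construction of $\cref{thm:bqpsmall}$ with a different gate set.
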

\indent
Then, using a similar application of monogamy of entanglement as in LCT-QSAT, we demonstrate that we can reduce any QCSP on qudits to another one on qubits.

\begin{thm}[informal] \label{thm:dtob}
  Every QCSP $\mathcal{C}$ on qudits is equivalent in difficulty to some other QCSP $\mathcal{C}'$ on qubits.
\end{thm}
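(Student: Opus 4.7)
The theorem has two directions, of which the easy one is immediate: any qubit QCSP is literally a qudit QCSP with $d=2$. The content is the reverse embedding. My plan is to fix any qudit QCSP $\mathcal{C}$ with finite projector family $S$ acting on $d$-dimensional qudits, set $m = \lceil \log_2 d \rceil$, and choose a fixed isometry $\iota \colon \mathbb{C}^d \hookrightarrow (\mathbb{C}^2)^{\otimes m}$. Each qudit of an instance is replaced by an $m$-qubit ``block''; each $k$-local clause $\Pi \in S$ is replaced by the pushed-forward projector $\Pi' = \iota^{\otimes k}\,\Pi\,(\iota^{\otimes k})^\dagger$ on $km$ qubits; and every block is additionally covered by a single copy of the ``type-check'' projector $P_{\it type} = I_{2^m} - \iota\iota^\dagger$ acting on its $m$ qubits. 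The qubit QCSP $\mathcal{C}'$ then uses the finite clause family $S' = \{P_{\it type}\} \cup \{\Pi' : \Pi \in S\}$, which has constant locality $km$ since $d$, and hence $m$, is a constant of $\mathcal{C}$.

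\textbf{Correctness.} For yes-instances, any frustration-free qudit state $|\psi\rangle$ lifts to $\iota^{\otimes n}|\psi\rangle$, annihilated by every $P_{\it type}$ (it lies in the code subspace) and every $\Pi'$ (by construction of the pushforward). Conversely, any ground state of the qubit instance must be annihilated by every $P_{\it type}$, forcing every block into $\iota(\mathbb{C}^d)$, so the state has the form $\iota^{\otimes n}|\phi\rangle$ and pulls back to a frustration-free qudit state. For no-instances, any qubit state either violates a type-check projector (contributing at least unit energy) or lies entirely in the code subspace and descends to a qudit state whose energy against the original instance exceeds the promised threshold $b$; in either case the qubit energy is bounded below by $\min(1,b) = 1/\mathrm{poly}(n)$. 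Since the reduction is polynomial time in both directions, it gives equivalence of difficulty across the complexity classes considered in the paper.

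\textbf{Main obstacle.} The principal subtlety, and where the monogamy-of-entanglement mechanism developed for LCT-QSAT earns its keep, is preserving strong completeness for classes with perfect acceptance such as $\BQP_1^{\mathcal{G}_8}$ and $\QMA_1^{\mathcal{G}_8}$. There one cannot tolerate any slack in the ground-state energy: the isometry $\iota$ must be realizable over the cyclotomic ring dictated by the gate set, and $P_{\it type}$ must be a bona fide qubit projector expressible as part of the finite clause family. The monogamy trick offers a way to avoid an explicit $P_{\it type}$ altogether: adjacent blocks share carefully chosen maximally-entangled pairs so that any departure from the code subspace is already penalized by an existing clause, exactly as in the LCT-QSAT construction that constrains instance structure without additional clause types. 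I expect the main technical hurdle to be verifying that such an encoding can be executed uniformly in $d$ while simultaneously respecting the gate-set constraints of every one of the thirteen classes; once this is in place, the two-way polynomial reduction outlined above closes the argument.
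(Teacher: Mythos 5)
Your construction handles only half of what the theorem requires, and the half it misses is exactly the technical core the paper isolates. The reduction must go both ways: every instance of the qubit QCSP $\mathcal{C}'$ must be mapped to an equivalent instance of $\mathcal{C}$, not just the other way around. But a generic instance of your $\mathcal{C}'$ is an arbitrary collection of clauses drawn from $S' = \{P_{\it type}\}\cup\{\Pi'\}$, and nothing in your clause family prevents these clauses from being applied to qubits \emph{inconsistently}: one $\Pi'$ may treat qubits $\{1,\dots,m\}$ as a block in order $(1,\dots,m)$ while another $\Pi'$ treats $\{2,\dots,m+1\}$ as a block, or uses the same qubits in a different order; a block touched by some $\Pi'$ need not carry any $P_{\it type}$ check at all. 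Your correctness argument---``every $P_{\it type}$ forces every block into $\iota(\mathbb{C}^d)$, so the state has the form $\iota^{\otimes n}|\phi\rangle$''---silently assumes a globally consistent block structure that the definition of $\mathcal{C}'$ does not supply. These ill-structured instances are precisely the ``much harder instances'' that the paper warns arise from the naive encoding, and they are why the standard $\log d$-qubits-per-qudit translation fails to be a reduction.

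The paper's proof is almost entirely devoted to closing this hole, and it does so by \emph{replacing} your $P_{\it type}$ with projectors that have a uniqueness/monogamy property so strong that any two clauses which clump the same qubits differently (overlapping but not identical, or identical but permuted) are automatically frustrated: the entanglement-qubit term $T_2$ whose null space is a non-product GHZ-like state with site-dependent $R_X(k\theta)$ rotations, and then the $4$-qubit term $H_{4\to 2}$ whose four null vectors $\ket{\psi_1},\dots,\ket{\psi_4}$ are verified (by exhausting all $840$ overlapping placements on seven qubits) to have a strictly positive minimum under any inconsistent overlap. The reverse map $g:\mathcal{C}'\to\mathcal{C}$ then only needs an $\AC^0$ consistency check and can output a fixed unsatisfiable clause whenever the check fails. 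You gesture at this in your ``main obstacle'' paragraph, but you frame it as a refinement needed for perfect-completeness classes, when in fact it is needed for \emph{every} class: without it, your construction simply does not produce an equivalent QCSP, because the reverse direction of the equivalence is not even defined on ill-structured instances of $\mathcal{C}'$. To close the gap you would have to (i) drop $P_{\it type}$ in favor of clauses with the monogamy/uniqueness property, (ii) prove that property, and (iii) exhibit the two $\AC^0$ (or at least $\P$) reduction circuits $f$ and $g$, including the consistency-check logic in $g$.
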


\begin{coro} \label{coro:dtob}
  Together, \cref{thm:bqpsmall,thm:qcma,thm:corp} and \cref{thm:dtob} imply:
  \begin{enumerate}
    \item There is a $\BQP_1^{\mathcal{G}_8}$-complete {\rm QSAT} problem on qubits with $48$-local interactions.
    \item There is a $\QCMA$-complete {\rm QSAT} problem on qubits with $48$-local interactions.
    \item There is a $\coRP$-complete {\rm QSAT} problem on qubits with $60$-local interactions.
  \end{enumerate}
\end{coro}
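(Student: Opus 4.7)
The plan is to obtain all three claims by straightforward composition of \cref{thm:dtob} with the qudit-level completeness theorems \cref{thm:bqpsmall}, \cref{thm:qcma}, and \cref{thm:corp}. Concretely, I would break the argument into (i) verifying that the equivalence supplied by \cref{thm:dtob} preserves completeness for the three classes involved, and (ii) tracking the locality blow-up incurred by the qudit-to-qubit encoding.

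For (i), I expect ``equivalent in difficulty'' in the informal statement of \cref{thm:dtob} to unfold into deterministic polynomial-time many-one (Karp) reductions in both directions between a QCSP $\mathcal{C}$ on qudits and a corresponding QCSP $\mathcal{C}'$ on qubits. Since Karp reductions preserve both membership and hardness for each of $\BQP_1^{\mathcal{G}_8}$, $\QCMA$, and $\coRP$, the qubit problems produced by the reduction would inherit the respective completeness properties from their qudit originals immediately.

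For (ii), I would extract the encoding parameters of \cref{thm:dtob}. One expects each qudit (of dimension up to some fixed constant) to be represented by a fixed-size block of $r$ qubits---most naturally via a monogamy-of-entanglement encoding analogous to that used in LCT-QSAT---together with local projectors enforcing validity of the encoded subspace. Under such an encoding a $k$-local qudit clause becomes at most a $(kr)$-local qubit clause. Matching the advertised locality bounds $48 = 4 \cdot 12$ and $60 = 5 \cdot 12$ forces $r = 12$, so the construction of \cref{thm:dtob} must use $12$ qubits per qudit uniformly for dimensions $d \leq 8$. Plugging the qudit localities $4$, $4$, $5$ and dimensions $6$, $8$, $8$ from \cref{thm:bqpsmall}, \cref{thm:qcma}, and \cref{thm:corp} into the conversion then yields the three claimed qubit localities.

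The subtlest step is the \emph{exactness} of the reduction for the gate-set-sensitive classes $\BQP_1^{\mathcal{G}_8}$ and $\coRP$: frustration-freeness must be preserved on the nose, and the qubit projectors used to enforce the encoded subspace must themselves be admissible (e.g.\ expressible over a stabilizer-like structure compatible with $\mathcal{G}_8$), with no approximate leakage out of the encoded block. Assuming the monogamy-based encoding indicated in the statement of \cref{thm:dtob} satisfies these requirements---which is plausible given that similar stabilizer-style encodings drive the LCT-QSAT construction---the corollary follows directly by composing each base theorem with \cref{thm:dtob}.
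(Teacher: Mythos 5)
Your high-level plan matches the paper's: compose \cref{thm:dtob} with the qudit completeness results and track the locality inflation. But two points need fixing.

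First, the derivation of the expansion factor $r=12$ is circular. You read $r$ off from the claimed answers ($48 = 4\cdot 12$, $60 = 5\cdot 12$) and then use $r$ to ``confirm'' those answers. A genuine proof has to open up \cref{thm:dtob} (formally, \cref{thm:q2formal}) and read off the actual conversion rule, which is a two-stage encoding: a $d$-qudit becomes $\lceil\log_2 d\rceil$ many $4$-qudits (via the GHZ-like $T_2$ term that enforces order), and each $4$-qudit becomes $4$ qubits (via the $H_{4\to 2}$ projector). The factor is therefore $4\lceil\log_2 d\rceil$ under $\P$-reductions, which evaluates to $12$ for $d=6$ and $d=8$ because $\lceil\log_2 6\rceil=\lceil\log_2 8\rceil=3$. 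Your extrapolation that the construction uses ``$12$ qubits per qudit uniformly for dimensions $d\leq 8$'' is false: for $d\leq 4$ the factor is $8$, not $12$. The coincidence that both $6$ and $8$ give $12$ is what makes the three numbers work out with a single $r$.

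Second, you should note that \cref{thm:q2formal} actually advertises two locality bounds: $4\cdot 2^{\lceil\log_2(\lceil\log_2 d\rceil)\rceil}k$ under $\AC^0$-reductions (which would give $64$ and $80$, not $48$ and $60$), and the tighter $4\lceil\log_2 d\rceil k$ under $\P$-reductions. The corollary implicitly uses the latter; since all three target classes contain $\P$, and the reductions are Karp reductions in both directions, this is sound, but it is worth stating explicitly rather than leaving it to be inferred from the arithmetic. Your concerns in paragraph (i) about completeness transfer and the $\mathcal{G}_8$-exactness issue are legitimate and not circular; the paper addresses them implicitly by making the encoding Hamiltonians ($T_1$, $T_2$, $H_{4\to 2}$) exact projectors so frustration-freeness transfers on the nose, and by routing membership through the reverse map $g$ composed with the qudit-level decider.
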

\noindent
We refer to these problems by the same name as before, except that we now add a subindex to represent that the problem refers to the qubit version, e.g.\ SLCT-QSAT$_2$ is the QSAT problem that results from the reduction of SLCT-QSAT.

Finally, there is a notion of ``products'' and ``sums'' for both CSPs and QCSPs. Specifically, these are the \textit{direct product} ``$\otimes$'' and \textit{direct sum} ``$\oplus$'' (\cref{defn:prodqcsps,defn:sumqcsps}). Using this, we show that there are six new QSAT problems that are complete for classes $\mathsf{PI(A,B)}$ and $\mathsf{SoPU(A,B)}$, where $\mathsf{A}$ and $\mathsf{B}$ are themselves complexity classes. $\mathsf{PI(A,B)}$ stands for the \textit{pairwise intersection of classes} (\cref{defn:PI}), and $\mathsf{SoPU(A, B)}$ for the \textit{star of pairwise union of classes} (\cref{defn:SoPU}). Roughly, these two classes correspond to the sets of problems that can be expressed as the intersection and union (respectively) of a problem in $\mathsf{A}$ and a problem in $\mathsf{B}$.\footnote{These classes are not to be confused with $\mathsf{A} \cap \mathsf{B}$ and $\mathsf{A} \cup \mathsf{B}$. $\mathsf{A} \cap \mathsf{B}$ corresponds to the set of problems that are in both $\mathsf{A}$ and $\mathsf{B}$, while $\mathsf{A} \cup \mathsf{B}$ corresponds to those that are in either $\mathsf{A}$ or $\mathsf{B}$.} We show:

\begin{thm} \label{thm:pisopu}
  Let {\rm ``$\otimes$''} and {\rm ``$\oplus$''} denote the {\rm direct product} and {\rm direct sum} for quantum constraint satisfaction problems. Pairwise combinations of the four {\rm QSAT} problems---{\rm \textsc{3-SAT}}, {\rm \textsc{Classical SLCT-QSAT}$_2$}, {\rm \textsc{SLCT-QSAT}$_2$}, and {\rm \textsc{Stoquastic} $6$-SAT}---yield the following complete problems:
  \begin{enumerate}
    \item {\rm \textsc{Classical SLCT-QSAT$_2$}} $\otimes$ {\rm \textsc{3-SAT}} is $\mathsf{PI(coRP,NP)}$-complete.
    \item {\rm \textsc{Classical SLCT-QSAT$_2$}} $\oplus$ {\rm \textsc{3-SAT}} is $\mathsf{SoPU(coRP,NP)}$-complete.
    \item {\rm \textsc{SLCT-QSAT$_2$}} $\otimes$ {\rm \textsc{3-SAT}} is $\mathsf{PI(BQP_1^{\mathcal{G}_8},NP)}$-complete.
    \item {\rm \textsc{SLCT-QSAT$_2$}} $\oplus$ {\rm \textsc{3-SAT}} is $\mathsf{SoPU(BQP_1^{\mathcal{G}_8},NP)}$-complete.
    \item {\rm \textsc{SLCT-QSAT$_2$}} $\otimes$ {\rm \textsc{Stoquastic 6-SAT}} is $\mathsf{PI(BQP_1^{\mathcal{G}_8},MA)}$-complete.
    \item {\rm \textsc{SLCT-QSAT$_2$}} $\oplus$ {\rm \textsc{Stoquastic 6-SAT}} is $\mathsf{SoPU(BQP_1^{\mathcal{G}_8},MA)}$-complete.
  \end{enumerate}
\end{thm}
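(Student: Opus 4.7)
The plan is to prove each of the six statements by the standard two-part completeness argument: membership in the claimed class, followed by hardness. All six cases share essentially the same template, driven by the observation that the direct product ``$\otimes$'' of QCSP instances naturally produces an intersection at the language level, while the direct sum ``$\oplus$'' produces a union. I would first recall from \cref{defn:prodqcsps,defn:sumqcsps} that an instance of $\mathcal{C}_1 \otimes \mathcal{C}_2$ is a pair $(I_1, I_2)$ which is a ``yes'' instance iff both $I_1$ and $I_2$ are ``yes'' in their respective QCSPs, and that $\mathcal{C}_1 \oplus \mathcal{C}_2$ is ``yes'' iff at least one of the two components is.

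For \emph{membership}, I would use that each component QSAT problem lies inside the claimed class. For example, for claim~3, \textsc{SLCT-QSAT}$_2$ is in $\BQP_1^{\mathcal{G}_8}$ (\cref{thm:bqpsmall,coro:dtob}) and \textsc{3-SAT} is in $\NP$. Given an instance of their direct product, split it in polynomial time into its two components, run the $\BQP_1$ protocol on the first and the $\NP$ verifier on the second, and accept iff both accept; this places the problem into $\mathsf{PI(BQP_1^{\mathcal{G}_8},NP)}$ by \cref{defn:PI}. The direct-sum case is analogous: accepting iff either component accepts places $\mathcal{C}_1 \oplus \mathcal{C}_2$ inside $\mathsf{SoPU(A,B)}$ by \cref{defn:SoPU}. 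The other five cases proceed identically by plugging in the corresponding membership facts for \textsc{Classical SLCT-QSAT}$_2$ ($\coRP$) and \textsc{Stoquastic 6-SAT} ($\MA$).

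For \emph{hardness}, let $L$ be an arbitrary language in $\mathsf{PI(A,B)}$; by definition, $L = L_A \cap L_B$ for some $L_A \in \mathsf{A}$ and $L_B \in \mathsf{B}$. I would then invoke polynomial-time Karp reductions from $L_A$ and $L_B$ to the corresponding complete QSAT problems $\mathcal{C}_1$ and $\mathcal{C}_2$, guaranteed by the $\mathsf{A}$- and $\mathsf{B}$-completeness of these problems (\cref{thm:bqpsmall,thm:corp,coro:dtob} and the classical completeness of \textsc{3-SAT}/\textsc{Stoquastic 6-SAT}), producing instances $I_A$ and $I_B$. The pair $(I_A, I_B)$ is then an instance of $\mathcal{C}_1 \otimes \mathcal{C}_2$ that is a ``yes'' instance iff $x \in L$. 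The argument for SoPU is identical, with intersection replaced by union and direct product by direct sum. The six claims of \cref{thm:pisopu} then follow by inserting the appropriate $(\mathsf{A}, \mathsf{B})$ pair.

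The main obstacle is not the intersection/union correspondence itself but rather ensuring compatibility with the \emph{perfect completeness} of the $\BQP_1^{\mathcal{G}_8}$ and $\coRP$ components. The reductions to \textsc{SLCT-QSAT}$_2$ and \textsc{Classical SLCT-QSAT}$_2$ must preserve the property that ``yes'' instances produce exactly frustration-free Hamiltonians (i.e.\ acceptance probability exactly $1$ on the quantum or probabilistic side), and the definitions of $\mathsf{PI}$ and $\mathsf{SoPU}$ must be phrased to respect this one-sided error structure so that the gate-set subscript $\mathcal{G}_8$ is propagated consistently. I would also need to verify that the disjoint-union construction realizing $\mathcal{C}_1 \otimes \mathcal{C}_2$ and $\mathcal{C}_1 \oplus \mathcal{C}_2$ as QSAT problems keeps the set of clauses finite and the locality bounded by a constant. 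These are mild technicalities, but they are the only places where care is required; the rest of the proof is a direct unpacking of the definitions of $\otimes$, $\oplus$, $\mathsf{PI}$, and $\mathsf{SoPU}$.
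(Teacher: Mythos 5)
Your treatment of the direct product case is essentially correct in spirit, but there is a genuine gap in your treatment of the direct sum, and the gap is exactly the subtlety that motivates the paper's definition of $\mathsf{SoPU}$. You write that an instance of $\mathcal{C}_1 \oplus \mathcal{C}_2$ ``is a pair $(I_1,I_2)$ which is `yes' iff at least one of $I_1$, $I_2$ is,'' and conclude that direct sum produces the pairwise union of languages. That is not what \cref{defn:sumqcsps} says. The direct sum is a single QCSP on the Hilbert space $D_1 \oplus D_2$, and a general instance has many connected components (sets of variables linked by overlapping clauses). Within one connected component every variable must live entirely in $D_1$ or entirely in $D_2$ (a satisfying state cannot ``mix and match''), so each component is an OR of an $L_1$-problem and an $L_2$-problem. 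But across components you must satisfy all of them simultaneously, an AND. The resulting language is therefore of the form $(dL_1\,|\,dL_2)^\ast$, an unbounded-fanin AND of fanin-two ORs, not $L_1 \cup L_2$. This is precisely why the paper introduces $\mathsf{SoPU}$ rather than a naive pairwise-union class; if your description were accurate, $\mathsf{SoPU}$ would be unnecessary and claims~2, 4, 6 would concern a different (and strictly smaller) class.

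The gap propagates to your hardness argument for the direct-sum case. Embedding an instance $I_A$ of $\mathcal{C}_1$ into $\mathcal{C}_1 \oplus \mathcal{C}_2$ via the canonical injection is \emph{not} automatically satisfiability-preserving: once the domain is enlarged to $D_1 \oplus D_2$, every constraint of $\mathcal{C}_1$ also accepts any assignment living in $D_2$, so the embedded instance is trivially satisfiable. The paper handles this with a specific construction---it pads the variable set to a multiple of $v_U$ and tiles in copies of a fixed unsatisfiable $L_2$-instance $U$, so that no variable may escape to the $D_2$ side. Your proposal omits this step entirely. It is also worth noting that the paper's membership argument for $\oplus$ requires splitting into connected components (logspace, via \cite{Reingold2008}) and checking closure under delimited concatenation, not merely ``run both protocols and OR the outputs''; and even for the product the instance is not literally a pair but must be \emph{projected} (\cref{defn:directprod}). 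Your emphasis on gate-set and perfect-completeness bookkeeping is reasonable but secondary; the central missing piece is the connected-component structure of the direct sum and the resulting $\mathsf{SoPU}$ (rather than union) characterization.
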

\indent
Finally, given that the QSAT problems in \cref{coro:dtob,thm:pisopu} consist of finite sets of projects with $\mathcal{O}(1)$-local qubit clauses, and similarly $2$-SAT, $3$-SAT, \textsc{Stoquastic} 6-SAT, and $3$-QSAT (which are respectively in $\P$, $\NP$-complete, $\MA$-complete and $\QMA_1^{\mathcal{G}_8}$-complete), our results imply that:

\begin{coro} \label{coro:classification}
  A complete classification theorem for strong QCSPs with $\mathcal{O}(1)$-local clauses acting on qubits must either include $13$ classes, or otherwise indicate that some of these are equal.
\end{coro}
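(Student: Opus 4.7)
The plan is a direct bookkeeping argument from the previous theorems: enumerate the thirteen candidate complexity classes, identify for each a QSAT problem that is both complete for it and fits the form ``strong QCSP with $\mathcal{O}(1)$-local clauses on qubits,'' and then observe that any Schaefer-style classification must either contain all thirteen as distinct entries or identify some of them as equal. Nothing new has to be proved; the content of the corollary is purely a collation of earlier results.

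First I would list the thirteen classes: $\P$, $\NP$, $\MA$, $\QMA_1^{\mathcal{G}_8}$, $\BQP_1^{\mathcal{G}_8}$, $\QCMA$, $\coRP$, together with the six new pairwise-intersection and star-of-pairwise-union classes $\mathsf{PI(coRP,NP)}$, $\mathsf{PI(BQP_1^{\mathcal{G}_8},NP)}$, $\mathsf{PI(BQP_1^{\mathcal{G}_8},MA)}$, $\mathsf{SoPU(coRP,NP)}$, $\mathsf{SoPU(BQP_1^{\mathcal{G}_8},NP)}$, $\mathsf{SoPU(BQP_1^{\mathcal{G}_8},MA)}$. For the first four, completeness on qubits is supplied by the standard problems \textsc{2-SAT}, \textsc{3-SAT}, \textsc{Stoquastic 6-SAT}, and \textsc{3-QSAT} (in the finite-projector form noted earlier in the introduction). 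For the next three, we invoke the three items of \cref{coro:dtob}, which provide $\BQP_1^{\mathcal{G}_8}$-, $\QCMA$-, and $\coRP$-complete QSAT problems on qubits with locality $48$, $48$, and $60$ respectively. For the remaining six, we invoke the six items of \cref{thm:pisopu}, all stated as qubit problems.

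Second I would verify that every problem in the list is genuinely a strong QCSP with $\mathcal{O}(1)$-local qubit clauses. For the four classical/standard cases this is immediate. For the problems based on SLCT-QSAT, \textsc{Witnessed SLCT-QSAT}, and \textsc{Classical SLCT-QSAT} we invoke \cref{thm:dtob}, which packages each qudit QCSP into a qubit one with only a constant blow-up in locality. The $\mathsf{PI}$- and $\mathsf{SoPU}$-complete problems are built from the qubit versions via the direct product $\otimes$ and direct sum $\oplus$ (\cref{defn:prodqcsps,defn:sumqcsps}); both operations preserve ``strong QCSP on qubits'' and $\mathcal{O}(1)$-locality by construction, since they combine instances on disjoint registers with at most a constant increase in clause size.

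Finally, the classification conclusion is essentially tautological once the list is in place: a Schaefer-style classification assigns each QSAT family a complexity class for which it is complete, so if any two of the thirteen listed classes are distinct, both must appear as possible answers. The only genuine subtlety, which I would flag as the main obstacle, is to pin down the precise notion of ``classification'' being used---in particular, that it is a classification by completeness under the reductions implicit in \cref{coro:dtob,thm:pisopu}, so that the complete problems exhibited above really do force each of the thirteen classes (or a witness that some coincide) to appear in the theorem. No further technical machinery is required.
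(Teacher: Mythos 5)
Your proposal matches the paper's own reasoning exactly: the corollary is justified by the paragraph immediately preceding it, which collates the thirteen complete problems (\textsc{2-SAT}, \textsc{3-SAT}, \textsc{Stoquastic 6-SAT}, \textsc{3-QSAT}, plus the three from \cref{coro:dtob} and the six from \cref{thm:pisopu}) and observes that each is a strong QCSP with $\mathcal{O}(1)$-local qubit clauses. Your additional remark about needing to pin down the precise meaning of ``classification'' under the relevant reductions is a fair caveat, but the paper treats the conclusion as an immediate collation just as you do.
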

\indent
The relationship between the $13$ classes mentioned here is shown in \cref{fig:inclusions}.

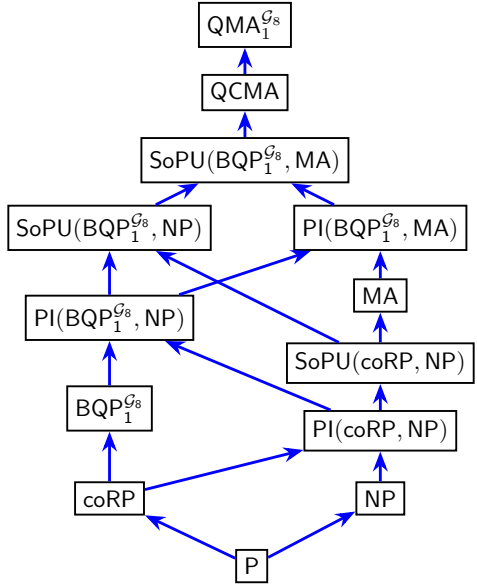
\begin{figure}
  \begin{center}
    \scalebox{0.9}{
      \begin{tikzpicture}
        \begin{scope}[every node/.style={thick,draw}]
          \node (P) at (2.1,0) {$\mathsf{P}$};
          \node (coRP) at (0,1) {$\mathsf{coRP}$};
          \node (NP) at (4,1) {$\mathsf{NP}$};
          \node (BQP) at (0,2.33) {$\mathsf{BQP_1^{\mathcal{G}_8}}$};
          \node (PIcoRPNP) at (4,2) {$\mathsf{PI(coRP,NP)}$};
          \node (SoPUcoRPNP) at (4,3) {$\mathsf{SoPU(coRP,NP)}$};
          \node (MA) at (4,4) {$\mathsf{MA}$};
          \node (PIBQPNP) at (0,3.67) {$\mathsf{PI(BQP_1^{\mathcal{G}_8},NP)}$};
          \node (SoPUBQPNP) at (0,5) {$\mathsf{SoPU(BQP_1^{\mathcal{G}_8},NP)}$};
          \node (PIBQPMA) at (4,5) {$\mathsf{PI(BQP_1^{\mathcal{G}_8},MA)}$};
          \node (SoPUBQPMA) at (2,6) {$\mathsf{SoPU(BQP_1^{\mathcal{G}_8},MA)}$};
          \node (QCMA) at (2,7) {$\mathsf{QCMA}$};
          \node (QMA) at (2,8) {$\mathsf{QMA}_1^{\mathcal{G}_8}$};
        \end{scope}

        \begin{scope}[>={Stealth[blue]},
          every edge/.style={draw=blue,very thick}]
          \path [->] (P) edge (coRP);
          \path [->] (P) edge (NP);
          \path [->] (coRP) edge (BQP);
          \path [->] (coRP) edge (PIcoRPNP);
          \path [->] (NP) edge (PIcoRPNP);
          \path [->] (PIcoRPNP) edge (SoPUcoRPNP);
          \path [->] (SoPUcoRPNP) edge (MA);
          \path [->] (BQP) edge (PIBQPNP);
          \path [->] (PIcoRPNP) edge (PIBQPNP);
          \path [->] (SoPUcoRPNP) edge (SoPUBQPNP);
          \path [->] (PIBQPNP) edge (SoPUBQPNP);
          \path [->] (SoPUBQPNP) edge (SoPUBQPMA);
          \path [->] (MA) edge (PIBQPMA);
          \path [->] (PIBQPNP) edge (PIBQPMA);
          \path [->] (PIBQPMA) edge (SoPUBQPMA);
          \path [->] (SoPUBQPMA) edge (QCMA);
          \path [->] (QCMA) edge (QMA);
        \end{scope}
      \end{tikzpicture}
    }
  \end{center}
  \caption{The classes for which we now have a complete strong QCSP, and their corresponding inclusions. In this work, we show completeness for quantum complexity classes with perfect completeness using the Clifford+T gate set $\mathcal{G}_8 = \{H, T, {\rm CNOT}\}$. Rudolph's \cite{rudolph2024onesided} result further strengthens ours by showing that $\BQP_1^{\mathcal{G}_8} = \BQP_1^{\mathcal{G}_{2^l}}$ for all $l \geq 1$. We discuss some of the inclusions in this figure in \cref{subsection:complexity} and \cref{section:sumproduct}.}
  \label{fig:inclusions}
\end{figure}

\subsection{Proof techniques}
\textbf{QSAT problems for $\BQP_1$, $\coRP$, and $\QCMA$} \\
\noindent
We keep the discussion general since each QSAT problem differs slightly from the others, and we think that a more detailed explanation should be delayed after the preliminary section. The first subsection of each section discussing a QSAT problem provides this more detailed and complete overview.

In summary, we prove \cref{thm:bqplarge,thm:bqpsmall,thm:qcma,thm:corp} by constructing these problems to consist of elements similar to $H_{init}$, $H_{prop}$, and $H_{out}$  seen in the circuit-to-Hamiltonian transformation originating in Refs.\ \cite{kitaev2002classical,feynman1986quantum}. Typically, these terms are local operators acting on a ``data'' and ``clock'' qubit register to encode the initialization, (unitary) propagation, and readout steps of a quantum circuit. In other words, one can encode a quantum circuit into the ground state of the local Hamiltonian $H = \sum_i (H_{init})_i + \sum_j (H_{prop})_j + \sum_k (H_{out})_k$. Typically, the quantum circuit decides a computational problem. If the circuit always accepts positive instances, i.e.\ measures the ``answer'' data qubit an obtains outcome ``1'' with certainty, $H$ is frustration-free. This transformation is hence useful to show hardness of QSAT problems. Because our operators, albeit slightly different, still implement these ideas, it follows that our QSAT problems are hard for $\BQP_1$, $\coRP$ and $\QCMA$. The difficulty---and the reason why modifications are necessary---is to show that these problems are contained within these classes.

Ideally, any instance formed from a polynomial number of these interaction terms should encode circuit evaluation, making frustration-freeness dependent solely on the readout. Then, the algorithm that decides the instance simply prepares the initial state, evaluates the circuit, and measures the answer qubit, accepting only if the outcome is ``1''. This procedure shows that the problem is in $\BQP_1$. If the circuit is instead a reversible classical circuit, the problem is in $\coRP$. In practice, however, the input instances generally do not encode a circuit, but rather form complex webs of interactions, making frustration-freeness hard to decide.

To address this, we redefine $H_{init}$, $H_{prop}$, and $H_{out}$ to act on high-dimensional data and clock \textit{qudit} registers, allowing us to embed ternary logic onto the data particles, as well as enforce a linear clock structure within the input instances. For example, in our first QSAT problem, we allow the clock particles to have two additional 2-dimensional subspaces and demand that neighboring clock qudits should be maximally entangled. By monogamy of entanglement, it is evident that each clock qudit can be entangled with at most two other clock qudits. Instances that violate monogamy are then not frustration-free, and so the other instance must have the linear structure mentioned, and hence express the evaluation of a quantum circuit.\\

\medbreak
\noindent
\textbf{Qubit QCSPs} \\
\noindent
We prove the non-trivial fact that any QCSP on qudits can be reduced to a QCSP of equal difficulty on qubits. While the standard mapping of decomposing a $d$-qudit into $\lceil \log_2 d \rceil$ qubits works for demonstrating that the resulting qubit instance does have the same satisfiability status as its parent qudit instance, it is not clear that the opposite holds. This is because any generic qubit instance can now have constraints that in the qudit instance would not exist. In short, the standard qudit-to-qubit mapping is not surjective in the set of problem instances, and the resulting QCSP includes some much harder instances. We show how to address this with a more careful mapping.\\

\medbreak
\noindent
\textbf{Product and sum QSAT problems} \\
\noindent
Our product and sum constructions ultimately derive from tensor products and tensor sums of Hilbert spaces. To prove that the resulting problems have the correct complexity classes, we show that satisfying states always respect the product (resp.\ sum) structure, and that conversely we can construct states in the product (resp.\ sum) problems from solutions in the original problems. There are some mild technical conditions used to ensure that we can convert between the product (resp.\ sum) QCSP and the originals efficiently, i.e.\ the mapping can be performed in $\P$.

\subsection{Discussion and open questions}

In \cref{thm:bqpsmall,thm:corp}, we show that there are two new types of QSAT problems that can be decided efficiently with a quantum or probabilistic classical computer. Unfortunately, the Hamiltonians used in these problems are artifacts built to achieve these results and do not immediately correspond to Hamiltonians of interest, even in the qubit case. Recently, interesting developments in the fields of quantum chemistry \cite{baiardi2022explicitly}, high-energy physics \cite{petiziol2021quantum} and nuclear physics \cite{busnaina2025nativethreebody,chuang2024halo,cruz2020probing} have shown that $3$- or $4$-local Hamiltonians are sometimes necessary to explain emergent physics. The QSAT problems for these Hamiltonians are not immediately tractable as they have locality $k >3$. It would thus be exciting to determine if these Hamiltonians, or others, fall within these complexity classes. We hope that having demonstrated that such problems exist, our results inspire others to search for more relevant cases.

Another interesting observation about \textsc{Classical SLCT-QSAT} is that while the problem is defined with quantum Hamiltonians and has a highly entangled zero-energy ground state, it is complete for a classical complexity class. It would then be interesting to try and reformulate this problem in purely classical terms. Aharanov and Grilo \cite{twocombinatorial} recently undertook a similar task for the $\MA$-complete \textsc{Stoquastic QSAT} problem and showed that it is equivalent to a combinatorial optimization problem.

\cref{thm:bqpsmall,thm:qcma,thm:corp} shows that there are seven complexity classes that have strong QCSPs as complete problems. Specifically, we introduce three new such problems to the already existing four. These seven classes belong to a larger set classes corresponding to polynomial-time computation and verification. Arguably, this set is sometimes considered the ``most natural'' or ``relevant'' set of complexity classes. We can differentiate the classes within this set by three defining properties: (1) type of circuit---classical or quantum; (2) type of witness state---no witness, classical witness, quantum witness; (3) type of error---no error, one-sided error, and two-sided error. Ranked in order of apparent difficulty, the seven classes mentioned here are (ignoring the gate set issue for the moment):

\begin{enumerate}
  \item $\P$: Classical, no witness, no error.
  \item $\mathsf{coRP}$: Classical, no witness, one-sided error.
  \item $\NP$: Classical, classical witness, no error.
  \item $\MA$: Classical, classical witness, two-sided error (and one-sided error).
  \item $\BQP_1$: Quantum, no witness, one-sided error.
  \item $\QCMA$: Quantum, classical witness, two-sided error (and one-sided error).
  \item $\QMA_1$: Quantum, quantum witness, one-sided error.
\end{enumerate}
\noindent
Are there any obvious omissions in this list for which we can expect another complete QCSP problem? By considering combinations of their three defining properties, there are a total of $18$ classes. Without the seven (technically 9 as $\MA = \MA_1$ and $\QCMA = \QCMA_1$) included here, the remaining $9$ classes without known QSAT complete problems are: (1-3) $\{c\}_{\rm circuit} \times \{q\}_{\rm witness} \times \{\varnothing, 1, 2\}_{\rm error}$, (4-6)
$\{q\}_{\rm circuit} \times \{\varnothing,c,q\}_{\rm witness} \times \{\varnothing\}_{\rm error}$, (7-8) $\{c,q\}_{\rm circuit} \times \{\varnothing\}_{\rm witness} \times \{\rm 2\}_{\rm error}$, and (9) $\{q\}_{\rm circuit} \times \{q\}_{\rm witness} \times \{\rm 2\}_{\rm error}$. The first set corresponds to classical classes with a quantum witness, however, we cannot have such classes and can then be discarded. The second set corresponds to quantum classes with no error. Little is known about these classes as they appear to be extremely difficult to work with. This is mainly due to the perfect completeness requirement which implies that no matter which instance (and witness in the case of verification classes), the circuit is never fooled into accepting. This is also why in the classification above we let one-sided error classes default to those with perfect completeness and bounded soundness instead of also considering the possibility of perfect soundness and bounded completeness. While it may be possible that there are complete QCSPs for these classes, we conjecture that this is very unlikely. The remaining three classes are $\BPP$, $\BQP$, and $\QMA$. Since $\coRP \subseteq \BPP$, $\BQP_1 \subseteq \BQP$, and $\QMA_1 \subseteq \QMA$ there are clearly strong QCSPs in this classes. Demonstrating that a QCSP is hard for them requires encoding a probabilistic circuit into an instance of this problem. Usually, these proofs are done via the circuit-to-Hamiltonian transformation that encodes the circuit into the ground state of a Hamiltonian. However, if the circuit is probabilistic, there is no state that satisfies all $\Pi_{init}$, $\Pi_{prop}$, and $\Pi_{out}$ clauses simultaneously. Thus, other techniques are needed, and not many are known.\footnote{Another technique is to reduce an already known hard problem into an instance of the target problem. For the LH problem, this is done via perturbation theory gadgets \cite{2lh,lhheisenberg,cubitt2016complexity}. However, these gadgets rely on approximations and therefore do not preserve perfect completeness.} Another approach that may offer a positive answer to this question is if these classes admit a scheme that boosts their acceptance probabilities to $1$. Jordan \etal \cite{jordan2011qcma} showed that this was possible for $\QCMA$ (demonstrating that $\QCMA = \QCMA_1$), but whether this is possible for the other classes remains an open question.

\cref{thm:pisopu} adds an additional $6$ classes to the set of classes with strong QCSPs complete problems. As mentioned, these results are obtained by considering ``sums'' and ``products'' of pairs of QSAT problems from our previous result. The first question one might ask is: if there are $7$ QSAT complete problems for the ``more natural'' classes, why are there only $6$ and not $21$ ``sum'' and ``product'' complete QSAT problems? While this is explained in more detail in \cref{section:sumproduct}, we note that this follows from a class property stating that if $\mathsf{A} \subseteq \mathsf{B}$, then $\mathsf{PI(A,B)} = \mathsf{SoPU(A,B)} = \mathsf{B}$. This is the reason why prior to our work, these classes were not considered. Indeed, the classes with complete QSAT problems were $\P$, $\NP$, $\MA$, and $\QMA_1$, but $\P \subseteq \NP \subseteq \MA \subseteq \QMA_1$, and so the classes $\mathsf{PI}$ and $\mathsf{SoPU}$ that resulted from considering pairs of these classes were trivial, i.e.\ they were equal to the largest of the two classes. With our results, there are now QSAT problems that are complete for classes not known to be contained within each other. These are: $\coRP \overset{?}\subseteq \NP$, $\NP \overset{?}\subseteq  \BQP_1$, and $\MA \overset{?}\subseteq  \BQP_1$. Hence, we only obtain $6$ problems that are complete for new and seemingly nontrivial classes. Besides the inclusions shown in \cref{fig:inclusions}, there is very little we know for certain about this classes. However, another thing to note, and one that is of importance in a future classification theorem \cref{coro:classification}, is that some of these classes may be related via derandomization conjectures. Among the complexity theory community, it is conjectured that $\P = \BPP$, implying that every problem solvable with a probabilistic classical computer can also be solved deterministically. If true, $\mathsf{PI(\coRP,\NP)} = \mathsf{PI(\P,\NP)}$ and $\mathsf{SoPU(\coRP,\NP)} = \mathsf{SoPU(\P,\NP)}$. Then, since $\P \subseteq \NP$, these two complexity classes are trivial and both are equal to $\NP$. There is also reason to believe that a weaker version of derandomization where $\NP = \MA$ may be true. Similarly, in this case, these two classes become equal to $\NP$. Now, observe that since it is not known whether $\BQP_1 \subseteq \NP$ or vice versa, and there is no strong evidence for either, the classes $\mathsf{PI(\BQP_1, \NP)}, \mathsf{SoPU(\BQP_1, \NP)}, \mathsf{PI(\BQP_1, \MA)}$, and $\mathsf{SoPU(\BQP_1, \MA)}$ would be expected to exhibit the most unique behavior.

Finally, let us consider the implications of \cref{coro:classification}. If such a theorem is proven and the classification contains fewer than $13$ classes, this could have exciting implications as the majority of these classes appear to be quite distinct from one another. Even if such a theorem proves any of the derandomization conjectures, it would be a great result since such proofs have eluded us for many decades. On the other hand, a classification showing that there are more than $13$ classes would be a stark contrast with classical strong CSPs, which can be completely classified as being either in $\P$ or $\NP$-complete \cite{schaeferdichotomy, zhukdichotomy}. This would highlight the more rich and complex panorama of strong QCSPs, and establish a larger repertoire of problems from which to construct reductions and potentially describe the complexity of other problems.

\subsection{Organization}

In \cref{section:preliminaries} we present a brief discussion of the complexity classes relevant for this paper, as well as a brief summary of Bravyi's proof that $k$-QSAT is $\QMA_1$-complete for $k \geq 4$ \cite{bravyi2006efficient}. Elements of this proof will be relevant in the next sections. In \cref{section:monogamy} we present the first and simplest QSAT problem, LCT-QSAT, and prove that it is $\BQP_1$-complete. Subsequently, in \cref{section:our}, we make several changes to the definition of this past problem and obtain SLCT-QSAT, which has better locality and particle dimensionality. We prove \cref{thm:bqpsmall}. In \cref{section:qcma,section:corp} we demonstrate that by making small changes to the definition of SLCT-QSAT, we can obtain two other QSAT problems that are complete for $\QCMA$ and $\coRP$, proving \cref{thm:corp,thm:qcma}. In \cref{section:universality}, we prove that every QCSP on qudits can be reduced to another one on qubits, while maintaining the same complexity. Finally, in \cref{section:sumproduct} we introduce the direct sum and product for QCSPs and demonstrate that these operations yield six new QSAT problems that are complete for the classes $\mathsf{PI}$ and $\mathsf{SoPU}$.

\section{Preliminaries} \label{section:preliminaries}

In this section, we briefly review some useful concepts for this work. In \cref{subsection:complexity}, we discuss some relevant classical and quantum complexity classes along with their one-sided error variation. Subsequently, in \cref{subsection:circtoh} we review the quantum satisfiability problem $k$-QSAT defined by Bravyi \cite{bravyi2006efficient} and his use of the circuit-to-Hamiltonian transformation to prove the problem is $\QMA_1$-hard. For readers already familiar with these topics, we simply refer them to the discussion about classes with perfect completeness in \cref{subsection:complexity}, and to \cref{eqn:clock} in \cref{subsection:circtoh} which points to the clock encoding used in the constructions of this paper.

\subsection{Notation} \label{subsection:notation}

For a bitstring $x$, let $\norm{x}$ denote the number of bits in $x$. For $n \in \mathbb{N}^+$, let $[n] = \{0,1,\ldots,n-1\}$.

For some complexity classes, we specify the gate set used. Here, we use the Clifford-cyclotomic gate sets $\mathcal{G}_m$ defined in Ref.\ \cite{cyclotomicset}. Specifically, we only consider those that are a power of two. These are: $\mathcal{G}_2 := \{X, {\rm CNOT}, {\rm Toffoli}, H \otimes H\}$, $\mathcal{G}_4 := \{X, {\rm CNOT}, {\rm Toffoli}, \zeta_8 H\}$, and for $l \geq 3$, $\mathcal{G}_{2^l} := \{H, {\rm CNOT}, T_{2^l}\}$. Here, $T_{2^l} = {\rm diag}(1, \zeta_{2^l})$ where $\zeta_{2^l} = e^{2 \pi i/2^l}$ is a primitive $2^l$-th root of unity.

In all quantum circuits considered here, we let $U_0 = I$ (sometimes also written as $U_{0,0}$). The same is true for classical circuits $Q$ and classical reversible circuits $R$. For circuits that decide computational problems, we let $ans$ denote the qubit that when measured provides this decision. We accept the instance if the qubit is measured and yields outcome ``1'', and reject otherwise. Usually, $ans$ is the first ancilla qubit of the circuit.

For a circuit $U_{n}$ that decides an instance $x$ with $\norm{x} = n$, we denote $U_x$ as the circuit where the instance $x$ is encoded into it and the inputs are only ancilla qubits in the $\ket{0}$ state.

\subsection{Complexity classes} \label{subsection:complexity}

A \textit{promise problem} $A = (A_{yes}, A_{no})$ is a computational problem consisting of two non-intersecting sets $A_{yes}, A_{no} \subseteq \{0,1\}^*$ where given an instance $x \in \{0,1\}^*$ (promised to be in one of the two sets), one is tasked to determine if $x \in A_{yes}$ ($x$ is a yes-instance) or $x \in A_{no}$ ($x$ is a no-instance).\footnote{The asterisk over the set is known as the \textit{Kleene star} and is used to represent strings of any finite size.} Let $n = \norm{x}$ denote the size of $x$.

The first complexity class we consider is that composed of promise problems that can be decided probabilistically on a classical computer.

\begin{defn}[\BPP]
    A promise problem $A$ is in $\BPP$ iff there exists a polynomial $p$ and a family of polynomial-time uniform classical algorithms $\{Q_n\}$ that take as input a binary string $x$ and a random bitstring $r \in \{0,1\}^{p(n)}$, such that:
    \begin{itemize}
        \item (Completeness) If $x \in A_{yes}$, $\Pr\left[Q_n \textnormal{ accepts } (x,r)\right] \geq 2/3$.
        \item (Soundness) If $x \in A_{no}$, $\Pr\left[Q_n \textnormal{ accepts } (x,r) \right] \leq 1/3$.
    \end{itemize}
\end{defn}
\noindent
The quantum analog of this class, where the classical algorithm becomes a quantum algorithm, is $\BQP$.

\begin{defn}[\BQP] \label{defn:bqp1}
    A promise problem $A$ is in $\BQP$ iff there exists a polynomial $q$ and a family of polynomial-time uniform quantum circuits $\{U_n\}$ that take as input a binary string $x$ and use at most $q(n)$ ancilla qubits, such that:
    \begin{itemize}
        \item (Completeness) If $x \in A_{yes}$, $\Pr \left[U_n \textnormal{ accepts } x \right] \geq 2/3$.
        \item (Soundness) If $x \in A_{no}$, $\Pr \left[U_n \textnormal{ accepts } x \right] \leq 1/3$.
    \end{itemize}
\end{defn}
\noindent
One can also consider a class where instead of computing the solution, one is tasked with verifying a given candidate solution (also known as \textit{witness}).

\begin{defn}[\QMA] \label{defn:qma}
    A promise problem $A$ is in $\QMA$ iff there exist polynomials $p,q$ and a family of polynomial-time uniform quantum (verifier) circuits $\{U_n\}$ that take as input a binary string $x$ and a witness state $\ket{\psi_x}$ of at most $p(n)$ qubits, and use at most $q(n)$ ancilla qubits, such that:
    \begin{itemize}
        \item (Completeness) If $x \in A_{yes}$, then there exists a state $\ket{\psi_x}$ such that $\Pr \left[ U_n \textnormal{ accepts } (x,\psi_x) \right] \geq 2/3$.
        \item (Soundness) If $x \in A_{no}$, then for any witness state $\ket{\psi_x}$, $\Pr \left[ U_n \textnormal{ accepts } (x,\psi_x) \right] \leq 1/3$.
    \end{itemize}
\end{defn}
\noindent
If we receive a classical binary bitstring $y_x$ as a witness instead of a quantum state, we obtain $\QCMA$.

\begin{defn}[\QCMA]
    $\QCMA$ is defined in a similar way as $\QMA$, except the quantum state $\ket{\psi_x}$ is replaced by the computational basis state $\ket{y_x}$.
\end{defn}

\subsubsection{Perfect completeness} \label{subsubsection:perfect}

In this paper we are interested in a variation of the classes above where the acceptance probability of yes-instances is equal to one. These classes are said to have \textit{perfect completeness}, and they are one of the two types of classes with \textit{one-sided error}. Although these classes appear to be similar to their two-sided error variation, quantum complexity classes with one-sided error require a more precise treatment as they are not known to be independent of the gate set used. Indeed, the Solovay-Kitaev theorem \cite{solovaykitaev} used to resolve this issue for quantum classes with two-sided error only works for approximate equivalence of universal gate sets and not perfect equivalence. Thus, for these classes (with some exceptions), one must specify the gate set used by the quantum circuits. This is not the case for classical complexity classes as it is known that every classical circuit using gate set $\mathcal{G}$ can be perfectly simulated by another circuit using a universal gate set $\mathcal{G}'$.

Given this discussion, we can then define one-sided error classes as follows:

\begin{defn}[Classes with perfect completeness] \label{defn:onesidederr}
    Let $\mathcal{C}$ be a complexity class with two-sided error. The variation of this class with perfect completeness is defined in a similar way to $\mathcal{C}$ except for the following differences:
    \begin{enumerate}
        \item The acceptance probability must be exactly $1$ when $x \in \mathcal{A}_{yes}$.
        \item If $\mathcal{C}$ is a quantum complexity class, the gate set $\mathcal{G}$ used by the quantum circuits $\{U_n\}$ must be specified.
    \end{enumerate}
    \noindent
    The class with one sided error is generally denoted as $\mathcal{C}_1$, or $\mathcal{C}_1^{\mathcal{G}}$ if it is a quantum complexity class.
\end{defn}
\noindent
This sensibility to the gate set in quantum complexity classes is the reason why, in \cref{thm:bqplarge,thm:bqpsmall}, we explicitly stated that LCT-QSAT and SLCT-QSAT are complete for $\BQP_1$ with the particular choice of gate set $\mathcal{G}_8$. It also presents other complications. To see this, consider $\BQP$. It is evident that $\BQP_1^\mathcal{G} \subseteq \BQP$ for any arbitrary gate set $\mathcal{G}$, and also that $\P \subseteq \BQP$. However, is it true that $\P \subseteq \BQP_1^\mathcal{G}$? Fortunately, for the Clifford+T gate set (i.e.\ $\mathcal{G}_8$) used in this paper, the class $\BQP_1^{\mathcal{G}_8}$ follows the intuitive containment of classes. Indeed, $\P$ and $\coRP$ are contained in this class. To see this, consider the fact that any classical circuit is equivalent to a reversible classical circuit, and moreover the circuit may consist only of Toffoli gates. Then, a result by Giles and Selinger \cite{giles2013exact} shows that each Toffoli can be efficiently and perfectly decomposed in terms of Clifford and $T$ gates. For any problem in $\P$ or $\coRP$, it can also be decided by a $\BQP_1^{\mathcal{G}_8}$ algorithm that runs the classical reversible circuit, replacing each Toffoli by its decomposition in terms of Clifford and $T$ gates. The circuit is guaranteed to have perfect completeness. The relationships between the eight classes mentioned here is given by $\coRP \subseteq \BQP_1^{\mathcal{G}_8} \subseteq \BQP \subseteq \QCMA_1^{\mathcal{G}_8} = \QCMA \subseteq \QMA_1 \subseteq \QMA$.

A relevant feature of these six classes is that the gap between the acceptance and rejection probabilities can be amplified. It can be shown that as long as these probabilities are separated by the inverse of a polynomial, there is a scheme (e.g.\ repetition and majority vote) that makes these probabilities exponentially close to $1$ and $0$, respectively, requiring only a polynomial number of extra steps \cite{kitaev2002classical, aharonov2002quantum}. We will make use of this statement to show that the quantum satisfiability problems mentioned in this paper meet the required soundness conditions.

Interestingly, Jordan \etal \cite{jordan2011qcma} showed that if the circuits that decide a $\QCMA$ problem consist of gates with a succinct representation (e.g.\ $\mathcal{G}_8$), the acceptance probability of yes-instances can be amplified additively to be exactly $1$. In other words, they showed that $\QCMA_1^{\mathcal{G}_8} \subseteq \QCMA$, concluding that $\QCMA_1^{\mathcal{G}_8} = \QCMA$. This is the reason why in \cref{thm:qcma}, we state that the problem \textsc{Witnessed SLCT-QSAT} is $\QCMA$-complete: in \cref{section:qcma}, we actually show that this problem is $\QCMA_1^{\mathcal{G}_8}$-complete, but by this result, it ends up being $\QCMA$-complete. To this day, it remains an open question whether perfect amplification can also work for $\BQP$ and $\QMA$. In the case of $\QMA$, it is believed that this is not the case as one can show that there exists an oracle for which $\QMA \neq \QCMA_1$  \cite{aaronsonqmacompleteness}. However, a similar claim was made about $\QCMA$ and $\QCMA_1$.

\subsection{k-QSAT \& the Circuit-to-Hamiltonian transformation} \label{subsection:circtoh}

Here, we introduce \textsc{Quantum} $k$-SAT (denoted here as $k$-QSAT) as defined by Gosset and Nagaj in Ref.\ \cite{gosset2016quantum}. We present relevant parts of the proofs showing that $k$-QSAT is contained in $\QMA_1$ for any constant $k$, and $\QMA_1$-hard for $k \geq 6$. While Bravyi's \cite{bravyi2006efficient} original work demonstrates hardness for $k \geq 4$, we choose to present this slightly weaker result for brevity, but also to introduce our clock encoding and notation useful for the rest of this paper.

As we are working to prove the inclusion and hardness of this problem for a class requiring perfect completeness, it is necessary to specify the gate set used by the quantum circuits. For now, let us simply refer to this gate set as $\mathcal{G}$, and we will later show specifically which gates should be used. In addition, we also have to be wary that all operations can be performed with perfect accuracy using gates from this set and all measurements are in the computational basis. For this purpose, Gosset and Nagaj introduce the following set of projectors.

\begin{defn}[Perfectly measurable projectors] \label{defn:P}
    Let $\mathcal{P}$ be the set of projectors such that every matrix element in the computational basis has the form
    \begin{equation*}
        \frac{1}{4}(a + ib + \sqrt{2}c + i\sqrt{2}d)
    \end{equation*}
    \noindent
    for $a,b,c,d \in \mathbb{Z}$.
\end{defn}
\indent
The (promise) problem $k$-QSAT can be defined as follows.

\begin{defn}[$k$-QSAT]\label{defn:kqsat}
    Given an integer $n$ and an instance $x$ consisting of a collection of projectors $\{\Pi_i\} \subset \mathcal{P}$ where each $\Pi_i$ acts nontrivially on at most $k$ qubits, the problem consists on deciding whether (1) there exists an $n$-qubit state $\ket{\psi_{sat}}$ such that $\Pi_i \! \ket{\psi_{sat}} = 0$ for all $i$, or (2) for every $n$-qubit state $\ket{\psi}$, $\Sigma_i \bra{\psi} \! \Pi_i \! \ket{\psi} \geq 1/poly(n)$. We are promised that these are the only two cases. We output ``YES'' if (1) is true, or ``NO'' otherwise.
\end{defn}
\indent
One can think of this problem as being presented with a list of constraints or \textit{clauses} (the projectors $\Pi_i$) and tasked with distinguishing between the following cases: (1) there exists a state a state that satisfies all constraints (a \textit{satisfying state}), or (2) any possible state induces a violation of the constraints greater than $1/poly(n)$. The promise sets the conditions for classifying instances as either $x \in \langyes$ or $x \in \langno$. Without this promise, the problem becomes seemingly harder as it requires distinguishing between the case where the projectors are satisfiable, and the case where they are not but the violation induced by some states could be exponentially close to zero. Without a promise, the problem is most likely not contained in $\QMA_1$.

\subsubsection{In \texorpdfstring{QMA\textsubscript{1}}{QMA\_1}} \label{subsubsection:inqma}

Suppose we are presented with a witness state $\ket{\psi_{\rm wit}}$ and a $k$-QSAT instance composed of projectors $\{\Pi_i\}$. The quantum algorithm that decides whether this state satisfies all projectors $\Pi_i$ consists of simply measuring the eigenvalues of all projectors on this state. Then, if all measured eigenvalues are $0$, we conclude that all projectors are satisfied by the state and output ``YES''. Otherwise, we reject.

Specifically, we measure the eigenvalue of a projector $\Pi_{i}$ by applying the unitary

\begin{equation*}
    V(\Pi_i) = \Pi_i \otimes X + (I - \Pi_i) \otimes I,
\end{equation*}
\noindent
to the witness and an additional ancilla qubit in the state $\ket{0}$, followed by a measurement of the ancilla in the computational basis. Here, $X$ denotes the Pauli-X gate. The probability that $\ket{\psi_{\rm wit}}$ does not satisfy projector $\Pi_i$ (obtain outcome ``1'') is given by

\begin{equation} \label{eqn:probmeasure}
    p_i = \bra{\psi_{\rm wit}} \Pi_i \ket{\psi_{\rm wit}}.
\end{equation}
\noindent
By defining the acceptance probability as the probability that all measurements produce outcome ``0'', and assuming $V(\Pi_i)$ can be implemented perfectly with gate set $\mathcal{G}$, one can show that this algorithm meets the completeness and soundness conditions of $\QMA_1$, and so $k$-QSAT is contained in this class. The proof of these statements is not shown here since a similar argument will be presented at the end of \cref{section:monogamy}.

As mentioned, to support this claim, it is necessary to demonstrate that $V(\Pi_i)$ can be implemented perfectly with gate set $\mathcal{G}$. In Ref.\ \cite{bravyi2006efficient}, Bravyi argued that this was in fact possible for all $k$-local projectors if these were picked from a field $\mathbb{F} \subset \mathbb{C}$ such that all of their matrix elements had an exact representation. However, this result was later withdrawn \cite{gosset2016quantum}. Gosset and Nagaj \cite{gosset2016quantum} later showed that Bravyi's proof could be fixed by restricting the projectors to be from the set $\mathcal{P}$ above. Their argument is based on Giles and Selinger's theorem, showing that any $m \times m$ unitary whose matrix elements are all of the form $1/2^s (a + ib + \sqrt{2}c + i\sqrt{2}d)$ where $s \in \mathbb{N}$ and $a,b,c,d \in \mathbb{Z}$, can be decomposed into a sequence of $\mathcal{O}(3^{m}s)$ gates from the Clifford+T gate set using a single additional ancilla qubit. Then, since $\Pi_i \subset \mathcal{P}$ and each $\Pi_i$ is a $k$-local projector with constant $k$, $V(\Pi_i)$ has the required form. Moreover, since $V(\Pi_i)$ is independent of $n$, i.e.\ a constant-sized matrix, it can be decomposed into polynomially-many gates of the Clifford+T set. This discussion then concludes that $k$-QSAT (as defined in \cref{defn:kqsat}) is in $\QMA_1$ if $\mathcal{G} = \mathcal{G}_8 = \{H, {\rm CNOT}, T\}$.

Recently, Rudolph \cite{rudolph2024onesided} generalized this result and brought it closer to Bravyi's original definition by showing that $k$-QSAT is complete for a field. Specifically, he showed that the $3$-QSAT problem, where all projectors have matrix elements that belong to the field $\mathbb{Q}(\zeta_{2^l})$, is complete for $\QMA_1^{\mathcal{G}_{{2^l}}}$ as long as $l \geq 3$.

\subsubsection{\texorpdfstring{QMA\textsubscript{1}}{QMA\_1}-hard} \label{subsubsection:qmahard}

Now, we discuss elements of the proof demonstrating that $k$-QSAT is $\QMA_1$-hard when $k \geq 6$ and for any gate set $\mathcal{G}$ that is universal for quantum computation.

To show this result, we have to prove that any instance $x$ of an arbitrary promise problem in $\QMA_1$ can be transformed or \textit{reduced} in polynomial time into an instance $x'$ of $k$-QSAT where the answer to the original problem and the transformed one is the same for all instances. Furthermore, we also need to show that all projectors of the resulting $k$-QSAT instance act on at most $6$ qubits.

Let $U_x = U_L \ldots U_1$ with $U_i \in \mathcal{G}$ and $L = poly(n)$ be the $\QMA_1$ verification circuit where given an instance $x$ of a problem $A = (A_{yes}, A_{no})$, $U_x$ decides whether $x \in A_{yes}$ or $x \in A_{no}$. The input to the circuit consists of the $p$-qubit witness state $\ket{\psi_{\rm wit}}$, and a $q$-qubit ancilla register $D$ (referred to as the \textit{data} register) initialized to the state $\ket{0}^{\otimes q}$, where $p$ and $q$ are two polynomials in $n = \norm{x}$. Additionally, let the answer be obtained by measuring one of the ancilla qubits (denoted by $ans$) in the computational basis, where outcome ``1'' means the instance is accepted, while outcome ``0'' means the instance is rejected. The goal of the reduction is to engineer a set of $6$-local projectors such that they are uniquely satisfied by the state encoding the evaluation of the circuit $U$ on $\ket{\phi_0} := \ket{0}^{\otimes q} \otimes \ket{\psi_{\rm wit}}$ at all steps of the computation. This state is appropriately known as the (computational) \textit{history state} and is given by

\begin{equation} \label{eqn:hist}
    \ket{\psi_{hist}} := \frac{1}{\sqrt{L+1}} \sum_{t = 0}^L U_t \ldots U_0 \ket{\phi_0}_D \otimes \ket{C_t}_C,
\end{equation}
\noindent
where $U_0 := I$ is a dummy unitary introduced for convenience. Here, we have introduced a \textit{clock} register $C$ acting on a new (not yet specified) Hilbert space used to keep track of the current step in the computation. Clearly, this history state can be defined in many ways depending on the implementation of the states $\ket{C_t}$. In this paper, we choose a clock encoding acting on $\mathcal{H}_{clock} = (\mathbb{C}^3)^{\otimes{L+1}}$ consisting of the \textit{ready} state $\ket{r}$, the \textit{active} state $\ket{a}$, and the \textit{dead} state $\ket{d}$, which progresses as

\begin{equation} \label{eqn:clock}
    \begin{aligned}
        \ket{C_0} & = \ket{a_0 r_1 r_2 \ldots r_L}, \\
        \ket{C_1} & = \ket{d_0 a_1 r_2 \ldots r_L}, \\
        \ket{C_2} & = \ket{d_0 d_1 a_2 \ldots r_L}, \\
        \vdots                                      \\
        \ket{C_L} & = \ket{d_0 d_1 d_2 \ldots a_L}.
    \end{aligned}
\end{equation}
\noindent
We refer to these basis states as the \textit{legal} states of the clock, and all other basis states as \textit{illegal}.

The projectors that allow us to build the required $k$-QSAT instance act on both of these Hilbert spaces and are given by

\begin{equation}\label{eqn:projs}
    \begin{aligned}
        P_{init}^{(i)}                                                                                                   & := \ket{1} \! \bra{1}_i \otimes \ket{a} \! \bra{a}_0,                                                                          \\
        P_{out}^{(i)}                                                                                                    & := \ket{0} \! \bra{0}_i \otimes \ket{a} \! \bra{a}_L,                                                                          \\
        P_{prop,U}^{(i)} := \frac{1}{2} \big[ I^{\otimes2} \otimes \ket{ar} \! \bra{ar}_{i-1,i} + I^{\otimes 2}  \otimes & \ket{da} \! \bra{da}_{i-1,i}  - U \otimes \ket{da} \! \bra{ar}_{i-1,i} - U^\dagger \otimes \ket{ar} \! \bra{da}_{i-1,i} \big],
    \end{aligned}
\end{equation}
\noindent
which receive an index to specify its action on another particle. Observe that $P_{init}$ and $P_{out}$ act on a single data and clock particle, while $P_{prop,U}$ acts on two data qubits and two clock particles. As each clock particle can be represented by two qubits, albeit a bit wastefully, it is evident that these projectors are at most $6$-local (on qubits). Other clock encodings may lead to different locality.\footnote{In Ref.\ \cite{bravyi2006efficient}, Bravyi employs a four-state clock encoding, $2L + 1$ clock basis states, and an additional propagation projector. This allows interactions between either two clock particles at a time or one clock particle and two data qubits, resulting in $4$-local projectors. However, this comes at a cost of increased clock particle dimensionality.}

Each projector in \cref{eqn:projs} penalizes states that do not meet certain requirements. (Initialization) $P_{init}$ requires that when clock particle $0$ is in the state $\ket{a}$, data qubit $i$ is initialized to $\ket{0}$. (Computational propagation) $P_{prop,U}$ requires that as clock particles $i$ and $i+1$ transition from $\ket{ar}$ to $\ket{da}$, $U$ is applied to two qubits of the data register. (Readout) Finally, $P_{out}$ requires that when clock qudit $L$ is in the state $\ket{a}$, data qubit $i$ is in the state $\ket{1}$.\footnote{Unlike Bravyi \cite{bravyi2006efficient} and Meiburg \cite{meiburg2021quantum}, we define $P_{out}$ so it is satisfied when the logical qubit is in the state $\ket{1}$, and not $\ket{0}$.} Aside from these projectors, one also has to define

\begin{equation}\label{eqn:clockprojs}
    \begin{aligned}
        P_{start}                                                                                   & := \ket{r} \! \bra{r}_0,                                                                                  \\
        P_{stop}                                                                                    & := \ket{d} \! \bra{d}_L,                                                                                  \\
        P_{clock}^{(i)} := \ket{r} \! \bra{r}_i \otimes (I - \ket{r} \! \bra{r})_{i+1} + \ket{a} \! & \bra{a}_i \otimes (I - \ket{r} \! \bra{r})_{i+1} + \ket{d} \! \bra{d}_i \otimes \ket{r} \! \bra{r}_{i+1},
    \end{aligned}
\end{equation}
\noindent
which are at most $4$-local projectors requiring that the clock states have the form described in \cref{eqn:clock}. Furthermore, these six types of projectors are of the form given in \cref{defn:P} and are hence projectors from $\mathcal{P}$ as required. Finally, using the six types of projectors of \cref{eqn:projs} and \cref{eqn:clockprojs}, the instance that encodes the verifier circuit $U=U_L \ldots U_1$ is given by

\begin{equation} \label{eqn:hamils}
    \begin{aligned}
        H_{init}  & := \sum_{b \in \textit{ancilla}} P_{init}^{(b)},          \\
        H_{prop}  & := \sum_{t=1}^L P_{prop,U_t}^{(t)}                        \\
        H_{out}   & := P_{out}^{(ans)},                                       \\
        H_{clock} & := P_{start} + P_{stop} + \sum_{c \in C} P_{clock}^{(c)}.
    \end{aligned}
\end{equation}
\noindent
We illustrate this instance in \cref{fig:reductioninstance} (for simplicity, we assume the data register consists only of four qubits). The set of projectors that define this $k$-QSAT instance are the individual terms of the sum; however, we often group them into positive semi-definite terms resembling those of the Local Hamiltonian problem \cite{kitaev2002classical}. In \cref{eqn:hamils}, the $H_{init}$ term requires that all ancilla qubits from register $\mathcal{D}$ are initialized to $\ket{0}$, leaving the data qubits for the witness state ``free'' or un-initialized. $H_{prop}$ defines a clock register of $L+1$ particles and requires that as time progresses from $t-1$ to $t$, $U_t$ is applied to the data qubits. $H_{out}$ requires that at the end of the computation $ans$ is measured to be ``1''. Finally, $H_{clock}$ requires that we obtain a running clock register and that the clock progresses as shown in \cref{eqn:clock}. Together, the terms $H_{init}, H_{prop}$, and $H_{clock}$ require that if there exists a state satisfying all of their projectors, the state must mimic the evaluation of the quantum circuit $U=U_L \ldots U_1$ on the state $\ket{\phi_0}$. This is the history state of \cref{eqn:hist} with the clock encoding of \cref{eqn:clock}. Moreover, if the verification circuit $U$ accepts with certainty, the history state also satisfies $H_{out}$ and is thus the unique ground state of the $6$-local Hamiltonian $H = H_{init} + H_{prop} + H_{out} + H_{clock}$.

This concludes the transformation of the circuit into local Hamiltonians. Completing the proof that $6$-QSAT is $\QMA_1$-hard requires showing that if $x \in A_{yes}$ then $x'$ has a frustration-free ground state, while if $x \in A_{no}$ then the ground state energy of $H$ is not too low. The former case is straightforward since the history state of \cref{eqn:hist} is, by construction, the state that satisfies all clauses of the instance. On the other hand, the latter case is significantly more challenging as it requires relating the null spaces of the four non-commuting operators $H_{init}, H_{prop}, H_{out}$, and $H_{clock}$. The key to accomplish this is through Kitaev's Geometric lemma:

\begin{lemma}[Geometric lemma \cite{kitaev2002classical}] \label{lemma:geometric}
    Let $H_1 \succeq 0$ and $H_2 \succeq 0$ be two positive semi-definite operators with null spaces $\mathcal{S}_1$ and $\mathcal{S}_2$, respectively. Suppose the null spaces have trivial intersection, i.e.\ $\mathcal{S}_1 \cap \mathcal{S}_2 = \{ \mathbf{0} \}$. Let $\gamma(H)$ denote the smallest non-zero eigenvalue of $H$. Then,

    \begin{equation}\label{eqn:geometric}
        H_1 + H_2 \succeq \min\{\gamma(H_1), \gamma(H_2)\} \cdot \frac{(1 - \alpha)}{2}
    \end{equation}
    \noindent
    where
    \begin{equation}
        \alpha = \max_{\ket{\eta} \in \mathcal{S}_1} \bra{\eta} \Pi_{\mathcal{S}_2} \ket{\eta}.
    \end{equation}
\end{lemma}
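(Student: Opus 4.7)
The plan is to reduce the bound on $H_1 + H_2$ to one about the projectors $\Pi_1, \Pi_2$ onto the null spaces $\mathcal{S}_1, \mathcal{S}_2$, and then bound the largest eigenvalue of $\Pi_1 + \Pi_2$. Because each $H_i \succeq 0$ has zero eigenvalues exactly on $\mathcal{S}_i$ and all remaining eigenvalues at least $\gamma(H_i)$, the spectral theorem gives $H_i \succeq \gamma(H_i)(I - \Pi_i)$. Setting $\gamma := \min\{\gamma(H_1),\gamma(H_2)\}$, summing these two inequalities yields $H_1 + H_2 \succeq \gamma(2I - \Pi_1 - \Pi_2)$. The lemma will then follow once I show $\Pi_1 + \Pi_2 \preceq (1 + \sqrt{\alpha})\,I$, because the elementary inequality $1 - \sqrt{\alpha} \geq (1-\alpha)/2$ (equivalent to $(1-\sqrt{\alpha})^2 \geq 0$) upgrades the resulting bound into the stated form.

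The core step is to bound $\lambda_{\max}(\Pi_1 + \Pi_2)$. Let $\ket{\phi}$ be a unit eigenvector with eigenvalue $\lambda$, so $\Pi_1 \ket{\phi} + \Pi_2 \ket{\phi} = \lambda \ket{\phi}$. Applying $\Pi_2$ to both sides and using $\Pi_2^2 = \Pi_2$ gives $\Pi_2 \Pi_1 \ket{\phi} = (\lambda - 1)\Pi_2 \ket{\phi}$; the symmetric application of $\Pi_1$ yields $\Pi_1 \Pi_2 \ket{\phi} = (\lambda - 1)\Pi_1 \ket{\phi}$. Composing the two identities gives $\Pi_2 \Pi_1 \Pi_2 \ket{\phi} = (\lambda - 1)^2 \Pi_2 \ket{\phi}$. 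If $\Pi_2 \ket{\phi} \neq 0$, then $(\lambda - 1)^2$ is an eigenvalue of the positive semi-definite operator $\Pi_2 \Pi_1 \Pi_2$, whose operator norm equals $\|\Pi_1 \Pi_2\|^2 = \alpha$, so $|\lambda - 1| \leq \sqrt{\alpha}$. If instead $\Pi_2 \ket{\phi} = 0$, the eigenvalue equation collapses to $\Pi_1 \ket{\phi} = \lambda \ket{\phi}$, forcing $\lambda \in \{0,1\}$ since $\Pi_1$ is a projector. Either way $\lambda \leq 1 + \sqrt{\alpha}$, as needed.

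Chaining the pieces gives $H_1 + H_2 \succeq \gamma(1 - \sqrt{\alpha})\,I \succeq \gamma \cdot \frac{1 - \alpha}{2}\,I$, which is the claim. The hypothesis $\mathcal{S}_1 \cap \mathcal{S}_2 = \{\mathbf{0}\}$ enters only by guaranteeing $\alpha < 1$, since any unit vector achieving $\alpha = 1$ would lie in the intersection and so in both null spaces. The main obstacle I anticipate is justifying $\|\Pi_2 \Pi_1 \Pi_2\| = \alpha$, the one place the definition of $\alpha$ is used quantitatively: this follows from $\|\Pi_2 \Pi_1 \Pi_2\| = \|(\Pi_1 \Pi_2)^{\ast}(\Pi_1 \Pi_2)\| = \|\Pi_1 \Pi_2\|^2$ together with $\|\Pi_1 \Pi_2\|^2 = \max_{w \in \mathcal{S}_2,\,\|w\|=1} \|\Pi_1 w\|^2 = \alpha$, where the last equality is the symmetry of principal angles between $\mathcal{S}_1$ and $\mathcal{S}_2$. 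A cleaner but heavier alternative would be Jordan's lemma, which simultaneously block-diagonalizes $\Pi_1$ and $\Pi_2$ into $1$- and $2$-dimensional invariant subspaces and makes the identity $\lambda_{\max}(\Pi_1 + \Pi_2) = 1 + \cos\theta_{\min} = 1 + \sqrt{\alpha}$ directly visible on the $2$-dimensional blocks; I would keep the elementary operator-norm argument in the main text and mention Jordan's lemma only as intuition.
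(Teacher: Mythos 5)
The paper does not prove this lemma; it is stated verbatim as a cited result from Kitaev, Shen, and Vyalyi's book, so there is no proof of the paper's own to compare against. Your argument is correct and is essentially the standard proof from that source. A few remarks for the record.

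Your reduction $H_i \succeq \gamma(H_i)(I-\Pi_i)$ is valid because $H_i$ and $\Pi_i$ commute (they are simultaneously diagonalizable by the spectral theorem), so the inequality can be checked entrywise in that common eigenbasis. The eigenvector computation for $\Pi_1+\Pi_2$ is right: the case split on whether $\Pi_2\ket{\phi}$ vanishes is exactly what is needed, and the identification $\lVert\Pi_2\Pi_1\Pi_2\rVert = \lVert\Pi_1\Pi_2\rVert^2 = \alpha$ is correct, with the symmetry $\lVert\Pi_1\Pi_2\rVert=\lVert\Pi_2\Pi_1\rVert$ (adjoints have equal norm) closing the gap between the ``max over $\mathcal{S}_1$'' definition used in the lemma and the ``max over $\mathcal{S}_2$'' expression that falls naturally out of $\lVert\Pi_1\Pi_2\rVert$. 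Note that what you actually establish is the sharper bound $H_1+H_2 \succeq \gamma(1-\sqrt{\alpha})\,I$ with $\gamma=\min\{\gamma(H_1),\gamma(H_2)\}$, which is how Kitaev states it ($2\varepsilon\sin^2(\theta/2)$ in the angle notation); the weaker $\gamma(1-\alpha)/2$ quoted in the paper then follows from the elementary inequality $1-\sqrt{\alpha}\geq(1-\alpha)/2$, exactly as you observe. One micro-nitpick: the hypothesis $\mathcal{S}_1\cap\mathcal{S}_2=\{\mathbf{0}\}$ is not actually used as a logical step anywhere in your proof --- if $\alpha=1$ the inequality degenerates to the trivially true $H_1+H_2\succeq 0$ --- so it is fine that you invoke it only to explain why the bound is non-vacuous; that is its only role.
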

\noindent
We do not present the rest of the hardness proof here as it is beyond the scope of this preliminary section, and a similar analysis is presented in \cref{subsection:hardnessmono}.

\begin{figure}[t]
    \centering
    \includegraphics[width=0.95\textwidth]{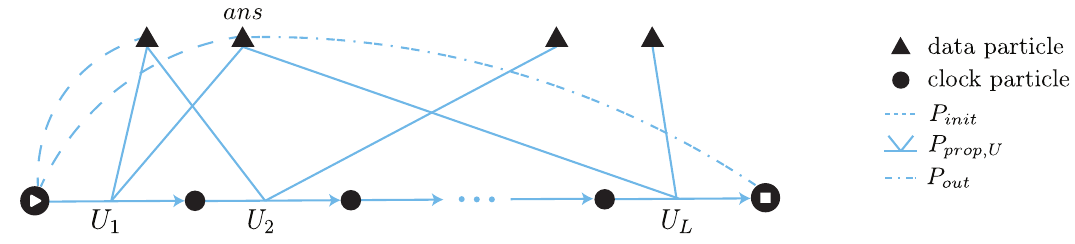}
    \caption{Representation of a $k$-QSAT instance which encodes a $\QMA_1$ verification circuit $U = U_L \ldots U_1$. For simplicity, we let $U$ act on four data qubits: two ancilla qubits and two qubits for the witness state. The ancillas are those present in $P_{init}$ clauses, and the witness state qubits those that are un-initialized. The ancilla measured at the end of the computation is labeled $ans$. The leftmost clock particle has a ``start'' icon to represent the action of the $P_{start}$ clause, and the rightmost a ``stop'' icon for the $P_{stop}$ clause. The $P_{clock}$ clauses are represented by the arrows on top of the $P_{prop,U}$ lines that connect clock particles together. It is represented by an arrow since this clause establishes the clock progression.}
    \label{fig:reductioninstance}
\end{figure}
\indent

\section{A \texorpdfstring{BQP\textsubscript{1}}{BQP\_1}-complete Problem Leveraging Monogamy} \label{section:monogamy}

In this section, we prove \cref{thm:bqplarge}, which states that the quantum satisfiability problem \textsc{Linear-Clock-Ternary-QSAT} is $\BQP_1^{\mathcal{G}_8}$-complete. In particular, in \cref{subsection:overview}, we provide an overview of how we use the circuit-to-Hamiltonian transformation introduced in \cref{subsubsection:qmahard} to engineer a satisfiability problem that is in $\BQP_1^{\mathcal{G}_8}$ and $\BQP_1^{\mathcal{G}_8}$-hard. In \cref{subsection:definition}, we present the detailed definition of the problem and formally begin the proof of its containment. In particular, we describe the structure of input instances and introduce terminology useful for later subsections. \cref{subsection:monosat} continues with an in-depth analysis on the satisfiability of instances, categorizing each one as trivially unsatisfiable, trivially satisfiable, or one requiring the assistance of a quantum algorithm. Then, based on the previous subsection, \cref{subsection:algorithmmono} presents the hybrid quantum-classical algorithm that decides the satisfiability of all instances along with an analysis of its correctness. This concludes that the problem is contained in $\BQP_1^{\mathcal{G}_8}$. Lastly, \cref{subsection:hardnessmono} shows that the problem is $\BQP_1^{\mathcal{G}_8}$-hard. From now on, we will omit the superscript and simply write $\BQP_1$ and $\QMA_1$.

\subsection{Construction overview} \label{subsection:overview}

\begin{figure}[t]
    \centering
    \begin{subfigure}{.5\textwidth}
        \centering
        \includegraphics[width=1\linewidth]{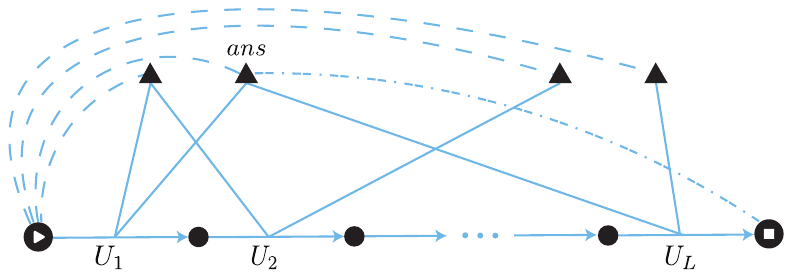}
        \caption{}
        \label{fig:qinstance}
    \end{subfigure}
    \vspace{\baselineskip}
    \begin{subfigure}{1\textwidth}
        \centering
        \includegraphics[width=1\linewidth]{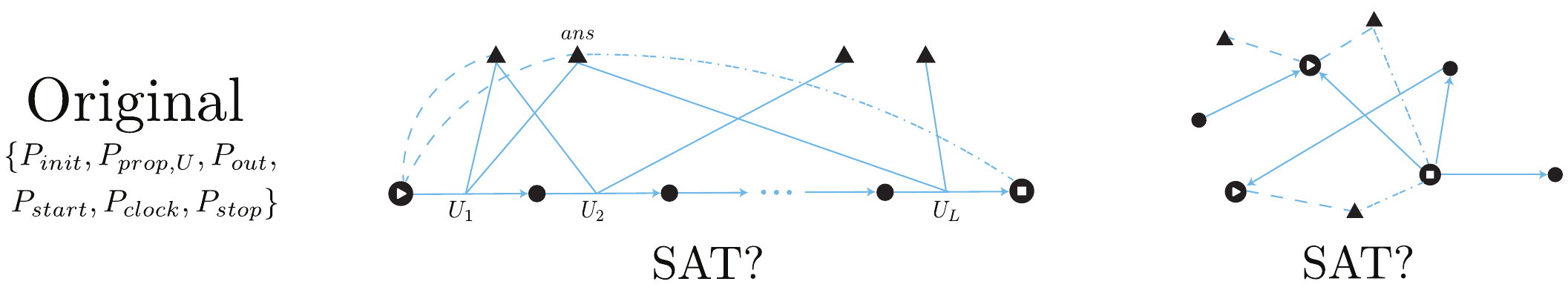}
        \caption{}
        \label{fig:troublesome}
    \end{subfigure}
    \vspace{\baselineskip}
    \begin{subfigure}{1\textwidth}
        \centering
        \includegraphics[width=1\linewidth]{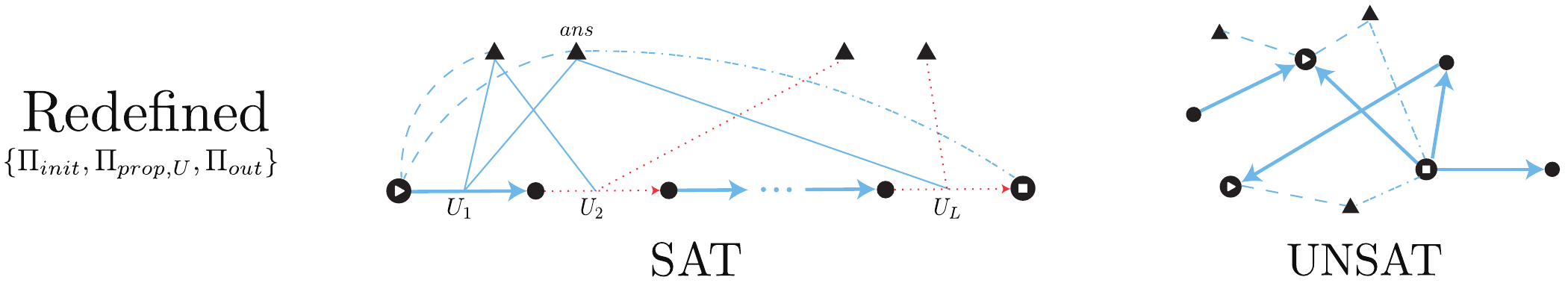}
        \caption{}
        \label{fig:troublesome2}
    \end{subfigure}
    \caption{(a) Example of a typical instance that encodes the computation of a $\BQP_1$ circuit $U_L \ldots U_1$. The satisfiability of the instance can also be decided in $\BQP_1$. (b) Examples of the troublesome instances whose satisfiability is not known to be decidable with a $\BQP_1$ algorithm. (c) The above instances recast with the new set of projectors of \cref{eqn:Oinit,eqn:Oprop,eqn:Oout}. The bold blue arrows represent the $\Pi_{prop,U}$ clauses which now also indicate the particles should be maximally entangled, and the dotted red arrows those that are connected to undefined logical qudits. With these projectors, their satisfiability can be more easily decided. The instance on the left is satisfiable due to the undefined clauses, while the one on the right is unsatisfiable as any potential satisfying state would violate monogamy of entanglement. The instance in (a) has the same meaning/satisfiability with either set of projectors.}
    \label{}
\end{figure}

The goal of the construction is to design a QSAT problem that can encode the computation of any quantum circuit in $\BQP_1$, while also being able to solve any of its instances in quantum polynomial-time with perfect completeness and bounded soundness. We define the problem using projectors $\Pi_{init}$, $\Pi_{prop,U}$, and $\Pi_{out}$ similar to $P_{init}$, $P_{prop,U}$, and $P_{out}$ defined in \cref{eqn:projs}.\footnote{The projectors $P_{start}$, $P_{clock}$, and $P_{stop}$ associated with the clock encoding remain unchanged and are integrated into the definitions of $\Pi_{init}$, $\Pi_{prop}$, and $\Pi_{out}$.} To see why our projectors must differ from the original ones, consider the QSAT problem built with $\{P_{init}, P_{prop,U}, P_{out}, P_{start}, P_{clock}, P_{end}\}$. Showing that the problem is $\BQP_1$-hard is straightforward, as we can encode the circuit that computes the answer to a $\BQP_1$ problem in a similar way as that shown in \cref{subsection:circtoh}. This time however, all data particles in the instance should be initialized, preventing having free particles that can accommodate a witness state (see \cref{fig:qinstance}). The difficulty lies in demonstrating that every instance generated with a polynomial number of these projectors can also be decided in $\BQP_1$. There is a fundamental and a practical limitation for this:

\begin{itemize}
    \item[--] Instances which encode the computation of a $\QMA_1$ problem, e.g.\ the instance in \cref{fig:reductioninstance} and the left instance in \cref{fig:troublesome}, are valid inputs. This is problematic since it is unknown how to decide these instances in $\BQP_1$ (and doing so would show that $\BQP_1=\QMA_1$).
    \item[--] Input instances may form intricate structures complicating the task of deciding if a satisfying state exists, e.g.\ the right instance in \cref{fig:troublesome}.
\end{itemize}
\noindent
We define the projectors $\Pi_{init}$, $\Pi_{prop}$, and $\Pi_{out}$ to address these two difficulties (see \cref{fig:troublesome2}). Importantly, these projectors do not significantly alter the proof that the problem is $\BQP_1$-hard and can proceed as mentioned. Now, let us briefly discuss how we overcome both difficulties. \\
\indent
Instances like those in \cref{fig:reductioninstance}, which have a proper structure and uninitialized data particles, are prototypical examples of $\QMA$ instances. These ``free'' particles give one the freedom to guess if there exists a state they can be in such that the instance can be satisfied (or equivalently be provided with such a state which we verify). To address this issue, we remove the need to guess a satisfying state by introducing a new \textit{undefined} basis state $\ket{?}$ (making the data particles 3-dimensional), such that setting the free data particles to this state always results in a satisfiable instance. More specifically, we achieve this by defining $\Pi_{prop,U}$ so that if any data particle in the clause is in state $\ket{?}$, the clause is satisfied without needing to apply the associated unitary.\footnote{This shows that although the data particles are $3$-dimensional and the unitaries are gates from a set designed to act on qubits, there is no issue because the gates will never act on undefined data particles.} Then, for these instances, the satisfying state is given by a truncated version of the history state (without a witness) since the computation is no longer required to elapse past the first $\Pi_{prop,U}$ clause acting on an undefined state. We say the instance is now ``trivially satisfiable'' as its structure alone suffices to determine its satisfiability. The formal statement of the claims presented here is given in \cref{lemma:truncated}. \\
\indent
To determine the satisfiability of intricate instances, the projectors are now also defined to leverage the principle of monogamy of entanglement. Each clock particle is equipped with two $2$-dimensional auxiliary subspaces $C\!A$ and $C\!B$ (making them $12$-dimensional) and the $\Pi_{prop,U}$ clauses are then defined to require that the $C\!B$ subspace of the predecessor clock particle forms a $\ket{\Phi^+}$ Bell pair with the $C\!A$ subspace of its successor. Then, if a $C\!A$ or $C\!B$ subspace is required to form more than one Bell pair, the principle of monogamy of entanglement states that only one of these clauses can be satisfied, and so the instance is unsatisfiable. Therefore, instances that are not deemed unsatisfiable because of this reason must form one-dimensional chains with a unique ``time'' direction. Finally, to guarantee that $\Pi_{init}$ and $\Pi_{out}$ only act on the ends of the chain, these make use of a new \textit{endpoint} particle consisting of a single two-dimensional space $EC$ and require that it also forms a Bell pair with either the $C\!A$ (for $\Pi_{init}$) or $C\!B$ (for $\Pi_{out}$) subspace of a clock particle. \cref{fig:monogamydemo} provides a visual summary on how these projectors use monogamy to form one-dimensional instances. \\
\indent
Although these modifications do not get rid off all difficulties, they are enough to determine the satisfiability of all input instances via a hybrid algorithm. Briefly, the classical part of the algorithm evaluates the structure of the clauses in the instance and concludes whether it is trivially unsatisfiable, trivially satisfiable, or is one requiring the assistance of a quantum subroutine. Trivially unsatisfiable instances are those whose clause arrangement imply one or several clauses cannot be simultaneously satisfied, like those that violate monogamy of entanglement. On the other hand, trivially satisfiable instances are those whose clauses do not create any conflicts but whose structure is simple enough that the satisfying state can be inferred, like those with proper structure and uninitialized data particles. We show that the only type of instances that are not in either one of these cases, are those like \cref{fig:qinstance} which express the computation of a quantum circuit on initialized ancilla qubits. For these instances, the classical algorithm makes use of a quantum subroutine that executes the quantum circuit expressed by the instance, while simultaneously measuring the eigenvalues of relevant projectors. The measurement outcomes indicate whether the instance should be accepted or rejected.

\begin{figure}[t]
    \centering
    \includegraphics[width=0.9\textwidth]{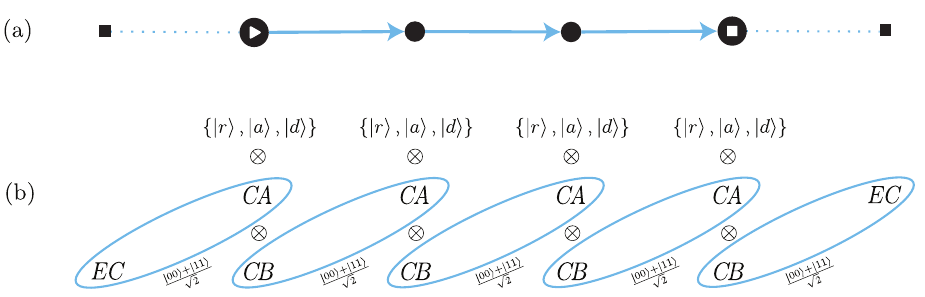}
    \caption{(a) A simplified graphical representation of a one-dimensional instance built with projectors $\Pi_{init}$, $\Pi_{prop,U}$, and $\Pi_{out}$ analogous to previous figures. The data particles are not shown here. The endpoint particles are represented by the black squares, and its connections to clock particles via $\Pi_{init}$ and $\Pi_{out}$ clauses are given by the dotted blue lines. (b) A more descriptive and accurate representation of the figure above where the Hilbert space of both clock and endpoint particles is explicit. The blue ellipses signify that the encircled subspaces are required to be maximally entangled. Each clock particle has a free subspace spanned by $\ket{r}$, $\ket{a}$, and $\ket{d}$.}
    \label{fig:monogamydemo}
\end{figure}

\subsection{Problem definition} \label{subsection:definition}
The problem we claim is $\BQP_1$-complete is the following:

\begin{defn}[\textsc{Linear-Clock-Ternary-QSAT}] \label{defn:lctqsat}
    The problem {\rm \textsc{Linear-Clock-Ternary-QSAT}} is a quantum constraint satisfaction problem defined on the $17$-dimensional Hilbert space

    \begin{equation*}
        \begin{aligned}
            \mathcal{H} = \; \; & \Span\{\ket{0_L}, \ket{1_L}, \ket{?_L}\} \oplus \Span\{\ket{0_{EC}}, \ket{1_{EC}}\}                                                               \\
                                & \oplus (\Span\{\ket{r_C},\ket{a_C}, \ket{d_C}\} \otimes \Span\{\ket{0_{C\!A}}, \ket{1_{C\!A}}\} \otimes \Span\{\ket{0_{C\!B}}, \ket{1_{C\!B}}\}),
        \end{aligned}
    \end{equation*}
    \noindent
    consisting of a logical, endpoint, and clock subspaces. The problem consists of $5$ types of projectors: $\Pi_{init}$, $\Pi_{prop,U}$ (one for every $U \in \{H,HT, (H \otimes H) \textnormal{CNOT}\}$), and $\Pi_{out}$, each acting on at most four $17$-dimensional qudits. Since the definitions of these projectors are too long to fit here, we define them implicitly as the projectors with the same null space as their corresponding positive semi-definite operator:

    \begin{equation} \label{eqn:Oinit}
        \begin{aligned}
            O_{init} :=    \; \; & \Pi_{L,1} + \Pi_{C,2} + \Pi_{E,3} +  \Pi_{start, 2} + (I - \ket{0} \! \bra{0})_1 \otimes \ket{a} \! \bra{a}_2 \\ &+ (I - \ket{\Phi^+} \! \bra{\Phi^+}_{C\!A,EC})_{2,3},
        \end{aligned}
    \end{equation}

    \begin{equation} \label{eqn:Oprop}
        \begin{aligned}
            O_{prop,U} := \; \; & \Pi_{L,1} + \Pi_{L,2} + \Pi_{C,3} + \Pi_{C,4}                                                      \\
                                & +  (\Pi_{work,U} + I^{\otimes 2} \otimes \Pi_{clock,D})(\Pi_D \otimes \Pi_D \otimes I^{\otimes 2}) \\ &+ (I^{\otimes 2} \otimes \Pi_{clock,?})(I - \Pi_D \otimes \Pi_D \otimes I^{\otimes 2}) \\
                                & + (I - \ket{\Phi^+} \! \bra{\Phi^+}_{C\!A,C\!B})_{3,4},
        \end{aligned}
    \end{equation}
    \noindent
    and

    \begin{equation} \label{eqn:Oout}
        \begin{aligned}
            O_{out} :=    \; \; & \Pi_{L,1} + \Pi_{C,2} + \Pi_{E,3} + \Pi_{stop,2} + \ket{0} \! \bra{0}_1 \otimes \ket{a} \! \bra{a}_2 \\ &+ (I - \ket{\Phi^+} \! \bra{\Phi^+}_{C\!B,EC})_{2,3}.
        \end{aligned}
    \end{equation}
    \indent
    Here, we also use a slight abuse of notation to denote operators like $0^5 \oplus (\ket{a} \! \bra{a} \otimes I^4)$ simply as $\ket{a} \! \bra{a}$. Additionally, $O_{init}$, $O_{prop,U}$, and $O_{out}$ are composed of the $4$-local projector

    \begin{equation*}
        \Pi_{work,U} := \frac{1}{2} \left[I^{\otimes2} \otimes \ket{ar} \! \bra{ar} + I^{\otimes 2} \otimes \ket{da} \! \bra{da} - U \otimes \ket{da} \! \bra{ar} - U^\dagger \otimes \ket{ar} \! \bra{da}\right],
    \end{equation*}
    \noindent
    and other $1$- and $2$-local projectors: the role-assigning projectors
    \begin{equation*}
        \begin{aligned}
            \Pi_C & := I - (\ket{r}\!\bra{r}_C + \ket{a}\!\bra{a}_C + \ket{d}\!\bra{d}_C),            \\
            \Pi_E & := I - (\ket{0} \! \bra{0}_{EC} + \ket{1} \! \bra{1}_{EC}),                       \\
            \Pi_L & := I - (\ket{0}\! \bra{0}_{L} + \ket{1} \! \bra{1}_{L} + \ket{?} \! \bra{?}_{L}),
        \end{aligned}
    \end{equation*}
    \noindent
    the clock start/stop projectors

    \begin{equation*}
        \begin{aligned}
            \Pi_{start} & := \ket{r} \! \bra{r}, \\
            \Pi_{stop}  & := \ket{d} \! \bra{d},
        \end{aligned}
    \end{equation*}
    \noindent
    the projector onto the data subspace
    \begin{equation*}
        \Pi_D := \ket{0} \! \bra{0}_{L} + \ket{1} \! \bra{1}_{L},
    \end{equation*}
    \noindent
    and the clock-keeping projectors

    \begin{equation*}
        \begin{aligned}
            \Pi_{clock,D} & := \ket{r} \! \bra{r} \otimes (I - \ket{r} \! \bra{r}) + \ket{a} \! \bra{a} \otimes (I - \ket{r} \! \bra{r}) + \ket{d} \! \bra{d} \otimes \ket{r} \! \bra{r}, \\
            \Pi_{clock,?} & := \ket{r} \! \bra{r} \otimes (I - \ket{r} \! \bra{r}) + \ket{a} \! \bra{a} \otimes (I - \ket{r} \! \bra{r}) + \ket{d} \! \bra{d} \otimes I.
        \end{aligned}
    \end{equation*}
    \noindent
    Evidently, $\Pi_{init}$ and $\Pi_{out}$ are $3$-local, while $\Pi_{prop,U}$ is $4$-local (on high-dimensional qudits). \\
    \indent
    There is also a promise. We are assured that for every instance considered either (1) there exists a state $\ket{\psi_{sat}}$ on $n$ $17$-dimensional qudits such that $\Pi_i \! \ket{\psi_{sat}} = 0$ for all $i$, or (2) $\Sigma_i \bra{\psi} \! \Pi_i \! \ket{\psi} \geq 1/poly(n)$ for all $\ket{\psi}$. \\
    \indent
    The goal is to output ``YES'' if (1) is true, or output ``NO'' otherwise.
\end{defn}
\noindent
This definition can be summarized as follows:

\begin{center}
    \fbox{
        \begin{minipage}{5.5 in}
            \textbf{Linear-Clock-Ternary-QSAT}\\
            \noindent
            \begin{tabular}{l p{4.7 in}}
                \textit{Input:}   & An integer $n$, and a set of projectors $\{\Pi_i\} \subseteq \{\Pi_{\textnormal{init}}, \Pi_{\textnormal{out}}, \Pi_{\textnormal{prop},\mathcal{G}}\}^{poly(n)}$, where $\mathcal{G} \in \{ H, HT, (H \otimes H) \textnormal{CNOT}\}$ and each projector acts nontrivially on at most four $17$-dimensional qudits. \\
                \textit{Promise:} & Either (1) there exists an $n$-qudit state $\ket{\psi_{sat}}$ such that $\Pi_i \! \ket{\psi_{sat}} = 0$ for all $i$, or (2) $\Sigma_i \bra{\psi} \! \Pi_i \! \ket{\psi} \geq 1/poly(n)$ for all $\ket{\psi}$.                                                                                                       \\
                \textit{Goal:}    & Output ``YES'' if (1) is true, or output ``NO'' otherwise.
            \end{tabular}
        \end{minipage}
    }
\end{center}

Let us note a few things about \cref{defn:bqp1}. First, observe that each of the individual projectors that compose $O_{init}$, $O_{prop,U}$, and $O_{out}$ belongs to the set $\mathcal{P}$ of \cref{defn:P}. As we will discuss in \cref{subsection:algorithmmono}, this suffices for the quantum algorithm and it is not necessary to demonstrate that $\Pi_{init}$, $\Pi_{prop,U}$, and $\Pi_{out}$ are also elements from this set. Second, the projectors $\Pi_{start}$, $\Pi_{clock}$, and $\Pi_{stop}$ that in \cref{subsubsection:qmahard} were independent, have now been incorporated into the definitions of the projectors $\Pi_{init}$, $\Pi_{prop}$ and $\Pi_{out}$ ($\Pi_{clock}$ has been renamed to $\Pi_{clock,D}$). Additionally, we include a new term $\Pi_{clock,?}$ to restrict the allowed states of the clock register whenever a logical qudit is undefined. Lastly, observe that the unitaries $U$ within the $\Pi_{prop,U}$ clauses are not exactly from the Clifford+T gate set like in \cref{subsubsection:qmahard}, but from a slight variation where $T$ and CNOT are also accompanied by a Hadamard gate. We use this set for its property that each unitary necessarily changes computational basis states. We remark that while the potentially satisfiable instances encode circuits using this set, the algorithm that decides these instances by executing the circuits can be actually thought of as using the Clifford+T gate set. Thus, our claim that the problem is in $\BQP_1$ with the Clifford+T gate set is accurate.\\
\indent
Now, let us analyze these projectors in greater detail and determine which states lie in their null spaces. \\

\subsubsection{Initialization and termination}

The $\Pi_{init}$ term of \cref{eqn:Oinit} is a projector acting on three qudits. In the first line, the $\Pi_L, \Pi_C$ and $\Pi_E$ projectors demand that these qudits serve the roles of logical, clock, and endpoint qudits respectively. Then, the term $\Pi_{start}$ demands that the clock qudit cannot be $\ket{r}$. Finally, the last term of the line, $(I - \ket{0} \! \bra{0})_1 \otimes \ket{a}\!\bra{a}_2$, corresponds to the initialization of the logical qudit (similar to $P_{init}$ in \cref{eqn:projs}), requiring that when the clock qudit is $\ket{a}$, the logical qudit is $\ket{0}$. The projector in the second line demands that the $EC$ subspace of the endpoint qudit forms a $\ket{\Phi^+}$ Bell pair with the $C\!A$ endpoint of the clock qudit. Considering the demands of all these projectors, one can show that the states

\begin{equation} \label{eqn:initsat}
    \begin{tikzpicture}[baseline=(current  bounding  box.center)]
        % First part of the equation
        \node (A) at (0,-0.8) {$\ket{\textnormal{x}}$};

        \node (otimes0) at (0.5,-0.8) {$\otimes$};

        % Stacked part of the equation
        \node (ket_r) at (1.3,0) {$\ket{d}$};
        \node (otimes1) at (1.3,-0.4) {$\otimes$};
        \node (CA) at (1.3,-0.8) {$C\!A$};
        \node (otimes2) at (1.3,-1.2) {$\otimes$};
        \node (CB) at (1.3,-1.6) {$C\!B$};

        % Second part of the equation
        \node (otimes4) at (2.1,-0.8) {$\otimes$};

        \node (B) at (2.6,-0.8) {$EC$};

        \node (and) at (3.8,-0.8) {and};

        \node (A) at (5,-0.8) {$\ket{0}$};

        \node (otimes0) at (5.5,-0.8) {$\otimes$};

        % Stacked part of the equation
        \node (ket_r) at (6.3,0) {$\ket{a}$};
        \node (otimes1) at (6.3,-0.4) {$\otimes$};
        \node (CA) at (6.3,-0.8) {$C\!A$};
        \node (otimes2) at (6.3,-1.2) {$\otimes$};
        \node (CB) at (6.3,-1.6) {$C\!B$};

        % Second part of the equation
        \node (otimes4) at (7.1, -0.8) {$\otimes$};

        \node (B) at (7.6,-0.8) {$EC$};

        \draw[decorate,decoration={brace,amplitude=5pt,mirror,raise=5pt},thick] (1.2,0.2) -- (1.2,-1.8) node[midway,xshift=-10pt] {};
        \draw[decorate,decoration={brace,amplitude=5pt,raise=5pt},thick] (1.4,0.2) -- (1.4,-1.8) node[midway,xshift=10pt] {};

        \draw[decorate,decoration={brace,amplitude=5pt,mirror,raise=5pt},thick] (6.2,0.2) -- (6.2,-1.8) node[midway,xshift=-10pt] {};
        \draw[decorate,decoration={brace,amplitude=5pt,raise=5pt},thick] (6.4,0.2) -- (6.4,-1.8) node[midway,xshift=10pt] {};

        \draw[myblue, thick, rotate around={0:(1.95,-0.8)}] (1.95,-0.8) ellipse (1.2cm and 0.35cm);
        \draw[myblue, thick, rotate around={0:(6.95,-0.8)}] (6.95,-0.8) ellipse (1.2cm and 0.35cm);
    \end{tikzpicture}
\end{equation}

\noindent
are the only states that satisfy all clauses of $\Pi_{init}$. Here, $\textnormal{x} \in \{0,1,?\}$ and the blue ellipse represents that the subspaces are maximally entangled. Observe that in the first state, the state of the clock qudit alone suffices to satisfy the clause and the constraint on the logical qudit is not enforced. \\
\indent
$\Pi_{out}$ is defined similarly as $\Pi_{init}$, except that it incorporates a projector similar to $P_{out}$ instead of $P_{init}$ and the $C\!A$ and $C\!B$ subspaces swap roles. One can show that the satisfying states are

\begin{equation}\label{eqn:outsat}
    \begin{tikzpicture}[baseline=(current  bounding  box.center)]
        % First part of the equation
        \node (A) at (0,-0.8) {$\ket{\textnormal{x}}$ };

        \node(otimes0) at (0.5, -0.8) {$\otimes$};

        % Stacked part of the equation
        \node (ket_r) at (1.3,0) {$\ket{r}$};
        \node (otimes1) at (1.3,-0.4) {$\otimes$};
        \node (CA) at (1.3,-0.8) {$C\!A$};
        \node (otimes2) at (1.3,-1.2) {$\otimes$};
        \node (CB) at (1.3,-1.6) {$C\!B$};

        % Second part of the equation
        \node (otimes4) at (2.1,-0.8) {$\otimes$};

        \node (B) at (2.6,-0.8) {$EC$};

        \node (and) at (3.8,-0.8) {and};

        \node (A) at (5,-0.8) {$\ket{\textnormal{y}}$};

        \node(otimes0) at (5.5, -0.8) {$\otimes$};

        % Stacked part of the equation
        \node (ket_r) at (6.3,0) {$\ket{a}$};
        \node (otimes1) at (6.3,-0.4) {$\otimes$};
        \node (CA) at (6.3,-0.8) {$C\!A$};
        \node (otimes2) at (6.3,-1.2) {$\otimes$};
        \node (CB) at (6.3,-1.6) {$C\!B$};

        \node (otimes4) at (7.1,-0.8) {$\otimes$};

        % Second part of the equation
        \node (B) at (7.6,-0.8) {$EC$,};

        \draw[decorate,decoration={brace,amplitude=5pt,mirror,raise=5pt},thick] (1.2,0.2) -- (1.2,-1.8) node[midway,xshift=-10pt] {};
        \draw[decorate,decoration={brace,amplitude=5pt,raise=5pt},thick] (1.4,0.2) -- (1.4,-1.8) node[midway,xshift=10pt] {};

        \draw[decorate,decoration={brace,amplitude=5pt,mirror,raise=5pt},thick] (6.2,0.2) -- (6.2,-1.8) node[midway,xshift=-10pt] {};
        \draw[decorate,decoration={brace,amplitude=5pt,raise=5pt},thick] (6.4,0.2) -- (6.4,-1.8) node[midway,xshift=10pt] {};

        \draw[myblue, thick, rotate around={29:(1.5,-1.4)}] (1.95,-1.4) ellipse (1.2cm and 0.4cm);
        \draw[myblue, thick, rotate around={29:(6.5,-1.4)}] (6.95,-1.4) ellipse (1.2cm and 0.4cm);

    \end{tikzpicture}
\end{equation}

\noindent
for $\textnormal{x} \in \{0,1,?\}$ and $\textnormal{y} \in \{1,?\}$. Note that in this construction, we allow the $\Pi_{out}$ clause to be satisfied when the logical qudit is in the state $\ket{1}$ or $\ket{?}$. \\
\indent
As we will see shortly, $\Pi_{init}$ and $\Pi_{out}$ will still serve the same primary role as in the proof that $k$-QSAT is hard for $\QMA_1$: to initialize data qudits at the start of the computation, and to verify the state of data qudits at its conclusion. In addition, the $\Pi_{start}$ and $\Pi_{stop}$ projectors give $\Pi_{init}$ and $\Pi_{out}$ another purpose: to obtain a running clock. We will observe that when both $\Pi_{start}$ and $\Pi_{stop}$ are present in an instance, the state where all clock qudits are either $\ket{r}$ or $\ket{d}$ cannot be a satisfying state. \\
\indent
Writing the satisfying states of clauses as in \cref{eqn:initsat} and \cref{eqn:outsat} will become cumbersome when considering instances with multiple clauses and satisfying states that are superpositions of qudits. For this reason, from now on, we only write the state of the logical qudit and the primary subspace of the clock qudit---the one that actually represents a state of the clock---and forget about the auxiliary subspaces. We can do so because satisfying clauses does not require mixing the auxiliary subspaces with the actual clock and logical subspaces, and any state whose auxiliary subspaces are not of the form shown in \cref{fig:monogamydemo} must violate one of the $\Pi_{init}$, $\Pi_{prop}$ or $\Pi_{out}$ clauses.

\subsubsection{Propagation and clock} \label{subsubsection:prop}

The first line of $\Pi_{prop,U}$ demands that two of the four multi-purpose qudits serve as logical qudits and the other two as clock qudits. The fourth line demands that both clock qudits form a Bell pair in the $C\!B$ subspace of the predecessor and the $C\!A$ subspace of the successor. The second and third line express the conditions for propagation, where each line specifies different requirements depending on the state of the logical qudits.\\
\indent
First, suppose that none of logical qudits are undefined, i.e.\ they are in the joint state $\ket{\phi}$ such that $(\bra{?} \otimes I)\ket{\phi} = (I \otimes \bra{?}) \ket{\phi} = 0$. Under these conditions, the term $(\Pi_D \otimes \Pi_D \otimes I^{\otimes 2})$ in the second line is not satisfied, while $(I - \Pi_D \otimes \Pi_D \otimes I^{\otimes 2})$ in the third line is. Then, to satisfy all four lines of the $\Pi_{prop,U}$ clause, the qudits must be in a state that lies in the null space of the term $(\Pi_{work,U} + I^{\otimes 2} \otimes \Pi_{clock,D})$. Since $\Pi_{work,U}$ is identical to $P_{prop,U}$ of \cref{eqn:projs}, this term requires that there is a usual propagation of the computation, while maintaining a correct form of the clock qudits. The only states that satisfy all terms are

\begin{equation} \label{eqn:propsat}
    \ket{\phi} \otimes \ket{r r} \textnormal{, \; \; \;} \ket{\phi} \otimes \ket{d d}  \textnormal{\; \; \; and \; \; \;} \frac{\ket{\phi} \otimes \ket{a r} + U \! \ket{\phi} \otimes \ket{d a}}{\sqrt{2}}.
\end{equation}
\indent
On the other hand, if one of the logical qudits in the clause is undefined, we obtain the opposite behavior: $(\Pi_D \otimes \Pi_D \otimes I^{\otimes 2})$ is satisfied, while $(I - \Pi_D \otimes \Pi_D \otimes I^{\otimes 2})$ is not. Then, the satisfying state of the clause is that which lies in the null space of $(I^{\otimes 2} \otimes \Pi_{clock,?})$, where $\Pi_{clock,?}$ is similar to $\Pi_{clock,D}$ except that the successor clock qudit must be $\ket{r}$. Then, the only states that satisfy all terms are

\begin{equation}\label{eqn:nullspacehprop}
    \{\ket{? 0},\ket{? 1},\ket{0 ?},\ket{1 ?},\ket{? ?}\} \otimes \{\ket{rr},\ket{ar}\}.
\end{equation}
\noindent
In contrast to the states that satisfy the clauses with well-defined logical qudits, the satisfying state of clauses with undefined logical qudits consists only of the term previous to the application of the unitary. For ease, we will refer to the clauses with these two different types of logical qudit values as \textit{well-defined} and \textit{undefined} clauses. See also \cref{fig:troublesome}, where we draw undefined $\Pi_{prop,U}$ clauses as red dotted arrows. \\
\indent
Finally, observe that the state where the clock qudits are set to $\ket{rr}$ is enough to satisfy the $\Pi_{prop,U}$ clause independently from the state of the logical qudits.

\subsubsection{Instances}

As will be discussed in more detail in the following subsections, showing that LCT-QSAT can be solved in $\BQP_1$ requires us to determine the satisfiability of any possible instance created using a polynomial amount of $\Pi_{init}, \Pi_{prop,U},$ and $\Pi_{out}$ clauses. Let us briefly present the types of instances that a collection of such clauses may form, and establish several definitions and notation useful for the rest of the text. The analysis of the satisfiability of these instances will, for the most part, be postponed until the next subsection. We note that from now on, we will refer to $\Pi_{prop,U}$ simply as $\Pi_{prop}$, as we will generally not be concerned with the associated unitary.\\
\indent
The most general instance we can receive as input is one with no structure whatsoever: one where the qudits of the instance are required to serve multiple roles. In other words, the qudits of the instance may be acted on by multiple $\Pi_C$, $\Pi_E$, and $\Pi_L$ clauses. While proved formally in \cref{lemma:singletypequditmono}, these clauses cannot be satisfied simultaneously as the subspaces are orthogonal to each other and so the instance should be rejected. For instances where all qudits serve a single role, the clock qudits and the $\Pi_{prop}$ clauses connecting them make an important criterion for their satisfiability. Focusing on these two elements only, an instance can then be thought of as a collection of disjoint directed sub-graphs. We refer to each one of these sub-graphs as a \textit{clock component}, and refer to both the clock component and its logical qudits as a \textit{sub-instance}. We now list some particular arrangements of the clauses within a clock component that are of interest and which affect the satisfiability of the instance. These are also illustrated in \cref{fig:instances}.

\begin{figure}[t]
    \centering
    \includegraphics[width=\textwidth]{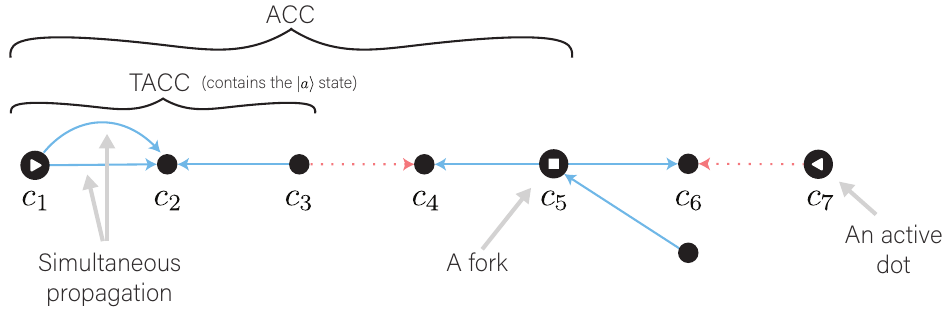}
    \caption{A clock component containing examples of relevant events that may occur in the clock component. The clock component has two active clock chains: one given by $c_1$ to $c_5$, and the other by $c_7,c_6,c_5$. The first chain is truncated due to an undefined clause, and hence $c_1$ to $c_3$ forms a TACC. Similarly, the second ACC is also truncated, making $c_7$ an active dot. For clarity, the endpoint qudits are not drawn.} \label{fig:instances}
\end{figure}

\begin{defn}[Forks]
    A clock component is said to have a \textup{fork} if there exists a clock qudit that is connected via $\Pi_{prop}$ clauses to more than two distinct clock qudits.
\end{defn}

\begin{defn}[Simultaneous propagation]
    A clock component is said to have \textup{simultaneous propagation} if there are two or more $\Pi_{prop}$ clauses acting on the same pair of clock qudits.
\end{defn}
\indent
In a clock component, pairs of $\Pi_{init}$ and $\Pi_{out}$ clauses and the clauses connecting them define a crucial part of the instance called the \textit{active clock chain}, abbreviated as ACC.

\begin{defn}[Active clock chain] \label{defn:acc}
    In a clock component, an \textup{active clock chain} is a set of clock qudits and $\Pi_{prop}$ clauses in the one-dimensional path connecting a clock qudit present in a $\Pi_{init}$ clause to a clock qudit present in a $\Pi_{out}$ clause, such that no other $\Pi_{init}$ or $\Pi_{out}$ clause acts on the clock qudits of this chain.
\end{defn}
\noindent
An active clock chain receives this name because setting the clock qudits in the chain to either one of the inactive states violates the $\Pi_{start}$ and $\Pi_{stop}$ projectors within the $\Pi_{init}$ and $\Pi_{out}$ clauses. Therefore, to satisfy the clauses of the chain, at least one of the clock qudits must be in an active state. However, the possible location of the active state in the satisfying state also depends on the existence of undefined clauses within the ACC. This leads us to the definition of \textit{truly active clock chains} (TACCs)---the part of the chain that necessarily contains an active state.

\begin{defn}[Truly active clock chain] \label{defn:tacc}
    A \textup{truly active clock chain} is a subset of clock qudits and $\Pi_{prop}$ clauses within an active clock chain consisting of the clock qudit present in the $\Pi_{init}$ clause up to (and including) the first clock qudit present in an undefined $\Pi_{prop}$ clause. If there is no such clause, the truly active clock chain is identical to the active clock chain.
\end{defn}
\noindent
We define the length $L$ of an ACC and TACC as the number of $\Pi_{prop}$ clauses within the chain. A chain of length $L$ then involves $L+1$ clock qudits. It is worth noting that some chains may also have $L = 0$ which may arise in an instance where both a $\Pi_{init}$ and $\Pi_{out}$ clause act on the same clock qudit, or when an undefined clause acts on the clock qudit with the starting $\Pi_{init}$ clause. We will call these chains \textit{active dots}. In this section, active dots and chains of non-zero length behave quite similarly, however, the distinction between the two becomes more relevant in the construction of \cref{section:our}. \\
\indent
For the final pieces of notation, we use $T$ instead of $L$ to the refer to the length of a TACC when it is a strict subset of an ACC. Finally, we let $q$ denote the number of logical qudits involved in clauses within a TACC. \\
\indent
Now, let us discuss the satisfiability of the possible input instances.

\subsection{Instance satisfiability} \label{subsection:monosat}

In this subsection we formally begin the proof that \textsc{Linear-Clock-Ternary-QSAT} is contained in $\BQP_1$. Here, we focus on determining the satisfiability conditions of all input instances, which will serve as the back-bone of the algorithm presented in the following subsection. We present our findings as a series of lemmas, and remark that we work under the assumption that the instance considered in a lemma has not been decided by any of the previous lemmas. For clarity and continuity, we omit the proofs of the lemmas here, and instead collect them in \cref{appendix:monogamy}. \\
\indent
In summary, we decide the satisfiability of an instance by evaluating all of its sub-instances and the logical qudits that they may share. An instance is unsatisfiable if at least one of its sub-instances is also unsatisfiable, and conversely, the instance is satisfiable iff all sub-instances are satisfiable. We will show that most sub-instances can be decided based on the arrangement of their clauses. However, there are three types of sub-instances which we do not know how to decide classically and hence make use of a quantum algorithm. The first type of sub-instance is that with a single one-dimensional TACC spanning the whole ACC, properly initialized logical qudits, and a coherent flow of time like that shown in \cref{fig:qinstance}. These are the sub-instances that can be identified with a quantum circuit acting on a ``data'' register and a ``clock'' register finalized by measurements of some of the data qubits, just like in previous uses of the circuit-to-Hamiltonian construction.\footnote{As we will also show at a later time, these instances can encode the computation of a $\BQP_1$ problem. Deciding them classically (either deterministically or probabilistically) would show that this model of classical computation is equivalent to quantum computation with perfect completeness.} The other types of instances are those with a single TACC and a coherent flow of time but contain simultaneous propagation clauses. There are two variations as the TACC may either span the whole ACC, or may be truncated due to undefined $\Pi_{prop}$ clauses (the undefined clauses may only occur after the first simultaneous propagation clause, as the sub-instance would be trivial otherwise). These three sub-instances are shown in \cref{fig:qinputs}.\\
\indent
To begin, when presented with a general instance, the first criterion regarding its satisfiability is that each qudit must serve a single role, e.g.\ data, clock, or endpoint:

\begin{restatable}[Single-type qudits]{lemma}{singletypequditmono} \label{lemma:singletypequditmono}
    If there is a $17$-dimensional qudit in the instance acted on by at least two projectors $\Pi_\alpha$ and $\Pi_{\alpha'}$, where $\alpha,\alpha' \in \{C,E,L\}$ and $\alpha \neq \alpha'$, the instance is unsatisfiable.
\end{restatable}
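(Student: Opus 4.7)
The plan is to show that no nonzero state can simultaneously lie in the null spaces of two clauses that assign incompatible roles to the same qudit, which forces unsatisfiability. The key structural fact I will use is the direct-sum decomposition of the $17$-dimensional qudit Hilbert space into the mutually orthogonal logical ($L$), endpoint ($E$), and clock ($C$) subspaces. Under this decomposition, the role-assigning projectors satisfy
\begin{equation*}
  \mathrm{null}(\Pi_L) = \mathcal{H}_L, \quad \mathrm{null}(\Pi_E) = \mathcal{H}_E, \quad \mathrm{null}(\Pi_C) = \mathcal{H}_C,
\end{equation*}
so for any $\alpha \ne \alpha'$ in $\{L, E, C\}$, we have $\mathrm{null}(\Pi_\alpha) \cap \mathrm{null}(\Pi_{\alpha'}) = \{\mathbf{0}\}$.

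Next, I will use the fact that each of the operators $O_{init}$, $O_{prop,U}$, and $O_{out}$ is an explicit sum of positive semidefinite terms, among which the role-assigning projectors $\Pi_{L,\cdot}$, $\Pi_{C,\cdot}$, $\Pi_{E,\cdot}$ always appear on the qudits the clause acts on (each $\Pi_{init}$ and $\Pi_{out}$ designates its three qudits as $L$, $C$, $E$, while each $\Pi_{prop,U}$ designates two as $L$ and two as $C$). Since the clauses $\Pi_{init}$, $\Pi_{prop,U}$, $\Pi_{out}$ are by definition the projectors onto the orthogonal complements of the null spaces of $O_{init}$, $O_{prop,U}$, $O_{out}$, a state $\ket{\psi}$ satisfies a clause iff $O \ket{\psi} = 0$, and because a sum of PSD operators annihilates a state iff each summand does, this forces every individual role-assigning term in $O$ to annihilate $\ket{\psi}$ as well.

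Combining these two steps completes the argument: suppose by contradiction that $\ket{\psi}$ satisfies every clause of the instance, and that qudit $i$ is acted on by two clauses enforcing $(\Pi_\alpha)_i$ and $(\Pi_{\alpha'})_i$ respectively. Then both projectors must annihilate $\ket{\psi}$, so the support of $\ket{\psi}$ on qudit $i$ lies in the intersection $\mathcal{H}_\alpha \cap \mathcal{H}_{\alpha'} = \{\mathbf{0}\}$. Equivalently, viewing $\ket{\psi}$ in a basis aligned with the direct-sum decomposition on qudit $i$, every basis component must have zero amplitude on qudit $i$, forcing $\ket{\psi} = 0$ and contradicting the existence of a normalized satisfying state.

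The argument is essentially a direct consequence of the construction, and I do not expect a substantive obstacle; the only subtlety to flag is the distinction between the defining positive semidefinite operators $O_{\bullet}$ and the projectors $\Pi_{\bullet}$ with matching null spaces, which is handled by noting that a sum of PSD terms has null space equal to the intersection of the summands' null spaces. I will also briefly remark that the same reasoning applies independently to each qudit in the instance, so a single offending qudit already certifies unsatisfiability, justifying the lemma as stated.
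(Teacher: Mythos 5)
Your argument is correct and is essentially the same as the paper's: in both cases the key observation is that the role-assigning projectors $\Pi_L$, $\Pi_E$, $\Pi_C$ have pairwise orthogonal null spaces (the corresponding direct summands of the qudit Hilbert space), so no nonzero state can be annihilated by two different ones on the same qudit. You spell out the intermediate step (a PSD sum annihilates a state iff each summand does, hence satisfying a clause forces satisfying its role-assigning term), which the paper leaves implicit, but the substance is identical.
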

\noindent
Therefore, instances that may have a satisfying state must have qudits that serve a single role at all times. By construction, the next criterion for determining the satisfiability of instances comes from the monogamy of entanglement conditions between clock and endpoint qudits. To properly analyze this structure, we separate the instance into smaller sub-instances by considering the sets of connected clock qudits (the clock components mentioned previously).\footnote{Although endpoint qudits are also able to connect clock qudits together via $\Pi_{init}$ and $\Pi_{out}$ clauses, we do not consider such clock components at the moment to avoid possible confusion. Besides, as will be shown shortly in \cref{lemma:uniqueclockmono}, satisfying these clauses inevitably leads to a violation of monogamy of entanglement.} Clearly, partitioning an instance by considering only its clock qudits may not generate completely disjoint sub-instances. These could still share logical qudits among them. We proceed by first evaluating the satisfiability of individual sub-instances and only then evaluate how these relate to each other. This is reasonable because the satisfying state of a sub-instance determines which of its logical qudits are actually constrained and should therefore be considered ``shared''.

\begin{figure}[t]
    \centering
    \includegraphics[width=0.9\textwidth]{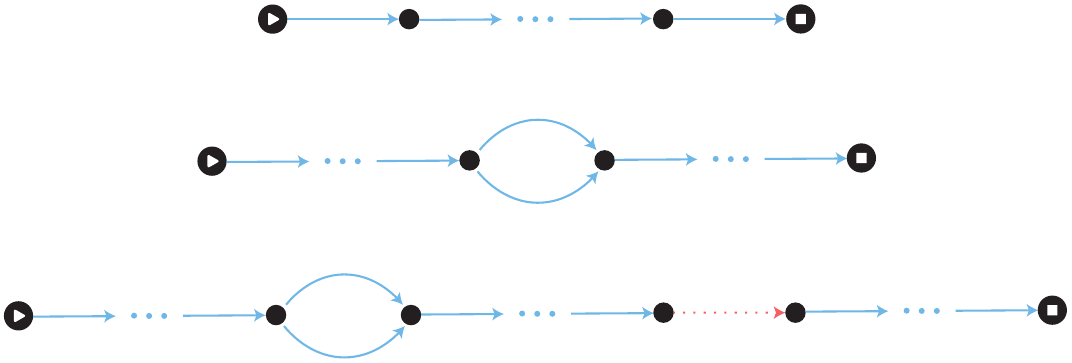}
    \caption{Examples of sub-instances whose satisfiability is determined with a quantum algorithm. Top: Instance with an active clock chain containing no undefined or simultaneous propagation clauses. The length of the chain may be $0 \leq L < \mathcal{O}(n)$. Middle: Instance with simultaneous propagation clauses followed by well-defined clauses. Bottom: Instance with simultaneous propagation clauses followed by both well-defined and undefined $\Pi_{prop}$ clauses.}
    \label{fig:qinputs}
\end{figure}

\subsubsection{A clock component and its logical qudits} \label{subsubsection:clock}

As mentioned previously and as illustrated by the following four lemmas, the principle of monogamy of entanglement allows us to reject a large number of clock components. Assuming that the clock components have at least one $\Pi_{prop}$ clause, we can show that the only kind of sub-instances not deemed unsatisfiable by the monogamy condition are those that form a single one-dimensional chain of clock qudits with a unique direction. Besides, if they contain $\Pi_{init}$ and $\Pi_{out}$ clauses, these must be positioned at the ends of the chain, with all arrows pointing away from $\Pi_{init}$ and into $\Pi_{out}$. We note that these instances may contain simultaneous propagation clauses, as long as they point in the correct direction. Let us state the lemmas corresponding to these claims, starting with the lemmas regarding the connections of clock qudits for clock components with multiple clock qudits. Afterwards, we briefly discuss the clock components formed by a single clock qudit.

\begin{restatable}[Clock qudit two neighbor maximum]{lemma}{linearchainmono} \label{lemma:linearchainmono}
    If a clock qudit in a clock component is connected to more than two other clock or endpoint qudits, the instance is unsatisfiable.
\end{restatable}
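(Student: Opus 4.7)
The plan is to reduce the statement to a direct application of monogamy of entanglement on the auxiliary $2$-dimensional subspaces attached to every clock qudit. First I would take inventory of the clauses that can create a ``connection'' to a given clock qudit $c$. From \cref{eqn:Oinit,eqn:Oprop,eqn:Oout}, each of $\Pi_{init}$, $\Pi_{prop,U}$, and $\Pi_{out}$ contains a term of the form $(I - \ket{\Phi^+}\!\bra{\Phi^+})$ that enforces a maximally entangled state on one specific pair of $2$-dimensional auxiliary subspaces. The subspace $C\!A_c$ is engaged by exactly one kind of incoming clause per neighbor: either an incoming $\Pi_{prop,U}$ (pairing $C\!A_c$ with the $C\!B$ of the predecessor clock qudit) or a $\Pi_{init}$ (pairing $C\!A_c$ with the $EC$ of an endpoint qudit). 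Dually, $C\!B_c$ is engaged by at most one outgoing $\Pi_{prop,U}$ or one $\Pi_{out}$ per neighbor.

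If $c$ is connected to more than two other clock or endpoint qudits, then by the pigeonhole principle at least one of $\{C\!A_c, C\!B_c\}$ is required by two distinct clauses to form a $\ket{\Phi^+}$ pair with two distinct partner subspaces; call them $X$ and $Y$. I would then invoke monogamy of entanglement in its cleanest form: in any global pure state $\ket{\psi}$ for which the reduced density matrix on $C\!A_c \otimes X$ equals $\ket{\Phi^+}\!\bra{\Phi^+}$, the Schmidt decomposition forces
\begin{equation*}
\ket{\psi} \;=\; \ket{\Phi^+}_{C\!A_c X} \otimes \ket{\phi}_{\text{rest}},
\end{equation*}
so that the reduced density on $C\!A_c \otimes Y$ factorizes as $(I/2)_{C\!A_c} \otimes \rho_Y$. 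Its overlap with $\ket{\Phi^+}\!\bra{\Phi^+}$ is at most $1/2$, so the associated projector $(I - \ket{\Phi^+}\!\bra{\Phi^+})_{C\!A_c, Y}$ has strictly positive expectation. Hence the two clauses cannot be simultaneously satisfied on any pure state, and by convexity on no mixed state either; in particular, no frustration-free (zero-energy) state exists.

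To conclude, once a zero-energy state is ruled out, the promise guarantees case (2) and the instance is a \emph{no}-instance. The principal obstacle in executing this plan is purely bookkeeping: one must match the specific partner subspaces for each of the three clause types, verify that endpoint qudits enter the monogamy argument on the same footing as clock qudits (they do, because $EC$ is also a $2$-dimensional space), and check that the ``role-assigning'' projectors $\Pi_C, \Pi_E$ inside each clause have already forced the relevant qudits to live in the $C$- or $E$-subspaces, so that the monogamy argument is applied to genuine $\mathbb{C}^2 \otimes \mathbb{C}^2$ systems rather than truncated fragments. No deeper technique is required; the lemma is in effect the design pay-off of enlarging the clock and endpoint Hilbert spaces with these auxiliary $2$-dimensional registers.
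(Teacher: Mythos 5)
Your argument is the same monogamy-of-entanglement argument the paper uses: each of the three clause types demands a $\ket{\Phi^+}$ pair on one of the two auxiliary subspaces $C\!A_c$, $C\!B_c$, so more than two connections forces one subspace into two distinct Bell pairs, which is impossible. The paper states this in three lines; you have merely made the pigeonhole step and the Schmidt-decomposition reason for monogamy explicit, which is a fine elaboration but not a different route.
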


\begin{restatable}[Unique direction of a clock chain]{lemma}{directionmono} \label{lemma:directionmono}
    If a clock qudit in the chain has two successors or two predecessors, the instance is unsatisfiable.
\end{restatable}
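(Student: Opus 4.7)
The plan is to derive a contradiction from monogamy of entanglement applied to the auxiliary 2-dimensional subspaces $C\!A$ and $C\!B$ of the clock qudits. Recall from \cref{eqn:Oprop} that the last line of $O_{prop,U}$ contributes the term $(I - \ket{\Phi^+}\!\bra{\Phi^+}_{C\!A,C\!B})_{3,4}$, which in any satisfying state forces the $C\!B$ subspace of the predecessor clock qudit to form a $\ket{\Phi^+}$ Bell pair with the $C\!A$ subspace of the successor. By the promise of \cref{defn:lctqsat}, it suffices to exhibit no common null-space vector of the instance's projectors when a clock qudit has two successors or two predecessors; the instance will then automatically fall into case (2) of the promise and be a no-instance.

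Suppose clock qudit $c$ has distinct successors $c_1$ and $c_2$ via two $\Pi_{prop}$ clauses. By \cref{lemma:singletypequditmono} I may assume each of $c, c_1, c_2$ is forced by $\Pi_C$ into its clock subspace, so their joint state lies in the $3 \otimes 2 \otimes 2$ tensor-product clock subspace on each site, in which $C\!A$ and $C\!B$ appear as honest tensor factors. A candidate satisfying state $\ket{\psi}$ would then have to lie in the joint $+1$ eigenspace of both $\ket{\Phi^+}\!\bra{\Phi^+}_{C\!B(c),\, C\!A(c_1)}$ and $\ket{\Phi^+}\!\bra{\Phi^+}_{C\!B(c),\, C\!A(c_2)}$. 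The first condition pins the reduced state on $C\!B(c)$ to be maximally mixed and completely uncorrelated with $C\!A(c_2)$, so by monogamy of entanglement the second projector cannot simultaneously have expectation $1$ on any such $\ket{\psi}$. Hence no satisfying state exists. The two-predecessor case is entirely symmetric: the $C\!A$ subspace of $c$ is now simultaneously required to form a Bell pair with the $C\!B$ subspace of each predecessor, and the same monogamy argument applies.

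The main subtlety is bookkeeping the direct-sum structure of the 17-dimensional qudit: one must check that simultaneous satisfaction of $\Pi_C$ on the three clock qudits puts their joint state squarely inside the tensor-product clock subspace, so that the standard two-qubit monogamy bound applies to the $C\!A$ and $C\!B$ factors without interference from the logical or endpoint subspaces. Once this is cleared up, the contradiction is immediate and entirely local, involving only three qudits and two $\Pi_{prop}$ clauses, and the same pattern of reasoning will carry over to the subsequent clock-geometry lemmas that also rule out illegal configurations via monogamy.
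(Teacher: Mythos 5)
Your proof is correct and takes essentially the same approach as the paper's: both reduce the two-successor and two-predecessor cases to a violation of monogamy of entanglement by observing that a single $C\!B$ (resp.\ $C\!A$) two-dimensional factor of the shared clock qudit would have to form a $\ket{\Phi^+}$ Bell pair with two distinct partner subspaces simultaneously. Your extra care in pinning the three clock qudits into their clock subspaces via $\Pi_C$ (so that $C\!A$ and $C\!B$ are genuine tensor factors, not merely summands) is a reasonable added rigor step that the paper leaves implicit, but the substance of the argument is identical.
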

\noindent
As promised, these two lemmas reject all clock components that do not form a one-dimensional chain of clock qudits where all $\Pi_{prop}$ clauses point in the same direction. These clock components may be either in the form of a cycle or a one-dimensional line. However, neither lemma can reject the case where a clock qudit in the chain is connected to its successor and/or predecessor via multiple $\Pi_{prop}$ clauses (which may have different unitaries and act on different logical qudits). The principle of monogamy of entanglement is incapable of ruling out these instances since each additional $\Pi_{prop}$ (with the correct direction) does not require entangling new subspaces of the clock qudits. As a result, the same already-entangled subspaces can be reused. Although this behavior is a required feature of the construction for $\Pi_{init}$ clauses, we cannot prevent the same from happening for $\Pi_{prop}$ clauses.\footnote{To prove $\BQP_1$-hardness it is necessary to stack multiple $\Pi_{init}$ clauses on the same clock qudit in order to initialize all of the required logical qudits; see \cref{fig:qinstance}.} The stacking of these clauses presents a great downside and an extra difficulty in deciding the satisfiability of an instance. In Ref.\ \cite{meiburg2021quantum}, Meiburg crucially missed this important event, which as we will see shortly, requires a more comprehensive quantum algorithm than the one presented there. We discuss these instances in more detail in \cref{subsubsection:simultaneous}, for now, let us present two lemmas regarding the $\Pi_{init}$ and $\Pi_{out}$ clauses.

\begin{restatable}[Unique endpoint qudit for $\Pi_{init}/\Pi_{out}$ clauses]{lemma}{uniqueclockmono} \label{lemma:uniqueclockmono}
    If an endpoint qudit is connected to more than one clock qudit, or connected to a single clock qudit via a $\Pi_{init}$ and $\Pi_{out}$ clause, the instance is unsatisfiable.
\end{restatable}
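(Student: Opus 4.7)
The plan is to reduce both cases to a single application of monogamy of entanglement. By \cref{lemma:singletypequditmono} I may assume each qudit of the instance plays a single role, so the endpoint qudit in question is confined to its 2-dimensional $EC$ subspace, and any clock qudit it touches is confined to the 12-dimensional clock subspace, which factorizes as (the $\{\ket{r},\ket{a},\ket{d}\}$ factor) $\otimes$ ($C\!A$) $\otimes$ ($C\!B$) with $C\!A$ and $C\!B$ each being an independent qubit. The critical observation is that $\Pi_{init}$ contains the summand $(I - \ket{\Phi^+}\!\bra{\Phi^+}_{C\!A,EC})$ and $\Pi_{out}$ contains $(I - \ket{\Phi^+}\!\bra{\Phi^+}_{C\!B,EC})$, so if the overall state $\ket{\psi}$ is a satisfying state, the reduced state on the two-qubit pair named in each clause must be exactly $\ket{\Phi^+}\!\bra{\Phi^+}$.

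First I would dispatch the case where the endpoint qudit $e$ is connected to two distinct clock qudits $c_1$ and $c_2$ (via any combination of $\Pi_{init}$ and $\Pi_{out}$ clauses). Writing $Q_i \in \{C\!A_i, C\!B_i\}$ for the subspace of $c_i$ that the clause involves, the two clauses demand that the marginals on $(EC, Q_1)$ and on $(EC, Q_2)$ both equal $\ket{\Phi^+}\!\bra{\Phi^+}$. I would now apply monogamy of entanglement: if the pure state $\ket{\psi}$ has the marginal on $(EC,Q_1)$ equal to a pure maximally entangled state, then $\ket{\psi}$ factorizes as $\ket{\Phi^+}_{EC,Q_1}\otimes\ket{\chi}_{\rm rest}$, which forces the reduced state of $EC$ to be $I/2$ and the reduced state on $(EC, Q_2)$ to be of the form $(I/2)_{EC}\otimes\rho_{Q_2}$. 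This is not $\ket{\Phi^+}\!\bra{\Phi^+}$, a contradiction, so no satisfying state exists.

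Next I would treat the case in which $e$ is connected to a single clock qudit $c$ through one $\Pi_{init}$ clause and one $\Pi_{out}$ clause. The two clauses now demand that the marginals on $(EC, C\!A)$ and on $(EC, C\!B)$ are both $\ket{\Phi^+}\!\bra{\Phi^+}$. Because $C\!A$ and $C\!B$ are independent tensor factors inside the 12-dimensional clock subspace of the same qudit $c$, the monogamy argument from the previous paragraph goes through verbatim with $Q_1 = C\!A$ and $Q_2 = C\!B$: the factorization forced by the first clause makes $EC$ maximally mixed and uncorrelated with $C\!B$, contradicting the second clause.

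I expect the only subtle point to be making sure the monogamy step is phrased cleanly when the ambient state is pure but the clauses test marginals; the above argument handles this by explicitly using the Schmidt decomposition induced by one of the $\ket{\Phi^+}$ constraints to pin down the entire global state up to the ``rest'' factor. A brief closing remark can observe that the conclusion extends immediately to the case where $e$ is connected to three or more clock qudits, since the two-clause obstruction is already present within any pair of its clauses.
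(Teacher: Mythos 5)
Your proof is correct and takes essentially the same approach as the paper's: both invoke monogamy of entanglement for the $\ket{\Phi^+}\!\bra{\Phi^+}$ constraints placed on $EC$ by $\Pi_{init}$ and $\Pi_{out}$, concluding that $EC$ cannot be maximally entangled with two distinct subspaces simultaneously. Your version is considerably more explicit, carrying out the factorization $\ket{\psi}=\ket{\Phi^+}_{EC,Q_1}\otimes\ket{\chi}_{\rm rest}$ and the resulting product structure of the $(EC,Q_2)$ marginal, whereas the paper states the monogamy step in one sentence.
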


\begin{restatable}[Unique clock qudit for $\Pi_{init}/\Pi_{out}$ clauses]{lemma}{uniqueendpointmono} \label{lemma:uniqueendpointmono}
    Let $c_0$ refer to the clock qudit that all $\Pi_{prop}$ clauses point away from, and $c_L$ the one at the other end of the chain. If a $\Pi_{init}$ clause acts on a clock qudit that is not $c_0$, the instance is unsatisfiable. Similarly, if a $\Pi_{out}$ clause acts on a clock qudit that is not $c_L$, the instance is unsatisfiable.
\end{restatable}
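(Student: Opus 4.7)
The plan is to invoke monogamy of entanglement directly at the $C\!A$ or $C\!B$ subspace of the clock qudit on which the misplaced $\Pi_{init}$ or $\Pi_{out}$ clause acts. By the previous four lemmas, every clock component still under consideration is a one-dimensional chain $c_0, c_1, \ldots, c_L$ in which each $\Pi_{prop}$ clause points from predecessor to successor, so the unique source $c_0$ and the unique sink $c_L$ are well defined.

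First I would handle the $\Pi_{init}$ case. Assume $\Pi_{init}$ acts on some $c_i$ with $i \neq 0$. Because $c_i$ is not the source of the chain, there must exist at least one $\Pi_{prop}$ clause pointing from $c_{i-1}$ to $c_i$. That $\Pi_{prop}$ clause contains the Bell-pair term $I - \ket{\Phi^+}\!\bra{\Phi^+}_{C\!A, C\!B}$ on $(c_{i-1}, c_i)$, requiring $C\!B_{i-1}$ and $C\!A_i$ to be maximally entangled. At the same time, the $\Pi_{init}$ clause contains $I - \ket{\Phi^+}\!\bra{\Phi^+}_{C\!A, EC}$, requiring $C\!A_i$ to be maximally entangled with the endpoint's $EC$ subspace. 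Since a single qubit cannot be maximally entangled with two distinct qubit subsystems, at least one of the two Bell-pair projectors must have strictly positive expectation on every state, so no state lies in the joint null space.

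The $\Pi_{out}$ case is symmetric: if $\Pi_{out}$ acts on some $c_j$ with $j \neq L$, then a successor $\Pi_{prop}$ clause from $c_j$ to $c_{j+1}$ forces $C\!B_j$ and $C\!A_{j+1}$ to share a Bell pair, while $\Pi_{out}$ forces $C\!B_j$ to share a Bell pair with its endpoint's $EC$ subspace; monogamy once again forbids both from holding. Combining the two cases, the instance has strictly positive ground energy, and by the problem's promise the violation is at least $1/\mathrm{poly}(n)$, so the instance is a NO instance.

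The hard part is essentially already done by the earlier lemmas: once we may assume a single directed chain with unambiguous $c_0$ and $c_L$, the monogamy argument is immediate and requires no fine estimate of the ground energy. The only edge case I would explicitly check is $c_0 = c_L$, which happens when the clock component contains no $\Pi_{prop}$ clauses; there the lemma's conclusion places no nontrivial constraint, since $i = 0$ and $j = L$ hold automatically.
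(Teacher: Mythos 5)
Your proof is correct and takes essentially the same route as the paper: both argue that an interior clock qudit has its $C\!A$ (resp.\ $C\!B$) subspace already claimed by an adjacent $\Pi_{prop}$ Bell-pair constraint, so a misplaced $\Pi_{init}$ (resp.\ $\Pi_{out}$) clause demanding a second Bell pair on the same subspace forces a monogamy violation. Your edge-case note on $c_0 = c_L$ also matches the paper, which explicitly says this lemma does not apply to single-clock-qudit components.
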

\noindent
The first lemma shows that satisfiable instances cannot have clock components joined together by endpoint qudits, which is why we ignored these cases in the first place. Furthermore, this lemma also rejects instances where an endpoint qudit is present in both $\Pi_{init}$ and $\Pi_{out}$ clauses even if they connect to the same clock qudit. The second lemma rejects instances where $\Pi_{init}$ or $\Pi_{out}$ clauses act on the wrong clock qudits. Finally, observe that these lemmas also do not disallow stacking $\Pi_{init}$ (or $\Pi_{out}$) clauses as long as they act on the same clock and endpoint qudit. \\
\indent
To summarize, the sub-instances not rejected by any of the four lemmas above must be those with one-dimensional clock chains where all $\Pi_{prop}$ clauses point in the same direction. Moreover, if the sub-instance contains $\Pi_{init}$ or $\Pi_{out}$ clauses, all $\Pi_{init}$ clauses must act on $c_0$ and a single endpoint qudit, and all $\Pi_{out}$ on $c_L$ and a single endpoint qudit (the endpoint qudits must also be different). In other words, $\Pi_{init}$ and $\Pi_{out}$ mark the endpoints of the chain and no other $\Pi_{prop}$ clause can extend past them. Also, these show that cyclic instances that may be satisfiable cannot have any $\Pi_{init}$ or $\Pi_{out}$ clauses. \\
\indent
Now, let us show that clock components without both $\Pi_{init}$ and $\Pi_{out}$ clauses, regardless of whether they have undefined or simultaneous propagation clauses, are trivially satisfiable.

\begin{restatable}[Lack of an ACC]{lemma}{noendpointsmono} \label{lemma:noendpointsmono}
    If a clock component does not have at least one $\Pi_{init}$ and one $\Pi_{out}$ clause, the sub-instance is satisfiable.
\end{restatable}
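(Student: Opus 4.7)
The plan is to exhibit an explicit zero-energy product state for any clock component $G$ missing either a $\Pi_{init}$ or a $\Pi_{out}$ clause, splitting into two symmetric cases (if both are absent, either construction will work). The key observation driving both constructions is that in the propagation analysis of Section 3.2.2, the ``trivial'' satisfying states $\ket{\phi}\otimes\ket{rr}$ and $\ket{\phi}\otimes\ket{dd}$ of $\Pi_{prop,U}$ work for any unitary $U$ provided $\ket{\phi}$ lies in the well-defined subspace; simultaneously, $\Pi_{start}$ forbids only $\ket{r}$ (appearing inside $\Pi_{init}$) and $\Pi_{stop}$ forbids only $\ket{d}$ (appearing inside $\Pi_{out}$). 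Hence, absence of $\Pi_{out}$ frees us to park the entire clock in $\ket{d}$, and absence of $\Pi_{init}$ frees us to park it in $\ket{r}$.

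In Case A (no $\Pi_{out}$), I would set every logical qudit of the sub-instance to $\ket{0}_L$, every primary clock subspace to $\ket{d}_C$, and place a $\ket{\Phi^+}$ Bell pair between $C\!B$ of each clock qudit and $C\!A$ of its successor along the chain. For each $\Pi_{init}$ clause that is present, I place an additional Bell pair between the endpoint qudit's $EC$ subspace and $C\!A$ of $c_0$; the remaining unused auxiliary subspaces (the $C\!A$ of $c_0$ if there is no $\Pi_{init}$, and the $C\!B$ of the final clock qudit) may be fixed arbitrarily. One then checks clause-by-clause against \cref{eqn:initsat} and \cref{eqn:propsat}: each $\Pi_{prop,U}$ reduces to the $\ket{00}\otimes\ket{dd}$ form (well-defined since no logical qudit is $\ket{?}$), and each $\Pi_{init}$ is satisfied by the first line of \cref{eqn:initsat} since the clock's $\ket{d}$ state disables the $\ket{a}$-conditioned initialization requirement. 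Case B (no $\Pi_{init}$) is mirror-symmetric: the primary clock subspaces are set to $\ket{r}_C$, Bell pairs are assigned analogously (with $C\!B$--$EC$ pairings for any $\Pi_{out}$ clauses present), and verification uses the $\ket{00}\otimes\ket{rr}$ branch of \cref{eqn:propsat} and the first line of \cref{eqn:outsat}.

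The main obstacle, and the reason this proof requires more than a casual waving at \cref{eqn:propsat}, is checking that the auxiliary Bell-pair assignment is globally consistent. By \cref{lemma:linearchainmono,lemma:directionmono,lemma:uniqueclockmono,lemma:uniqueendpointmono}, any component we still need to consider is either a one-dimensional directed chain with $\Pi_{init}$/$\Pi_{out}$ only at its endpoints, or a directed cycle containing no such clauses. In the chain topology, each $C\!A$ is demanded to be entangled with at most one other subspace (either $C\!B$ of its predecessor along the chain or $EC$ of an $\Pi_{init}$ endpoint, but never both, since $\Pi_{init}$ sits on $c_0$ which has no predecessor), and symmetrically for $C\!B$. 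In the cycle topology, going around the loop pairs each $C\!B_i$ with $C\!A_{i+1 \bmod \ell}$ in a consistent perfect matching. Simultaneous-propagation clauses cause no issue because multiple $\Pi_{prop,U}$ clauses acting between the same ordered pair of clock qudits impose the identical Bell-pair requirement, which a single shared $\ket{\Phi^+}$ simultaneously satisfies. Once this bookkeeping is done, the constructed product state lies in the null space of every projector in the sub-instance, establishing satisfiability.
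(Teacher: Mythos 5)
Your proof is correct and takes essentially the same approach as the paper: set every clock qudit to whichever inactive state is not forbidden ($\ket{d}$ if $\Pi_{out}$ is absent, $\ket{r}$ if $\Pi_{init}$ is absent), and observe that no propagation is triggered. You are a bit more explicit than the paper about the auxiliary Bell-pair bookkeeping and about fixing the logical qudits to a well-defined state in the $\ket{d}$ case (so that $\Pi_{clock,?}$ never activates), but the core argument is identical.
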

\indent
Now it should be apparent why the order in which we evaluate the clock component matters. For example, we cannot conclude that a clock component with only a $\Pi_{init}$ clause has a satisfying state without first checking that the instance is linear, the direction of its $\Pi_{prop}$ clauses, and the arrangement of the endpoint qudits.\\
\indent
The sub-instances not decided by any of the lemmas above must be those consisting entirely of an ACC which also displays a clear computational flow of time, i.e.\ they have a start, an end, and all $\Pi_{prop}$ clauses in between point away from the starting qudit. The ACC however may contain simultaneous propagation clauses. At the moment, we can gather that if these sub-instances are satisfiable, at least one of the clock qudits of the chain must be in the state $\ket{a}$. This is because setting all clock qudits of the chain to $\ket{r}$ violates the $\Pi_{init}$ clause, setting them to $\ket{d}$ violates the $\Pi_{out}$ clause, and the $\Pi_{clock}$ (both $\Pi_{clock,D}$ and $\Pi_{clock,?}$) clauses penalize any direct clock transitions from $\ket{d}$ to $\ket{r}$ and vice versa. \\
\indent
For a moment, consider one of these sub-instances without simultaneous propagation clauses and imagine we had defined the behavior of $\Pi_{prop}$ clauses in a single way (as in \cref{eqn:projs}) instead of conditioning on the state of the logical qudit. The presence of the $\ket{a}$ state would trigger the propagation of computation (in order to satisfy the $\Pi_{prop}$ clauses) through the whole chain, and the only satisfying state would be a history state acting on both initialized and free logical qudits. As mentioned previously, the free logical qudits give one the opportunity to guess a solution and the sub-instance therefore results in a difficulty likely greater than $\BQP_1$. Conditioning on the state of the logical qudits thus offers us other possible satisfying assignments. At this moment however, there is ambiguity on which logical qudits must be initialized or may remain free. The following proposition offers a way forward:\footnote{We state this as a proposition rather than a lemma as it concerns the form of a potential satisfying state, and not a condition about the satisfiability of the instance.}

\begin{restatable}[Initialization of logical qudits]{prop}{activetaccmono} \label{prop:activetaccmono}
    Let $c_0, \ldots, c_L$ with $L \geq 0$ be the clock qudits of an ACC. Let $c_0$ be the qudit present in the $\Pi_{init}$ clauses and $c_L$ the one in the $\Pi_{out}$ clauses. A state of the qudits of the sub-instance where $c_0$ is not $\ket{a}$ in any basis state of the superposition is not a satisfying state.
\end{restatable}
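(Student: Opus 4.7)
The plan is a forward induction along the ACC that propagates a ``stuck in $\ket{d}$'' condition from $c_0$ all the way to $c_L$, where it then collides with the $\Pi_{stop}$ term of $\Pi_{out}$. First I would observe that every $\Pi_{init}$ clause on $c_0$ contains the summand $\Pi_{start} = \ket{r}\!\bra{r}_{c_0}$, which kills any basis component of $\ket{\psi}$ with $c_0 = \ket{r}$. Combined with the hypothesis that no basis component has $c_0 = \ket{a}$, this forces $c_0 = \ket{d}$ in every nonzero computational-basis term of $\ket{\psi}$. That is the base case.

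For the inductive step, suppose every nonzero basis component of $\ket{\psi}$ has $c_i = \ket{d}$, and consider a $\Pi_{prop}$ clause linking $c_i$ to $c_{i+1}$ and acting on logical qudits $l_1, l_2$. Using a Schmidt-style decomposition over the remaining qudits, the restriction of $\ket{\psi}$ to the four qudits $l_1 l_2 c_i c_{i+1}$ must lie in the null space of $\Pi_{prop}$, which, from the characterization in \cref{subsubsection:prop}, splits as a direct sum of a well-defined part spanned by $\ket{\phi}\otimes\ket{rr}$, $\ket{\phi}\otimes\ket{dd}$, and $(\ket{\phi}\otimes\ket{ar}+U\ket{\phi}\otimes\ket{da})/\sqrt{2}$ (with $\ket{\phi}\in\{0,1\}^{\otimes 2}$), and an undefined part spanned by $\ket{\phi}\otimes\ket{rr}$ and $\ket{\phi}\otimes\ket{ar}$ (with $\ket{\phi}$ containing a $\ket{?}$). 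The undefined part has no support on $c_i = \ket{d}$, so it contributes nothing. Within the well-defined part, fixing $c_i = \ket{d}$ kills the $\ket{rr}$ contributions, and the $\ket{ar}$ piece of the superposition must also vanish. Since the $\{\ket{\phi}\}$ are linearly independent, this forces every coefficient multiplying the superposition vector to be zero; only the $\ket{dd}$ component survives, and hence $c_{i+1} = \ket{d}$ in every basis term of $\ket{\psi}$ as well. Iterating this argument along the ACC, $c_j = \ket{d}$ for all $0 \le j \le L$; in particular $c_L = \ket{d}$, which is penalized by the $\Pi_{stop} = \ket{d}\!\bra{d}_{c_L}$ term inside every $\Pi_{out}$ clause, so $\ket{\psi}$ cannot be a satisfying state.

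The main obstacle, as is typical in such step-by-step clock arguments, is the coherent propagation term: because $(\ket{\phi}\otimes\ket{ar}+U\ket{\phi}\otimes\ket{da})/\sqrt{2}$ entangles a $\ket{d}$ component at $c_i$ with a forbidden $\ket{a}$ component at $c_i$ inside a \emph{single} null-space vector, one must argue carefully that ruling out $c_i = \ket{a}$ in every basis term — not merely in a projected marginal — actually forces the coefficient of this entangled vector to vanish, rather than simply retaining its $\ket{da}$-shaped slice. A minor bookkeeping point is that if some intermediate $\Pi_{prop}$ clause happens to be undefined, the induction terminates even earlier: $c_i = \ket{d}$ is already incompatible with the undefined branch of the null space, so $\ket{\psi}$ fails that clause directly and we never need to reach $c_L$.
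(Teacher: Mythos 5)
Your proof is correct and follows essentially the same route as the paper's: rule out $c_0 = \ket{r}$ via $\Pi_{start}$, conclude $c_0 = \ket{d}$, inductively force $c_{i+1} = \ket{d}$ by noting that the coherent null vector of $\Pi_{prop}$ has an $\ket{ar}$ slice forbidden at $c_i$ (and that the undefined piece has no $c_i = \ket{d}$ support at all), and collide with $\Pi_{stop}$ at $c_L$. Your Schmidt-decomposition framing makes explicit the step the paper treats more informally --- that excluding $c_i = \ket{a}$ in \emph{every} basis term forces the coefficient of the entangled propagation vector to vanish outright rather than merely being sliced --- but the underlying induction is the same.
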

\noindent
This proposition implies that if a satisfying state of the sub-instance exists, $c_0$ must be $\ket{a}$ at some point in ``time'' (must be $\ket{a}$ in at least one of the basis states of the satisfying superposition). Consequently, at this ``time'', all logical qudits present in the $\Pi_{init}$ clauses attached to $c_0$ are required to be in the state $\ket{0}$. The other logical qudits remain free. This then allows us to tell apart the $\Pi_{prop}$ clauses that must be well-defined from those that may not. \\
\indent
\cref{prop:activetaccmono} also requires us to make a decision. Recall that since sub-instances may share logical qudits, it is possible that a sub-instance constrains a logical qudit to be $\ket{0}$, but in a different sub-instance this logical qudit is not initialized. Should we think of the logical qudit as initialized to $\ket{0}$ in both sub-instances? Or should the constraint be considered locally and let the qudit be $\ket{0}$ in the first sub-instance and free in the second? As is discussed in more detail in \cref{appendix:choice}, while both ways yield the same result, we choose to perform this operation globally as it simplifies the remaining parts of the analysis. This way, when we encounter an undefined clause while analyzing a clock component, it ensures us that it is not initialized by any clock component. As a result, the TACCs are fixed which simplifies the analysis of instances. This will become even clearer when considering the classical algorithm of \cref{subsection:algorithmmono}, in particular steps (3) and (4).\\
\indent
Importantly, this proposition also shows that it is not possible to abuse the undefined states and claim that all instances are satisfied by setting all logical qudits to $\ket{?}$. Simply, this assignment would not be consistent with the state of the clock qudits.\\
\indent
Finally, while \cref{prop:activetaccmono} demonstrates that TACCs must have at least one active state, we now show that they must have exactly one such state.

\begin{restatable}[Unique active state in a TACC]{prop}{uniqueactivetaccmono} \label{prop:uniqueactivetaccmono}
    Let $c_0, \ldots, c_T$ with $0 < T \leq L$ be the clock qudits of a TACC with non-zero length. A state in which at any given time there are two clock qudits $c_a$ and $c_b$ with $0 \leq a < b \leq T$ in the state $\ket{a}$ is not a satisfying state.
\end{restatable}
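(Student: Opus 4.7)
The plan is to derive a contradiction by exploiting the fact that the clock-keeping projectors buried inside each $\Pi_{prop}$ clause of \cref{eqn:Oprop} are diagonal in the clock computational basis, which reduces the argument to a pointwise implication that I can propagate along the TACC. Fix a candidate satisfying state $\ket{\psi}$ and suppose for contradiction that there is a computational-basis configuration $\ket{b}$ of nonzero amplitude in $\ket{\psi}$ in which $c_a = c_b = \ket{a}$ for some $0 \le a < b \le T$.

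First I would unpack the null space of a single $\Pi_{prop}$ clause at the level of clock basis states. Whenever the two logical qudits of the clause are in $\{\ket{0},\ket{1}\}$ in $\ket{b}$, the summand $(I^{\otimes 2}\otimes \Pi_{clock,D})(\Pi_D \otimes \Pi_D \otimes I^{\otimes 2})$ forces the clock pair $(c_j,c_{j+1})$ into $\{\ket{rr},\ket{ar},\ket{da},\ket{dd}\}$; when at least one logical qudit is $\ket{?}$ in $\ket{b}$, the summand $(I^{\otimes 2}\otimes \Pi_{clock,?})(I - \Pi_D \otimes \Pi_D \otimes I^{\otimes 2})$ kicks in and restricts the clock pair to $\{\ket{rr},\ket{ar}\}$. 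Since $\Pi_{clock,D}$ and $\Pi_{clock,?}$ are both diagonal in the clock computational basis, these are constraints on individual basis-state amplitudes rather than merely on superpositions, so they apply to $\ket{b}$ directly. The key feature shared by both cases, and the only one I will use, is that $c_j \in \{\ket{r},\ket{a}\}$ in $\ket{b}$ forces $c_{j+1} = \ket{r}$ in $\ket{b}$.

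Next I would apply this implication along the chain. For every consecutive pair $c_j,c_{j+1}$ with $a \le j < b \le T$, the TACC (being a subgraph of the ACC) contains at least one $\Pi_{prop}$ clause connecting them, so the preceding implication is in force. Starting from $c_a = \ket{a}$, I obtain $c_{a+1} = \ket{r}$; iterating, $c_{a+1} = \ket{r}$ yields $c_{a+2} = \ket{r}$, and so on by finite induction until $c_b = \ket{r}$, contradicting $c_b = \ket{a}$. Since no basis state of the form forbidden in the statement can appear with nonzero amplitude, no such state can be satisfying.

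There is no genuine obstacle here; the only subtleties are of a bookkeeping flavour. Simultaneous-propagation clauses on the same pair only repeat the same constraint and so do not interfere with the induction. The argument is also indifferent to whether consecutive clauses along the chain are well-defined or undefined in $\ket{b}$, because the implication ``$c_j \in \{\ket{r},\ket{a}\} \Rightarrow c_{j+1} = \ket{r}$'' holds in both regimes. Notably, I never have to invoke the off-diagonal term $\Pi_{work,U}$, which is why the conclusion is purely combinatorial and oblivious to the particular unitaries labelling the clauses.
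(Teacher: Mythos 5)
Your proof is correct and follows essentially the same route as the paper's. The paper's argument simply notes that, the chain being oriented towards $c_T$, the only clock assignment consistent with the $\Pi_{clock,D}$ constraints has $c_i = \ket{d}$ for $i<a$ and $c_i = \ket{r}$ for $i>a$, so $c_b = \ket{a}$ is impossible; you derive the $i>a$ half of that structure explicitly by a finite induction on the pointwise implication $c_j \in \{\ket{r},\ket{a}\} \Rightarrow c_{j+1}=\ket{r}$, which is the same combinatorial fact about the diagonal clock-keeping projectors. (One small remark: within the TACC all $\Pi_{prop}$ clauses are by definition well-defined, so only $\Pi_{clock,D}$ is actually in play here; your observation that the implication also holds under $\Pi_{clock,?}$ is correct but not needed for this proposition.)
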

\noindent
These two propositions show that satisfying the clauses within TACCs require that there is exactly one active clock qudit, which is in agreement with the desired encoding of the clock in \cref{eqn:clock}. As a consequence of this statement, satisfying the $\Pi_{prop}$ clauses within the chain (which are all well-defined) implies that the satisfying state must demonstrate the evolution of a state under the circuit $U=U_T \ldots U_1$ (see \cref{eqn:propsat}). Furthermore, if the TACC terminates with a $\Pi_{out}$ clause, then these are the same conditions that apply on the clock chain of instances like \cref{fig:reductioninstance}, and we can conclude that, if satisfiable, it must be uniquely satisfied by a history state. However, it is not entirely clear if the same holds for TACCs that terminate because of an undefined clause. We now show that these chains are always satisfied by a history state that is truncated at the time the computation reaches the undefined clause.

\begin{restatable}[Satisfying state of a truncated ACC]{lemma}{truncated} \label{lemma:truncated}
    Let $c_0, \ldots, c_L$ be the clock qudits of an ACC, and suppose $\Pi_{prop}^{(c_T,c_{T+1})}$ with $0 \leq T < L$ is the first propagation clause that acts on a free logical qudit $l_u$. The clock qudits $c_0, \ldots, c_T$ then form a TACC. Furthermore, assume the TACC has no simultaneous propagation clauses, and let $U_1, \ldots, U_T$ be the unitaries associated with the $\Pi_{prop}$ clauses of this chain. Finally, let $S$ denote the set of all logical qudits of the sub-instance, and $D$ with $\norm{D} = q$ the subset of these qudits that are initialized by the $\Pi_{init}$ clause on $c_0$. Under these conditions, the sub-instance is trivially satisfiable and the satisfying state is given by

    \begin{equation} \label{eqn:truncatedhist}
        \ket{\psi_{thist}} := \left[ \frac{1}{\sqrt{T+1}}\sum_{t = 0}^{T} U_{t} \ldots U_0 \ket{0}^{\otimes q} \otimes \ket{?}_{l_u} \otimes \ket{ \underbrace{d \ldots d}_\text{t} a_t  \underbrace{r \ldots r}_\text{T-t}} \right] \otimes \ket{\phi}_{S\setminus D} \otimes \ket{\underbrace{r \ldots r}_\text{L-T}},
    \end{equation}
    \noindent
    where $\ket{\phi}$ is an arbitrary state of the logical qudits of the sub-instance not present in clauses within the TACC.
\end{restatable}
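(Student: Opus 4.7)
The plan is to exhibit $\ket{\psi_{thist}}$, augmented with $\ket{\Phi^+}$ Bell pairs on every required auxiliary $C\!A$/$C\!B$/$EC$ subspace, as an explicit zero-energy state of every clause in the sub-instance. Applying \cref{lemma:linearchainmono,lemma:directionmono,lemma:uniqueclockmono,lemma:uniqueendpointmono}, the clock structure is already forced to be a one-dimensional chain with unique direction and with each $\Pi_{init}$, $\Pi_{out}$ attached to a single endpoint qudit, so every auxiliary $2$-dimensional subspace has exactly one matching partner. I would pair them off and place a $\ket{\Phi^+}$ in each pair, which trivially discharges the $(I-\ket{\Phi^+}\!\bra{\Phi^+})$ term of every $\Pi_{init}$, $\Pi_{prop,U}$, and $\Pi_{out}$ clause. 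The remaining verification then involves only the primary clock basis together with the logical qudits, and it splits into four clause types.

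For each $\Pi_{init}$ on $c_0$: the only summand with $c_0=\ket{a}$ is $t=0$, in which the qudits of $D$ hold $U_0\ket{0}^{\otimes q}=\ket{0}^{\otimes q}$, so $(I-\ket{0}\!\bra{0})\otimes\ket{a}\!\bra{a}$ vanishes, and every summand has $c_0\ne\ket{r}$, so $\Pi_{start}=\ket{r}\!\bra{r}$ is zero. For each well-defined $\Pi_{prop,U_i}^{(c_{i-1},c_i)}$ with $1\le i\le T$, the summands with $t\notin\{i-1,i\}$ have $(c_{i-1},c_i)\in\{\ket{rr},\ket{dd}\}$ and are killed by both $\Pi_{work,U_i}$ and $\Pi_{clock,D}$, while the $t=i-1$ and $t=i$ summands combine into the Kitaev-style null-space vector $\ket{\phi_{i-1}}\otimes\ket{ar}+U_i\ket{\phi_{i-1}}\otimes\ket{da}$ of \cref{eqn:propsat}, where $\ket{\phi_{i-1}}=U_{i-1}\cdots U_0\ket{0}^{\otimes q}$. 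Because $\Pi_{prop}^{(c_T,c_{T+1})}$ is by hypothesis the \emph{first} clause touching a free qudit, both logical qudits of every earlier propagation lie in $D$, so the $\Pi_D\otimes\Pi_D$ branch of that clause is uniformly selected. For the terminating undefined clause $\Pi_{prop}^{(c_T,c_{T+1})}$, the factor $\ket{?}_{l_u}$ selects the complementary branch in every summand, and the clock pair $(c_T,c_{T+1})\in\{\ket{rr},\ket{ar}\}$ lies in the null space of $\Pi_{clock,?}$ by \cref{eqn:nullspacehprop}. Finally, every clause acting on $c_i$ with $i>T$, including each $\Pi_{out}$ on $c_L$, has its clock pair in $\ket{rr}$, which is killed by both $\Pi_{clock,D}$ and $\Pi_{clock,?}$; the $\Pi_{stop}$ and $\ket{0}\!\bra{0}\otimes\ket{a}\!\bra{a}$ terms of $\Pi_{out}$ are inactive because $c_L=\ket{r}$, independently of the tail state $\ket{\phi}$.

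The main obstacle I anticipate is the bookkeeping in the propagation step: showing that a single superposition simultaneously telescopes into a null-space vector for \emph{every} well-defined $\Pi_{prop,U_i}$ with $i\le T$, and that no cross-talk arises between the $\Pi_D\otimes\Pi_D$ branch and its complement. Two observations control this: first, $U_i$ acts only on qudits of $D$ (which follows from the definition of a well-defined clause together with \cref{prop:activetaccmono}), so the state on $D$ is always a computational-basis image of $\ket{0}^{\otimes q}$ and lies strictly in the data subspace; second, the arbitrary tail $\ket{\phi}_{S\setminus D}$ factors out of the analysis entirely, because every clause that touches it has its clock pair in $\ket{rr}$ and is annihilated by both branches regardless of whether those logical qudits are in the data subspace or $\ket{?}$. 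With these in hand, the remaining verification reduces to the routine matrix checks sketched above.
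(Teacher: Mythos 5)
Your proof is correct and follows essentially the same line as the paper's — exhibit the truncated history state $\ket{\psi_{thist}}$ (implicitly or explicitly tensored with the required auxiliary Bell pairs) as a common zero-energy vector of all the clauses — but you carry out the clause-by-clause check explicitly, whereas the paper settles the TACC portion by appealing to the standard circuit-to-Hamiltonian null-space argument via \cref{prop:activetaccmono,prop:uniqueactivetaccmono} and then notes that setting $c_{T+1},\ldots,c_L$ to $\ket{r}$ trivially kills every remaining clause. Both arguments turn on the same key observation, namely that the $\Pi_{out}$ clause sits outside the TACC and so is never activated, so the sub-instance is trivially satisfiable.
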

\noindent
The satisfying state shows that the clauses within the TACC do enforce constraints on their corresponding logical qudits by first requiring that they are initialized to $\ket{0}$ and then acted on by unitary operations. Conversely, the $\Pi_{prop}$ and $\Pi_{out}$ clauses outside the truly active clock chain do not constrain their logical qudits. Hence, when considering logical qudits present across multiple clock components, the ones that should actually be considered as ``shared'' are those belonging to clauses stemming from truly active clock chains. This also suggests that the classical algorithm presented in the following section should first evaluate the satisfiability of the clauses that share logical qudits before deciding that the sub-instances with truncated ACCs (and no simultaneous propagation clauses within the TACC) are trivially satisfiable. \\
\indent
The previous lemma is the last statement we can make about deciding the satisfiability of a sub-instance based on the structure of its clock component. This is because if the sub-instance has not been decided by the lemmas above and its logical qudits do not conflict with other sub-instances, they must be one of the three types of ``quantum'' sub-instances shown in \cref{fig:qinputs}. Indeed, if not rejected by the lemmas above, the clock component of the sub-instance must form a one-dimensional chain with the following properties:

\begin{itemize}
    \item[--] There is at least one $\Pi_{init}$ clause, which must act on a clock qudit at one of the ends of the chain. If there are multiple such clauses, they must all act on this same qudit, as well as use the same endpoint qudit.
    \item[--] There is at least one $\Pi_{out}$ clause, which must act on the qudit at the other end of the chain. If there are multiple such clauses, they must all act on this same qudit, as well as use the same endpoint qudit. The endpoint qudit must be different than the one used by the $\Pi_{init}$ clauses.
    \item[--] There is at least one $\Pi_{prop}$ clause, which must point away from the clock qudit with the $\Pi_{init}$ clauses and towards the qudit with the $\Pi_{out}$ clauses. If there are multiple $\pi_{prop}$ clauses, they must all have this same direction.
\end{itemize}
\noindent
Additionally, it is possible for these sub-instances to have multiple $\Pi_{prop}$ clauses involving the same pair of clock qudits, and if this is true, they might also have undefined clauses after the simultaneous propagation. These are the three types of ``quantum'' sub-instances just mentioned.

\begin{remark} \label{remark:trivial}
    Among these ``quantum'' sub-instances, there are two that are actually trivially decidable. These are the sub-instances with no undefined or simultaneous propagation clauses, where the logical qudits present in $\Pi_{out}$ clauses are (1) initialized but not acted by any $\Pi_{prop}$ clause, or (2) not initialized.\footnote{By construction, the second type of sub-instance cannot be acted by a $\Pi_{prop}$ clause as otherwise it would create an undefined clause.} It is not difficult to see that the first type is trivially unsatisfiable, while the second type is trivially satisfiable. Despite their triviality, they will be decided by the quantum algorithm. This aims to simplify the classical algorithm and minimize the number of cases it needs to handle.
\end{remark}

\medbreak
\noindent
\textbf{Single clock qudit components}\\
\noindent
Let us now provide some assurance that clock components consisting of a single clock qudit can also be decided using the lemmas above or must have a structure resembling that of longer chains that are decided by the quantum subroutine.\\
\indent
By definition, this clock component can only be acted on by either $\Pi_{init}$ and/or $\Pi_{out}$ clauses. \cref{lemma:linearchainmono} states that the clock qudit must be connected to at most two endpoint qudits. Then, due to the lack of $\Pi_{prop}$ clauses, \cref{lemma:directionmono,lemma:uniqueendpointmono} do not apply. If the sub-instance is not decided by the other lemmas above, then it must be acted on by both $\Pi_{init}$ and $\Pi_{out}$ clauses, where there is a unique endpoint qudit for the $\Pi_{init}$ clauses and a separate endpoint qudit for $\Pi_{out}$ clauses. For such a sub-instance, the only valid state of the clock qudit must be $\ket{a}$, implying that the logical qudits present in the $\Pi_{init}$ clauses must be initialized to $\ket{0}$ and the logical qudits present in the $\Pi_{out}$ clauses to $\ket{1}$. The instance then falls under two categories: either there is a logical qudit present in both $\Pi_{init}$ and $\Pi_{out}$ clauses, or there is none. These are the two sub-instances mentioned in \cref{remark:trivial}. As such, after verifying that their shared logical qudits (if any) do not create any conflicts, they will be decided by the quantum algorithm.

\subsubsection{Shared logical qudits} \label{subsubsection:shared}

As mentioned previously, partitioning the instance into sub-instances based on connected clock qudits may not result in completely disjoint sub-instances as some of the logical qudits may overlap. We now determine the joint satisfiability of these sub-instances. \\
\indent
To this end, we consider one of the shared logical qudits and determine if there exists a state of the sub-instances such that all clauses in which it appears can be satisfied simultaneously. The observation at the end of the previous subsection allows us to make a simplification. There, we noted that the logical qudits of a sub-instance that are actually constrained are those present in the clauses of the TACC. Thus, the relevant shared logical qudits are exclusively those that are present in clauses from multiple TACCs. \\
\indent
The satisfiability of the joint sub-instances depends on the clauses applied to the logical qudit. Let us now present the lemmas about the satisfiability of these cases, starting with the lemma where the shared qudit is acted by a unitary gate from a clock component.

\begin{restatable}[No $\Pi_{prop}$ clause on shared qudit]{lemma}{sharedhprop} \label{lemma:sharedhprop}
    If a logical qudit is present in a $\Pi_{prop}$ clause of a TACC and this qudit is also acted on by a clause from a different TACC, the instance is unsatisfiable.
\end{restatable}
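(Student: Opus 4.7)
The plan is to argue by contradiction: I would show that if a satisfying state $\ket{\psi_{sat}}$ existed, then the shared qudit $\ell$ would have to be simultaneously entangled and unentangled with TACC $A$'s clock register. Let $V_A$ denote the set of qudits acted on by the clauses of TACC $A$ (its clock qudits, its endpoint qudits, and the logical qudits appearing in any of its $\Pi_{init}$, $\Pi_{prop}$, or $\Pi_{out}$ clauses), and define $V_B$ analogously for TACC $B$. Because the two TACCs live in distinct clock components, TACC $A$'s clock qudits cannot appear in $V_B$; hence they lie in $V_A \setminus V_B$.

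The first step is to force a product structure on $\ket{\psi_{sat}}$. Using \cref{prop:activetaccmono,prop:uniqueactivetaccmono} together with the form of the $\Pi_{prop,U}$ clauses (which equate the weights of successive clock positions in any satisfying superposition, exactly as in the reasoning behind \cref{lemma:truncated}), I would show that the null space of TACC $A$'s combined clauses on $\mathcal{H}_{V_A}$ is one-dimensional and spanned by a history state $\ket{\psi_A}$, and analogously that the null space on $\mathcal{H}_{V_B}$ is spanned by a history state $\ket{\psi_B}$. Since $\ket{\psi_{sat}}$ must satisfy every clause, its reduced states on $V_A$ and $V_B$ are each pure, yielding the two product decompositions
\[
\ket{\psi_{sat}} = \ket{\psi_A}_{V_A} \otimes \ket{Y}_{\bar V_A} = \ket{\psi_B}_{V_B} \otimes \ket{Z}_{\bar V_B}.
\]
The first factorization forbids $\ell$ from being entangled with any qudit outside $V_A$, and the second forbids entanglement with anything outside $V_B$; together these force every entanglement partner of $\ell$ in $\ket{\psi_{sat}}$ to lie inside $(V_A \cap V_B) \setminus \{\ell\}$.

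The second step is to exhibit entanglement between $\ell$ and TACC $A$'s clock inside $\ket{\psi_A}$. Let $t^\star$ be the earliest time at which a propagation unitary $U_{t^\star}$ from a $\Pi_{prop}$ clause of $A$ acts nontrivially on $\ell$. Because $\ell$ is initialized to $\ket{0}$ by $\Pi_{init}$ on $c_0$ and remains untouched through time $t^\star-1$, its reduced state just before $t^\star$ equals $\ket{0}\bra{0}$. A direct case analysis over $U \in \{H,\, HT,\, (H \otimes H)\textnormal{CNOT}\}$, and over whether $\ell$ sits in the single- or two-qudit version of the gate, shows that $\rho_\ell(t^\star)$ always lies outside $\Span\{\ket{0}\bra{0}\}$. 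Averaging over clock positions then gives $\rho_\ell^A = \frac{1}{T_A+1}\sum_t \rho_\ell(t)$ with rank at least two, so the pure state $\ket{\psi_A}$ cannot be a product across $\ell$ versus $V_A \setminus \{\ell\}$, and the history-state structure identifies the responsible partner as TACC $A$'s clock register.

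The contradiction is immediate: by the first step the only qudits with which $\ell$ may share entanglement lie in $(V_A \cap V_B)\setminus\{\ell\}$, yet TACC $A$'s clock lies in $V_A \setminus V_B$, so $\ell$ cannot be entangled with TACC $A$'s clock, contradicting Step~2. Hence no such $\ket{\psi_{sat}}$ exists and the instance is unsatisfiable. The main obstacle I expect is making Step~1 fully rigorous in the presence of simultaneous propagation, where one must verify that stacking $\Pi_{prop,U}$ clauses on the same pair of clock qudits never enlarges the null space beyond a single history state; a secondary subtlety is the two-qudit gate case in Step~2, where $\ell$'s post-gate reduced state depends on the partner qudit's state, but a short computation shows it always departs from $\ket{0}\bra{0}$.
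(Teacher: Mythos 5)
Your proposal follows essentially the same route as the paper's proof: both rely on (i)~the fact that each TACC has a unique, pure satisfying state (a history state), so $\ket{\psi_{sat}}$ must factorize over each TACC's support, and (ii)~the property of the chosen gate set that the first unitary touching a fresh logical qudit changes its reduced state, so that $\rho_\ell(t)$ is not constant over the clock times, which forces $\ell$ to be correlated with TACC $A$'s clock and breaks the required factorization over TACC $B$'s support. The only cosmetic difference is that you spell out the two product decompositions explicitly rather than phrasing the contradiction directly in terms of the impurity of the reduced state on $C_2\cup L_2$; the underlying argument is the same.
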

\noindent
This lemma shows the unsatisfiability of shared logical qudits that are acted on by a single or multiple unitaries. The remaining cases are those were the logical qudit is acted on only by $\Pi_{init}$ clauses, $\Pi_{out}$ clauses, or a mix of both types of clauses.\footnote{Observe that the case where the logical qudit is only acted by $\Pi_{out}$ clauses can occur if the ACCs of the clock components do not have undefined clauses.} The satisfiability of these cases is given by the following lemmas.

\begin{restatable}[No $\Pi_{init}$ and $\Pi_{out}$ clauses on shared qudit]{lemma}{mutualboth} \label{lemma:mutualboth}
    If a logical qudit is acted on by a $\Pi_{init}$ clause from a clock component and a $\Pi_{out}$ clause from another clock component, the instance is unsatisfiable.
\end{restatable}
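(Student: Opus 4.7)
The plan is to show that the $\Pi_{init}$ clause from $A$ and the $\Pi_{out}$ clause from $B$ impose orthogonal constraints on the shared logical qudit $l$, yielding a contradiction. First I would invoke the preceding \cref{lemma:sharedhprop} to conclude that $l$ appears in no $\Pi_{prop}$ clause of either $A$ or $B$; consequently $l$ is never touched by any unitary inside either sub-instance, so its value cannot change as the clocks advance.

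Next I would use \cref{prop:activetaccmono} and \cref{prop:uniqueactivetaccmono}, together with the clock-component structure established in \cref{subsubsection:clock}, to pin down the null space of $A$'s sub-instance as being spanned by history states of the form
\begin{equation*}
  \frac{1}{\sqrt{T_A+1}} \sum_{t=0}^{T_A} (U_t^A \cdots U_1^A)\ket{\phi_0^A} \otimes \ket{C_t^A},
\end{equation*}
where $\ket{\phi_0^A}$ has every $\Pi_{init}$-initialized data qudit (including $l$) set to $\ket{0}$ and the remaining data qudits free. Since no $U_t^A$ acts on $l$, every such history state factors as $\ket{0}_l \otimes \ket{\chi_A}$, so the null space of $A$'s projectors takes the form $\ket{0}_l \otimes N_A'$ for some subspace $N_A'$. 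Any satisfying state $\ket{\psi}$ of the full instance therefore carries $l = \ket{0}$ throughout its support.

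The symmetric argument on $B$ yields that $l$ is constant across $B$'s history, while the $\Pi_{out}$ clause on $c_L^B$ forces $l \in \Span\{\ket{1}, \ket{?}\}$ whenever $c_L^B = \ket{a}$---which occurs in at least one basis state of any satisfying superposition, by the obvious $\Pi_{out}$-analog of \cref{prop:activetaccmono}. Hence the null space of $B$ has the form $\Span\{\ket{1}, \ket{?}\}_l \otimes N_B'$. Intersecting the two constraints forces $l$ to lie simultaneously in $\Span\{\ket{0}\}$ and $\Span\{\ket{1}, \ket{?}\}$, which is impossible by orthogonality; hence no satisfying state exists.

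The hard part, I expect, will be rigorously promoting the qualitative history-state descriptions supplied by the earlier propositions into the clean factorized forms $\ket{0}_l \otimes N_A'$ and $\Span\{\ket{1}, \ket{?}\}_l \otimes N_B'$. In particular, one must rule out hybrid satisfying states in which $l$ takes different values in basis states where a TACC's clock is ``active'' versus ``dead.'' This is precisely where the absence of $\Pi_{prop}$ on $l$ guaranteed by \cref{lemma:sharedhprop} becomes indispensable, since it ensures that $l$'s value is uniform across the entire superposition rather than being allowed to shift between ``times.''
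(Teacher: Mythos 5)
Your proposal is correct and takes essentially the same route as the paper: it deduces from \cref{lemma:sharedhprop} that no $\Pi_{prop}$ clause acts on the shared qudit, then uses \cref{prop:activetaccmono} to conclude the $\Pi_{init}$ clause from $A$ pins the qudit to $\ket{0}$ while the $\Pi_{out}$ clause from $B$ demands $\ket{1}$ or $\ket{?}$, a contradiction. The paper's version is just more compact, leaving the ``no $\Pi_{prop}$ on the shared qudit'' step implicit as part of the standing assumption that earlier lemmas have not fired; your extra care about ruling out hybrid superpositions where $l$ varies across clock basis states is valid and is handled, as you note, precisely by the absence of propagation clauses on $l$.
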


\begin{restatable}[Either all $\Pi_{init}$ or all $\Pi_{out}$ clauses]{lemma}{mutualsingle} \label{lemma:mutualsingle}
    If a logical qudit is present in clauses from multiple clock components but the clauses are either all $\Pi_{init}$ or all $\Pi_{out}$ clauses, there exists a state of the logical qudit that simultaneously satisfies these clauses.
\end{restatable}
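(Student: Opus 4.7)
The plan is to exhibit a single computational basis state for the shared logical qudit $l$ that is compatible with a satisfying state of each clock component in which $l$ appears, and then to assemble the joint satisfying state by gluing the per-component satisfying states along $l$. The argument splits by the hypothesis of the lemma into two parallel cases, according to whether all clauses on $l$ are $\Pi_{init}$ or all are $\Pi_{out}$.

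First I would handle the $\Pi_{init}$ case. By hypothesis, $l$ participates in no $\Pi_{prop}$ or $\Pi_{out}$ clause of any clock component, so $l$ sits among the initialized data qudits of the $\Pi_{init}$ clauses on $c_0$ in every TACC in which it appears, and no unitary of any TACC acts on it. The satisfying history state (or the truncated one from \cref{lemma:truncated}) of each such clock component therefore carries $l$ in the state $\ket{0}$ throughout the evolution, consistent with the $\Pi_{init}$-satisfying states listed in \cref{eqn:initsat}. Globally fixing $l=\ket{0}$ is thus jointly compatible with every clock component.

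The $\Pi_{out}$ case is dual: here $l$ participates in no $\Pi_{init}$ or $\Pi_{prop}$ clause, so in every clock component it lies in the free register $S\setminus D$ of \cref{lemma:truncated} and is untouched by the evolution. Choosing $l=\ket{1}$ (or equivalently $\ket{?}$) within each per-component satisfying state satisfies every $\Pi_{out}$ clause on $l$, by the characterization of $\Pi_{out}$-satisfying states in \cref{eqn:outsat}. Globally fixing $l$ to this value is then jointly compatible across all clock components.

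The only delicate point, and the main potential obstacle, is to verify that the per-component satisfying states can be combined into a single global state once $l$ has been fixed. This is immediate in both cases: because $l$ is the only qudit shared among these clock components under the hypothesis of the lemma, projecting each per-component satisfying state onto the chosen basis state of $l$ yields states on pairwise disjoint qudit registers, and their tensor product (retensored with $\ket{0}_l$ or $\ket{1}_l$) is a single pure state that simultaneously satisfies every clause on $l$. The proof would be non-trivial only if different clock components placed inconsistent demands on $l$, which the uniformity of the hypothesis rules out.
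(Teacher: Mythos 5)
Your proof is correct and matches the paper's approach: in the $\Pi_{init}$ case set the shared logical qudit to $\ket{0}$, and in the $\Pi_{out}$ case set it to $\ket{1}$ (or $\ket{?}$), which is exactly what the paper does. The extra discussion about gluing per-component satisfying states is harmless but unnecessary, since the lemma only asserts the existence of a state of the one logical qudit consistent with all its clauses, not that a full satisfying state exists.
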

\noindent
These lemmas conclude the analysis of the shared logical qudits. Now, let us finally discuss the unresolved issue of simultaneous propagation clauses.

\subsubsection{Simultaneous propagation clauses} \label{subsubsection:simultaneous}

None of the lemmas in \cref{subsubsection:clock} which analyze the clock component in the sub-instance are able to decide if simultaneous $\Pi_{prop}$ clauses are trivially satisfiable or unsatisfiable. At best, \cref{lemma:directionmono} establishes that these create unsatisfiable instances if the direction of one of the $\Pi_{prop}$ clauses is incorrect. The same can be said about the lemmas in \cref{subsubsection:shared}. Even in the case where the unitaries associated with simultaneous $\Pi_{prop}$ clauses act on the same logical qudits, \cref{lemma:sharedhprop} is unable to determine the unsatisfiability of the instance since the clauses belong to the same clock component. We now show the necessary and sufficient conditions for which these instances may admit a satisfying state.

\begin{restatable}{prop}{simultaneousprop} \label{prop:simultaneousprop}
    Let $c_0, \ldots, c_T$ with $0 < T \leq L$ be the clock qudits of a TACC with non-zero length. Suppose that, for the first time in the chain, $k$ distinct $\Pi_{prop}$ clauses act on $c_t$ and $c_{t+1}$ with $0 \leq t < T$ and an arbitrary pair of logical qudits. The logical qudits may or may not be repeated. Let $U_{t+1,0}, \ldots, U_{t+1,k-1}$ denote the unitaries associated with these $k$ clauses, and $\ket{\phi_t}$ be the state of the logical register at time $t$, i.e.\ $\ket{\phi_t} = U_{t,0} \ldots U_{0,0} \ket{0}^{\otimes q}$ with $U_{0,0} = I$. All $k$ $\Pi_{prop}$ clauses can be satisfied simultaneously if and only if $U_{t+1,i} \ket{\phi_t} = U_{t+1,j} \ket{\phi_t}$ for all $i,j \in [k]$.
\end{restatable}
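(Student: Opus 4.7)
The approach is to parametrize any potential satisfying state using the constraints already derived for a TACC, extract a local linear relation imposed by each of the $k$ simultaneous $\Pi_{prop}$ clauses at position $(t,t+1)$, and show these relations collapse to the stated coincidence condition on $\ket{\phi_t}$. First, I apply \cref{prop:activetaccmono,prop:uniqueactivetaccmono} to observe that any state satisfying the sub-instance, restricted to the clock qudits of the TACC, takes the form $\sum_{s=0}^{T}\ket{\psi_s}\otimes\ket{d^s a\, r^{T-s}}$ with at least one $\ket{\psi_s}$ nonzero. Processing the single-propagation clauses at positions $0,1,\dots,t-1$ in order, each of which (by the null-space description in \cref{eqn:propsat}) enforces $\ket{\psi_{s+1}} = U_{s+1,0}\ket{\psi_s}$, and combining with the $\Pi_{init}$ clauses on $c_0$ that fix the initialized sub-register to $\ket{0}^{\otimes q}$, yields $\ket{\psi_t} = \ket{\phi_t}\otimes\ket{\xi}$, where $\ket{\xi}$ is an arbitrary state of any logical qudits in the TACC not yet acted on. Crucially, by \cref{prop:activetaccmono} the amplitude at $s=0$ is nonzero, so unitarity of each $U_{s+1,0}$ gives $\ket{\psi_t}\neq 0$.

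For the necessity direction, I restrict each $\Pi_{prop,U_{t+1,i}}$ to its action on the clock qudits $(c_t,c_{t+1})$. By \cref{prop:uniqueactivetaccmono}, the only clock basis states appearing with nonzero amplitude for $(c_t,c_{t+1})$ are $\ket{rr},\ket{ar},\ket{da},\ket{dd}$; the $\ket{rr}$ and $\ket{dd}$ sectors lie trivially in the null space of $\Pi_{prop,U_{t+1,i}}$ (see \cref{eqn:propsat}), while in the $\Span\{\ket{ar},\ket{da}\}$ sector a direct check shows the null states are precisely those of the form $\ket{\chi}\otimes\ket{ar} + U_{t+1,i}\ket{\chi}\otimes\ket{da}$. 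Matching to the parametrized state gives $\ket{\psi_{t+1}} = U_{t+1,i}\ket{\psi_t}$ for each $i\in[k]$; requiring a single $\ket{\psi_{t+1}}$ that is simultaneously consistent with all $k$ equations, together with $\ket{\psi_t}=\ket{\phi_t}\otimes\ket{\xi}\neq 0$ and the fact that each $U_{t+1,i}$ acts only on qudits of the $\ket{\phi_t}$ sector (extended by identity elsewhere), forces $U_{t+1,i}\ket{\phi_t}=U_{t+1,j}\ket{\phi_t}$ for all $i,j\in[k]$.

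For sufficiency, I assume the coincidence condition and let $\ket{\phi_{t+1}}$ denote the common value $U_{t+1,i}\ket{\phi_t}$. I would then exhibit the truncated history state that runs the circuit up through step $t+1$, leaves all further clock qudits in $\ket{r}$, and places any un-initialized logical qudits in an arbitrary product state; this is essentially the construction of \cref{lemma:truncated} extended by one step. One checks directly that this state lies in the null space of every $\Pi_{init}$ clause on $c_0$, of every single-propagation clause at positions $s<t$, and---by the very definition of $\ket{\phi_{t+1}}$---of each of the $k$ simultaneous clauses at position $(t,t+1)$. The main subtlety I expect is bookkeeping when different $U_{t+1,i}$ act on different pairs of logical qudits within the initialized register; the cleanest way is to interpret each $U_{t+1,i}$ as its canonical extension by identity on the full logical register, so the derived condition factors cleanly across the initialized sector and the ancilla factor $\ket{\xi}$.
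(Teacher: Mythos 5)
Your proof is correct and takes essentially the same approach as the paper's: both hinge on the null-space description of a $\Pi_{prop}$ clause in \cref{eqn:propsat} together with \cref{prop:activetaccmono,prop:uniqueactivetaccmono} to pin down the form of any candidate satisfying state restricted to the TACC, extract $\ket{\psi_{t+1}}=U_{t+1,i}\ket{\psi_t}$ for each $i$ in the $(\Rightarrow)$ direction, and exhibit the propagated state in the $(\Leftarrow)$ direction. Your version is merely more explicit than the paper's terse proof in two minor respects (both fine, and arguably improvements): you spell out that $\ket{\psi_t}\neq 0$ follows from the nonvanishing $s=0$ amplitude via unitarity, and you carry the (in fact trivial, since $q$ counts qudits in TACC clauses) spectator factor $\ket{\xi}$ and the identity-extension convention for each $U_{t+1,i}$.
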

\indent
This proposition shows that to determine the satisfiability of simultaneous propagation clauses, we can compare the action of each unitary on the state $\ket{\phi_t}$. However, since $t \in \mathcal{O}(n)$ and the set of gates used is universal for quantum computation, it is unlikely that we can evaluate this classically. \\
\indent
Another point to consider is whether there are actually sub-instances where two or more simultaneous propagation clauses can be satisfied given that these can only be of three types: $H$, $HT$, or $(H \otimes H)$CNOT. If the gates act on different qudits, then this is not too difficult to imagine. In the case where the unitaries act on at least one common qudit, observe that $H_0 \otimes I_1 \ket{\phi_t} = (H_0 \otimes H_1) \textnormal{CNOT}_{0,1} \ket{\phi_t}$ if $\ket{\phi_t} = \ket{0} \otimes \ket{H_+}$ and $\ket{H_+}$ is the $+1$-eigenstate of the Hadamard gate.

\subsection{Algorithm and correctness} \label{subsection:algorithmmono}

In this subsection we present the hybrid algorithm showing that LCT-QSAT can be decided in polynomial time using both a classical and quantum computer while adhering to the soundness and completeness conditions mentioned in \cref{defn:bqp1}. In other words, we show that LCT-QSAT is contained within $\BQP_1$. \\
\indent
In summary, the first part of the algorithm determines the satisfiability of instances by following the lemmas of \cref{subsection:monosat}. As mentioned there, the satisfiability of an instance is first (and most of the time) dictated by the arrangement of its clauses. Since evaluating the clauses of an instance consist of simple lookups in a graph of polynomial size, these steps can be efficiently performed with a classical computer. If the algorithm does not reject (meaning it does not encounter any unsatisfiable clauses), then we know for certain that the sub-instances within have proper structure and the shared logical qudits create no conflicts. These sub-instances then fall under two categories: either they are trivially satisfiable (those that lack an ACC or have a truncated ACC) or those we have identified that use a quantum algorithm. The algorithm analyzes each sub-instance, and makes use of a quantum subroutine if it happens to be of the second kind. Lastly, if the classical and quantum algorithm do not reject, the algorithm accepts the instance. \\
\indent
Before describing the algorithm in detail, let us make some observations. To meet the completeness and soundness conditions of $\BQP_1$, we have to make sure that the algorithm does not reject satisfiable instances and does not accept unsatisfiable instances with high probability. We will prove this in more detail later, but this will partly follow from the correctness of the lemmas and propositions discussed in \cref{subsection:monosat}. Furthermore, perfect completeness requires that the algorithm consistently performs operations with perfect accuracy. This is of most concern in the quantum algorithm, which we keep in mind in its design. Finally, we note that to demonstrate that the problem is in $\BQP_1$, the algorithm only needs to decide if the instance is satisfiable---it is not required to produce the satisfying state.

\subsubsection{Classical algorithm} \label{sec:classical_algorithm}
Our algorithm is the following:

\begin{enumerate}[label=(\arabic*)]
    \item \textit{Verify that all qudits in the instance serve a unique role}. For every qudit in the instance, check whether it is acted on by $\Pi_C$, $\Pi_E$, or $\Pi_L$ clauses. If a qudit is acted on by more than one type of clause, reject. Otherwise, label the qudit as either $C$, $E$, or $L$ depending on the type of clause acting on it. Correctness is given by \cref{lemma:singletypequditmono}.
    \item \textit{Consider the sub-instances that make up the instance. Each sub-instance is composed by a clock component (but now also considering all connections made by endpoint qudits) and its associated logical qudits.} For every sub-instance, perform the following checks:
          \begin{enumerate}[label=(2.\arabic*)]
              \item If a clock qudit is connected to more than two other distinct clock or endpoint qudits, reject. Correctness is given by \cref{lemma:linearchainmono}.
              \item If a clock qudit in the chain has two successors or two predecessors, reject. Correctness is given by \cref{lemma:directionmono}.
              \item If an endpoint qudit is connected to more than one clock qudit, reject. Similarly, if an endpoint qudit is connected to the same qudit via both a $\Pi_{init}$ and $\Pi_{out}$ clause, reject. Correctness is given by \cref{lemma:uniqueclockmono}.
              \item If a $\Pi_{init}$ clause acts on a qudit other than the one at the beginning of the chain, reject. Similarly, if a $\Pi_{out}$ clause acts on a qudit other than the one at the end of the chain, reject. Correctness is given by \cref{lemma:uniqueendpointmono}.
              \item If there are no $\Pi_{init}$ and $\Pi_{out}$ clauses, ignore all clauses of the sub-instance for the rest of the algorithm. Correctness is given by \cref{lemma:noendpointsmono}.
          \end{enumerate}
    \item \textit{Identify the undefined $\Pi_{prop}$ clauses.} For every logical qudit, check whether it is acted on by at least one $\Pi_{init}$ clause. If it is not, mark all $\Pi_{prop}$ clauses this qudit is part of as undefined.
    \item \textit{Now, the satisfiability of the instance depends on the shared logical qudits of clock components. Importantly, only those within TACCs are considered.} Identify the logical qudits involved in truly active clock chains of multiple clock components. For every one of these qudits perform the following checks:\footnote{A step to include \cref{lemma:mutualsingle} is not necessary as it requires no action.}
          \begin{enumerate}[label=(4.\arabic*)]
              \item If the qudit is acted on by a $\Pi_{prop}$ clause from a clock component, and is also present in another clause stemming from a different clock component, reject. Correctness is given by \cref{lemma:sharedhprop}.
              \item If the qudit is acted on by both $\Pi_{init}$ and $\Pi_{out}$ clauses, reject. Correctness is given by \cref{lemma:mutualboth}.
          \end{enumerate}
    \item \textit{Now that we have verified the logical qudits of truncated active clock chains, we can safely ignore them.} For every sub-instance, check whether it contains an undefined propagation clause. If it does and there are no simultaneous propagation clauses before the undefined clause, ignore all clauses of the sub-instance. Correctness is given by \cref{lemma:truncated}.
    \item \textit{By this point in the algorithm, all trivial sub-instances have been decided, and the logical qudits of those that remain do not create conflicts.} For every remaining sub-instance, run the quantum algorithm to determine its satisfiability.
    \item Accept the instance. \textit{If the algorithm reaches this step, then none of the sub-instances were rejected in the previous steps so we accept the instance.}
\end{enumerate}

\subsubsection{Quantum algorithm}

As mentioned previously, the classical algorithm tasks the quantum subroutine to decide three types of sub-instances: those with unique active clock chain that have (1) no simultaneous propagation clauses and well-defined $\Pi_{prop}$ clauses, (2) simultaneous propagation clauses followed by well-defined $\Pi_{prop}$ clauses, or (3) simultaneous propagation clauses followed by one or multiple undefined $\Pi_{prop}$ clauses, shown in \cref{fig:qinputs}. As the clauses outside the TACC can be satisfied by choosing the state of the clock qudits appropriately, the goal of the quantum algorithm is to decide if there exists a state that satisfies all constraints of the TACC. For the remainder of this discussion, we let the TACC be of length $T$, act on $q$ logical qudits, and let $k_{t,t+1}$ denote the number of simultaneous propagation clauses involving clock qudits $c_t$ and $c_{t+1}$ with $t \in [T]$. We note that since the logical qudits of the TACC cannot be $\ket{?}$, in the actual quantum circuits they are simply referred to as logical \textit{qubits}.\\
\indent
Let us consider the first type of sub-instance which does not contain simultaneous propagation clauses, i.e.\ $k_{t,t+1} = 1$ for all $t \in [T]$. For these sub-instances, we know that the $\Pi_{init}$ and $\Pi_{prop}$ clauses of the chain are uniquely satisfied by the history state corresponding to the circuit expressed by the TACC (mentioned below \cref{eqn:hamils}). Therefore, the algorithm's only task is to determine if this state also satisfies the $\Pi_{out}$ clauses.\\
\indent
One verification method, analogous to Bravyi's algorithm in \cref{subsubsection:inqma}, involves constructing the history state and measuring the eigenvalue of the $\Pi_{out}$ clauses to verify if they can be satisfied. A quick inspection shows that this method is wasteful as the $\Pi_{out}$ clauses are only concerned with the state of the qudits at the end of the computation, i.e.\ only one out of the $T+1$ basis states in the history state's superposition is relevant. A simpler approach is to create the quantum circuit $U = U_T \ldots U_1$ expressed by the TACC, execute it on the initial state $\ket{0}^{\otimes q}$, measure the state of the logical qubits present in the $\Pi_{out}$ clauses at the end of the computation, and flip the measurement outcomes. If any of these result in ``1'', we reject. To see why these methods are equivalent, consider a $\Pi_{out}$ clause that acts on logical qudit $l_{out}$. The probability of measuring eigenvalue $1$ with Bravyi's method (see \cref{eqn:probmeasure}) is $\Pr(\textnormal{outcome } 1) = \bra{\psi_{hist}} \Pi_{out} \ket{\psi_{hist}} = (T+1)^{-1} \bra{0}^{\otimes q}U^\dagger_{0} \ldots U^\dagger_{T} \Pi^{(0)} U_{T} \ldots U_{0} \ket{0}^{\otimes q}$, where we observed that all other terms of $\Pi_{out}$ have no contribution, and defined $\Pi^{(0)} := \ket{0}\!\bra{0}_{l_{out}} \otimes I_{rest}$. It is then apparent that this quantity is proportional to the probability that $l_{out}$ is measured at the end of the circuit and yields outcome ``0''. We flip the measurement outcomes in the alternative method so in both cases, observing ``1'' signifies a clause violation. \\
\indent
Unfortunately, for the second and third type of sub-instance, these procedures cannot apply directly since there is no longer a unique history state and the simultaneous $\Pi_{prop}$ clauses should also be verified. To resolve this, let us first develop a method similar to Bravyi's that works for these cases, and then determine a more efficient way of implementing it that avoids creating the full history state. As we will soon see, establishing the equivalence between the two methods is quite useful, as proving the correctness of the algorithm (completeness and soundness) is easier in Bravyi's approach. \\

\medbreak
\noindent
\textbf{Bravyi's Approach}\\
\noindent
Although sub-instances may now contain multiple history states, \cref{prop:simultaneousprop} shows that if the sub-instance is satisfiable, there is a single history state independent of the propagation ``path'' chosen. We can thus define a ``privileged'' history state

\begin{equation} \label{eqn:phist}
    \ket{\psi_{phist}} := \frac{1}{\sqrt{T+1}} \sum_{t=0}^{T} U_{t,0} \ldots U_{0,0} \ket{0}^{\otimes q} \otimes \ket{ \underbrace{d\ldots d}_{t} a_t \underbrace{r \ldots r}_{T-t}},
\end{equation}
\noindent
where the computation is performed with unitaries $U_{t,0}$ for all $1 \leq t \leq T$ and use this state to test whether the other simultaneous $\Pi_{prop}$ clauses are also satisfied by this state. We can then design the quantum algorithm to measure the eigenvalue of these clauses on $\ket{\psi_{phist}}$, and at the end, measure the eigenvalues of the $\Pi_{out}$ clauses that form part of the TACC (if any). \\

\medbreak
\noindent
\textbf{Our Approach}\\
\noindent
With the intention of building a simpler approach that does not require building a history state, consider the following claim.

\begin{restatable}{claim}{measuringprop} \label{claim:measuringprop}
    Let $\Pi_{prop}^{(c_t,c_{t+1})}$ be a clause acting on clock qudits $c_t$ and  $c_{t+1}$ with associated unitary $U_{t+1,j}$, where $j \in [k_{t,t+1}]$. Measuring the eigenvalue of this clause on the state $\ket{\psi_{phist}}$, is equivalent to applying circuit $\mathcal{C}$ of \cref{fig:measurecircuit} to the state $\ket{\phi_t}$ and an ancilla qubit in the state $\ket{0}$. Here, $\ket{\phi_t}$ is the state of the data register at time $t$, i.e.\ $\ket{\phi_t} = U_{t,0} \ldots U_{0,0} \ket{0}^{\otimes q}$. Furthermore, this circuit can be simulated perfectly by a circuit $\mathcal{C}_{\mathcal{G}_8}$ that uses a constant number of gates from the set $\mathcal{G}_8 = \{H,T,\textnormal{CNOT}\}$ and an additional ancilla qubit.
\end{restatable}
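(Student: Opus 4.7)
The plan is to reduce the measurement of the projector on $\ket{\psi_{phist}}$ to a constant-size circuit acting on $\ket{\phi_t}$, and then to argue the Clifford+T implementability. First I would exploit the structure of $\ket{\psi_{phist}}$: since its clock register is supported on legal clock basis states and its logical qudits are always in the computational basis (no $\ket{?}$), all the ``auxiliary'' contributions to $\Pi_{prop}^{(c_t,c_{t+1})}$ annihilate $\ket{\psi_{phist}}$. Specifically, the role-assigning projectors $\Pi_L,\Pi_C$, the clock-keeping projector $\Pi_{clock,D}$, and the monogamy term $I-\ket{\Phi^+}\!\bra{\Phi^+}_{C\!A,C\!B}$ vanish because the privileged history state is already of the prescribed form, while the ``undefined'' branch $(I^{\otimes 2}\otimes\Pi_{clock,?})(I-\Pi_D\otimes\Pi_D\otimes I^{\otimes 2})$ vanishes because the relevant logical qudits are never in $\ket{?}$. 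Thus only $\Pi_{work,U_{t+1,j}}$ contributes.

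Next I would compute $\bra{\psi_{phist}}\Pi_{work,U_{t+1,j}}\ket{\psi_{phist}}$ directly. Since the clock parts of $\Pi_{work,U_{t+1,j}}$ project only onto $\ket{ar}$ and $\ket{da}$ on $(c_t,c_{t+1})$, only the $t$-th and $(t{+}1)$-th basis states in the superposition \eqref{eqn:phist} contribute. Writing $\ket{\phi_{t+1}}=U_{t+1,0}\ket{\phi_t}$, a short bookkeeping calculation yields
\begin{equation*}
\bra{\psi_{phist}}\Pi_{prop}^{(c_t,c_{t+1})}\ket{\psi_{phist}}=\tfrac{1}{T+1}\bigl(1-\mathrm{Re}\,\bra{\phi_t}U_{t+1,0}^{\dagger}U_{t+1,j}\ket{\phi_t}\bigr),
\end{equation*}
which vanishes iff $U_{t+1,0}\ket{\phi_t}=U_{t+1,j}\ket{\phi_t}$, i.e.\ iff the clause is satisfiable by the privileged history state as characterized by \cref{prop:simultaneousprop}.

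The next step is to specify the circuit $\mathcal{C}$: prepare one ancilla in $\tfrac{1}{\sqrt 2}(\ket{0}+\ket{1})$, apply $U_{t+1,0}$ on the logical register conditioned on the ancilla being $\ket{0}$ and $U_{t+1,j}$ conditioned on $\ket{1}$ (a Hadamard test comparing the two unitaries on $\ket{\phi_t}$), then apply a Hadamard to the ancilla and measure it. A direct calculation gives outcome $1$ with probability $\tfrac{1}{2}(1-\mathrm{Re}\,\bra{\phi_t}U_{t+1,0}^{\dagger}U_{t+1,j}\ket{\phi_t})$, which is zero exactly when the projector measurement on $\ket{\psi_{phist}}$ accepts with certainty. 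Thus $\mathcal{C}$ is \emph{equivalent} in the sense relevant to us: perfect completeness is preserved and a violation is detected with a probability that is in fact amplified by a factor of $(T+1)/2$ relative to the naive measurement.

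Finally I would handle the Clifford+T simulation of $\mathcal{C}$. Because each $U_{t+1,j}$ lies in the fixed set $\{H,HT,(H\otimes H)\mathrm{CNOT}\}$, the required controlled versions act on at most three data qubits plus the control, and each has matrix entries of the form prescribed in \cref{defn:P}. Invoking the Giles--Selinger synthesis used in \cref{subsubsection:inqma} (or simply the standard constructions for controlled-$H$, controlled-$T$, and controlled-CNOT), one obtains an exact decomposition into $\mathcal{G}_8$ gates using a single extra ancilla; the total gate count depends only on the constant sizes of $U_{t+1,0}$ and $U_{t+1,j}$, yielding the circuit $\mathcal{C}_{\mathcal{G}_8}$. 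The main obstacle is primarily bookkeeping---tracking that $\ket{\phi_t}$ is a $q$-qubit state while $U_{t+1,j}$ acts on only two of its qubits, and being careful about what ``equivalent'' means, since the two probabilities differ by the clock-weighting factor $1/(T+1)$ but share the all-important vanishing condition.
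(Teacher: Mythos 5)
Your proposal is correct and follows essentially the same route as the paper's proof: strip the projector down to the single non-trivial term $\Pi_{work,U_{t+1,j}}$ using the structure of $\ket{\psi_{phist}}$, reduce the expectation to a two-time-step overlap on clock qudits $c_t,c_{t+1}$, realize the resulting quantity as the rejection probability of a Hadamard-test comparison of $U_{t+1,0}$ and $U_{t+1,j}$ on $\ket{\phi_t}$ (your control-on-$0$/control-on-$1$ version is unitarily equivalent to the paper's $X$-$H$-controlled-$U_{t+1,0}$-controlled-$U_{t+1,j}^\dagger$-$H$-$X$ circuit), and appeal to Giles--Selinger for the exact $\mathcal{G}_8$ decomposition. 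Your explicit handling of the $2/(T+1)$ normalization factor --- noting the circuit detects a violation \emph{more} readily than the raw projector measurement on $\ket{\psi_{phist}}$ --- is actually more careful than the paper, which silently writes $\Pi_{work}\ket{\psi_{phist}}=\Pi_{work}\ket{\theta_{t,t+1}}$ for the normalized two-step restriction (and in the process even has a typo, writing $U_{t+1,j}$ where $U_{t+1,0}$ is meant in the definition of $\ket{\theta_{t,t+1}}$); this discrepancy is harmless for the soundness argument precisely because of the amplification you identify.
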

\noindent
Thus, we can build a quantum circuit that consists of evolving the state according to the circuit expressed by the TACC, but if it encounters simultaneous propagation clauses ahead, it applies circuit $\mathcal{C}_{\mathcal{G}_8}$ instead. If the measurement of the ancilla yields outcome ``1'', we reject the instance. We would like the algorithm to continue repeating this pattern of updating the current state and performing checks when necessary, however, there is one nuisance that we must address. If the instance is unsatisfiable and one of the checks mistakenly passes (as could be the case if the simultaneous unitaries produce similar states), the post-measurement state becomes $(I-\Pi_{prop}) \ket{\psi_{phist}}$ (in Bravyi's description). If the algorithm is allowed to continue, the following checks would not use the privileged history state, but rather this perturbed state. Although this might still be permissible by the soundness condition, for ease of analysis, we design the algorithm to avoid it altogether. Therefore, after applying a check and measuring the ancilla qubit, we simply restart the algorithm, with the condition that in this new iteration, the algorithm skips all previously performed checks. \\

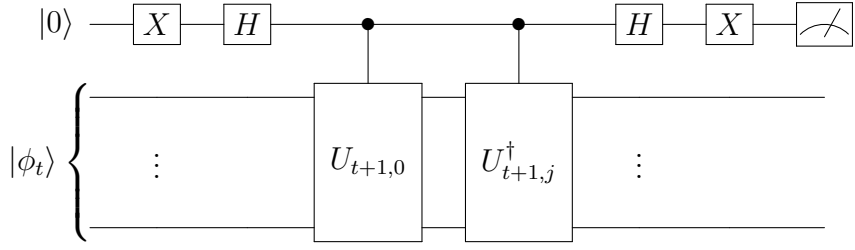
\begin{figure}[t]
    \centering
    \large
    \[ \Qcircuit @C=1.4em @R=1.2em {
        \lstick{\ket{0}} & \gate{X} & \gate{H} & \ctrl{1} & \ctrl{1} & \gate{H} & \gate{X} & \meter \\
        & \qw & \qw & \multigate{2}{U_{t+1,0}} & \multigate{2}{U_{t+1,j}^\dagger} & \qw & \qw & \qw \\
        &  \qvdots & \nghost{} & \nghost{U_{t+1,0}} & \nghost{U_{t+1,j}^\dagger}  & \qvdots &   \\
        & \qw & \qw & \ghost{U_{t+1,0}}  & \ghost{U_{t+1,j}^\dagger} & \qw & \qw  & \qw
        \inputgroupv{2}{4}{0.8em}{2.1em}{\ket{\phi_t}\;} \\
        }
    \]
    \caption{Circuit $\mathcal{C}$ that evaluates whether propagating $\ket{\phi_t}$ with $U_{t+1,0}$ is the same as propagating with $U_{t+1,j}$. Here, $U_{t+1,0}$ and $U_{t+1,j}$ act nontrivially on at most two qubits of the data register. This circuit is equivalent to measuring the eigenvalue of projector $\Pi_{prop}$ (with unitary $U_{t+1,j}$) on $\ket{\psi_{phist}}$. If $U_{t+1,0}$ and $U_{t+1,j}^\dagger$ are unitaries from the set $\mathcal{G}_8$, the circuit can be perfectly simulated with this set.}
    \label{fig:measurecircuit}
\end{figure}

\medbreak
\noindent
\textbf{Algorithm}\\
\noindent
Explicitly, our quantum algorithm is the following:

\begin{enumerate}[label=(6.\arabic*)]
    \item Initialize a register containing $q+2$ qubits, each initialized to $\ket{0}$. \textit{The first $q$ qubits correspond to the logical qudits of the TACC, while the other two are the ancillas used in $\mathcal{C}_{\mathcal{G}_8}$.}
    \item For every $t \in [T]$, do the following:
          \begin{enumerate}[label=(6.2.\arabic*)]
              \item If $k_{t,t+1} > 1$:
                    \begin{itemize}
                        \item[--] Choose an index $j \in [k_{t,t+1}]$. Evaluate circuit $\mathcal{C}_{\mathcal{G}_8}$ using $U_{t+1,j}$.
                        \item[--] Remove the $\Pi_{prop}$ clause with associated unitary $U_{t+1,j}$ from the sub-instance.
                        \item[--] If the measurement outcome is ``1'', reject. Otherwise, go back to step (6.1).
                    \end{itemize}
              \item Otherwise, apply $U_{t+1,0}$ to the data register to obtain $\ket{\phi_{t+1}}$.
          \end{enumerate}
    \item \textit{At this step, the algorithm has produced the state $\ket{\phi_T}$.} If there's at least one $\Pi_{out}$ clause involving $c_T$, do the following:
          \begin{enumerate}[label=(6.3.\arabic*)]
              \item Choose one of the $\Pi_{out}$ clauses and measure the corresponding data qubit. Flip the resulting outcome.
              \item Remove this $\Pi_{out}$ projector from the sub-instance.
              \item If the outcome is ``1'', reject. Otherwise, go back to step (6.1).
          \end{enumerate}
    \item Return to the classical algorithm.
\end{enumerate}

\subsubsection{Runtime}
For an instance with $n$ qudits and $\mathcal{O}(poly(n))$ clauses, the algorithm reaches a decision in $\mathcal{O}(poly(n))$ time. \\
\indent
In detail, steps (1), (3), and (4) require iterating over the qudits of the instance (either all of them or some particular subset). For each qudit, the algorithm performs checks based on the clauses acting on this qudit. Given that there are $n$ qudits and $\mathcal{O}(poly(n))$ clauses acting on each qudit, each step takes $n \cdot \mathcal{O}(poly(n)) = \mathcal{O}(poly(n))$ time. Step (2) iterates over all sub-instances. Observe that because each sub-instance must contain at least one qudit, the total number of sub-instances is $\mathcal{O}(n)$, each composed of $\mathcal{O}(n)$ qudits. For each sub-instance, in steps (2.1)--(2.3), the algorithm iterates over all clock and endpoint qudits and searches over the clauses acting on these qudits. These steps also take $\mathcal{O}(n) \cdot \mathcal{O}(poly(n)) = \mathcal{O}(poly(n))$ time. Steps (2.4) and (2.5) simply iterate over the clauses of a sub-instance, thus requiring $\mathcal{O}(poly(n))$ time. Then, it is apparent that step (2) takes $\mathcal{O}(poly(n))$ time. In step (5), the algorithm once again iterates over all sub-instances and searches over the clauses present in the sub-instance. If we suppose that removing a clause takes $\mathcal{O}(1)$ time, step (5) can be completed in $\mathcal{O}(poly(n))$ time. \\
\indent
Now, let us consider the runtime of the quantum subroutine used in step (6). The input to this quantum subroutine is a TACC of length $T \in \mathcal{O}(n)$ and possibly $\mathcal{O}(poly(n))$ simultaneous propagation clauses acting on qudits $c_t$ and $c_{t+1}$ for any $t \in [T]$. Step (6.1) takes $\mathcal{O}(n)$ time, since $d \in \mathcal{O}(n)$. In (6.2), to progress from time $t$ to $t+1$, the algorithm applies circuit $\mathcal{C}_{\mathcal{G}_8}$ to the state $\ket{0} \otimes \ket{\phi_t}$ $\mathcal{O}(poly(n))$ times. Moreover, after each application, the algorithm restarts (with one less clause) and must again produce $\ket{\phi_t}$ by applying $\mathcal{O}(n)$ unitaries to the initial state $\ket{0}^{\otimes q}$. Thus, the step $t \rightarrow t+1$ takes $\mathcal{O}(poly(n)) \cdot \mathcal{O}(n) = \mathcal{O}(poly(n))$ time. Then, since $T \in \mathcal{O}(n)$, step (6.2) takes $\mathcal{O}(poly(n))$ time. Finally, note that in step (6.3), the algorithm must perform $\mathcal{O}(poly(n))$ measurements. Similar to the previous step, after each measurement, the algorithm restarts (with one less $\Pi_{out}$ clause) and must again create $\ket{\phi_T}$ by applying $T \in \mathcal{O}(n)$ unitaries. This step then takes $\mathcal{O}(poly(n))$ time. This shows that the quantum algorithm then takes $\mathcal{O}(poly(n))$ time, and considering that it may be used $\mathcal{O}(poly(n))$ times, all of step (6) takes $\mathcal{O}(poly(n)) \cdot \mathcal{O}(poly(n))$ time. This concludes that the algorithm is efficient.

\subsubsection{Completeness and soundness} \label{subsubsection:completenesssoundnessmono}

To prove the correctness of the algorithm above, we have to show that for any input instance $x$ of LCT-QSAT, the algorithm presented above can distinguish between the cases where $x$ is a yes-instance or a no-instance. To reiterate, $x$ is a yes-instance if there exists a state $\ket{\psi_{sat}}$ such that for all projectors $\{\Pi_i\}$ that compose the instance, $\Pi_i \ket{\psi_{sat}} = 0$ for all $i$. On the other hand, $x$ is a no-instance if for all possible states, the sum of the constraint violations is not too small, i.e.\ $\sum_{i} \bra{\psi} \Pi_i \ket{\psi} \geq 1/poly(n)$ for any state $\ket{\psi}$. Furthermore, to meet the completeness and soundness conditions associated with $\BQP_1$, we have to show that all yes-instances are always labeled as ``satisfiable'' by the hybrid algorithm, and no-instances are labeled as ``unsatisfiable'' most of the time. \\
\indent
In \cref{subsection:monosat}, we provided an exhaustive list of instances that can decided based on their structure. As the classical algorithm is based on this list, the algorithm never rejects instances with a valid structure and always rejects instances whose structure implies two or more clauses cannot be satisfied simultaneously by any state. If the classical algorithm is able to determine the instance, then both the completeness and soundness conditions are clearly satisfied. However, if the instance is not decided by the classical algorithm, there must be at least one sub-instance that requires the quantum subroutine. \\

\medbreak
\noindent
\textbf{Quantum sub-instances}\\
\noindent
Importantly, the classical algorithm ensures that none of the logical qudits involved in the TACCs of these sub-instances interfere with other sub-instances, and so we can consider them independently. Let $\{S^{(1,2)}_j\}$ denote the set of sub-instances of the first and second type, and $\{S^{(3)}_j\}$ the set of sub-instances of the third type. Here, each $S_j$ is itself a set of clauses that define the sub-instance. \\
\indent
To determine whether the instance is a yes-instance or no-instance, we present the following claim:

\begin{claim}
    If the quantum sub-instances $\{S_j\}$ are satisfiable, the \textnormal{candidate state}
    \begin{equation} \label{eqn:cand}
        \ket{\psi_{cand}} = \bigotimes_{j} \ket{\psi_{phist}}_{S^{(1,2)}_j} \bigotimes_{k} \left( \ket{\psi_{phist}} \otimes \ket{?}_{l_u} \otimes \underbrace{\ket{r \ldots r}}_{L-T} \right)_{S^{(3)}_k},
    \end{equation}
    is always a satisfying state. Here, $\ket{\psi_{phist}}_{S_j}$ is the privileged history state defined in \cref{eqn:phist} where the circuit corresponds to the TACC associated with sub-instance $S_j$ and $l_u$ is the first undefined logical qudit of the TACC.
\end{claim}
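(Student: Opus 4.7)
The plan is to verify directly, clause by clause, that $\ket{\psi_{cand}}$ annihilates every projector in the instance, leveraging the structural guarantees established by the classical phase of the algorithm. By the time the quantum subroutine is invoked, steps (1)--(5) of the classical algorithm together with \cref{lemma:singletypequditmono,lemma:linearchainmono,lemma:directionmono,lemma:uniqueclockmono,lemma:uniqueendpointmono,lemma:noendpointsmono,lemma:sharedhprop,lemma:mutualboth,lemma:truncated} ensure that each remaining sub-instance $S_j$ is a one-directional chain with either (a) a TACC equal to the whole ACC (types 1 and 2), or (b) a TACC truncated by an undefined $\Pi_{prop}$ clause preceded by simultaneous propagation (type 3), and that logical qudits appearing across different TACCs do so only through $\Pi_{init}$ clauses or only through $\Pi_{out}$ clauses.

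First I would handle the sub-instances of types (1) and (2) independently. The privileged history state $\ket{\psi_{phist}}$ of \cref{eqn:phist} annihilates each $\Pi_{init}$ clause of $S_j$ because the $t=0$ branch of the superposition has every logical qudit in $\ket{0}$ together with $\ket{a_0 r_1 \ldots r_T}$, while the other branches pin $c_0$ to $\ket{d}$, killing the $\ket{a}\!\bra{a}_0$ factor. It annihilates each ``primary'' $\Pi_{prop,U_{t+1,0}}$ clause by the Feynman--Kitaev identity: the relevant two-term slice of the superposition is exactly $(\ket{\phi_t}\ket{ar} + U_{t+1,0}\ket{\phi_t}\ket{da})/\sqrt{2}$, which lies in the null space of $\Pi_{work,U_{t+1,0}}$, while the clock-keeping and Bell-pair constraints are satisfied by the explicit clock encoding and the entanglement conventions of \cref{fig:monogamydemo}. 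For every additional simultaneous propagation clause $\Pi_{prop,U_{t+1,j}}$ with $j\geq 1$, \cref{prop:simultaneousprop} combined with the hypothesis that $S_j$ is satisfiable guarantees $U_{t+1,j}\ket{\phi_t} = U_{t+1,0}\ket{\phi_t}$, so the same privileged history state also annihilates these clauses. Finally, the $\Pi_{out}$ clauses are annihilated because $\ket{\phi_T}$ must evaluate to $\ket{1}$ on the relevant qubits (otherwise no assignment satisfies the sub-instance, contradicting satisfiability).

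Next I would dispatch type (3) sub-instances by invoking \cref{lemma:truncated} on the well-defined portion of the TACC up to the first undefined clause, augmented by a short direct check that the appended tail $\ket{?}_{l_u}\otimes\ket{r\ldots r}$ annihilates every clause past the truncation point. Specifically, undefined $\Pi_{prop}$ clauses to the right all see either $\ket{?}$ in a logical slot or $\ket{rr}$ on their clock pair, which by the analysis in \cref{subsubsection:prop} (null space given in \cref{eqn:nullspacehprop} together with the $\ket{rr}$ observation) are exactly the satisfying configurations; any $\Pi_{out}$ clause at the far end sees clock state $\ket{r}$, which forces the $\ket{a}\!\bra{a}$ factor to vanish. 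Simultaneous propagation prior to the truncation is handled as before by \cref{prop:simultaneousprop}.

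The main obstacle is that the ``tensor product'' in \cref{eqn:cand} is not literal whenever logical qudits are shared between sub-instances; rigorous bookkeeping of these shared subsystems is the delicate step. I would resolve this by observing that, by the classical algorithm, a shared logical qudit appears only through $\Pi_{init}$ clauses or only through $\Pi_{out}$ clauses: in the first case every $\ket{\psi_{phist}}$ factor demands this qudit start in $\ket{0}$, so a single consistent $\ket{0}$ assignment threads through all factors (by \cref{lemma:mutualsingle}); in the second case each $\ket{\phi_T}$ independently produces $\ket{1}$ on that qudit whenever its sub-instance is satisfiable, giving consistency once more. Auxiliary $C\!A$, $C\!B$, and $EC$ subspaces are set to the Bell pairs prescribed by \cref{fig:monogamydemo}, which is always possible because the linear chain structure guarantees each subspace is entangled with at most one other. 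Collecting these observations, the product in \cref{eqn:cand}, interpreted as a joint state with shared subsystems identified, annihilates every projector of the instance, proving the claim.
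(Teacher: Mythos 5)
Your proposal is correct and follows essentially the same approach as the paper: given the satisfiability hypothesis, show that the privileged history state annihilates all clauses of each type-(1,2) sub-instance (uniqueness forces this), and that the truncated variant works for type-(3) sub-instances (if a full history state works, so does the truncated one). Your clause-by-clause verification is a more explicit unrolling of the paper's one-paragraph argument, and your explicit treatment of shared logical qudits and of the auxiliary Bell-pair subspaces re-derives consistency that the classical phase of the algorithm (steps 4.1--4.2 and \cref{lemma:mutualsingle,lemma:sharedhprop}) has already guaranteed; these additions make the argument more self-contained but do not change its substance.
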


\begin{proof}
    Assume that the quantum sub-instances $\{S_j\}$ are satisfiable. Consequently, by \cref{prop:simultaneousprop}, the sub-instances with simultaneous propagation clauses have a single computational history. Then, because the TACC encompasses the whole ACC in the first and second type of sub-instances, they must be uniquely satisfied by a history state. These are the $\ket{\psi_{phist}}_{S^{(1,2)}_j}$ terms. This is not the case for the third type of sub-instance as it may be satisfied either by a truncated history state that leverages the undefined $\Pi_{prop}$ clause or by a history state that encompasses the whole ACC. However, it is not difficult to see that if the latter history state is a satisfying state, then so must the truncated state.\footnote{The reverse implication is not necessarily true because even though the truncated ACC may be satisfiable, considering a longer chain with more constraints can only make the sub-instance harder and possibly unsatisfiable.} These are the $(\ket{\psi_{phist}} \otimes \ket{?}_{l_u} \otimes \ket{r \ldots r})_{S^{(3)}_k}$ terms above.
\end{proof}
\noindent
The claim allows us to show that if the quantum algorithm can determine or verify that this candidate state is not a satisfying state, we can conclude that the instance must be unsatisfiable as no other state can be a satisfying state. \\
\indent
The quantum algorithm presented above makes use of this claim and determines the satisfiability of the instance by indirectly measuring the violation of constraints incurred by the state $\ket{\psi_{cand}}$. In practice however, the algorithm only measures the constraints related to the TACC of the sub-instances, and even then it does not measure the $\Pi_{init}$ or lone $\Pi_{prop}$ projectors. This is because, first, the satisfying state of the clauses beyond the TACC is trivial, and second, because the privileged history states clearly satisfy the $\Pi_{init}$ and lone $\Pi_{prop}$ projectors. This is why upon receiving the TACC of a sub-instance $S_j$, the quantum algorithm only measures the eigenvalue of the simultaneous $\Pi_{prop}$ projectors (if any) and the $\Pi_{out}$ projectors within (if any) on the state $\ket{\psi_{phist}}_{S_j}$. Let $K_j \subset S_j$ and $M_j \subset S_j$ denote these two sets of projectors, respectively. Then, for a projector $\Pi_i \in K_j \cup M_j$, the probability that the algorithm determines the constraint is not satisfied (observes outcome ``1''; equivalent to measuring eigenvalue 1) and rejects the instance is given by

\begin{equation*}
    p_{i,j} = \bra{\psi_{phist}} \Pi_i \ket{\psi_{phist}}_{S_j}.
\end{equation*}
\noindent
Then, the algorithm's acceptance probability $AP$, the probability that none of the measured ancilla qubits
yield outcome ``1'', can be bounded as

\begin{equation}\label{eqn:boundap}
    1 - \sum_{i,j} p_{i,j} \leq AP \leq 1 - \max_{i,j} p_{i,j}.
\end{equation}

\medbreak
\noindent
\textbf{Completeness}\\
\noindent
If the input instance $x$ is a yes-instance, then all of the quantum sub-instances $\{S_j\}$ can be satisfied. For each $S_j$, the quantum algorithm measures the eigenvalues of projectors $K_j$ and $M_j$ on the state $\ket{\psi_{phist}}_{S_j}$. Since these privileged history states satisfy the sub-instances, it follows that $p_{i,j} = 0$ for all $i$ and $j$. Then, from \cref{eqn:boundap}, we can conclude that $AP = 1$. \\
\indent
The state of all qubits that satisfies all clauses of the instance can be written as

\begin{equation*}
    \ket{\psi_{*}} = \ket{\psi_{else}} \otimes \ket{\psi_{cand}},
\end{equation*}
where $\ket{\psi_{else}}$ is the joint satisfying state of the sub-instances that are trivially satisfiable and $\ket{\psi_{cand}}$ is the state described in \cref{eqn:cand}. The algorithm does not explicitly compute $\ket{\psi_{else}}$, but determines that such a state exists, which is sufficient. In other words, $\ket{\psi_{*}} = \ket{\psi_{sat}}$. \\

\medbreak
\noindent
\textbf{Soundness}\\
\noindent
As mentioned above, the quantum algorithm can conclude that the instance is unsatisfiable if the state $\ket{\psi_{cand}}$ receives a penalty greater than $1/poly(n)$. To prove the soundness condition, it is necessary that the algorithm detects this case with probability $\geq 2/3$ (the probability that the circuit accepts the instance is $\leq 1/3$). We now show that this is the case. \\
\indent
The key idea is to leverage the problem's promise. Recall that for no-instances, we are promised that for any $n$-qudit state $\ket{\psi}$, the equation $\sum_{\Pi_i \in x} \bra{\psi} \Pi_i \ket{\psi} \geq 1/poly(n)$ must hold. It then follows that for the state $\ket{\psi_*}$ which satisfies the instance in the yes-case, we now have

\begin{equation} \label{eqn:totpenalty}
    \begin{aligned}
        \sum_{\Pi_i \in x} \bra{\psi_{*}} \Pi_i \ket{\psi_{*}} & = \sum_{\Pi_i \in \{S_j\}} \bra{\psi_{cand}}\Pi_i \ket{\psi_{cand}}                        \\
                                                               & = \sum_{j} \sum_{\Pi_i \in S_j} \bra{\psi_{phist}}\Pi_i \ket{\psi_{phist}}_{S_j}           \\
                                                               & = \sum_{j} \sum_{\Pi_i \in K_j \cup M_j} \bra{\psi_{phist}} \Pi_i \ket{\psi_{phist}}_{S_j} \\
                                                               & \geq 1/poly(n),
    \end{aligned}
\end{equation}
\noindent
where in the first line we made use of the fact that the projectors of the trivially decidable instances were all determined to be satisfiable by the classical algorithm. In the second line, we expanded the expression and grouped some terms together, while in the third line we noted that $\ket{\psi_{phist}}_{S_j}$ always satisfies the $\Pi_{init}$ and the lone $\Pi_{prop}$ projectors within. Then, the maximum violation of a projector in the instance can be bounded as

\begin{equation*}
    \max_{i,j} p_{i,j} \geq \frac{\sum_{j} \sum_{\Pi_i \in K_j \cup M_j} \bra{\psi_{phist}} \Pi_i \ket{\psi_{phist}}_{S_j}}{\textnormal{total \# of projectors}} \geq \frac{1/poly(n)}{\textnormal{total \# of projectors}} \geq 1/poly(n),
\end{equation*}
\noindent
where the second inequality follows from \cref{eqn:totpenalty}, and the last inequality follows from the fact that there are at most a polynomial number of projectors in an instance. Combining this with \cref{eqn:boundap}, we get that $AP \leq 1 - 1/poly(n)$. Finally, as explained below \cref{defn:onesidederr}, since the gap between the acceptance probabilities of yes- and no-instances is given by an inverse polynomial, it is possible to reduce this acceptance probability below $1/3$ and satisfy the soundness condition of $\BQP_1$.

\subsection{Hardness} \label{subsection:hardnessmono}

Here, we show that the problem \textsc{Linear-Clock-Ternary-QSAT} is $\BQP_1$-hard, thus completing the proof of \cref{thm:bqplarge}. In other words, we show that we can encode any instance $x$ of a $\BQP_1$ promise problem $A = (A_{yes}, A_{no})$, into an instance $x'$ of LCT-QSAT, such that any yes-instance (no-instance) of $A$ also becomes a yes-instance (no-instance) of our problem. \\
\indent
Let $U_x = U_L \ldots U_1$ with $U_i \in \mathcal{G}_8$ and $L = poly(n)$ be the $\BQP_1$ circuit where given an instance $x$ of problem $A$, decides whether $x \in A_{yes}$ or $x \in A_{no}$. The input to the circuit is the $q$-qubit ancilla register $\ket{0}^{\otimes q}$, where $q = poly(n)$. Let $ans$ be the qubit that when measured provides this decision. The reduction $x \mapsto x'$ requires $q + (L+1) + 2$ qudits, each of dimension $17$, and is the following:

\begin{enumerate}
    \item Choose $q$ qudits to serve as the logical qudits of the computation, $L+1$ qudits as the clock qudits, and $2$ qudits as the endpoint qudits. Define an ordering for each type of qudit.
    \item For all $i \in [q]$, create a $\Pi_{init}$ clause acting on $c_0$, $e_0$ and $l_i$.
    \item For all $j \in [L]$, create a $\Pi_{prop}$ clause with associated unitary $U_{j+1}$, which acts on clock qudits $c_j$ and $c_{j+1}$ (with $c_j$ being the predecessor of $c_{j+1}$) and the two logical qudits that the unitary originally acts on.
    \item Create one $\Pi_{out}$ clause acting on $c_L$, $e_1$ and $l_{ans}$.
\end{enumerate}
\indent
The resulting instance $x'$ consists of a one-dimensional clock chain of length $L$ ($L+1$ clock qudits) with a unique direction, no simultaneous propagation clauses, and fully initialized logical qudits (see \cref{fig:qinstance}). \\
\indent
To prove the correctness of the reduction, let us first consider the case where $x \in A_{yes}$. Here, $x'$ is satisfied by the history state

\begin{equation} \label{eqn:histhardness}
    \ket{\psi_{hist}} = \frac{1}{\sqrt{L+1}} \sum_{t=0}^L U_t \ldots U_0 \ket{0}^{\otimes q} \otimes \ket{\underbrace{d \ldots d}_t a_t \underbrace{r \ldots r}_{L-t}},
\end{equation}
\noindent
where $U_0 = I$.\footnote{More explicitly, the satisfying state is given by $\ket{\psi_{hist}} \otimes \ket{\psi_{aux}}$ where the second state accounts for the Bell pairs formed in the auxiliary subspaces of qudits.} As mentioned previously, this history state immediately satisfies the $\Pi_{init}$ and $\Pi_{prop}$ clauses of the instance, while the satisfiability of the $\Pi_{out}$ clause depends on whether the evaluation of $U$ on the initial state results in qudit $l_{ans}$ being in the state $\ket{1}$. This is true since, by assumption, $U$ accepts $x$ with certainty. In other words, in the original instance, $ans$ always produces outcome ``1'', implying that in our instance $l_{ans}$ always results in the state $\ket{1}$. \\
\indent
When $x \in A_{no}$, we have to demonstrate that any state of the qudits in the instance incurs a penalty greater than an inverse polynomial. Unfortunately, this is more complex that in the previous case, and so it requires a more in-depth analysis.

\subsubsection{No small eigenvalues for no-instances}

In this section it will be more convenient to think of the $O_{init}$, $O_{prop}$, $O_{out}$ operators as positive semi-definite operators (so as defined in \cref{eqn:Oinit,eqn:Oprop,eqn:Oout}) instead of projectors. This does not change in any way the task we have to prove. In fact, this makes the problem ``harder'' as the eigenvalues of excited states may now be closer to $0$ instead of all of them being $+1$-eigenvalues. \\
\indent
As in \cref{subsubsection:qmahard}, it is useful to group the similar types of operators in the LCT-QSAT instance in terms of Hamiltonians. The instance can then be expressed as

\begin{equation} \label{eqn:hamilsoundness}
    H = H_{roles} + H_{aux} + H_{clock} + H_{prop} + H_{init} + H_{out},
\end{equation}
\noindent
where $H_{roles}$ collects all of the $\Pi_E, \Pi_C$ and $\Pi_L$ terms, $H_{prop}$ the $\Pi_{work,U}(\Pi_D \otimes \Pi_D \otimes I^{\otimes 2})_{j,j+1}$ terms ($L$ of them), $H_{init}$ the $(I - \ket{0} \! \bra{0})_{l_i} \otimes \ket{a} \! \bra{a}_{c_0}$ terms ($q$ of them), and $H_{out}$ the single $\ket{0} \! \bra{0}_{l_{ans}} \otimes \ket{a} \! \bra{a}_{c_L}$ term. Additionally,

\begin{equation*}
    H_{aux} := (I - \ket{\Phi^+} \! \bra{\Phi^+}_{EC,C\!A})_{e_0,c_0} + (I - \ket{\Phi^+} \! \bra{\Phi^+}_{C\!B,EC})_{c_L,e_1} + \sum_{j=0}^L (I - \ket{\Phi^+} \! \bra{\Phi^+}_{C\!A, C\!B})_{c_j,c_{j+1}}
\end{equation*}
\noindent
collects the terms that require the auxiliary subspaces to form $\ket{\Phi^+}$ Bell pairs, and

\begin{equation} \label{eqn:hclock}
    \begin{aligned}
        H_{clock} & := \sum_{j = 0}^L \left[ \Pi_D \otimes \Pi_D \otimes \Pi_{clock,D} + (I^{\otimes 2} \otimes \Pi_{clock,?}) (I - \Pi_D \otimes \Pi_D \otimes I^{\otimes 2})\right]_{i,i+1} \\
                  & + (\underbrace{\Pi_{start} + \ldots + \Pi_{start}}_{n})_{0} + \: (\Pi_{stop})_{L},
    \end{aligned}
\end{equation}
\noindent
collects the $L$ terms that constrain the states of the clock based on the state of the logical qudits, the $\Pi_{start}$ projectors, and the single $\Pi_{stop}$ projector. Since $H_{clock}$ only acts on the clock qudits of the instance, we omit the $c$ in the indices to avoid clutter. The Hamiltonian of \cref{eqn:hamilsoundness} acts on the Hilbert space $\mathcal{H} = (\mathbb{C}^{17})^{\otimes q + L + 3}$. \\
\indent
To show that the Hamiltonian of \cref{eqn:hamilsoundness} has no small eigenvalues, we will partition the Hilbert space into ``good'' and ``bad'' subspaces and demonstrate that in either subspace, the smallest non-zero eigenvalue is always greater than an inverse polynomial. We will begin by considering $\mathcal{S}_{roles}$ and $\mathcal{S}_{roles}^\perp$ which represent the null space of $H_{roles}$ and its orthogonal subspace, respectively. Clearly, $\mathcal{S}_{roles} \subset \mathcal{H}$, $\mathcal{S}_{roles}^\perp \subset \mathcal{H}$, and $\mathcal{S}_{roles} \oplus \mathcal{S}_{roles}^\perp = \mathcal{H}$. We will show that states in the orthogonal subspace (where qudits are not labeled properly) receive a constant ``energy'' penalty, whereas for the actual null space, we will apply the same logic and partition it into two subspaces, $S_{aux}$ and $S_{aux}^\perp$, which we analyze in a similar way. We will repeat this process a third time (for $H_{clock}$), after which we will observe that when restricted to the ``good'' subspaces, the Hamiltonian of \cref{eqn:hamilsoundness} is effectively identical to the one of Ref.\ \cite{kitaev2002classical} and simply rely on their result.\footnote{The clock encoding is different, but it does not affect the analysis whatsoever.} Observe that the order in which we consider these subspaces is relevant since, for example, we cannot conclude anything about the clock register before first ensuring that this takes place within the subspace where all arrows between clock qudits point in a unique direction. \cref{fig:projection} summarizes these steps.

\begin{figure}[t]
    \centering
    \includegraphics[width=\textwidth]{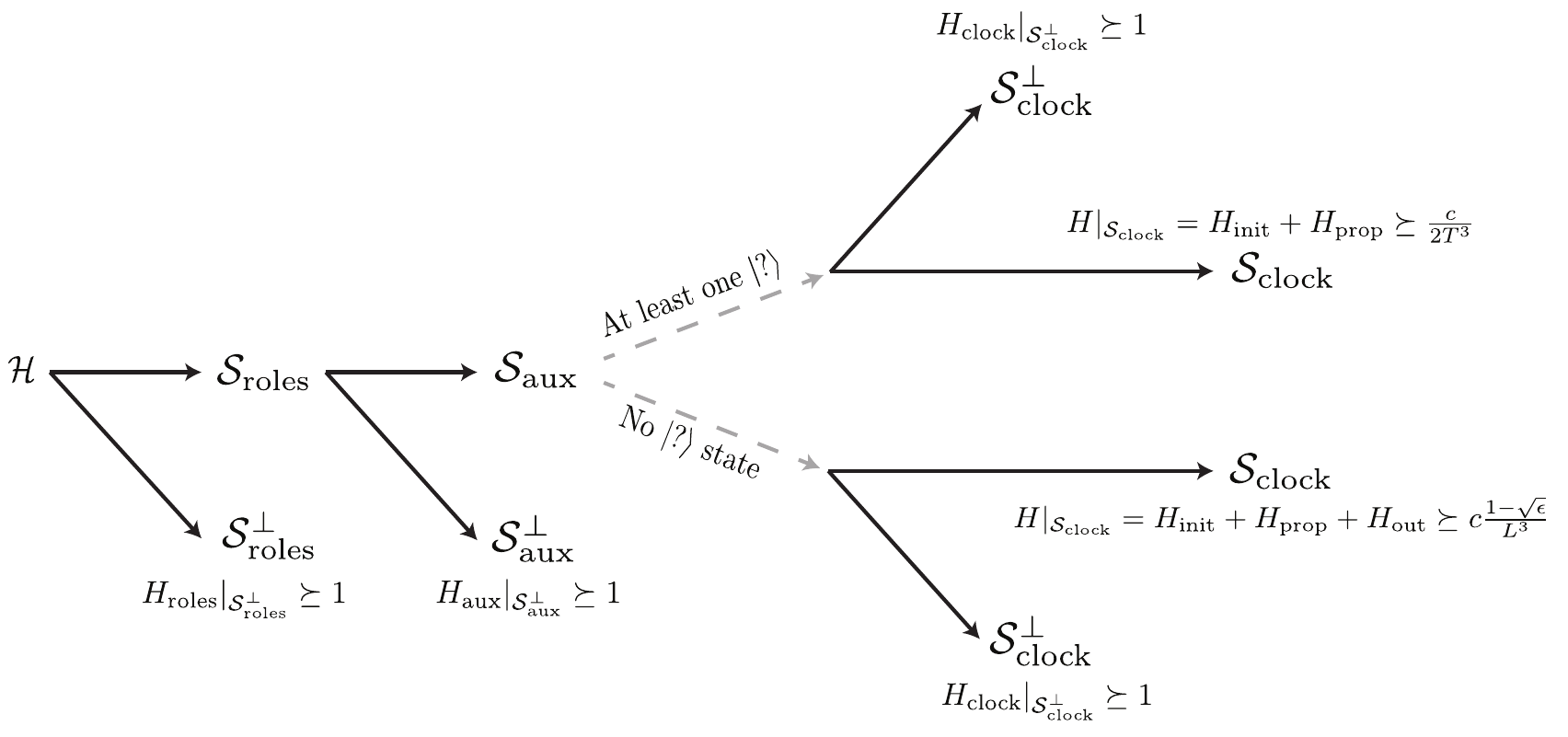}
    \caption{Diagram summarizing the steps of the proof showing that the Hamiltonian of \cref{eqn:hamilsoundness} has no small eigenvalues. Starting by considering the full Hilbert space $\mathcal{H}$, we partition the space into two (represented by the forking arrows): the null space corresponding to a term in the Hamiltonian, and the orthogonal subspace. Underneath each subspace, we write the smallest eigenvalue of the Hamiltonian term restricted to that subspace. The diagram shows that in all subspaces, there can be no state that receives a small energy penalty. Another important property is that all subspaces are invariant under the action of the Hamiltonian.}
    \label{fig:projection}
\end{figure}
\indent
To begin, observe that $\mathcal{S}_{roles}$ is the product of subspaces where each subspace corresponds to the correct assignment of the particle. For the instance above, we expect

\begin{equation*}
    \mathcal{S}_{roles} = (\mathbb{C}^3)^{\otimes q} \otimes (\mathbb{C}^{12})^{\otimes{L+1}} \otimes (\mathbb{C}^2)^{\otimes 2}.
\end{equation*}
\noindent
The Hamiltonian of \cref{eqn:hamilsoundness} (and the projectors within) does not convert particles of one type to a different type, so both $\mathcal{S}_{roles}$ and $S_{roles}^\perp$ are invariant under its action. Any state in $S_{roles}^\perp$ has at least one particle that does not respect its assigned role and hence violates one of the $\Pi_E$, $\Pi_L$, or $\Pi_C$ clauses within $H_{roles}$ and receives a penalty $\geq 1$.\\
\indent
Next, we consider $H_{aux}$ restricted to the null space $\mathcal{S}_{roles}$, which we denote by $H_{aux}|_{\mathcal{S}_{roles}}$. The null space of this Hamiltonian is

\begin{equation*}
    \mathcal{S}_{aux} = (\mathbb{C}^3)^{\otimes q} \otimes (\mathbb{C}^3)^{\otimes L+1} \otimes \ket{\psi_{aux}},
\end{equation*}
\noindent
where $\ket{\psi_{aux}}$ is the state giving the linear structure to the clock qudits. To reiterate, here, the $C\!B$ subspace of each clock qudit $c_1$ to $c_{L-1}$ forms a $\ket{\Phi^+}$ Bell pair with the $C\!A$ subspace of its successor. The clock qudits at the endpoints behave differently. The $C\!A$ subspace of $c_0$ forms a Bell pair with the $EC$ subspace of the first endpoint particle, while the $C\!B$ subspace of $c_L$ forms a Bell pair with the $EC$ subspace of the second endpoint particle. This subspace and the orthogonal subspace are invariant under the action of $H$. Then, observe that since the four Bell states form a basis of the $2$-qubit space, the orthogonal subspace is spanned by the tensor product of any combination of the other three states. It is then clear that any state in $\mathcal{S}_{aux}^\perp$ violates at least one of the projectors within $H_{aux}$, and thus receives a penalty $ \geq 1$. \\
\indent
We can now consider $H_{clock}|_{\mathcal{S}_{aux}}$. In its current form, this Hamiltonian is difficult to analyze since the clock qudits are intertwined with the logical qudits. Indeed, depending on the state of the logical qudits, the terms within might result in either $\Pi_{clock,D}$ or $\Pi_{clock,?}$, which enforce slightly different constraints. To simplify the analysis, we partition the space into two subspaces: one where there are no logical qudits in the undefined state, and the other subspace where at least one of them is in this state. Importantly, these subspaces are invariant under the action of $H$ since there is no transformation that takes a logical qudit in the state $\ket{?}$ to the state $\ket{0}$ or $\ket{1}$, and vice versa. Consequently, any logical qudit in the state $\ket{?}$, remains in this state. \\

\medbreak
\noindent
\textbf{At least one undefined logical qudit} \\
\noindent
Let $O_{prop}^{(c_T,c_{T+1})}$ with $0 \leq T < L$ be the first $O_{prop}$ clause of the instance that acts on an undefined logical qudit. Let us refer to this qudit as $l_u$ with $0 \leq u < q$, and for simplicity, let us assume that this is the only undefined qudit of the instance as any additional such qudits can only increase the energy further. In this case, the clock Hamiltonian of \cref{eqn:hclock} becomes

\begin{equation*}
    \begin{aligned}
        H_{clock}(T,u) & = \underbrace{\Pi_{start} + \ldots + \Pi_{start}}_{q} + \: \Pi_{stop} + \sum_{i=0}^{T-1} (\Pi_{clock,D})_{i,i+1} + \: (\Pi_{clock,?})_{T,T+1}                               \\
                       & + \sum_{j = T+1}^L \left[ \Pi_D \otimes \Pi_D \otimes \Pi_{clock,D} + (I^{\otimes 2} \otimes \Pi_{clock,?}) (I - \Pi_D \otimes \Pi_D \otimes I^{\otimes 2})\right]_{j,j+1},
    \end{aligned}
\end{equation*}
\noindent
where we have not assumed a specific form of the clock projectors that act on clock qudits $T+1$ to $L$. In fact, the exact form of these clauses is not of much relevance. Indeed, observe that as shown in \cref{eqn:nullspacehprop}, the clock qudit $c_{T+1}$ can only be in the state $\ket{r}$ in order to satisfy $\Pi_{clock,?}$. Then, regardless of whether the next clause is $\Pi_{clock,D}$ or $\Pi_{clock,?}$, both clauses require that $c_{T+2}$ must also be fixed to $\ket{r}$ in order not to incur a penalty. The same is true for all remaining clauses. \\
\indent
The null space and orthogonal subspace of this Hamiltonian are

\begin{equation*}
    \begin{aligned}
        \mathcal{S}_{clock}(T, u)       & = \mathcal{D}_u \otimes \Span\{ \ket{C_0}, \ldots, \ket{C_T}\},                         \\
        \mathcal{S}_{clock}^\perp(T, u) & =  \mathcal{D}_u \otimes \Span(\mathcal{I} \cup \{ \ket{C_{T+1}}, \ldots, \ket{C_L}\}),
    \end{aligned}
\end{equation*}
\noindent
where $\mathcal{I}$ represents the set of illegal clock states and $\mathcal{D}_u := (\mathbb{C}^2)^{\otimes u} \otimes \ket{?}_{u} \otimes (\mathbb{C}^2)^{\otimes q - u - 1}$ is the space of logical qudits considering that the undefined logical qudit is at position $u$. Both null spaces are invariant under the action of $H$ as there are no terms that transform legal states to illegal ones, or terms that force the clock to progress past $\ket{C_T}$ (as shown below in \cref{eqn:undefinedhprop}, there is no $(T+1)$-th $\Pi_{work}$ clause within $H_{prop}$). We can now conclude that any state in $\mathcal{S}_{clock}^\perp$ violates at least one of the $\Pi_{clock,D}$ or $\Pi_{clock,?}$ terms and thus receives a penalty $\geq 1$. \\
\indent
Now, we can consider $H$ restricted to the null space of $S_{clock}(T,u)$. Within this subspace, $H$ simplifies to

\begin{equation} \label{eqn:undefinedhprop}
    H|_{\mathcal{S}_{clock}(T,u)} = H_{init} + H_{prop}(T,u); \hspace{2.5em} H_{prop}(T,u) := \sum_{i=0}^{T-1} (\Pi_{work})_{i,i+1},
\end{equation}
\noindent
where we again made use of the fact that $H_{prop}$ also depends on the undefined logical qudit, and the fact that $H_{out}$ is projected out regardless of the position of the first undefined logical qudit. Here, $H_{prop}(T,u)$ is the Hamiltonian often seen in literature \cite{kitaev2002classical,bravyi2006efficient,stoquasticLH}, which demands the application of the unitary circuit $U_T \ldots U_1$ on a data register. \\
\indent
We can now bound the smallest eigenvalue of $H_{init} + H_{prop}(T,u)$ by using Kitaev's Geometric Lemma stated in \cref{lemma:geometric}. As done in Ref.\ \cite{kitaev2002classical}, the smallest non-zero eigenvalues of these operators are found by performing a change of basis to the ``rotating'' frame. Under the transformation, $H_{init}$ remains unchanged, while $H_{prop}$ takes the shape of a $(T+1) \times (T+1)$ tridiagonal matrix. The smallest non-zero eigenvalues are $\gamma(H_{init}) = 1$ and $\gamma(H_{prop}) = c/T^2$, with

\begin{equation*}
    \mathcal{S}_{init} = \mathcal{D}_u \otimes \Span\{\ket{C_1}, \ldots, \ket{C_T}\} \hspace{2em} \textnormal{and} \hspace{2em} \mathcal{S}_{prop} = \mathcal{D}_u \otimes \frac{1}{\sqrt{T+1}} \sum_{t=0}^T \ket{C_t}
\end{equation*}
\noindent
as their respective null spaces.\footnote{Note that the term $\ket{0}^{\otimes q} \otimes \ket{C_0} \! \bra{C_0}$, which is normally in the null space of $H_{init}$, is no longer in this space since one of the logical qudits is assumed to be in the state $\ket{?}$.} Given the simple form of these null spaces, it is straightforward to show that in the Geometric Lemma of \cref{eqn:geometric},

\begin{equation*}
    \alpha = \max_{\ket{\eta} \in \mathcal{S}_{prop}} \bra{\eta} \Pi_{\mathcal{S}_{init}} \ket{\eta} = \frac{T}{T+1} = 1 - \frac{1}{T+1},
\end{equation*}
\noindent
and so

\begin{equation*}
    H|_{\mathcal{S}_{clock}(T,u)} \succeq \min\{ \gamma(H_{init}),\gamma(H_{prop}) \} \frac{(1-\alpha)}{2} \succeq \frac{c}{2T^3}.
\end{equation*}
\noindent
This concludes that the smallest eigenvalue of $H$ scales as $\Omega (1/T^3)$. \\

\medbreak
\noindent
\textbf{No undefined logical qudit} \\
\noindent
In this case, the clock Hamiltonian becomes

\begin{equation*}
    H_{clock} = \underbrace{\Pi_{start} + \ldots + \Pi_{start}}_{q} + \: \Pi_{stop} + \sum_{i=0}^L (\Pi_{clock,D})_{i,i+1},
\end{equation*}
\noindent
with null space and orthogonal subspace given by
\begin{equation*}
    \mathcal{S}_{clock} = (\mathbb{C}^2)^{\otimes q} \otimes \Span\{ \ket{C_0}, \ldots, \ket{C_L} \} \hspace{2em} \textnormal{and} \hspace{2em}  \mathcal{S}_{clock}^\perp = (\mathbb{C}^2)^{\otimes q} \otimes \Span(\mathcal{I}).
\end{equation*}
\noindent
As mentioned previously, these subspaces are invariant under $H$ as there is no term that transforms a legal clock state into an illegal state, or vice versa. Any state in $\mathcal{S}_{clock}^\perp$ violates at least one of the clauses within $H_{clock}$ and thus receives a penalty $\geq 1$. \\
\indent
By restricting the Hamiltonian of \cref{eqn:hamilsoundness} to $\mathcal{S}_{clock}$, we obtain that $H|_{\mathcal{S}_{clock}} = H_{init} + H_{prop} + H_{out}$ where $H_{prop} = \sum_{i=0}^L (\Pi_{work})_{i,i+1}$. Letting $H_1 := H_{init} + H_{prop}$ and $H_2 := H_{prop}$ and again moving to the rotating frame, we can write the null spaces of these terms as

\begin{equation*}
    \mathcal{S}_1 = \Big( \ket{0}^{\otimes q} \otimes \ket{C_0} \Big) \oplus
    \Big( (\mathbb{C}^2)^{\otimes q} \otimes \Span\{\ket{C_1}, \ldots \ket{C_{L-1}}\} \Big) \oplus
    \Big( U^\dagger(\ket{1}_{ans} \otimes (\mathbb{C}^2)^{\otimes q-1}) \otimes \ket{C_L} \Big)
\end{equation*}
\noindent
and

\begin{equation*}
    \mathcal{S}_2 = (\mathbb{C}^2)^{\otimes q} \otimes \frac{1}{\sqrt{L+1}} \sum_{t = 0}^L \ket{C_t}.
\end{equation*}
\noindent
Except for the different clock encoding and the absence of the subspace for witness qudits, these Hamiltonians and their corresponding null spaces are similar to those of Ref.\ \cite{kitaev2002classical}. By following the same steps as in that proof (or alternatively pages $236$-$239$ in Ref.\ \cite{gharibian2015quantum}), we can obtain that the lowest energy penalty a state may receive is $\geq c(1 - \sqrt{\epsilon}) L^{-3}$ where $c$ is a positive constant and $\epsilon \leq 1/3$ is the probability that the verifier of problem $A$ wrongfully accepts. \\
\indent
The soundness analysis concludes that in the case where there is an undefined logical qudit, $H \succeq c(2T^3)^{-1}$, and $H \succeq c(1 - \sqrt{\epsilon}) L^{-3}$ when there are none. Since $T \leq L = poly(n)$, we have that $\sum_i \bra{\psi} \Pi_i \ket{\psi} \geq 1/poly(n)$ for any state $\ket{\psi}$ on $q + L + 3$ 17-dimensional qudits, and so $x'$ is a no-instance of LCT-QSAT.

\section{Reducing the Qudit Dimensionality} \label{section:our}

In this section, we remove the use of monogamy of entanglement and show that the resulting problem, now referred to as \textsc{Semilinear-Clock-Ternary-QSAT}, is $\BQP_1^{\mathcal{G}_8}$-complete and uses lower-dimensional qudits.

We demonstrate that even without the monogamy condition, we can obtain similar results about the satisfiability of instances. We show that the rules of propagation, the choice of clock encoding, and the requirement to keep a consistent state of the clock register at all times suffice to demonstrate that TACCs with forks or $\Pi_{prop}$ clauses pointing in different directions are unsatisfiable. Additionally, we also show that $\Pi_{init}$ and $\Pi_{out}$ still serve as endpoints, albeit as a weaker version, and thus there is no need for endpoint qudits. Together, these results show that while the use of monogamy in the construction does facilitate some proofs, it is not crucial for the construction. Removing this condition and the use of endpoint qudits allows us to reduce the local dimension from $17$ down to $6$. The main difference in this construction is that sub-instances with multiple ACCs may now have a satisfying state. These sub-instances take the form of one-dimensional TACCs joined together by complex webs of connections between clock qudits, thus explaining the term \textit{semilinear} in the name of the problem (see \cref{fig:semilinearsat}). We show that if these sub-instances are potentially satisfiable, they can be further separated into smaller instances composed only of the TACCs, and the clauses connecting them can be satisfied trivially.

The problem is defined similarly to \cref{defn:lctqsat}, except that the terms $\Pi_E$ and $I - \ket{\Phi^+} \! \bra{\Phi^+}$ are omitted in the definitions of $\Pi_{init}$, $\Pi_{out}$, and $\Pi_{prop,U}$.

\begin{defn}[Semilinear-Clock-Ternary-QSAT] \label{defn:slctqsat}
  The problem {\rm \textsc{Semilinear-Clock-Ternary-QSAT}} is a quantum constraint satisfaction problem defined on the $6$-dimensional Hilbert space

  \begin{equation*}
    \mathcal{H} = \Span\{\ket{0}, \ket{1}, \ket{?}\} \oplus \Span\{\ket{r}, \ket{a}, \ket{d}\},
  \end{equation*}
  \noindent
  consisting of a logical and a clock subspace. The problem consists of $5$ types of clauses: the initialization clause $\Pi_{init}$, the propagation clauses $\Pi_{prop,U}$ (one for every $U \in \mathcal{G} = \{H,HT, (H \otimes H) {\rm CNOT}\}$), and a termination clause $\Pi_{out}$, each acting on at most four $6$-dimensional qudits. These are defined as

  \begin{equation} \label{eqn:Oinitfree}
    O_{init} := \; \; \Pi_{L,1} + \Pi_{C,2} + \Pi_{start, 2} + (I - \ket{0} \! \bra{0})_1 \otimes \ket{a} \! \bra{a}_2,
  \end{equation}

  \begin{equation} \label{eqn:Opropfree}
    \begin{aligned}
      O_{prop,U} := \; \; & \Pi_{L_1} + \Pi_{L,2} + \Pi_{C,3} + \Pi_{C,4}                                                    \\
                          & +  (\Pi_{work} + I^{\otimes 2} \otimes \Pi_{clock,D})(\Pi_D \otimes \Pi_D \otimes I^{\otimes 2}) \\ &+ (I^{\otimes 2} \otimes \Pi_{clock,?})(I - \Pi_D \otimes \Pi_D \otimes I^{\otimes 2}),
    \end{aligned}
  \end{equation}
  \noindent
  and
  \begin{equation} \label{eqn:Ooutfree}
    O_{out} :=    \; \; \Pi_{L,1} + \Pi_{C,2} + \Pi_{stop,2} + \ket{0} \! \bra{0}_1 \otimes \ket{a} \! \bra{a}_2,
  \end{equation}
  \noindent
  where again $\Pi_{init}, \Pi_{prop,U}$, and $\Pi_{out}$ are the projectors with the same nullspace as the one defined by the positive semi-definite operators $O_{init}$, $O_{prop,U}$, and $O_{out}$. The projectors found in these clauses and the rest of the definition is the same as shown in \cref{defn:lctqsat}. Evidently, $\Pi_{init}$, $\Pi_{out}$ and $\Pi_{prop,U}$ are still at most $4$-local clauses.
\end{defn}
\indent
The null spaces of the operators $O_{init}$, $O_{prop,U}$, and $O_{out}$ in the definition above are similar to those shown in \cref{eqn:initsat,eqn:outsat,eqn:propsat}, with the exception that there is no longer a $\ket{\psi_{aux}}$ to represent the Bell pairs formed in the auxiliary subspaces.

\begin{figure}[t]
  \centering
  \includegraphics[width=0.9\textwidth]{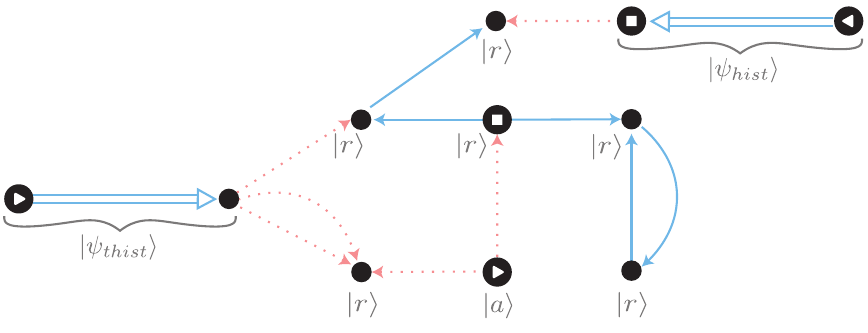}
  \caption{A clock component of a SLCT-QSAT sub-instance that may be satisfiable. The long double-lined arrows represent non-zero length TACCs where all $\Pi_{prop}$ clauses within are well-defined and have the indicated direction. They may still contain simultaneous propagation clauses. The clock component hence has two TACCs plus an additional active dot at the bottom of the image (the qudit with the ``start'' icon present in two undefined $\Pi_{prop}$ clauses). The TACC on the left of the image is truncated and so the clauses within this chain are trivially satisfiable by a truncated history state $\ket{\psi_{thist}}$, while the TACC on the right requires a quantum computer to determine whether a history state $\ket{\psi_{hist}}$ satisfies the $\Pi_{out}$ clause. The clauses involving the active dot can be satisfied by fixing this clock qudit to $\ket{a}$, the adjacent clock qudits to $\ket{r}$, and the logical qudits present in the $\Pi_{init}$ clauses to $\ket{0}$. The remaining clauses can be satisfied by setting the remaining clock qudits to $\ket{r}$, leaving their logical qudits unconstrained. To conclude that the instance is satisfiable, it is also necessary to verify that the logical qudits involved in the three TACCs do not conflict with each other.}
  \label{fig:semilinearsat}
\end{figure}

\subsection{Instance satisfiability} \label{subsection:satisfiability}

Since the definition of the clauses is similar to those of \cref{defn:lctqsat}, and the locality is also identical, there is no new phenomena or new structures of instances to consider. Therefore, let us begin with the analysis of instances. The proofs of the lemmas and propositions shown here can be found in \cref{appendix:nomonogamy}.

The first lemma dictating the satisfiability of the instance is the same as before, except now there are no clock endpoints.

\begin{restatable}[Single-type qudits]{lemma}{singletypequdit} \label{lemma:singletypequdit}
  If a $6$-dimensional qudit in the instance is acted on by both a $\Pi_C$ and $\Pi_L$ clause, the instance is unsatisfiable.
\end{restatable}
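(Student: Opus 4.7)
The proof is in the same spirit as \cref{lemma:singletypequditmono}, but simpler since \textsc{Semilinear-Clock-Ternary-QSAT} has no endpoint qudits, so only two roles (logical and clock) need to be distinguished. The plan is to observe that the two role-assigning projectors $\Pi_L$ and $\Pi_C$ have mutually orthogonal null spaces, and then to lift this observation from a single qudit to the full instance using the fact that the null space of a sum of positive semi-definite operators is the intersection of the null spaces of the summands.

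Concretely, I would first unpack the definitions. The 6-dimensional space decomposes as $\mathcal{H} = \mathcal{H}_L \oplus \mathcal{H}_C$ with $\mathcal{H}_L = \Span\{\ket{0},\ket{1},\ket{?}\}$ and $\mathcal{H}_C = \Span\{\ket{r},\ket{a},\ket{d}\}$. By inspection, $\Pi_L = I - (\ket{0}\!\bra{0} + \ket{1}\!\bra{1} + \ket{?}\!\bra{?})$ is the orthogonal projector onto $\mathcal{H}_C$ and $\Pi_C = I - (\ket{r}\!\bra{r} + \ket{a}\!\bra{a} + \ket{d}\!\bra{d})$ is the orthogonal projector onto $\mathcal{H}_L$, so their null spaces are $\mathcal{H}_L$ and $\mathcal{H}_C$ respectively. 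Because the decomposition is a direct sum, $\mathcal{H}_L \cap \mathcal{H}_C = \{0\}$, i.e.\ no nonzero single-qudit vector can lie in both null spaces at once.

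Now suppose qudit $q$ is acted on by some clause $\Pi_A \in \{\Pi_{init}, \Pi_{prop,U}, \Pi_{out}\}$ whose defining positive semi-definite operator $O_A$ contains $\Pi_{L,q}$ as a summand, and by a (possibly different) clause $\Pi_B$ whose defining operator $O_B$ contains $\Pi_{C,q}$ as a summand. Since every summand of $O_A$ and $O_B$ is positive semi-definite, $\ker O_A = \bigcap_i \ker O_{A,i}$ and similarly for $O_B$, and by definition $\ker \Pi_A = \ker O_A$, $\ker \Pi_B = \ker O_B$. Consequently, any candidate satisfying state $\ket{\Psi}$ for the whole instance must satisfy both $(\Pi_{L,q} \otimes I_{\text{rest}})\ket{\Psi} = 0$ and $(\Pi_{C,q} \otimes I_{\text{rest}})\ket{\Psi} = 0$. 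The first equation confines the support of $\ket{\Psi}$ on qudit $q$ to $\mathcal{H}_L$, while the second confines it to $\mathcal{H}_C$, and since $\mathcal{H}_L \cap \mathcal{H}_C = \{0\}$ the only vector consistent with both constraints is $\ket{\Psi} = 0$. Hence no satisfying state exists, and the instance is unsatisfiable.

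I do not anticipate a real obstacle: the entire argument rests on the orthogonality of the logical and clock subspaces together with the elementary linear-algebraic fact about sums of PSD operators. The only thing worth being careful about is distinguishing the \emph{projectors} $\Pi_{init}, \Pi_{prop,U}, \Pi_{out}$ from their defining positive semi-definite operators $O_{init}, O_{prop,U}, O_{out}$; since the two share a null space, acting with one in the clause is equivalent to imposing the nullspace condition of the other, which is what lets us extract the single-qudit constraint $\Pi_{L,q} \ket{\Psi} = 0$ (resp.\ $\Pi_{C,q}\ket{\Psi} = 0$) from the full-clause constraint.
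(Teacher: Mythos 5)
Your proposal is correct and takes essentially the same approach as the paper, which simply notes that the argument of \cref{lemma:singletypequditmono} (the role-assigning projectors $\Pi_L$ and $\Pi_C$ project onto orthogonal complementary subspaces, so their null spaces intersect trivially) carries over once the endpoint subspace is dropped. Your version is somewhat more careful about the bookkeeping—explicitly invoking that $\ker O = \bigcap_i \ker O_i$ for a sum of PSD terms and that $\ker \Pi_\bullet = \ker O_\bullet$, then lifting the single-qudit obstruction to the full instance—but the underlying idea is identical.
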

\noindent
As before, we now consider the clock components of the sub-instance. This time however, the clock components are simply the connected components defined solely by clock qudits, and not both clock qudits and endpoint qudits.

\subsubsection{A clock component and its logical qudits} \label{subsubsection:clockfree}

In \cref{subsection:monosat}, the need to adhere to the principle of monogamy of entanglement provided the first criteria for determining the satisfiability of instances and established the unsatisfiability of a large number of instances. Indeed, \cref{lemma:linearchainmono,lemma:directionmono,lemma:uniqueclockmono,lemma:uniqueendpointmono} determined that clock components with forks, badly oriented $\Pi_{prop}$ clauses, or misplaced $\Pi_{init}$ and $\Pi_{prop}$ were unsatisfiable. In this construction, we can no longer make such statements. However, we find that these components can still be decided, and are in fact satisfiable provided the instance does not have an active clock chain.

\begin{restatable}[Lack of an ACC]{lemma}{noendpoints} \label{lemma:noendpoints}
  If a clock component does not have at least one $\Pi_{init}$ and one $\Pi_{out}$ clause, the sub-instance is satisfiable.
\end{restatable}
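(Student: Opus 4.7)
The plan is to exhibit an explicit satisfying assignment for the sub-instance in each of the two sub-cases covered by the hypothesis, namely (A) the clock component contains no $\Pi_{init}$ clause, and (B) the clock component contains no $\Pi_{out}$ clause. These two cases together exhaust the statement, and when both hold either construction works, so there is no overlap to worry about. The key observation is that, from the null-space analysis of \cref{subsubsection:prop} and the relevant analogs of \cref{eqn:initsat,eqn:outsat,eqn:propsat} (now without the Bell-pair factors, since the monogamy terms have been dropped in \cref{defn:slctqsat}), the basis state $\ket{rr}$ and $\ket{dd}$ each lie in the null space of $\Pi_{prop,U}$ as long as the two logical qudits of the clause are in well-defined states.

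For case (A), I would set every clock qudit of the component to $\ket{r}$ and every logical qudit attached to it to $\ket{0}$. The $\Pi_L$ and $\Pi_C$ role-assigning terms are trivially satisfied. Every $\Pi_{prop,U}$ clause in the component is now well-defined (since $\Pi_D \otimes \Pi_D$ fires on $\ket{00}$), so its undefined branch vanishes, while $\Pi_{work,U}\,\ket{00}\otimes\ket{rr}=0$ (none of the four $\ket{ar}\!\bra{ar},\ket{da}\!\bra{da},\ket{da}\!\bra{ar},\ket{ar}\!\bra{da}$ terms match) and $\Pi_{clock,D}\,\ket{rr}=0$. Any $\Pi_{out}$ clause that might occur in the component is satisfied because $\ket{r}$ avoids both $\Pi_{stop}=\ket{d}\!\bra{d}$ and the $\ket{0}\!\bra{0}\otimes\ket{a}\!\bra{a}$ term. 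For case (B) the symmetric choice works: set every clock qudit to $\ket{d}$ and every logical qudit to $\ket{0}$. Again the $\Pi_{prop,U}$ clauses are well-defined, and $\Pi_{work,U}\,\ket{00}\otimes\ket{dd}=0$ together with $\Pi_{clock,D}\,\ket{dd}=0$ (the only nonzero block $\ket{d}\!\bra{d}\otimes\ket{r}\!\bra{r}$ misses $\ket{dd}$) gives $\Pi_{prop,U}=0$. Any $\Pi_{init}$ clause present in the component is killed because $\ket{d}$ avoids both $\Pi_{start}=\ket{r}\!\bra{r}$ and the $(I-\ket{0}\!\bra{0})\otimes\ket{a}\!\bra{a}$ term.

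The only subtlety I see, and the main thing to check carefully when writing the full proof, is the behavior of the undefined branch $(I^{\otimes 2}\otimes \Pi_{clock,?})(I-\Pi_D\otimes\Pi_D\otimes I^{\otimes 2})$ in $\Pi_{prop,U}$: since $\Pi_{clock,?}$ penalizes $\ket{dd}$ as well as every $\ket{d\cdot}$ block, the choice of clock state $\ket{d}$ used in case (B) would fail if any logical qudit attached to the component were in state $\ket{?}$. Assigning $\ket{0}$ (rather than $\ket{?}$) to every logical qudit of the component forces $(I-\Pi_D\otimes\Pi_D)$ to vanish on the relevant clauses, so the undefined branch is never triggered and the obstruction disappears. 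This is also the reason the proof cannot simply reuse the $\ket{?}$-assignment strategy used in the monogamy construction of \cref{lemma:noendpointsmono}.

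Finally, the lemma is phrased for the sub-instance in isolation; since the classical algorithm treats shared logical qudits in a separate step, I do not need to worry here about compatibility with other clock components, and the explicit state above furnishes a satisfying assignment directly.
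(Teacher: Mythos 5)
Your proof is correct and follows essentially the same strategy as the paper --- set every clock qudit of the component to $\ket{r}$ when there is no $\Pi_{init}$ clause, or to $\ket{d}$ when there is no $\Pi_{out}$ clause, and observe that this puts every $\Pi_{init}$, $\Pi_{out}$, and $\Pi_{prop,U}$ clause of the component into its null space. The paper's own proof is in fact just a one-line remark that the proof of the earlier monogamy version (\cref{lemma:noendpointsmono}) carries over because it is ``agnostic to the direction of the $\Pi_{prop}$ clauses.''

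The one substantive point in your writeup is exactly right and worth flagging: the paper's proof of \cref{lemma:noendpointsmono} asserts that the all-$\ket{r}$ or all-$\ket{d}$ assignment satisfies $\Pi_{clock,D}$ \emph{and} $\Pi_{clock,?}$ while leaving the logical qudits ``unconstrained.'' As you observe, this is not literally true for the $\ket{d}$ assignment: $\Pi_{clock,?}$ contains the term $\ket{d}\!\bra{d}\otimes I$, which does penalize $\ket{dd}$, so the undefined branch of $O_{prop,U}$ would be violated if any logical qudit of the clause were $\ket{?}$. Your fix --- assigning $\ket{0}$ to every logical qudit attached to the component so that the $(I-\Pi_D\otimes\Pi_D\otimes I^{\otimes 2})$ factor annihilates the state and only $\Pi_{clock,D}$ (which does vanish on $\ket{dd}$) is enforced --- is the correct way to close this. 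The paper's proof glosses over this detail; the logical qudits are not genuinely free in the no-$\Pi_{out}$ case, though they are in the no-$\Pi_{init}$ case (since both $\Pi_{clock,D}$ and $\Pi_{clock,?}$ vanish on $\ket{rr}$). One minor inaccuracy in your writeup: you describe the monogamy version as using ``the $\ket{?}$-assignment strategy,'' but that proof has the same gap and would be repaired the same way --- the issue has nothing to do with the presence or absence of the endpoint/Bell-pair machinery.
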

\noindent
Essentially, because there are no constraints demanding that monogamy of entanglement is preserved, we have skipped \cref{lemma:linearchainmono,lemma:directionmono,lemma:uniqueclockmono,lemma:uniqueendpointmono} and jumped directly to the analog of \cref{lemma:noendpointsmono}. We remark that although the statement here is the same as in \cref{lemma:noendpointsmono}, this lemma decides a much larger class of sub-instances.

Sub-instances that are not decided by the previous lemma must have at least one $\Pi_{init}$ and one $\Pi_{out}$ clause, forming an ACC of some length $L$ with $0 \leq L < \mathcal{O}(n)$. In the previous construction, the monogamy of entanglement condition and endpoint qudits straightforwardly dictated that potentially satisfiable sub-instances must have exactly one ACC. Here, we do not have such a luxury. Instead, we have to evaluate the satisfiability of more complex clock components with potentially multiple ACCs. Ultimately, we will see that the satisfiability of the instance reduces to determining the satisfiability of TACCs.

At the moment, all we can gather is that a satisfying state of the clauses of the chain requires one of the clock qudits to be in the state $\ket{a}$. However, it is unknown whether this is enough to trigger the propagation of computation as it is also unknown which clauses can be well-defined or undefined. To determine this, we show a more powerful generalization of \cref{prop:activetaccmono}.

\begin{restatable}[Initialization of logical qudits]{prop}{activetacc} \label{prop:activetacc}
  Let $c_0, \ldots c_L$ with $L \geq 0$ be a chain of clock qudits where $c_0$ is a qudit present in a $\Pi_{init}$ clause, and $c_L$ is a qudit present in a $\Pi_{out}$ clause. Suppose the $\Pi_{prop}$ clauses of the chain point in arbitrary directions and suppose that there might be other $\Pi_{init}$ clauses acting on other clock qudits within the chain (i.e.\ $c_0, \ldots, c_L$ need not be an ACC). A state where $c_0$ is not $\ket{a}$ in any basis state of the superposition is not a satisfying state.
\end{restatable}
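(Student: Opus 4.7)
The plan is to argue by contradiction. Suppose $\ket{\psi}$ is a satisfying state in which $c_0$ is never in the state $\ket{a}$ in any basis state of its computational-basis expansion. The $\Pi_{init}$ clause on $c_0$ contains the term $\Pi_{start} = \ket{r}\!\bra{r}$, so $c_0 \neq \ket{r}$ in every such basis state; combined with the hypothesis, this forces $c_0 = \ket{d}$ throughout. The remaining task is to propagate the property ``$c = \ket{d}$ in every basis state'' along the chain to $c_L$ and then invoke $\Pi_{out}$ for a contradiction.

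For the inductive step I would assume $c_i = \ket{d}$ in every basis state of $\ket{\psi}$ and examine the $\Pi_{prop}$ clause joining $c_i$ and $c_{i+1}$. If the clause is undefined, $\Pi_{clock,?}$ permits only the predecessor/successor pairs $\ket{rr}$ and $\ket{ar}$, so neither endpoint of the clause can be $\ket{d}$; this contradicts $c_i = \ket{d}$ immediately. If the clause is well-defined with $c_i$ as predecessor, then $\Pi_{clock,D}$ restricts $(c_i,c_{i+1})$ to $\ket{dd}$ or $\ket{da}$, and the possibility $\ket{da}$ is killed by $\Pi_{work,U}$: its only non-trivial kernel on the $\ket{ar}\oplus\ket{da}$ subspace of the clock register is the span of vectors of the form $\ket{\phi}\otimes\ket{ar}+U\ket{\phi}\otimes\ket{da}$, so the forced vanishing of the $\ket{ar}$ piece (a basis vector in which $c_i=\ket{a}$) propagates to vanishing of the $\ket{da}$ piece. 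If instead the well-defined clause has $c_i$ as successor, $\Pi_{clock,D}$ alone already restricts the pair $(c_i,c_{i+1})$ to $\ket{dd}$. In every case, $c_{i+1}=\ket{d}$ in every basis state of $\ket{\psi}$.

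Iterating along the chain, I reach $c_L = \ket{d}$ in every basis state, which is forbidden by the $\Pi_{stop}=\ket{d}\!\bra{d}$ term inside the $\Pi_{out}$ clause on $c_L$; this is the contradiction. The degenerate case $L=0$ is even more direct, since $c_0=c_L$ must simultaneously avoid $\ket{r}$ (by $\Pi_{init}$) and $\ket{d}$ (by $\Pi_{out}$), forcing $c_0=\ket{a}$. I expect the only delicate step to be the $\Pi_{work,U}$ analysis for the forward well-defined case, where the elimination of $\ket{da}$ requires combining $\Pi_{clock,D}$ with the $\ket{ar}$-$\ket{da}$ coupling; extra $\Pi_{init}$ clauses that happen to act on intermediate qudits do not interfere with the induction, since they only force the affected qudit away from $\ket{r}$, which is compatible with $c_i=\ket{d}$.
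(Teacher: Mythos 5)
Your proof is correct, and it organizes the argument somewhat differently from the paper's. The paper proceeds in two stages: it first rules out any undefined $\Pi_{prop}$ clause in the chain (arguing that any qudit preceding the first undefined clause would have to be $\ket{d}$, which is incompatible with the null space $\{\ket{rr},\ket{ar}\}$ of $\Pi_{clock,?}$), and then handles the all-well-defined chain by noting there must be an active spot somewhere and that the coupled $\ket{ar}$--$\ket{da}$ structure of $\Pi_{work,U}$ ``shifts'' that active spot leftward until $c_0$ becomes $\ket{a}$. Your version replaces this two-stage, partly backward-propagating argument with a single forward induction: $c_0=\ket{d}$ forces $c_{i+1}=\ket{d}$ via a case analysis on the clause between $c_i$ and $c_{i+1}$, terminating with $c_L=\ket{d}$ which violates $\Pi_{stop}$. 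Both proofs rest on the same essential facts---the null space of $\Pi_{clock,D}$ is $\{\ket{rr},\ket{ar},\ket{da},\ket{dd}\}$, the null space of $\Pi_{clock,?}$ is $\{\ket{rr},\ket{ar}\}$, and the $\ket{da}$ piece can only appear in $\Pi_{work,U}$'s null space coupled to an $\ket{ar}$ piece---but your induction makes the elimination of the coupled term explicit and mechanical rather than appealing to a ``shift'' heuristic, and it also handles $L=0$ cleanly (something the paper does not bother to isolate). The one step worth spelling out more carefully, as you noted, is the well-defined forward case: there you are implicitly using that a satisfying state's support on the 4-qudit $\Pi_{prop}$-register decomposes into the basis $\ket{\phi}\otimes\ket{rr}$, $\ket{\phi}\otimes\ket{dd}$, and $\tfrac{1}{\sqrt{2}}(\ket{\phi}\otimes\ket{ar}+U\ket{\phi}\otimes\ket{da})$, so that imposing ``no $\ket{a}$ or $\ket{r}$ on $c_i$'' kills the first and third basis families simultaneously (the third because it is an entangled vector whose $\ket{ar}$ component cannot be excised on its own). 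Since the constraint on $c_i$ and the $\Pi_{prop}$ projector both commute with the partial trace over the remaining qudits, this local analysis transfers correctly to the global state; it would be worth adding a sentence to that effect.
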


\noindent
This proposition allows us to show that in any possible satisfying state, all of the clock qudits of the component acted on by $\Pi_{init}$ must be allowed to be $\ket{a}$ at some point in time. This is because for any such clock qudit ($c_0$), there exists a path that connects it to another clock qudit with a $\Pi_{out}$ clause. And while this path may contain other clock qudits present in $\Pi_{init}$ clauses, the proposition assures us that $c_0$ must be active at some point in time. Consequently, since this proposition applies to all clock components of the instance, we can identify for the whole instance which $\Pi_{prop}$ clauses may be well-defined or undefined. In the upcoming algorithm, this reflected in step (1.2).

Then, we can consider the following two lemmas:

\begin{restatable}[$\Pi_{prop}$ clauses acting on $c_0$ point outward]{lemma}{hinitdirections} \label{lemma:hinitdirections}
  Consider a clock qudit present in a $\Pi_{init}$ clause of a clock component. If this qudit is also present in a $\Pi_{prop}$ clause (either well-defined or undefined) that points towards it, the instance is unsatisfiable.
\end{restatable}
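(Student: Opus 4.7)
The plan is to argue by contradiction: assume a state $\ket{\psi_{sat}}$ simultaneously lies in the null space of both the $\Pi_{init}$ clause on $c_0$ and the $\Pi_{prop}$ clause that has $c_0$ as its successor, and then show this forces $c_0$ to be supported only on $\ket{d}$ (or admits no satisfying state at all), contradicting \cref{prop:activetacc}.

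From \eqref{eqn:Oinitfree}, the term $\Pi_{start,2} = \ket{r}\!\bra{r}_{c_0}$ inside $\Pi_{init}$ forces any satisfying state to have zero amplitude on $\ket{r}_{c_0}$, so the support of $c_0$ lies in $\Span\{\ket{a},\ket{d}\}$. I would then enumerate the null space of the $\Pi_{prop}$ clause pointing into $c_0$ in both its modes. In the well-defined case, from \eqref{eqn:propsat} this null space is spanned by $\ket{\phi}\otimes\ket{rr}$, $\ket{\phi}\otimes\ket{dd}$, and the entangled vector $(\ket{\phi}\otimes\ket{ar} + U\ket{\phi}\otimes\ket{da})/\sqrt{2}$, where the second clock slot is $c_0$ and the first is its predecessor $c'$. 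The $\ket{r}_{c_0}$ weight lives in the first and third basis vectors, but these are orthogonal in the $c'$ index ($\ket{r}_{c'}$ versus $\ket{a}_{c'}$), so any superposition that kills the $\ket{r}_{c_0}$ weight must kill both coefficients separately; only the $\ket{dd}$ term survives, pinning $c_0$ to $\ket{d}$ in every basis element of the superposition. In the undefined case, by \eqref{eqn:nullspacehprop} the only admissible clock configurations are $\ket{rr}$ and $\ket{ar}$, both with $c_0 = \ket{r}$; combined with $\Pi_{init}$'s ban on $c_0 = \ket{r}$, the joint null space is empty.

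To close the argument I would appeal to \cref{prop:activetacc}: since the sub-instance was not already decided by \cref{lemma:noendpoints}, the clock component of $c_0$ contains at least one $\Pi_{out}$ clause, and there is a path from $c_0$ to such a qudit to which the proposition applies. The proposition then forces $c_0$ to appear in the state $\ket{a}$ in some basis element of any satisfying state, contradicting the conclusion of the previous paragraph in both modes. The most delicate step is the orthogonality argument eliminating the entangled $(\ket{ar}+\ket{da})$ vector: it must be carried out on the joint $(c', c_0)$ space rather than on the reduced state of $c_0$ alone, because the $\ket{r}_{c_0}$ and $\ket{a}_{c_0}$ components of that vector are locked into different $c'$ states and cannot be separated by any choice of coefficient on $c_0$ alone.
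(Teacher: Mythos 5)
Your proof is correct and uses the same ingredients as the paper's---the $\Pi_{start}$ projector inside $\Pi_{init}$, the explicit null spaces from \eqref{eqn:propsat} and \eqref{eqn:nullspacehprop}, and \cref{prop:activetacc}---just with the contradiction set up in the opposite order: you kill the $\ket{r}_{c_0}$ component first and then invoke the proposition, while the paper invokes \cref{prop:activetacc} first and then shows the resulting $\ket{a}_{c_0}$ component forces a forbidden $\ket{r}_{c_0}$ component. Your explicit orthogonality argument on the joint $(c',c_0)$ space is precisely the formal content behind the paper's looser phrase ``satisfying this clause requires the active state to shift towards the right,'' so your write-up is, if anything, slightly more careful on this step.
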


\begin{restatable}[One well-defined $\Pi_{prop}$ clause maximum]{lemma}{singlewelldefined} \label{lemma:singlewelldefined}
  Consider a clock qudit present in a $\Pi_{init}$ clause and a well-defined $\Pi_{prop}$ clause. If this qudit is also present in another $\Pi_{prop}$ clause (either well-defined or undefined) that connects to a new clock qudit, the instance is unsatisfiable.
\end{restatable}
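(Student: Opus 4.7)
The proof is by contradiction: suppose a satisfying state $\ket{\psi}$ exists. By \cref{prop:activetacc} applied at $c_0$, there is a basis component of $\ket{\psi}$ with $c_0 = \ket{a}$. In any such basis state, both the well-defined clause on $(c_0, c_1)$ and the other clause on $(c_0, c_2)$ force $c_1 = \ket{r}$ and $c_2 = \ket{r}$, since the only clock configuration containing $c_0 = \ket{a}$ that lies in the null space of a $\Pi_{prop}$ clause---in either its well-defined form (via $\Pi_{work}$ and $\Pi_{clock,D}$) or its undefined form (via $\Pi_{clock,?}$)---is $(a, r)$.

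The core of the argument exploits the pairing structure of clause~1. Restrict $\ket{\psi}$ to the sector in which the logical pair $(l_{11}, l_{12})$ of this clause lies in the computational subspace spanned by $\{\ket{0}, \ket{1}\}^{\otimes 2}$. There $\Pi_{work, U_1}$ demands that every amplitude on $(c_0, c_1) = (a, r)$ be matched by an amplitude on $(c_0, c_1) = (d, a)$, related by $U_1$ acting on $(l_{11}, l_{12})$. Since $U_1$ does not touch $c_2$, the paired $(d, a)$ basis state inherits $c_2 = \ket{r}$ from its partner, producing a configuration with $(c_0, c_2) = (d, r)$. But $(d, r)$ is forbidden by clause~2 in both versions: the well-defined version is killed by the $\ket{d}\bra{d} \otimes \ket{r}\bra{r}$ summand of $\Pi_{clock, D}$, while the undefined version is killed by $\Pi_{clock, ?}$, whose allowed bases reduce to $\{rr, ar\}$. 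Hence the amplitude of $\ket{\psi}$ on $(c_0, c_1) = (a, r)$ in the computational sector of $(l_{11}, l_{12})$ vanishes.

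To close the argument, the remaining $c_0 = \ket{a}$ components of $\ket{\psi}$ must live in the sector where $(l_{11}, l_{12})$ contains a $\ket{?}$. Since clause~1 is labelled well-defined, both $l_{11}$ and $l_{12}$ are initialized by $\Pi_{init}$ clauses elsewhere, and the $\Pi_{start}$ term built into every $\Pi_{init}$ forbids its hosting clock from ever being $\ket{r}$. Applying \cref{prop:activetacc} to those hosting clocks and chasing the constraints through the neighboring $\Pi_{prop}$ clauses (whose satisfiability when the local logical sector is $\ket{?}$ itself forbids $c_0 = \ket{d}$), one recovers basis components of $\ket{\psi}$ in which $(l_{11}, l_{12})$ is driven into the computational sector while $c_0$ is forced to $\ket{d}$; the same pairing argument then reproduces a forbidden $(c_0, c_2) = (d, r)$ configuration, delivering the final contradiction. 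Combining both cases, no $\ket{\psi}$ can satisfy the instance.

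The main obstacle is precisely this second-stage analysis of the $\ket{?}$ sector: the local pairing contradiction at $c_0$ only rules out the computational sector of $(l_{11}, l_{12})$, so closing the case requires a careful propagation of initialization constraints across the sub-instance, rather than a purely local argument. The clean, structural heart of the proof is the computational-sector pairing collision between $\Pi_{work, U_1}$ and $\Pi_{clock, D}/\Pi_{clock, ?}$ on clause~2.
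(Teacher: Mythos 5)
Your analysis of the computational sector of $(l_{11}, l_{12})$ is correct and is essentially the same collision the paper's proof encodes (there via $\Pi_{work}$ pairing plus an appeal to $\Pi_{clock,D}$/\cref{eqn:nullspacehprop}, or to \cref{prop:uniquesucc} in the twice-well-defined case): the $\Pi_{work,U_1}$ pairing carries the $(c_0,c_1)=(a,r)$ amplitude to a companion at $(d,a)$ with the same $c_2=\ket{r}$, and the resulting $(c_0,c_2)=(d,r)$ is killed by the diagonal summand $\ket{d}\!\bra{d}\otimes\ket{r}\!\bra{r}$ of $\Pi_{clock,D}$ (resp.\ the $\ket{d}\!\bra{d}\otimes I$ summand of $\Pi_{clock,?}$). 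That part matches the paper.

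The gap is your second stage. As written it is incoherent: no gate in the instance ever moves a logical qudit between the $\ket{?}$-sector and the computational sector, so nothing is ``driven into'' the computational sector; and the parenthetical claim that the $\ket{?}$-sector $\Pi_{prop}$ constraints ``forbid $c_0=\ket{d}$'' directly contradicts your next clause that $c_0$ ends up forced to $\ket{d}$. The correct and far simpler observation is that the $\ket{?}$-sector case never arises. For any logical qudit $l$, the projector $\ket{?}\!\bra{?}_l$ commutes with every clause projector in the instance (the $\Pi_D$-factors in $O_{\it prop}$, the $(I-\ket{0}\!\bra{0})_l\otimes\ket{a}\!\bra{a}$ term of $O_{\it init}$, the $\ket{0}\!\bra{0}_l\otimes\ket{a}\!\bra{a}$ term of $O_{\it out}$, and the role projectors are all block-diagonal with respect to the $\ket{?}$-vs-computational split). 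Hence if $\ket{?}\!\bra{?}_{l}\,\ket{\psi}\neq 0$ for a satisfying $\ket{\psi}$, its normalization is itself a satisfying state in which $l$ is always $\ket{?}$; but then applying \cref{prop:activetacc} to the chain containing the $\Pi_{init}$ clause that hosts $l$ (such a chain exists because, at the point this lemma is invoked, every surviving $\Pi_{init}$ belongs to a clock component that also carries a $\Pi_{out}$) forces the hosting clock to be $\ket{a}$ in some basis state, which in turn forces $l=\ket{0}$ there---a contradiction. This is exactly the content the paper tacitly attaches to the ``well-defined'' label but does not spell out inside the lemma proof; making it explicit is the piece your write-up is missing, and it replaces your hand-wavy second stage entirely.
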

\noindent
These lemmas show that if the sub-instance has a potential satisfying state, the $\Pi_{prop}$ clauses that act on this clock qudit must fall into one of the following two scenarios: either all of the $\Pi_{prop}$ clauses are undefined and point away from this qudit, or there is a single well-defined $\Pi_{prop}$ clause that points away from it. These two cases are illustrated in \cref{fig:allowedinits}. Now observe that since the clock component is fully connected, at least one of these paths leads to a clock qudit with a $\Pi_{out}$ clause. Therefore, in the first scenario, the clock qudit forms an active dot. In the second scenario, the qudit forms part of a single chain of well-defined clauses that either terminates at some time $T$ due to an undefined $\Pi_{prop}$ clause, or reaches the clock qudit with the $\Pi_{out}$ clause, in which case $T = L$. Technically, this is not a TACC as some clock qudits within the chain might contain other $\Pi_{init}$ clauses; however, as will be discussed in \cref{lemma:outsidetacc}, this event makes the instance unsatisfiable. For now, let us assume that the chain is an actual TACC and come back to this issue at a later time.

The case where the clock qudit forms an active dot can be satisfied easily. We simply set its state to $\ket{a}$ and initialize all logical qudits present in its $\Pi_{prop}$ clause to $\ket{0}$. Arguing about the satisfiability of the (more general) TACCs requires more work.

\begin{figure}[t]
  \centering
  \begin{subfigure}[b]{0.2\textwidth}
    \centering
    \includegraphics{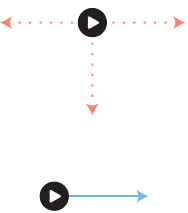}
    \caption{}
    \label{fig:allowedinits}
  \end{subfigure} \hspace{0.05\textwidth}
  \begin{subfigure}[b]{0.2\textwidth}
    \centering
    \includegraphics{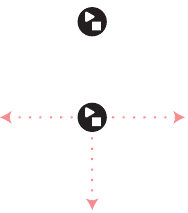}
    \caption{}
    \label{fig:alloweddots}
  \end{subfigure} \hspace{0.05\textwidth}
  \begin{subfigure}[b]{0.42\textwidth}
    \centering
    \includegraphics{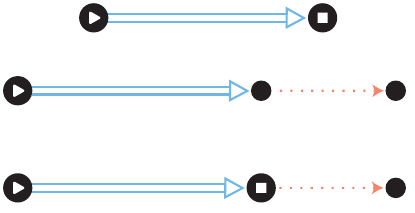}
    \caption{}
    \label{fig:allowedtaccs}
  \end{subfigure}
  \caption{(a) Examples of the only types of $\Pi_{prop}$ clauses acting on a clock qudit with a $\Pi_{init}$ clause that may be satisfiable. We assume that the clock qudit is not present in a $\Pi_{out}$ clause. Top: one or many undefined $\Pi_{prop}$ clauses that point away from the qudit. Bottom: at most one well-defined $\Pi_{prop}$ clause that points away from the qudit. (b) Example of the only type of active dots that may be satisfiable. Top: an active dot with no $\Pi_{prop}$ clauses. Bottom: an active dot with one or many undefined $\Pi_{prop}$ clauses that point away from the qudit. (c) Examples of the only types of non-zero length TACCs that may be satisfiable. Top: a TACC that encompasses the whole ACC with no clauses past the $\Pi_{out}$ clause. Middle: a truncated TACC. Bottom: a TACC that encompasses the whole ACC but has clauses past the $\Pi_{out}$ clause. For the last two cases, the qudit outside the TACC is allowed to be present in $\Pi_{out}$ clauses.}
  \label{fig:sattacc}
\end{figure}

\medbreak
\noindent
\textbf{Satisfiability of a TACC and its neighboring clauses}\\
\noindent
\cref{prop:activetacc} shows that for active clock chains (and their generalization that includes $\Pi_{init}$ clauses in between the chain) $c_0$ must be $\ket{a}$ at some point in time. Since this qudit is always part of a TACC, we have that the TACC  must contain at least one active spot. We now show that these chains must contain exactly one such state at any given time, and use this to show that potentially satisfiable TACCs must be one-dimensional chains with a unique direction.

\begin{restatable}[Unique active state in a TACC]{prop}{uniqueactivetacc} \label{prop:uniqueactivetacc}
  Let $c_0, \ldots, c_T$ with $0 < T \leq L$ be the clock qudits of a TACC with non-zero length. A state in which at any given time there are two clock qudits $c_a$ and $c_b$ with $0 \leq a < b \leq T$ in the state $\ket{a}$ is not a satisfying state.
\end{restatable}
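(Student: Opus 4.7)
The plan is to assume for contradiction that $\ket{\psi}$ is a satisfying state whose expansion in the clock computational basis, $\ket{\psi}=\sum_\sigma \ket{\phi_\sigma}_L\otimes\ket{\sigma}_C$, contains a basis state $\sigma$ with $\ket{\phi_\sigma}_L\neq 0$ and $\sigma(c_a)=\sigma(c_b)=\ket{a}$, and to derive a contradiction. A direct reading of the null space of each well-defined $\Pi_{prop}$ clause in the TACC shows that the local pair $(\sigma(c_i),\sigma(c_{i+1}))$ must lie in $\{\ket{rr},\ket{ar},\ket{da},\ket{dd}\}$ when $c_i$ is the predecessor and in $\{\ket{rr},\ket{ra},\ket{ad},\ket{dd}\}$ when $c_{i+1}$ is the predecessor. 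The pairs $\ket{aa}$, $\ket{rd}$, and $\ket{dr}$ are forbidden in either direction, so two active qudits can never be adjacent; by choosing $\sigma$ to minimize $b-a$ among offending basis states, the intermediate qudits $\sigma(c_{a+1}),\dots,\sigma(c_{b-1})$ must be either all $\ket{r}$ (Case~1) or all $\ket{d}$ (Case~2).

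The key step exploits the entanglement structure of the $\Pi_{work}$ factor: the null space of each well-defined clause contains the vector $(\ket{\phi}\ket{ar}+U\ket{\phi}\ket{da})/\sqrt{2}$, so whenever a basis state $\sigma$ in the support of $\ket{\psi}$ has $(\sigma(c_i),\sigma(c_{i+1}))=\ket{ar}$ at a predecessor-$c_i$ edge, the partner $\sigma'$ obtained by flipping this pair to $\ket{da}$ while keeping every other qudit fixed must also lie in the support, with $\ket{\phi_{\sigma'}}_L=U\ket{\phi_\sigma}_L$; the analogous pairing holds at a predecessor-$c_{i+1}$ edge. Applying this pairing at edge $(c_a,c_{a+1})$ in Case~1 shifts the active qudit from $c_a$ to $c_{a+1}$, while applying it at $(c_{b-1},c_b)$ in Case~2 shifts the active qudit from $c_b$ to $c_{b-1}$; in either case the resulting $\sigma'$ still has two active qudits, but their positions have moved one step closer together.

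Iterating this shift, at each step the new basis state $\sigma'$ must still satisfy the clause connecting the newly active qudit to the fixed portion of the chain, and a short tabulation of allowed local configurations shows that this constraint admits the shifted value only for one of the two possible directions of that clause; if the actual direction is incompatible, then $\ket{\phi_{\sigma'}}_L=0$ and the entanglement pairing forces $\ket{\phi_\sigma}_L=0$, contradicting the choice of $\sigma$. Otherwise the recursion continues with $b-a$ strictly decreased, and after at most $b-a-1$ iterations the two active qudits occupy adjacent positions, producing the forbidden $\ket{aa}$ configuration identified in the first paragraph and completing the contradiction. The main technical obstacle is the bookkeeping of this case analysis---specifically, verifying that at every step the only failure mode of $\sigma'$ is the incompatibility of the single clause flanking the shifted qudit, and that no ``off-path'' clause can silently be violated---which reduces to a finite enumeration against the four allowed local configurations for each sign of the neighboring clause's direction.
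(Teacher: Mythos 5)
Your proposal is correct and takes essentially the same approach as the paper: both arguments track the null-space structure of the well-defined $\Pi_{prop}$ clauses (the $\ket{\phi}\ket{ar}+U\ket{\phi}\ket{da}$ pairing) to ``shift'' an active clock spot along the chain, concluding either that an edge with incompatible direction is reached --- forcing the logical component to vanish, contradicting the pairing --- or that the two active qudits become adjacent, producing the forbidden $\ket{aa}$ pair. Your version is more explicit about the direction-dependent allowed pairs and the partner-state bookkeeping, which the paper leaves implicit. One small simplification: since you already chose $\sigma$ to minimize $b-a$ over the entire support of $\ket{\psi}$, a single shift already produces a supported basis state $\sigma'$ with strictly smaller separation (or one having a disallowed clock pair, forcing $\ket{\phi_{\sigma'}}=0$), so the iteration in your final paragraph is not actually needed.
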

\indent
We now demonstrate that potentially satisfiable TACCs must be one-dimensional chains where all $\Pi_{prop}$ clauses of the chain point away from the clock qudit with the $\Pi_{init}$ clause.

\begin{restatable}[No forks in a TACC]{lemma}{linearchain} \label{lemma:linearchain}
  Let $c_0, \ldots, c_T$ with $1 < T \leq L$ be the clock qudits of a TACC with length at least two, and let $c_j$ be a clock qudit that is not at the ends of the chain, i.e.\ $0 < j < T$. If $c_j$ is connected to more than two other clock qudits, the instance is unsatisfiable.
\end{restatable}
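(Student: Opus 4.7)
The plan is to split on whether the extra $\Pi_{prop}$ clause connecting $c_j$ to a third clock qudit $c_e\ne c_{j-1},c_{j+1}$ is well-defined or undefined, and in each case derive a contradiction with the TACC structure forced by \cref{prop:activetacc,prop:uniqueactivetacc}.

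First I would dispose of the undefined case. The $\Pi_{clock,?}$ piece inside $\Pi_{prop}$ constrains the (predecessor, successor) clock pair of the extra clause to $\Span\{\ket{rr},\ket{ar}\}$: the successor must be $\ket{r}$ and the predecessor cannot be $\ket{d}$. But inside the TACC the qudit $c_j$ must take value $\ket{a}$ at the ``time'' $j$ and value $\ket{d}$ at all times $t>j$, so $c_j$ is consistent with neither role in an undefined extra clause, and the instance is unsatisfiable.

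The well-defined case (with associated unitary $V$ acting on data qudits $l_a,l_b$) is the main obstacle. Here I would compute the reduced density matrix $\rho$ of any putative satisfying state on the four qudits $l_a,l_b,c_j,c_e$ and check whether $\Pi_{prop,V}\rho=0$. By \cref{prop:activetacc,prop:uniqueactivetacc} the global state restricted to the TACC and its data register has the privileged history form $\tfrac{1}{\sqrt{T+1}}\sum_t U_t\cdots U_1\ket{0}^{q}\otimes\ket{C_t}\otimes\ket{\Xi_t}$, with $c_j$ taking values $\ket{r},\ket{a},\ket{d}$ according to whether $t<j$, $t=j$, or $t>j$. The key observation is that the marginals of $\ket{C_t}$ on $C_{\rm TACC}\setminus\{c_j\}$ are pairwise orthogonal for distinct $t$ (the unique active clock qudit sits elsewhere in each $\ket{C_t}$), so tracing out these other TACC clock qudits kills every off-diagonal block in $t$. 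Consequently $\rho$ collapses to the convex mixture $\rho=\tfrac{1}{T+1}\sum_t \rho^{(t)}_{l_a l_b}\otimes\ket{c_j^t}\!\bra{c_j^t}\otimes\sigma_t^{(c_e)}$.

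I would then zoom in on the $t=j$ summand. Its clock support on $(c_j,c_e)$ (or $(c_e,c_j)$, depending on the direction of the extra clause) is $\ket{a}\otimes\text{supp}(\sigma_j^{(c_e)})$, or its mirror image. A direct check against the null space of $\Pi_{prop,V}$, whose clock sector (see \cref{eqn:propsat}) consists of $\ket{rr}$, $\ket{dd}$, and the $V$-linked superposition $(\ket{\phi}\ket{ar}+V\ket{\phi}\ket{da})/\sqrt{2}$, shows that the only clock configuration with $c_j=\ket{a}$ that is even a candidate for the null space puts $c_e$ in the role matching $\ket{ar}$ (respectively $\ket{da}$); but the null space \emph{also} demands a paired $\ket{da}$ (respectively $\ket{ar}$) contribution with data related by $V$. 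That partner could only come from a $t'\ne j$ summand, and the absence of coherence between distinct $t$'s in $\rho$ makes the pairing impossible. Hence $\Pi_{prop,V}\rho\ne 0$, and the instance is unsatisfiable. The hard part is the coherence-destruction step, which must be justified uniformly against arbitrary entanglement of $c_e$ and the unreferenced data with the rest of the system; once that is in hand, the rest is a small bookkeeping exercise in the clock basis of $\Pi_{prop,V}$.
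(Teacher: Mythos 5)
Your proposal is correct in substance but takes a genuinely different route from the paper's for the well-defined case, so let me compare. For the undefined case, both arguments are essentially the same: wait for a time at which $c_j$ sits in a state that no column of $\Pi_{clock,?}$ tolerates, and note that at $t>j$ we have $c_j=\ket{d}$, which ruins both orientations, while at $t=j$ the value $\ket{a}$ already rules out $c_j$ as a successor. For the well-defined case the paper does not argue via reduced density matrices at all. It instead cites \cref{prop:uniquesucc}, whose proof is a local, essentially classical clock-string argument: at time $j$ the triple $(c_{j+1},c_j,c_e)$ (or $(c_{j-1},c_j,c_e)$ for two predecessors) reads $\ket{rar}$; propagating along $\Pi_{prop}^{(c_j,c_{j+1})}$ forces the basis state $\ket{a\,d\,r}$ into the superposition, and the $(c_j,c_e)=\ket{dr}$ piece is directly in the range of the $\Pi_{clock,D}$ term inside $\Pi_{prop}^{(c_j,c_e)}$. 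That kills it without ever discussing coherence. Your reduced-state argument gets there too: the other TACC clock qudits carry mutually orthogonal marginals at distinct times, so the reduced state $\rho$ on $\{l_a,l_b,c_j,c_e\}$ is block-diagonal in the $\{r,a,d\}$ value of $c_j$; the $c_j=\ket a$ block is nonzero (the $t=j$ term of the history superposition has nonzero weight), and no nonzero vector with $c_j=\ket a$ and no $c_j=\ket d$ partner lies in $\ker\Pi_{prop,V}$. This is sound, but more machinery than you need: you are obliged to establish that the satisfying state has the privileged-history form on the TACC's clock register before you can take the trace, and \cref{prop:activetacc,prop:uniqueactivetacc} alone do not give that---you also have to use the within-TACC $\Pi_{prop}$ and $\Pi_{clock}$ constraints to pin down the $\sum_t(\cdot)\otimes\ket{C_t}$ structure, a step you flag as ``the hard part'' but do not carry out. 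The paper's approach sidesteps this: Prop~\ref{prop:uniquesucc} just needs one bad basis state in any putative satisfying superposition, which is guaranteed once you know $c_j$ passes through $\ket a$ (from \cref{prop:activetacc,prop:uniqueactivetacc}) and that the legal transition to $\ket{d}$ exists in the superposition. Conclusion: your argument is valid and even a bit more robust conceptually, but the paper's is shorter, more elementary, and does not require justifying the history-state form first; if you fill the flagged gap, yours also works.
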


\begin{restatable}[Unique direction of a TACC]{lemma}{direction} \label{lemma:direction}
  Let $c_0, \ldots, c_T$ with $0 < T \leq L$ be the clock qudits of a TACC with non-zero length. If there exists a $\Pi_{prop}$ clause within the TACC that points towards $c_0$, the instance is unsatisfiable.
\end{restatable}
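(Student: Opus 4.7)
The plan is to trace the forced propagation of the $\ket{a}$ state along the TACC and reach a contradiction. Assume a satisfying state $\ket{\Psi}$ and let $k_0$ be the smallest index such that the $\Pi_{prop}$ clause between $c_{k_0-1}$ and $c_{k_0}$ has $c_{k_0}$ as predecessor and $c_{k_0-1}$ as successor (this is what it means for a TACC clause to ``point towards $c_0$''). By \cref{lemma:hinitdirections} the clause touching $c_0$ points outward, so $k_0 \geq 2$ and clauses $1,\dots,k_0-1$ are all forward ($c_{j-1}\to c_j$, with unitaries $U_j$).

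I would first show that every basis state of $\ket{\Psi}$ in which $c_0 = \ket{a}$ has TACC clock $(a,r,\dots,r)$ and logical register $\ket{0}^{\otimes q}$ on the qudits attached to $c_0$. The clock-keeping factor $\Pi_{clock,D}$ inside each well-defined $\Pi_{prop}$ clause has null space $\Span\{\ket{rr},\ket{ar},\ket{da},\ket{dd}\}$ on its predecessor-successor pair; combined with \cref{prop:uniqueactivetacc}, a simple induction along the chain shows $c_1 = c_2 = \dots = c_T = \ket{r}$ whenever $c_0 = \ket{a}$, independently of clause direction, and $\Pi_{init}$ then pins the logical qudits. Call these basis states $B_0^{\tau}$, parametrized by the configuration $\tau$ of the qudits outside the TACC. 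By \cref{prop:activetacc}, at least one $B_0^{\tau}$ has nonzero amplitude in $\ket{\Psi}$.

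Next I would propagate this amplitude across clauses $1$ through $k_0-1$ using their Feynman constraints: for each fixed rest configuration, the $(c_{k-1},c_k)=\ket{ar}$ and $\ket{da}$ components of clause $k$ are tied by $U_k$, so one is nonzero iff the other is. The rest configurations of the natural sequence $B_0^{\tau},B_1^{\tau},\dots,B_{k_0-1}^{\tau}$ (where $B_k^{\tau}$ has TACC clock $(d,\dots,d,a,r,\dots,r)$ with $a$ at position $k$ and logical state $U_k\cdots U_1\ket{0}^{\otimes q}$) match across successive clauses, so Feynman carries the nonzero amplitude all the way to $B_{k_0-1}^{\tau}$. The punchline, and the technical core, is that on $B_{k_0-1}^{\tau}$ the predecessor-successor pair of the backward clause $k_0$ equals $(c_{k_0},c_{k_0-1}) = \ket{ra}$, which is orthogonal to $\Span\{\ket{rr},\ket{ar},\ket{da},\ket{dd}\}$ and hence to the whole null space of the well-defined clause $\Pi_{prop}^{(c_{k_0}\to c_{k_0-1})}$. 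Therefore $B_{k_0-1}^{\tau}$ must have zero amplitude in $\ket{\Psi}$, contradicting the Feynman chain above and hence \cref{prop:activetacc}. The main obstacle I anticipate is the careful bookkeeping that ensures the Feynman rest configurations align across successive clauses so that amplitudes really propagate along a single ``history''; once that is set up, the clock-pair $\ket{ra}$ being outside every clause's null space closes the contradiction immediately.
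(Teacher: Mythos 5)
Your argument is correct but takes a genuinely different path from the paper's proof. The paper handles the base case (a backward clause touching $c_0$) directly via $\Pi_{start}$ and \cref{prop:activetacc}, and then reduces the general case to the auxiliary \cref{prop:uniquesucc}: since the $\Pi_{prop}$ clause between $c_{t-1}$ and $c_t$ points away from $c_0$ (by minimality of the first backward clause) while the one between $c_t$ and $c_{t+1}$ points towards $c_0$, the qudit $c_t$ has two predecessors, which \cref{prop:uniquesucc} rules out. You instead cite \cref{lemma:hinitdirections} for the base case and then unfold a global Feynman-propagation chain from the forced $(a,r,\dots,r)$ configuration at $c_0$ up to the backward clause, landing on the clock pair $\ket{ra}$ (predecessor $=r$, successor $=a$) which is orthogonal to $\Span\{\ket{rr},\ket{ar},\ket{da},\ket{dd}\}$ and hence to the clause's null space. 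Both routes are sound; yours is essentially replaying the Feynman mechanics already packaged inside the proofs of \cref{prop:uniqueactivetacc} and \cref{prop:uniquesucc}, so it costs more bookkeeping (exactly the ``rest-configuration alignment'' you flagged, which needs that each $U_k$ acts only inside the 4 qudits of clause $k$ and that the state has no support on clock pairs outside $\{rr,ar,da,dd\}$), whereas the paper's reduction to two-predecessor qudits localizes the contradiction to a single Feynman step. What your approach buys is that it avoids invoking \cref{prop:uniquesucc} and is more self-contained, at the cost of length.
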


\noindent
Note that unlike in the previous construction where these two lemmas (\cref{lemma:linearchainmono,lemma:directionmono}) applied to the whole active clock chain, here they only apply within the TACC.

As \cref{lemma:linearchain} only referred to the clock qudits in the interior of the chain, it remains to answer whether the clock qudits at the ends of the TACC may take part in other $\Pi_{prop}$ clauses. In the previous construction, this question was trivial since the auxiliary subspaces and the principle of monogamy of entanglement ensured that (for potentially satisfiable instances) the clock qudit with the $\Pi_{init}$ clauses and the one with the $\Pi_{out}$ clauses marked the beginning and end of the chain, respectively. The only case where there could be clauses beyond the TACC, was when the it was a strict subset of the ACC, i.e.\ when an undefined clause truncated the propagation within an ACC. However, even in this case, the clock qudit with the $\Pi_{out}$ clauses marked the end of the component. We now show that we can achieve a similar behavior in this construction.

\begin{restatable}[Neighboring clauses of a non-zero length TACC]{lemma}{outsidetacc} \label{lemma:outsidetacc}
  Let $c_0, \ldots, c_T$ with $0 < T \leq L$ be the clock qudits of a TACC with non-zero length. $\Pi_{prop}$ clauses that lie outside the TACC with the following properties result in an unsatisfiable instance:
  \begin{enumerate}
    \item If they are well-defined or undefined and include $c_0$.
    \item If they are well-defined and include $c_T$.
    \item If they are undefined, include $c_T$, and point towards it.
  \end{enumerate}
\end{restatable}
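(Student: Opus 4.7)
The plan is to treat each of the three itemized conditions separately, in each case exhibiting a clock basis state forced by the outside $\Pi_{prop}$ clause that is incompatible with the rigid TACC structure established by \cref{prop:activetacc,prop:uniqueactivetacc}, the $\Pi_{start}$ and $\Pi_{stop}$ factors inside $\Pi_{init}$ and $\Pi_{out}$, and, when the TACC is truncated, the null space of the undefined clause at $c_T$.

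Item (1) is an immediate corollary of the machinery already developed for $c_0$: the TACC has positive length, so $c_0$ already carries both a $\Pi_{init}$ clause and the first well-defined $\Pi_{prop}$ clause of the TACC. An additional outside clause at $c_0$ is either inward, in which case \cref{lemma:hinitdirections} applies, or outward, in which case \cref{lemma:singlewelldefined} applies regardless of whether the new clause is well-defined or undefined. Item (3) is also quick: the clock part of the null space of an undefined clause $c' \to c_T$ is $\Span\{\ket{rr}, \ket{ar}\}$ by \cref{eqn:nullspacehprop}, pinning $c_T$ to $\ket{r}$ in every basis state of any satisfying superposition, which directly contradicts \cref{prop:uniqueactivetacc}'s requirement that the time-$T$ basis state has $c_T = \ket{a}$.

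For item (2) we split on the direction of the outside well-defined clause. In the outward sub-case $c_T \to c'$, the basis state carrying $c_T = \ket{a}$ at time $T$ can only be supplied by the superposition component $\tfrac{1}{\sqrt{2}}(\ket{ar} + U\ket{da})$ of the outside clause, which necessarily contributes a companion $\ket{da}$ basis state placing $c_T$ in $\ket{d}$. Such a state is then forbidden either by the $\Pi_{stop}$ factor of $\Pi_{out}$ (when $c_T = c_L$) or, when the TACC is truncated, by the null space of the undefined clause at $c_T$, which as in (3) restricts $c_T$ to $\ket{r}$ or $\ket{a}$. In the inward sub-case $c' \to c_T$, the same null-space obligation forces the superposition component of the outside clause to be used in order to realize $c_T = \ket{a}$ at time $T$; this drags in an $\ket{ar}$ companion demanding $c' = \ket{a}$ at some interior TACC time, and by threading this through the $\Pi_{clock,D}$ constraints on $(c', c_T)$ at every interior time, together with the rigid uniform amplitudes $1/\sqrt{T+1}$ of the TACC history state, one reaches a configuration incompatible with \cref{prop:uniqueactivetacc}.

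The main obstacle will be the inward sub-case of item (2): it is the only scenario in which the outside clause does not directly force a forbidden $c_T = \ket{d}$ basis state, so the contradiction must be assembled from the joint $\Pi_{clock,D}$, $\Pi_{work}$, and $\Pi_{init}$ constraints rather than read off from a single offending projector. Carrying the argument through rigorously amounts to re-running the local analysis behind \cref{prop:uniqueactivetacc}, now applied to the augmented structure with two independent well-defined arrows entering $c_T$, and showing that the two resulting history-state obligations cannot be jointly met without violating the uniqueness of the active state inside $\{c_0, \ldots, c_T\}$.
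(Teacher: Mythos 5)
Your items (1), (3), and the outward half of item (2) are correct and land on essentially the paper's observations. Item (1) splits into an inward sub-case (invoking \cref{lemma:hinitdirections}) and an outward sub-case (invoking \cref{lemma:singlewelldefined}); the paper invokes \cref{lemma:singlewelldefined} alone, whose wording already covers both directions, but your split is also valid. Item (3) is the same pinning of $c_T$ to $\ket{r}$ via \cref{eqn:nullspacehprop}. In item (2)-outward you are in fact slightly more careful than the paper: you distinguish whether the forbidden $c_T = \ket{d}$ companion is killed by the $\Pi_{stop}$ factor of $\Pi_{out}$ (when $T = L$) or by the $\Pi_{clock,?}$ factor of the undefined clause truncating the TACC (when $T < L$), whereas the paper's ``violates the $\Pi_{stop}$ clause on $c_T$'' glosses over this.

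The gap is item (2)-inward. You anticipate it as the hard case and then leave it as a promissory note with a misdiagnosed obstacle. You claim the $\ket{ar}$ companion produced by the outside clause's $\Pi_{work}$ forces ``$c' = \ket{a}$ at some interior TACC time'' and that one must re-run the analysis behind \cref{prop:uniqueactivetacc} tracking the uniform $1/\sqrt{T+1}$ amplitudes. Neither part is right. The outside clause's $\Pi_{work}$ touches only $c'$, $c_T$, and the logical qudits its gate acts on, so the companion of the time-$T$ basis state inherits $c_0,\ldots,c_{T-1}$ still all in $\ket{d}$, giving the clock string $\ket{d_0 \ldots d_{T-1}\, r_T\, a_{c'}}$. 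That is not any interior-time configuration; it is an illegal clock string, already penalized by the TACC's own $\Pi_{clock,D}^{(c_{T-1},c_T)}$ because $\ket{r}$ is preceded by $\ket{d}$. That single projector closes the case. The paper's finish takes the complementary regression: at time $T$ the outside clause's $\Pi_{clock,D}$ pins $c' = \ket{d}$, and regressing once through the TACC's last $\Pi_{work}$ gives a basis state with $c_T = \ket{r}$ and $c' = \ket{d}$, so the pair $(d,r)$ in predecessor-successor order violates that same outside $\Pi_{clock,D}$. Either one-line finish works; as written, your inward case is neither finished nor on track to be finished the way you propose.
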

\noindent
In this lemma, it is important that $T$ should be able to equal $L$, since without endpoint qudits to guarantee that the chain ends at the clock qudit with the $\Pi_{out}$ clause, it is possible that the chain extends beyond this point. This lemma does not reject the only other type of $\Pi_{prop}$ clause available: that where the clause is undefined, includes $c_T$, and points away from it. This is because these sub-instances may be satisfiable. Even in the previous construction this was a possibility if the TACC was a strict subset of the ACC ($T < L$), as seen in \cref{lemma:truncated}. The difference now is that the clock chain may continue past the clock qudit with the $\Pi_{out}$ clause. Moreover, since the clauses beyond this TACC may not be part of other TACCs, lemmas \cref{lemma:direction,lemma:linearchain} may not apply and the clauses could form forks or cycles, with clauses pointing in any direction. This behavior can be seen in \cref{fig:semilinearsat}.

Now, let us briefly discuss the behavior of active dots. Here, they may occur when the clock qudit is present in both $\Pi_{init}$ and $\Pi_{out}$ clauses, or the clock qudit is present in at least one $\Pi_{init}$ clause and the ACC is immediately truncated by an undefined clause. Since these TACCs do not contain $\Pi_{prop}$ clauses, all there is to do is to analyze the $\Pi_{prop}$ clauses external to the active dot. The next lemma shows that the clock qudit of the TACC behaves like $c_T$ in \cref{lemma:outsidetacc}.

\begin{restatable}[Neighboring clauses of an active dot]{lemma}{outsidedot} \label{lemma:outsidedot}
  Let $c_d$ be an active dot. $\Pi_{prop}$ clauses that act on $c_d$ with the following properties result in an unsatisfiable instance:
  \begin{enumerate}
    \item If they are well-defined or undefined, and point towards $c_d$.
    \item If they are well-defined and point away from $c_d$.
  \end{enumerate}
\end{restatable}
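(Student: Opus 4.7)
The plan is to show that, in both cases, the null-space constraints imposed on $c_d$ by the clauses defining the active dot are incompatible with the null-space constraints of the additional $\Pi_{prop}$ clause. Let $\ket{\Psi}$ be any candidate satisfying state, and decompose it in the $c_d$-basis as
\begin{equation*}
\ket{\Psi} \;=\; \ket{r}_{c_d} \otimes \ket{A_r} \;+\; \ket{a}_{c_d} \otimes \ket{A_a} \;+\; \ket{d}_{c_d} \otimes \ket{A_d}.
\end{equation*}
By \cref{prop:activetacc}, $\ket{A_a}\neq 0$. The $\Pi_{start}$ term inside the $\Pi_{init}$ clause on $c_d$ forces $\ket{A_r}=0$. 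If $c_d$ is of the first active-dot type (in both a $\Pi_{init}$ and a $\Pi_{out}$ clause), the $\Pi_{stop}$ term forces $\ket{A_d}=0$ as well, so $c_d=\ket{a}$ everywhere in $\ket{\Psi}$. If $c_d$ is of the second type (a $\Pi_{init}$ clause together with an undefined $\Pi_{prop}$ clause pointing away to some $c_{nextu}$), the null space of that undefined clause (cf.\ \cref{eqn:nullspacehprop}) restricts $c_d$ to $\{\ket{r},\ket{a}\}$, and combined with $\ket{A_r}=0$ we again obtain $c_d=\ket{a}$ everywhere.

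For part~1, an incoming $\Pi_{prop}^{(c_p,c_d)}$ clause (well-defined or undefined) is incompatible with $\ket{A_r}=0$. In the undefined case, every basis state in its null space has $c_d=\ket{r}$, so any nonzero $\ket{\Psi}$ satisfying it has $\ket{A_r}\neq 0$, a contradiction. In the well-defined case, the null space is spanned by $\ket{\phi}\ket{rr}$, $\ket{\phi}\ket{dd}$, and $(\ket{\phi}\ket{ar}+U\ket{\phi}\ket{da})/\sqrt{2}$ as in \cref{eqn:propsat}. Projecting any null-space element onto $c_d=\ket{a}$ draws amplitude only from the $U\ket{\phi}\ket{da}$ branch of the third generator, which is inseparably paired with an equal-weight $\ket{\phi}\ket{ar}$ branch contributing to $c_d=\ket{r}$. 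Since $\ket{A_r}=0$, this third generator cannot contribute, forcing $\ket{A_a}=0$ and contradicting \cref{prop:activetacc}.

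For part~2, the same decomposition is applied to an outgoing well-defined $\Pi_{prop}^{(c_d,c_{next})}$, now with $c_d$ as predecessor. The three null-space generators have $c_d$ equal to $\ket{r}$, $\ket{d}$, and a superposition of $\ket{a}$ and $\ket{d}$ respectively (as the $\ket{ar}$ branch now has $c_d=\ket{a}$ and the $\ket{da}$ branch has $c_d=\ket{d}$). The only way to populate $c_d=\ket{a}$ is through the $\ket{ar}$ branch of the superposition generator, whose partner $\ket{da}$ branch necessarily contributes to $c_d=\ket{d}$. But the active-dot analysis of the first paragraph pinned $c_d=\ket{a}$ everywhere, so $\ket{A_d}=0$ and the third generator cannot contribute either. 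Again $\ket{A_a}=0$, contradicting \cref{prop:activetacc}.

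The main technical obstacle is to rigorously justify the ``inseparable pairing'' claim: that within the null space of a well-defined $\Pi_{prop}$ clause, any amplitude at $c_d=\ket{a}$ is yoked via the unitary $U$ to a simultaneous amplitude at $c_d=\ket{r}$ or $c_d=\ket{d}$ that cannot be cancelled by taking linear combinations. Concretely, this is the linear-algebraic statement that the intersection of $\ket{a}\bra{a}_{c_d}\otimes I$ with the null space of $\Pi_{prop}$ is trivial, which follows by projecting a general null-space element onto the orthogonal $c_d$-sectors and using orthogonality of $\ket{r}_{c_p},\ket{a}_{c_p},\ket{d}_{c_p}$ (resp.\ of $\ket{r}_{c_{next}},\ket{a}_{c_{next}},\ket{d}_{c_{next}}$) to force every coefficient to zero. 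This is exactly the bookkeeping already underlying \cref{lemma:linearchain,lemma:direction,lemma:outsidetacc}, so the argument is largely a specialization of those proofs to the active-dot geometry, where the absence of internal $\Pi_{prop}$ clauses simplifies rather than complicates the case analysis.
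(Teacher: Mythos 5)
Your proof is correct and the core reasoning coincides with the paper's, but you re-derive some things the paper simply cites. For case~1, the paper observes that an active dot is by definition a clock qudit carrying a $\Pi_{init}$ clause, so any incoming $\Pi_{prop}$ clause (well-defined or undefined) is already ruled out by \cref{lemma:hinitdirections}; it is a one-line citation, whereas you re-establish that fact from scratch via the $c_d$-basis decomposition. For case~2, your static projection argument formalizes what the paper states narratively as ``the active spot must shift toward the successor, forcing $c_d$ into $\ket{d}$, which violates either $\Pi_{stop}$ or the undefined clause''; both encode the same orthogonality bookkeeping of the three generators in \cref{eqn:propsat}, with your version making explicit what the paper leaves to visual intuition. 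The only imprecision is in your summary paragraph: what part~1 actually requires is not that the $c_d=\ket{a}$ sector has trivial intersection with the null space of $\Pi_{prop}$, but the slightly different statement that any null-space element whose $c_d=\ket{r}$ component vanishes also has vanishing $c_d=\ket{a}$ component (your full-intersection statement is what part~2 needs, where you have pinned $c_d=\ket{a}$ everywhere). Your paragraph-level arguments for each part get this right, so this is merely a wording slip, not a logical gap.
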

\noindent
Thus, the only type of $\Pi_{prop}$ clauses that can act on this qudit are those that are undefined and point away from the qudit. This is shown in \cref{fig:alloweddots}.

This lemma and \cref{lemma:outsidetacc} show that a TACC (regardless of whether it has zero length, is truncated, or encompasses the whole ACC) is allowed to continue only if the next $\Pi_{prop}$ clause is undefined and points away from the last clock qudit of the TACC.

The observations that the clock component is able to continue past the clock qudit with the $\Pi_{out}$ clause and the clauses outside TACCs may fork or have any direction make the analysis of the satisfiability of these clock components seem more daunting; however, as we will now show, the lemmas above suffice to determine the satisfiability of all instances. \\

\medbreak
\noindent
\textbf{Putting the pieces together}\\
\noindent
Now we can put the previous lemmas into the bigger picture and show how they can allow us to decide the satisfiability of an instance or reach the point where the quantum algorithm can take over.

To begin, \cref{lemma:singletypequdit} ensures that if the instance is potentially satisfiable, all qudits in the instance must serve a single role. Afterwards, we focus on the sub-instances, and use \cref{lemma:noendpoints} to ignore those that do not have both $\Pi_{init}$ and $\Pi_{out}$ clauses. For each of the remaining sub-instances, \cref{lemma:hinitdirections} states that clock qudits present in $\Pi_{init}$ clauses must have one of two possible structures: either the clock qudit is present only in undefined clauses that point away from it,  or is part of well-defined $\Pi_{prop}$ clauses that involve at most one other clock qudit (simultaneous propagation clauses are allowed). If the clock qudit has the first of these structures then this clock qudit forms an active dot. On the other hand, if the clock qudit has the second structure, then it must be part of a non-zero active clock chain. \cref{lemma:direction,lemma:linearchain} then dictate that this non-zero length TACC must be a linear chain and all $\Pi_{prop}$ clauses in the chain must point away from the clock qudit with the $\Pi_{init}$ clause. The TACC may either finish because of an undefined clause or end when it reaches the clock qudit with the $\Pi_{out}$ clause. In either case, \cref{lemma:outsidetacc} shows that any adjacent clauses to the end of the TACC must be undefined and point away from the clock qudit that culminates the TACC. Unfortunately, we are unable to make any statement about the shape, direction, or definition of any $\Pi_{prop}$ clauses beyond this one.

In summary, these lemmas show that potentially satisfiable sub-instances must have a clock component that is a collection of TACCs and active dots joined by a complex web of clock qudits consisting only of $\Pi_{prop}$ and $\Pi_{out}$ clauses (see \cref{fig:semilinearsat}). Crucially, as there are no $\Pi_{init}$ clauses in this complex web of clauses to demand that there is an active clock qudit, the clauses within the web can be satisfied trivially by setting their clock qudits to $\ket{r}$. Moreover, since the clauses adjacent to the TACCs point away from them, this assignment does not conflict with the possible satisfying history states or truncated history states of the TACCs. In other words, the complex web of $\Pi_{prop}$ and $\Pi_{out}$ that link the TACCs can be thought of as being the joint successor of all TACCs.

As mentioned, since the clauses connecting the TACCs can be satisfied without constraining any of their logical qudits, we can effectively think of the TACCs of the sub-instance as smaller disconnected sub-instances. For TACCs that are truncated (including active dots), we can treat them as separate sub-instances by placing an artificial $\Pi_{stop}$ clause on the last clock qudit of the TACC and ignore all clauses beyond. This does not alter the TACC since this is the exact same effect that an undefined clause pointing away from the qudit has on the clock component. This idea is illustrated in \cref{fig:separatetaccs}. We remark that the chains with artificial $\Pi_{stop}$ clause at one end should not be confused with chains that end with $\Pi_{out}$ clauses which constrain both clock and logical qudits. All of the sub-instances now have a simpler clock component where they begin with a clock qudit present in a $\Pi_{init}$ clause, a series of well-defined $\Pi_{prop}$ clauses forming a one-dimensional chain and pointing away from the clock qudit with the $\Pi_{init}$ clause, and end with a clock qudit present in a $\Pi_{stop}$ clause (some which may be artificially placed and others due to a $\Pi_{out}$ clause).

\begin{figure}[t]
  \centering
  \includegraphics[width=0.75\textwidth]{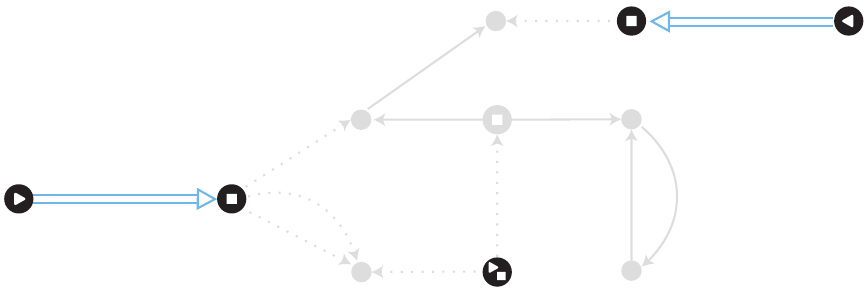}
  \caption{The clock component of \cref{fig:semilinearsat} where the trivially satisfiable clauses are now in gray and the TACCs are highlighted. The TACC on the left and the TACC on the bottom receive an artificial $\Pi_{stop}$ clause acting on their last clock qudit. These TACCs can now be treated as separate sub-instances.}
  \label{fig:separatetaccs}
\end{figure}

\subsubsection{Shared logical qudits and simultaneous propagation}

The discussion about shared logical qudits and simultaneous propagation clauses is the same as in the previous construction. There, the auxiliary subspaces helped determine that the active clock chains should be, as the name implies, one-dimensional chains of qudits where all $\Pi_{prop}$ clauses point in a unique direction. Previous to this, we argued that the relevant logical qudits should be those involved in clauses stemming from the TACC. We have obtained these same results in this construction above.

Similarly, the statements about simultaneous propagation clauses still hold, as they did not rely on auxiliary subspaces or endpoint qudits.

\subsection{Algorithm and correctness}

In this subsection we discuss the algorithm showing that SLCT-QSAT can be decided in polynomial-time with a quantum computer while demonstrating perfect completeness and bounded soundness, i.e.\ SLCT-QSAT is in \BQP$_1$. This is the first part of the proof of \cref{thm:bqpsmall}.

\subsubsection{Classical and quantum algorithm} \label{subsubsection:algorithm}

The first two steps in the algorithm change slightly compared to the previous one, while the third step remains unchanged. The first step no longer mentions $\Pi_E$, and in the second step, only (2.5) remains. To be concrete, these are:

\begin{enumerate}[label=(\arabic*)]
  \item \textit{Verify that all qudits in the instance serve a single role.} For every qudit in the instance, check whether it is acted on by $\Pi_{C}$ or $\Pi_{L}$ clauses. If a qudit is acted on by both types of clauses, reject. Otherwise, label the qudit as either $C$ or $L$ depending on the type of clause acting on it. Correctness is given by \cref{lemma:singletypequdit}.
  \item \textit{Ignore trivial sub-instances.} For every sub-instance, check whether it contains both $\Pi_{init}$ and $\Pi_{out}$ clauses. If it does not, ignore all clauses of the sub-instance for the rest of the algorithm. Correctness is given by \cref{lemma:noendpoints}.
  \item \textit{Identify the undefined $\Pi_{prop}$ clauses.} For every logical qudit, check whether it is acted on by at least one $\Pi_{init}$ clause. If it is not, mark all $\Pi_{prop}$ clauses this qudit is part of as undefined.
\end{enumerate}
\noindent
Next, we add a new step to account for the lemmas regarding the satisfiability of clock components and TACCs of this construction. To preserve the numbering of the algorithm, we label this step as (CC), referring to the steps related to the analysis of clock components. The step is the following:

\begin{enumerate}[label=(CC), leftmargin=1.2cm]
  \item For every sub-instance, do the following:
        \begin{enumerate}[label=(CC.\arabic*)]
          \item Identify all undefined $\Pi_{prop}$ clauses within the sub-instance and place an artificial $\Pi_{stop}$ clause on the predecessor qudit.
          \item \textit{Reject if the clock qudits with $\Pi_{init}$ clauses do not look like those of \cref{fig:allowedinits}.} For every clock qudit present in a $\Pi_{init}$ clause, but not acted on by any $\Pi_{stop}$ clause, perform the following checks:
                \begin{enumerate}[label=(CC.2.\arabic*)]
                  \item If the qudit participates in a $\Pi_{prop}$ clause that points towards it, reject. Correctness is given by \cref{lemma:hinitdirections}.
                  \item If the qudit participates in two or more $\Pi_{prop}$ clauses involving different clock qudits and one of these clauses is well-defined, reject. Correctness is given by \cref{lemma:singlewelldefined}
                \end{enumerate}
          \item \textit{Evaluate the segments of the chain that resemble the bottom drawing of \cref{fig:allowedinits} (these must be part of a non-zero length TACC). We reject the TACC if it does not look like those of \cref{fig:allowedtaccs}.} For each clock qudit present in a $\Pi_{init}$ clause and a well-defined $\Pi_{prop}$ clause, but not in any $\Pi_{stop}$ clause, do the following:
                \begin{enumerate}[label=(CC.3.\arabic*)]
                  \item Let $c_j$ with $j = 1$ be the successor clock qudit.
                  \item If $c_j$ is present in a $\Pi_{stop}$ clause, perform the following checks:
                        \begin{enumerate}
                          \item[--] If there is a $\Pi_{prop}$ clause (other than the one connecting it to $c_{j-1}$) that points towards it or is well-defined, reject. Correctness is given by \cref{lemma:outsidetacc}.
                          \item[--] \textit{Stop condition}. Stop the analysis of this sub-instance and continue with the next sub-instance.
                        \end{enumerate}
                  \item \textit{Reject if there's a fork.} If $c_j$ is connected to more than three distinct clock qudits, reject. Correctness is given by \cref{lemma:linearchain}.
                  \item If there's a $\Pi_{prop}$ clause between $c_j$ and $c_{j+1}$ that points towards $c_j$, reject. Correctness is given by \cref{lemma:direction}.
                  \item \textit{The previous two steps guarantee that $c_j$ has a single successor.} Replace $j \rightarrow j+1$ and go back to step (CC.3.2).
                \end{enumerate}
          \item \textit{Evaluate the active dots}. For every clock qudit present in both a $\Pi_{init}$ and $\Pi_{stop}$ clause, check whether it participates in a $\Pi_{prop}$ clause that points towards it, or participates in a well-defined clause. If any of these are true, reject. Correctness is given by \cref{lemma:outsidedot}.
          \item \textit{Separate the TACCs found in the previous steps}. Ignore the clauses that do not form part of TACCs.
        \end{enumerate}
\end{enumerate}

\noindent
After this step, the algorithm has separated the TACCs of the sub-instances into new smaller sub-instances, each with the same structure as in \cref{section:monogamy}. Consequently, the algorithm resumes with step (4) of this previous construction (see \cref{subsection:algorithmmono}). While steps (4), (6), and (7) proceed in the same way as before, step (5) needs a small modification:

\begin{enumerate}[label=(5)]
  \item \textit{Ignore truncated active clock chains.} For every sub-instance, check whether it ends with an artificial $\Pi_{stop}$ clause (\textit{not} an actual $\Pi_{out}$ clause). If it does, and there are is no simultaneous propagation before this clause, ignore the clauses of the sub-instance. Correctness is given by \cref{lemma:truncated}.
\end{enumerate}

\subsubsection{Runtime}

Compared to the previous algorithm, the only significant difference in this algorithm that may impact the runtime is the addition of step (CC). However, this step is also efficient.

In detail, step (CC.1) can be easily seen to take $\mathcal{O}(poly(n))$ time. Step (CC.2) iterates over the $\mathcal{O}(n)$ clock qudits of the sub-instance and verifies whether each qudit is present in a $\Pi_{init}$ clause and a $\Pi_{prop}$ clause with some property. This takes $\mathcal{O}(n) \cdot \mathcal{O}(poly(n)) = \mathcal{O}(poly(n))$ time. In step (CC.3), the algorithm traverses the TACC and for each clock qudit, it checks the clauses that act on it. Similarly to step (CC.2), this step also takes $\mathcal{O}(poly(n))$ time. Step (CC.4) takes $\mathcal{O}(poly(n))$ time as the algorithm simply looks over all of the clauses that may act on a qudit. Finally, for step (CC.5), note that there are at most $\mathcal{O}(poly(n))$ clauses that are not part of a TACC. Thus, removing them from the instance also takes $\mathcal{O}(poly(n))$ time.

\subsubsection{Completeness and soundness}

This section is no different than in the previous construction. \cref{subsection:satisfiability} covers an exhaustive list of instances that can be decided based on their structure. Since the algorithm above is based on this list, it consistently rejects instances whose structure implies two or more clauses cannot be satisfied simultaneously by any state. Conversely, it never rejects instances whose structure could be consistent with a satisfying state. If the classical algorithm alone determines the instance, then both the completeness and soundness conditions are clearly satisfied. However, if the instance is not decided by the classical algorithm, then there must be some sub-instances that require a quantum approach.

As mentioned, the quantum sub-instances in this construction are identical to those of the previous construction. The quantum algorithm that decides them is also the same here, and so all of the arguments presented in the previous construction also apply here.

\subsection{Hardness}

Here, we show that SLCT-QSAT is $\BQP_1$-hard, thus completing the proof of \cref{thm:bqpsmall}.

Let $U_x = U_L \ldots U_1$ with $U_i \in \mathcal{G}_8$ and $L = poly(n)$ be the $\BQP_1$ circuit where given an instance $x$ of an arbitrary problem $A$, decides whether $x \in A_{yes}$ or $x \in A_{no}$. The input to the circuit is the $q$-qubit ancilla register $\ket{0}^{\otimes q}$, where $q = poly(n)$. The reduction from $x$ to an instance $x'$ of SLCT-QSAT requires $q + (L+1)$ qudits ($2$ qudits less than in the previous construction), each of dimension $6$, and is the following:

\begin{enumerate}
  \item Choose $n$ qudits to serve as the logical qudits of the computation and $L+1$ qudits to serve as the clock qudits. Define an ordering for each type of qudit.
  \item For all $i \in [q]$, create a $\Pi_{init}$ clause acting on $c_0$ and $l_i$.
  \item For all $j \in [L]$, create a $\Pi_{prop}$ clause with associated unitary $U_{j+1}$, which acts on clock qudits $c_j$ and $c_{j+1}$ (with $c_j$ being the predecessor of $c_{j+1}$) and the two logical qudits that the unitary originally acts on.
  \item Create one $\Pi_{out}$ clause acting on $c_L$ and $l_{ans}$.
\end{enumerate}
\indent
The resulting instance $x'$ consists of a one-dimensional clock chain of length $L$ ($L+1$ clock qudits) with a unique direction, no simultaneous propagation clauses, and fully initialized logical qudits.

If $x$ is a yes-instance of $A$, $x'$ is satisfiable and is satisfied entirely by the history state of \cref{eqn:histhardness}.

On the other hand, when $x$ is a no-instance of $A$, we again show that $x'$ has no small eigenvalues by expressing the instance in terms of Hamiltonians as

\begin{equation} \label{eqn:hamilsoundnessfree}
  H = H_{roles} + H_{clock} + H_{prop} + H_{init} + H_{out}.
\end{equation}
\noindent
The definitions of these Hamiltonians remain mostly the same as in the previous construction, except that now $H_{roles}$ does not contain any $\Pi_E$ term. Moreover, observe that in contrast with \cref{eqn:hamilsoundness}, the equation above has no $H_{aux}$ term. We now show that despite these differences, the analysis within the null space of $H_{clock}$ from the previous section still applies, and we can again conclude that the smallest eigenvalue of $H$ scales with the inverse of a polynomial.

Recall that in the construction of \cref{section:monogamy}, the constraints on the auxiliary subspaces determined that the states within the null space of $H_{aux}$ were those with clock qudits with a unique successor and predecessor, and were all preceded by the clock qudit with the $\Pi_{init}$ clauses and succeeded by the one with the $\Pi_{out}$ clauses.\footnote{When demonstrating that the problem was in $\BQP_1$, the constraints also helped determine that satisfiable sub-instances could not have more clauses that extended past the ACC. This is not relevant when proving hardness as the instance we construct already satisfies this property.} As shown in \cref{lemma:direction,lemma:linearchain}, the satisfying states of SLCT-QSAT instances must also satisfy these properties. A quick inspection of the proofs of these statements shows that this is due to the clock constraints. Therefore, states within the null space of $H_{clock}$ must also satisfy these two properties, bypassing the need of an $H_{aux}$ term.

\section{\QCMA-complete Problem} \label{section:qcma}

\begin{figure}[t]
  \centering
  \begin{subfigure}[c]{0.48\textwidth}
    \centering
    \includegraphics[width=0.8\textwidth]{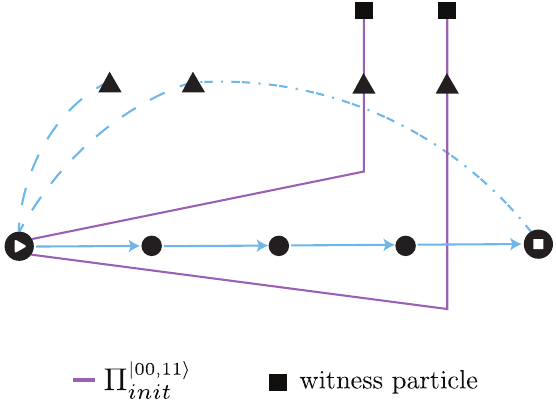}
    \caption{}
    \label{fig:qcmainstance}
  \end{subfigure}
  \hfill
  \begin{subfigure}[c]{0.48\textwidth}
    \centering
    \[ \Qcircuit @C=1.4em @R=1.2em {
      \lstick{\ket{0}} &  \qw  & \multigate{3}{\hspace{1.5em} U \hspace{1.5em}} & \qw & \qw\\
      \lstick{\ket{0}} &  \qw  & \ghost{\hspace{1.5em} U \hspace{1.5em}} & \qw & \meter \\
      & \qw & \ghost{\hspace{1.5em} U \hspace{1.5em}} & \qw & \qw \\
      & \qw & \ghost{\hspace{1.5em} U \hspace{1.5em}} & \qw & \qw \\
      & \qw & \qw & \qw & \qw \\
      & \qw & \qw & \qw & \qw
      \inputgroupv{3}{6}{1.2em}{2.43em}{\ket{\psi_{\rm wit}}\;\;\;\;\;} \\
      }
    \]
    \caption{}
    \label{fig:qcmacirc}
  \end{subfigure}
  \caption{(a) Toy example of an input ``quantum'' instance with a TACC of length $L=4$, acting on four logical qudits and two witness qudits. Although not illustrated, the $\Pi_{prop}$ clauses are assumed to have unitaries $U_1, \ldots, U_4$ which act only on the logical qudits of the instance. These unitaries define a circuit $U = U_4 U_3 U_2 U_1$. (b) Quantum circuit representing the instance on the left.}
  \label{fig:qcmadrawings}
\end{figure}

The constructions from \cref{section:monogamy,section:our} can be modified to generate $\QCMA_1^{\mathcal{G}_8}$-complete problems. However, since the second problem we presented is an improvement over the first, we only elaborate on the latter. Moreover, since $\QCMA_1^{\mathcal{G}_8} = \QCMA$ \cite{jordan2011qcma}, this results in a $\QCMA$-complete problem. Although there are already many problems known to be complete for this class \cite{gharibian2015ground, wocjan2003qcma, gosset2017qcma, chenqcma}, none of them are strong QCSPs. Again, for the rest of the section, we omit the superscript $\mathcal{G}_8$ when talking about $\BQP_1$ and $\QMA_1$.

\subsection{Construction overview}

At the beginning of \cref{subsection:monosat}, we argued that the unconstrained or ``free'' logical qudits of an instance allowed one to guess what state of these qudits (the witness state) might satisfy the instance. This freedom made the problem more difficult and thus not likely contained in $\BQP_1$. In fact, one could show that this problem is $\QMA_1$-complete.\footnote{We do not prove this in this paper since the much simpler $k$-QSAT problem is already known to be $\QMA_1$-complete for $k \geq 3$ \cite{bravyi2006efficient,gosset2016quantum}.} For this reason, we introduced the undefined state $\ket{?}$, which simplified these instances and made them decidable in $\BQP_1$. In this construction, we seek to construct a problem that sits in between these two classes so it is $\QCMA$-complete. To accomplish this, we desire to have ``free'' logical qudits to accommodate a witness state that helps verify whether the instance is satisfiable, but have some sort of constraint to demand that the state is classical.\footnote{We continue using the undefined state for logical qudits whose initial state is not constrained so the difficulty of the problem does not become $\QMA$.}

In practice, creating these constraints is a difficult task because if there are two states that satisfy a clause, any superposition of the two will do so as well. Instead, we set the constraints such that if there exists a quantum witness state that is part of a satisfying state, there is also a classical witness state. Loosely, we accomplish this by defining new \textit{witness qudits} and create a new constraint $\piinitcopy$ that connects a witness qudit with a logical one, and require that they are both either $\ket{00}$, $\ket{11}$, or in a superposition of the two. In this way, the two qudits are partially ``free'' as there is some freedom to their state yet posses some structure. Importantly, we ensure that the witness qudits do not form part of the computation after this initial point.

To see why this leads to the desired effect, consider the toy instance of \cref{fig:qcmadrawings} and suppose there exists a state $\ket{\psi_{\rm wit}}$ of the four ``free'' qudits that leads to a satisfying state. Observe that to satisfy the $\piinitcopy$ clauses, this state must be of the form $\ket{\psi_{\rm wit}} = (\alpha_{00} \ket{0000} + \alpha_{01} \ket{0011} + \alpha_{10} \ket{1100} + \alpha_{11} \ket{1111})_{L_1, W_1, L_2, W_2}$ with $\sum_{b \in \{0,1\}^2} \norm{\alpha_b}^2 = 1$, which we can rewrite in a more convenient form as $\ket{\psi_{\rm wit}} = \sum_{b \in \{0,1\}^2} \alpha_b \ket{b}_L \otimes \ket{b}_W$. Then, the state that satisfies all clauses of the instance is the history state

\begin{equation*}
  \begin{aligned}
    \ket{\psi_{hist}} & = \frac{1}{\sqrt{5}} \sum_{t = 0}^4 \left[ U_t \ldots U_0 \ket{00} \otimes \ket{\psi_{\rm wit}} \right] \otimes \ket{\underbrace{d \ldots d}_t a_t \underbrace{r \ldots r}_{4-t}} \\
                      & = \frac{1}{\sqrt{5}}\sum_{t=0}^4 \sum_{b \in \{0,1\}^2} \alpha_b \ket{\xi_b^t} \otimes \ket{b}_W \otimes \ket{\underbrace{d \ldots d}_t a_t \underbrace{r \ldots r}_{4-t}},
  \end{aligned}
\end{equation*}
\noindent
where $\ket{\xi_b^t} := U_t \ldots U_0 \ket{00} \otimes \ket{b}_L$. Let us show that there is also a classical witness that leads to a satisfying history state. First, observe that any basis state $\ket{b}_L \otimes \ket{b}_W$ with $\norm{\alpha_b} > 0$ from the decomposition of the witness satisfies the $\piinitcopy$ clauses. Consequently, the history state above but with initial state $\ket{00} \otimes \ket{b}_L \otimes \ket{b}_W$ satisfies the $\piinitzero$, $\piinitcopy$, and $\Pi_{prop}$ clauses of the instance. To show that this state also satisfies the $\Pi_{out}$ clause, recall that the clause is satisfied if at time $t=4$, the probability that the second qubit yields outcome ``1'' when measured is 1. This probability can be written as

\begin{equation*}
  \begin{aligned}
    \Pr(\textnormal{outcome } 1) & = \sum_{b,b' \in \{0,1\}^2} \alpha_b \alpha_{b'}^* \bra{\xi_{b'}^4} \otimes \bra{b'} \Pi^{(1)} \ket{\xi_b^4} \otimes \ket{b} \\
                                 & = \sum_{b \in \{0,1\}^2} \norm{\alpha_b}^2 \bra{\xi_b^4} \Pi^{(1)} \ket{\xi_b^4}
  \end{aligned}
\end{equation*}
\noindent
where $\Pi^{(1)} := \ket{1} \! \bra{1}_2 \otimes I_{rest}$, and in the second line we observed that $\braket{b'|b} = \delta_{b,b'}$. Then, by the assumption that the instance is satisfiable, it must be that $\bra{\xi_b^4} \Pi^{(1)} \ket{\xi_b^4} = 1$ for all basis states $\ket{b}$ with $\norm{\alpha_b} > 0$. This can also be understood as the probability that at the end of the circuit the second qubit yields outcome ``1'' when the witness is the basis state $\ket{b} \otimes \ket{b}$. Therefore, the history state

\begin{equation*}
  \ket{\psi_{hist}} = \frac{1}{\sqrt{5}} \sum_{t=0}^4 \left[U_t \ldots U_0 \ket{00} \otimes \ket{b}_L \right] \otimes \ket{b}_W \otimes \ket{\underbrace{d \ldots d}_t a_t \underbrace{r \ldots r}_{4-t}}
\end{equation*}
\noindent
with classical witness $\ket{\psi_{\rm wit}} = \ket{b}_L \otimes \ket{b}_W$ also satisfies all clauses of the instance.

While this toy example accurately conveys the idea behind the construction, there are more details we need to discuss to prove \cref{thm:qcma}. For this, let us define the problem and the new clause.

\subsection{Problem definition}

The problem we claim is $\QCMA$-complete is is the following:

\begin{defn}[\textsc{Witnessed SLCT-QSAT}] \label{defn:wslctqsat}
  The problem {\rm \textsc{Witnessed SLCT-QSAT}} is a QCSP defined on the $8$-dimensional Hilbert space

  \begin{equation*}
    \mathcal{H} = \Span\{\ket{0_L},\ket{1_L},\ket{?_L}\} \oplus \Span\{\ket{0_W},\ket{1_W}\} \oplus \Span\{\ket{r_C},\ket{a_C},\ket{d_C}\},
  \end{equation*}
  \noindent
  which now consists of a logical, a witness, and a clock subspace. Besides the operators $O_{init}$, $O_{prop}$, and $O_{out}$ defined previously in \cref{eqn:Oinitfree,eqn:Opropfree,eqn:Ooutfree}, the problem contains a new $3$-local operator

  \begin{equation*}
    O_{init}^{\scaleto{\ket{00,11}}{7pt}} := \; \; \Pi_{L,1} + \Pi_{W,2} +\Pi_{C,3} + \Pi_{start,3} + \ket{?} \! \bra{?}_1 \otimes \ket{a} \! \bra{a}_3  + (I - \ket{00} \! \bra{00} - \ket{11} \! \bra{11})_{1,2} \otimes \ket{a} \! \bra{a}_3,
  \end{equation*}
  \noindent
  where $\Pi_{W} := I - (\ket{0} \! \bra{0}_{W} + \ket{1} \! \bra{1}_{W})$ is the projector to the witness subspace. As before, $\piinitcopy$ refers to the projector with the same null space as $O_{init}^{\scaleto{\ket{00,11}}{7pt}}$.

  The promise remains the same as before. We are promised that for every instance considered either (1) there exists a {\rm quantum} state $\ket{\psi_{sat}}$ on $n$ $8$-dimensional qudits such that $\Pi_i \! \ket{\psi_{sat}} = 0$ for all $i$, or (2) $\Sigma_i \bra{\psi} \! \Pi_i \! \ket{\psi} \geq 1/poly(n)$ for all {\rm quantum} states $\ket{\psi}$.

  The goal is to output ``YES'' if (1) is true, or output ``NO'' otherwise.
\end{defn}
\indent
Before showing that this problem is contained within $\QCMA$ and is also $\QCMA$-hard, let us make some observations. Given that there are now multiple initialization clauses, we will refer to the original $\Pi_{init}$ clause used in \cref{section:monogamy,section:our} as $\piinitzero$, and use $\piinitdot$ when we do not care whether the clause is $\piinitzero$ or $\piinitcopy$. Observe that $\ket{?} \! \bra{?} \otimes \ket{a} \! \bra{a}$ prevents the logical qudit from being $\ket{?}$ when the clock qudit is $\ket{a}$. Then, together with the term $(I - \ket{00} \! \bra{00} - \ket{11} \! \bra{11}) \otimes \ket{a} \! \bra{a}$, this implies that the logical and witness qudits must be $\ket{00}$, $\ket{11}$, or any superposition of the two. Clearly, $\piinitcopy$ sets a ``weaker'' constraint than $\piinitzero$ as the logical qudit is not constrained to a single state, but it is also a ``stronger'' constraint than having completely free witness and logical qudits.

Now, let us show that this problem is in $\QCMA$.

\subsection{In \QCMA}

Recall that to demonstrate that the problem is in $\QCMA$, we have to show that when $x$ is a yes-instance, there is a basis state $\ket{y_x}$ that convinces us with certainty that the instance is satisfiable. On the other hand, if $x$ is a no-instance, we have to show that no classical basis state is accepted with high probability.

We demonstrate this result by following the same proof structure as in the previous constructions, starting with the satisfiability of clock components and shared logical (and now witness) qudits. To begin, observe that the introduction of the $\piinitcopy$ clause does not create new conflicts in the clock components. However, for the shared logical and witness qudits, there are six new cases to consider: three where the clauses stem from the same clock component, and three where they stem from different clock components. Here, we only state and prove the first three. The proofs of the remaining three cases follow easily by first recalling \cref{lemma:sharedhprop}, which states that the logical qudits cannot be present in $\Pi_{prop}$ clauses and so their state must be fixed, and then the proof can be completed by slightly adapting the proofs shown here. We also refer the reader to \cref{subsubsection:incorp} of the following $\coRP$ construction where there is only one case and we prove both versions of the statement. The first three cases are the following.

\begin{lemma} \label{lemma:logicalmultiplewitness}
  Consider a clock component that has at least one $\piinitdot$ and one $\Pi_{out}$ clause. Suppose there are two $\piinitcopy$ clauses acting on the same clock and logical qudit, but distinct witness qudits. The only states of the logical and two witness qudits that satisfy the $\piinitcopy$ clauses are of the form $\alpha_0 \ket{000} + \alpha_1 \ket{111}$ for any $\alpha_0, \alpha_1 \in \mathbb{C}$ that satisfies $\norm{\alpha_0}^2 + \norm{\alpha_1}^2 = 1$.
\end{lemma}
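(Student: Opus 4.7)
My plan is to lift the problem, via \cref{prop:activetacc}, to a computation on the ``clock-active'' slice of the candidate satisfying state, and then compute a straightforward intersection of two subspaces of the logical-plus-witness register.

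First, I would invoke \cref{prop:activetacc} applied to the shared clock qudit $c_0$: because the clock component contains at least one $\piinitdot$ clause acting on $c_0$ and at least one $\Pi_{out}$ clause, any satisfying state must place $c_0$ in $\ket{a}$ in at least one basis term of its superposition. Decomposing a candidate satisfying state as $\ket{\psi_{sat}} = \ket{a}_{c_0} \otimes \ket{\psi_a} + \ket{\perp}_{c_0} \otimes \ket{\psi_\perp}$, I note that every term of both $\piinitcopy$ projectors carries the factor $\ket{a}\!\bra{a}_{c_0}$, so these projectors annihilate every basis state with $c_0 \neq \ket{a}$ and the full constraint reduces to one on $\ket{\psi_a}$ alone.

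Next, I would unpack the definition of $O_{init}^{\scaleto{\ket{00,11}}{7pt}}$ restricted to the $c_0 = \ket{a}$ slice. The term $\ket{?}\!\bra{?}_L \otimes \ket{a}\!\bra{a}_{c_0}$ forbids the undefined state on the shared logical qudit $L$, while $(I - \ket{00}\!\bra{00} - \ket{11}\!\bra{11})_{L,W_i} \otimes \ket{a}\!\bra{a}_{c_0}$ forces the joint state of $(L, W_i)$ within $\ket{\psi_a}$ to be supported on $V_i := \Span\{\ket{00}, \ket{11}\}_{L,W_i}$. Expanding $\ket{\psi_a} = \sum_{i,j,k \in \{0,1\}, \, \nu} \alpha_{ijk,\nu} \ket{i}_L \ket{j}_{W_1} \ket{k}_{W_2} \otimes \ket{\nu}_{\rm rest}$, the clause on $(L, W_1)$ forces $\alpha_{ijk,\nu} = 0$ whenever $i \neq j$, and the clause on $(L, W_2)$ forces $\alpha_{ijk,\nu} = 0$ whenever $i \neq k$.

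Combining the two constraints leaves only $\alpha_{000,\nu}$ and $\alpha_{111,\nu}$ nonzero, so the $(L, W_1, W_2)$ register is supported on $\Span\{\ket{000}, \ket{111}\}$ --- exactly the asserted form $\alpha_0 \ket{000} + \alpha_1 \ket{111}$, with $\norm{\alpha_0}^2 + \norm{\alpha_1}^2 = 1$ following from normalization. I do not anticipate a real obstacle: the essential observation is that $\ket{a}\!\bra{a}_{c_0}$ is diagonal in the clock basis, so each clock slice decouples and the problem reduces to a one-line linear-algebra intersection on a $2 \otimes 2 \otimes 2$ subsystem. The only subtlety worth flagging is the appeal to \cref{prop:activetacc}, which rules out the degenerate possibility $\ket{\psi_a} = 0$; without it, pushing all weight into $\ket{\perp}_{c_0}$-supported components would vacuously satisfy both $\piinitcopy$ clauses and leave the logical and witness qudits entirely unconstrained.
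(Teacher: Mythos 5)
Your proof is correct and follows essentially the same route as the paper's: invoke \cref{prop:activetacc} to conclude the constraints must be enforced on the $c_0 = \ket{a}$ slice, then observe that each $\piinitcopy$ forces its logical--witness pair onto $\Span\{\ket{00},\ket{11}\}$ and that the intersection over both clauses leaves only $\Span\{\ket{000},\ket{111}\}$. Your write-up is more explicit about the clock-slice decomposition (and one small imprecision --- the role-assigning terms $\Pi_L, \Pi_W, \Pi_C, \Pi_{start}$ inside $O_{init}^{\scaleto{\ket{00,11}}{7pt}}$ do \emph{not} carry the factor $\ket{a}\!\bra{a}_{c_0}$, though they also don't touch the logical/witness state so nothing breaks), but the argument is the same one the paper gives in condensed form.
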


\begin{proof}
  Recall from \cref{prop:activetacc} that if the instance is satisfiable, both clauses must enforce their constraints on the shared logical qudit. Each $\piinitcopy$ then requires its logical and witness qudit to be in the same basis state and also forbids the logical qudit from being $\ket{?}$. Hence, these three qudits must be either $\ket{000}$ or $\ket{111}$. Furthermore, as mentioned in the overview, any valid superposition of the two also satisfies these clauses, and so the satisfying state is in general $\alpha_0 \ket{000} + \alpha_1 \ket{111}$.
\end{proof}

\begin{lemma} \label{lemma:witnessmultiplelogical}
  Consider a clock component that has at least one $\piinitdot$ and one $\Pi_{out}$ clause. Suppose there are two $\piinitcopy$ clauses acting on the same clock and witness qudit, but distinct logical qudits. The only states of the logical and two witness qudits that satisfy the $\piinitcopy$ clauses are of the form $\alpha_0 \ket{000} + \alpha_1 \ket{111}$ for any $\alpha_0, \alpha_1 \in \mathbb{C}$ that satisfies $\norm{\alpha_0}^2 + \norm{\alpha_1}^2 = 1$.
\end{lemma}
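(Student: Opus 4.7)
The plan is to mirror the proof of \cref{lemma:logicalmultiplewitness}, this time exploiting that two $\piinitcopy$ clauses share a common witness qudit rather than a common logical one. I label the shared clock qudit $c$, the shared witness qudit $W$, and the two distinct logical qudits $L_1, L_2$.

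First, I would invoke \cref{prop:activetacc} on the clock component to deduce that, in any satisfying superposition, $c$ must be in the state $\ket{a}$ in at least one basis term. This step is essential because $\piinitcopy$ is trivially satisfied on clock basis states other than $\ket{a}$, so without \cref{prop:activetacc} one could attempt to evade the constraint entirely by concentrating amplitude away from $\ket{a}$. With the active-clock sector secured, the two clauses impose genuine constraints on the $(L_1, W)$ and $(W, L_2)$ pairs respectively.

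Next, I would examine the null space of a single $\piinitcopy$ restricted to the $\ket{a}\!\bra{a}_c$ sector. The term $\ket{?}\!\bra{?}_L \otimes \ket{a}\!\bra{a}_c$ forbids $\ket{?}$ on the logical qudit, while the term $(I - \ket{00}\!\bra{00} - \ket{11}\!\bra{11})_{L,W} \otimes \ket{a}\!\bra{a}_c$ restricts the $(L,W)$ pair to $\Span\{\ket{00}, \ket{11}\}$. Applying this to both clauses simultaneously on the triple $(L_1, W, L_2)$, the first forces every basis term to satisfy $L_1 = W$ and the second forces $L_2 = W$. Together they carve out exactly the two-dimensional subspace $\Span\{\ket{000}_{L_1 W L_2}, \ket{111}_{L_1 W L_2}\}$. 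Arbitrary superpositions within this subspace remain in the joint null space because each $\piinitcopy$ is diagonal in the computational basis on this sector, yielding the claimed form $\alpha_0\ket{000} + \alpha_1\ket{111}$ with $|\alpha_0|^2 + |\alpha_1|^2 = 1$.

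The only point requiring some care, rather than a genuine obstacle, is the passage from the active-clock sector back to the global state: one must verify that demanding simultaneous satisfaction at every basis term of the superposition (guaranteed by \cref{prop:activetacc} together with linearity of the projectors) enforces the stated form on the full three-qudit marginal, not merely on a projection. This is precisely analogous to the corresponding step in \cref{lemma:logicalmultiplewitness} and follows by the same reasoning.
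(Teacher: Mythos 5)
Your proof is correct and follows essentially the same route as the paper: invoke \cref{prop:activetacc} to force the active-clock sector, then observe that each $\piinitcopy$ restricts its (logical, witness) pair to $\Span\{\ket{00},\ket{11}\}$, so chaining through the shared witness qudit gives $\Span\{\ket{000},\ket{111}\}$. As an aside, you read the setup correctly (shared witness qudit, two distinct logical qudits), whereas the paper's one-line proof accidentally writes the reverse (``two witness qudits and one logical qudit''); this is harmless since the argument is symmetric under swapping the roles of logical and witness subspaces.
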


\begin{proof}
  The setup is similar to that in \cref{lemma:logicalmultiplewitness}, except now there are two witness qudits and one logical qudit. Despite this difference, it is easy to see that the proof of that statement also applies here.
\end{proof}

\begin{lemma} \label{lemma:differentinits}
  Consider a clock component that has at least one $\piinitdot$ and one $\Pi_{out}$ clause. Suppose a $\piinitzero$ and a $\piinitcopy$ clause act on the same clock and logical qudits. If the witness qudit is in the state $\ket{1}$, the instance is unsatisfiable.
\end{lemma}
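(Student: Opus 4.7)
The plan is to show that the two initialization clauses together force the witness qudit to be $\ket{0}$ whenever the shared clock qudit is in the active state, which contradicts the assumption that the witness qudit is in the state $\ket{1}$. The main ingredient is \cref{prop:activetacc}, which guarantees that in any satisfying state of the instance, the clock qudit $c_0$ appearing in a $\piinitdot$ clause is $\ket{a}$ in at least one basis state of the superposition. Thus, the analysis of the two clauses in the regime where the clock is active is sufficient to rule out satisfiability.

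First I would restrict attention to the branch of the satisfying state in which the shared clock qudit is $\ket{a}$, and examine what each clause enforces there. The clause $\piinitzero$ contains the term $(I - \ket{0}\!\bra{0})_L \otimes \ket{a}\!\bra{a}_C$, so it forces the logical qudit into the state $\ket{0}$. The clause $\piinitcopy$, by its definition in \cref{defn:wslctqsat}, contains both $\ket{?}\!\bra{?}_L \otimes \ket{a}\!\bra{a}_C$ and $(I - \ket{00}\!\bra{00} - \ket{11}\!\bra{11})_{L,W} \otimes \ket{a}\!\bra{a}_C$, so it forces the joint state of the logical and witness qudits into $\Span\{\ket{00}_{L,W}, \ket{11}_{L,W}\}$.

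Next I would combine the two constraints. Intersecting $\ket{0}_L \otimes \mathbb{C}^2_W$ with $\Span\{\ket{00}_{L,W}, \ket{11}_{L,W}\}$ yields the one-dimensional subspace spanned by $\ket{00}_{L,W}$, so the witness qudit is forced to be $\ket{0}$ in the branch where the clock is active. If the witness qudit is instead stipulated to be $\ket{1}$, the state has no support on $\ket{00}_{L,W}$, and by \cref{prop:activetacc} the clock qudit must be $\ket{a}$ in at least one branch of any satisfying state. In that branch at least one of the two initialization clauses is violated, so the total penalty cannot vanish and the instance is unsatisfiable.

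I do not expect any serious obstacle. The only subtlety is that the conclusion is really about any state of the joint logical/witness register on the branch where the clock is $\ket{a}$, not just product states, so I would phrase the intersection argument at the level of the null space of the positive semi-definite operator $O_{init}^{\ket{0}} + O_{init}^{\scaleto{\ket{00,11}}{7pt}}$ restricted to the support of $\ket{a}\!\bra{a}_C$, rather than treating the qudits classically; this is then combined with the assumption that the witness is $\ket{1}$ to obtain a contradiction with \cref{prop:activetacc}.
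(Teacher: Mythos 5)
Your proposal is correct and follows essentially the same route as the paper: both rely on the fact (via \cref{prop:activetacc}) that the shared clock qudit must be active in some branch of any satisfying state, so both initialization clauses must actually enforce their constraints; intersecting the constraint from $\piinitzero$ (logical qudit $\ket{0}$) with that from $\piinitcopy$ (logical and witness equal) forces the witness to $\ket{0}$, contradicting the hypothesis. The paper's own proof states the same chain of implications more tersely and leaves the appeal to \cref{prop:activetacc} implicit (it is made explicit in the immediately preceding sibling lemmas), so the only difference is that you spell that step out.
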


\begin{proof}
  The first $\piinitzero$ clause requires the logical qudit to be in the state $\ket{0}$. The only way to satisfy $\piinitcopy$ is if both the logical and witness qudits are in the same state. Hence, if the witness qudit is in the state $\ket{1}$, the clauses cannot be satisfied simultaneously.
\end{proof}
\noindent
In a sense, these lemmas show that it is possible for a witness qudit to determine the state of multiple logical qudits, or vice versa. For a general input instance where the witness and logical qudits may be intricately connected by $\piinitcopy$ and $\piinitzero$ clauses, there exists a general form of their state that satisfies these clauses.

\begin{prop} \label{prop:statewitness}
  Consider a clock component that has at least one $\piinitdot$ and one $\Pi_{out}$ clause. Suppose the sub-instance contains $p$ witness qudits connecting to $r$ logical qudits via $\piinitcopy$ clauses. Depending on the instance, $r$ may be smaller or larger than $p$. Furthermore, suppose that a subset $Z$ of the logical qudits are also acted on by $\piinitzero$ clauses. The state of the witness and logical qudits that satisfies the $\piinitcopy$ clauses of the sub-instance (the witness state) is of the form

  \begin{equation} \label{eqn:statewitness}
    \ket{\psi_{\rm wit}} = \sum_{b \in \{0,1\}^p} \alpha_b \ket{f(b)}_L \otimes \ket{b}_W,
  \end{equation}
  \noindent
  where $\sum_{b \in \{0,1\}^p} \norm{\alpha_b}^2 = 1$ and $f: \{0,1\}^p \mapsto \{0,1\}^r$ is a function that depends on the specific arrangement of the $\piinitcopy$ clauses within the component.
\end{prop}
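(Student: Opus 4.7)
The plan is to characterize the joint null space of the initialization clauses on the witness and logical qudits. By \cref{prop:activetacc}, in any satisfying state at least one basis state of the superposition has the $\piinitdot$ clock qudits in the $\ket{a}$ state, so all $\piinitcopy$ and $\piinitzero$ initialization constraints must be enforced on the logical/witness registers in that basis state. Since each $\piinitcopy$ clause is diagonal in the computational basis -- its null space on its logical--witness pair is spanned by $\ket{00}$ and $\ket{11}$ -- the global null space of all $\piinitcopy$ clauses is itself spanned by computational basis states, so I can reason one basis state at a time.

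Next, I would encode the $\piinitcopy$ connectivity graph-theoretically. Let $G$ be the graph whose vertices are the $p$ witness qudits and the $r$ logical qudits participating in $\piinitcopy$ clauses, with an edge for each such clause. An induction along paths inside a connected component $E$ of $G$, whose base case is a single edge and whose inductive step is precisely the reasoning of \cref{lemma:logicalmultiplewitness,lemma:witnessmultiplelogical}, shows that a computational basis state lies in the null space of the $\piinitcopy$ clauses belonging to $E$ iff every vertex of $E$ carries the same bit. Tensoring across distinct components yields the null space of all $\piinitcopy$ clauses simultaneously. Folding in the $\piinitzero$ constraints then pins to $\ket{0}$ every component that contains a vertex of $Z$; call such components \emph{pinned} and all others \emph{free}. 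The resulting joint null space is spanned by basis states in which every pinned component is $\ket{0}$ and every free component is either uniformly $\ket{0}$ or uniformly $\ket{1}$.

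Finally, I would translate this basis characterization into \cref{eqn:statewitness}. A basis state of the witness register is a string $b \in \{0,1\}^p$. Call $b$ \emph{consistent} if any two witness qudits in the same component of $G$ receive the same bit under $b$ and if every witness qudit lying in a pinned component receives $0$. For each consistent $b$ there is a unique way to extend $b$ to a basis state on the logical qudits that lies in the joint null space: pinned logical qudits take value $0$, and each free-component logical qudit inherits the bit of its component from $b$. I would define $f(b)$ to be this assignment. For inconsistent $b$ the null space contains no basis state whose witness marginal is $\ket{b}$, so $\alpha_b$ must vanish in any satisfying state and $f(b)$ may be defined arbitrarily. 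Collecting these contributions gives precisely the claimed form \cref{eqn:statewitness}.

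The main obstacle I expect is bookkeeping around the induction that extends \cref{lemma:logicalmultiplewitness,lemma:witnessmultiplelogical} to arbitrary (possibly cyclic) connected components of $G$, and verifying that no nontrivial superposition across components survives. This is mitigated by the fact that all $\piinitcopy$ clauses are diagonal in a common computational basis, which reduces the problem to the purely combinatorial statement ``basis states that agree along every edge,'' i.e.\ bit-constancy on connected components; once this reduction is noted, the remainder of the argument is essentially a matter of naming.
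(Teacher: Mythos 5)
Your proposal is correct and follows essentially the same route as the paper: decompose the $\piinitcopy$ graph into connected components, generalize \cref{lemma:logicalmultiplewitness,lemma:witnessmultiplelogical} to conclude each component must be uniformly $\ket{0}$ or $\ket{1}$, then fold in the $\piinitzero$ constraints by pinning affected components and setting $\alpha_b = 0$ on the forbidden strings. The only difference is presentational — you make explicit the observation that the $\piinitcopy$ projectors are diagonal (so the joint null space is spanned by computational basis states) and set up the connectivity argument as a graph, whereas the paper condenses this to ``by generalizing the statements'' of the two lemmas; the mathematical content is the same.
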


\begin{proof}
  First, let us consider the $\piinitcopy$ clauses only. Observe that, similar to clock components, the logical and witness qudits can be seen as either forming joint or disjoint sets judging by their connections via $\piinitcopy$ clauses. Suppose that there are $m$ such disjoint sets, and let $r_i$ and $p_i$ with $i \in [m]$ be the number of logical and witness qudits of the $i$-th set. Clearly, $r = \sum_{i \in [m]}  r_i$ and $p = \sum_{i \in [m]} p_i$. By generalizing the statements of \cref{lemma:logicalmultiplewitness,lemma:witnessmultiplelogical}, it is evident that the state of any connected set must be of the form $\alpha_0 \ket{0}^{\otimes (r_i + p_i)} + \alpha_1 \ket{1}^{\otimes (r_i + p_i)}$, which we can rewrite as $\sum_{b'_i \in \{0,1\}} \alpha_{b'_i} \ket{b'_i}^{\otimes r_i}_L \otimes \ket{b'_i}^{\otimes p_i}_W$ where we have separated the logical and witness qudits. The overall state is then the $m$-fold tensor product of such states, given by $\sum_{b' \in \{0,1\}^m } \alpha_{b'} \ket{b'_0}^{\otimes r_0} \ldots \ket{b'_{m-1}}^{\otimes r_{m-1}} \otimes \ket{b'_0}^{\otimes p_0} \ldots \ket{b'_{m-1}}^{\otimes p_{m-1}}$ where we let $\alpha_{b'} :=  \alpha_{b'_0} \ldots \alpha_{b'_{m-1}}$. This is almost \cref{eqn:statewitness}, except now $b'$ is a bitstring of length $m$, referring to the number of connected components instead of the number of witness qudits. We can expand the bitstring to be of size $p$, if we let $\alpha_b = 0$ for all $b \in \{0,1\}^p$ when the state $\ket{b} \neq \ket{b'_0}^{\otimes p_0} \ldots \ket{b'_{m-1}}^{\otimes p_{m-1}}$, and $\alpha_b = \alpha_{b'}$ if $\ket{b} = \ket{b'_0}^{\otimes p_0} \ldots \ket{b'_{m-1}}^{\otimes p_{m-1}}$. From these previous statements it is obvious that the state of the logical qudits is related to the state of the witness qudits. For the proofs that follow, it is convenient to write their state simply as $\ket{f(b)}$.

  Now, let us consider the effect of $\piinitzero$ clauses on the logical qudits from $Z$. To satisfy these clauses, we now simply impose the additional restriction that if any of the logical qudits in $Z$ are $\ket{1}$ within $\ket{f(b)}$, we set $\alpha_b = 0$.
\end{proof}
\indent
This concludes the discussion about the satisfiability of clock components. The subsequent discussions about simultaneous propagation clauses and active dots remains the same as before. We now present the algorithm that decides all instances of this problem.

\subsubsection{Algorithm}
The input to the algorithm is the instance $x$ and a classical witness state $\ket{y_x}$ of size equal to the number of witness qudits in the instance. We do not consider a state of the ``free'' logical qudits, since as just shown, their state can be determined by their connections to witness qudits. Even if such a state is provided, the algorithm below will overwrite it, which spares us from having to add additional checks to verify that the states of witness and logical qudits match. The algorithm that decides all instances of the problem is the following.

The algorithm has the same first two steps as in \cref{subsubsection:algorithm}. We now add two new steps in between (2) and (3) to account for the new lemmas above:\footnote{A step to include \cref{lemma:witnessmultiplelogical} is not necessary with the given form of the witness.}

\begin{enumerate}[label=(QCMAa), leftmargin=2.2cm]
  \item For every logical qudit in the instance, check whether it is connected to several witness qudits. If the states of the witness qudits are not the same, reject. Correctness is given by \cref{lemma:logicalmultiplewitness}.
\end{enumerate}

\begin{enumerate}[label=(QCMAb), leftmargin=2.2cm]
  \item For every logical qudit in the instance, check whether it is present in both $\piinitzero$ and $\piinitcopy$ clauses. If there is such a qudit and one of the connected witness qudits is in the state $\ket{1}$, reject. Correctness is given by \cref{lemma:differentinits}.
\end{enumerate}
\noindent
We note that these two instructions account for both scenarios: cases where the clauses stem from the same clock component and those from different components. Normally, the latter cases would be evaluated in step (4); however, in this particular construction, they should be evaluated at this time since the next step initializes the state of all logical qudits to their corresponding values. This slight adjustment makes no difference in the correctness of the algorithm as no sub-instances are accepted in steps (3) or (4) and so it is safe to permute these around.

Step (3) also requires a small modification:

\begin{enumerate}[label=(3)]
  \item \textit{Identify the undefined $\Pi_{prop}$ clauses.} For every logical qudit, check whether it is acted on by at least one $\piinitzero$ or $\piinitcopy$ clause. If it is not, mark all $\Pi_{prop}$ clauses this qudit is part of as undefined.
\end{enumerate}
\noindent

The rest of the algorithm proceeds mostly in the same way as in \cref{subsubsection:algorithm}, the only other remaining difference is on the input of the quantum algorithm in step (6.1). If we assume the TACC has $q$ logical qudits initialized only by $\piinitzero$ clauses, and $p$ logical qudits initialized by $\piinitpair$ clauses, the step in the algorithm is then:

\begin{enumerate}[label=(6.1)]
  \item Initialize a register containing $q+2$ qubits, each initialized to $\ket{0}$ (the two extra qubits are the ancillas used in $\mathcal{C}_{\mathcal{G}_8}$). Initialize a second register of $p$ qubits where the state of each qubit is chosen to match the state of the witness qudits it connects to.
\end{enumerate}
\indent
Finally, we point out that as promised, the witness qudits are not acted on by any of the unitaries of the circuit.

\subsubsection{Runtime}
The runtime of the algorithm can be easily seen to be in polynomial time since the new steps (QCMAa) and (QCMAb) simply iterate over $\mathcal{O}(n)$ logical qudits and for each qudit the algorithm searches over a list of $\mathcal{O}(poly(n))$ clauses.

\subsubsection{Completeness}

Let us demonstrate that if $x$ is a yes-instance of \textsc{Witnessed SLCT-QSAT}, there is a classical basis state $\ket{y_x}$ that makes the algorithm accept with certainty. First, let us address the question about the existence of the classical basis state, and then show that the instance is accepted with certainty if this state is provided.

The key to answering this first question relies on analyzing each sub-instance separately. We will show that for each sub-instance whose TACC has $\piinitcopy$ clauses connecting $p$ witness and $r$ logical qudits, there exists a ($p+r$)-qubit basis state that helps determine the sub-instance is satisfiable. Then, $\ket{y_x}$ will be the union of these basis states for all sub-instances. Although some sub-instances may share logical and witness qudits, the assumption that the instance is satisfiable and \cref{lemma:logicalmultiplewitness,lemma:witnessmultiplelogical,lemma:differentinits} assure us that there is a state of these qudits for which all sub-instances can be simultaneously satisfied.

To begin, recall that there are two types of sub-instances: those that can be decided classically, and those that make use of a quantum sub-routine. Let us discuss each type separately. We make the assumption that all sub-instances below contain at least one $\piinitcopy$ clause, as otherwise they would be no different than those of \cref{section:our} whose satisfiability is independent of a witness.\\

\medbreak
\noindent
\textbf{Classical witness for classical sub-instances} \\
\noindent
By the lemmas of \cref{subsubsection:clock}, we know that each qudit of the instance must serve a single role and all sub-instances must be of one of two types: either they have no ACCs, or they have truncated TACCs with no simultaneous propagation clauses. In addition, the shared logical qudits between sub-instances must agree with the lemmas of \cref{subsubsection:shared}.

For sub-instances without ACCs, the witness and logical qudits are irrelevant as the sub-instance can be satisfied by simply choosing the state of the clock qudits appropriately. Thus, the free qudits can be in any basis state.

For sub-instances with truncated TACCs, there must be a state $\ket{\psi_{\rm wit}}$ of the free qudits such that the truncated history state

\begin{equation} \label{eqn:witnessedhist}
  \ket{\psi_{thist}} = \frac{1}{\sqrt{T+1}} \sum_{t = 0}^T  \big[ U_{t,0} \ldots U_{0,0} \ket{0}^{\otimes q} \otimes\ket{\psi_{\rm wit}} \big] \otimes \ket{\underbrace{d \ldots d}_{t} a_t \underbrace{r \ldots r}_{T-t}}
\end{equation}
\noindent
satisfies all clauses of the TACC. Here, $T < L$ and $\ket{0}^{\otimes q}$ are the logical qudits within the TACC that are present in $\piinitzero$ clauses (including $\piinitzero$ clauses from other sub-instances), but not present in any $\piinitcopy$ clause. On the other hand, $\ket{\psi_{\rm wit}}$ might contain logical qudits that are acted on by $\piinitzero$ clauses and thus constrained to be $\ket{0} $ (these qudits are then not truly ``free'', but grouping them this way helps simplify the arguments below). As stated in \cref{prop:statewitness}, the witness must be of the form $\ket{\psi_{\rm wit}} = \sum_{b \in \{0,1\}^p} \alpha_b \ket{f(b)}_L \otimes \ket{b}_W$ with $\sum_b \norm{\alpha_b}^2 = 1$. We can show that there is a classical witness state using similar arguments to those given in the toy example. First, observe that all $\ket{f(b)} \otimes \ket{b}$ with $\norm{\alpha_b} > 0$ in the decomposition of the witness satisfy the $\piinitzero$ and $\piinitcopy$ clauses of the sub-instance. Then, given that the sub-instance does not contain simultaneous propagation clauses and the sub-instance is truncated, the $\Pi_{prop}$ and $\Pi_{out}$ clauses are satisfied regardless of the initial state. In other words, replacing $\ket{\psi_{\rm wit}}$ with any $\ket{f(b)} \otimes \ket{b}$ (as long as $\norm{\alpha_b} > 0$) in the history state above also leads to a satisfying history state. \\

\medbreak
\noindent
\textbf{Classical witness for quantum sub-instances} \\
\noindent
Recall that quantum sub-instances must have one of the three forms shown in \cref{fig:qinputs}. Let us begin with the third type of sub-instance as these are similar to the truncated classical sub-instances above, except their TACC now contains simultaneous propagation clauses.

Sub-instances of this first type are also satisfied by \cref{eqn:witnessedhist}. By the same argument, any basis state $\ket{f(b)} \otimes \ket{b}$ with $\norm{\alpha_b} > 0$ in the superposition of the witness state satisfies the $\piinitzero$ and $\piinitcopy$ clauses. The $\Pi_{out}$ clauses of the instance are again trivially satisfied, and the lone $\Pi_{prop}$ clauses are satisfied independently from the initial state. Then, it only remains to show that simultaneous propagation clauses can also be satisfied with a classical witness state as input.

\begin{prop} \label{prop:simpropclassical}
  Suppose two $\Pi_{prop}$ clauses with unitaries $U_{t+1,i}$ and $U_{t+1,j}$ within the TACC of a sub-instance are satisfiable, i.e.\ $U_{t+1,i} \ket{\phi_t} = U_{t+1,j} \ket{\phi_t}$ where $\ket{\phi_t} = U_{t,0} \ldots U_{0,0} \ket{\phi_0}$ is the state of the data qudits at time $t$ (see \cref{prop:simultaneousprop}). For sub-instances of {\rm \textsc{Witnessed SLCT-QSAT}}, $\ket{\phi_0} =  \ket{0}^{\otimes q} \otimes \ket{\psi_{\rm wit}}$ where in general ${\ket{\psi_{\rm wit}}} = \sum_{b \in \{0,1\}^p} \alpha_b \ket{f(b)}_L \otimes \ket{b}_W$ (see \cref{prop:statewitness}). Replacing $\ket{\psi_{\rm wit}}$ with any of its basis states $\ket{f(b)} \otimes \ket{b}$ that has $\norm{\alpha_b} > 0$ still results in a state $\ket{\phi_t}$ that satisfies the simultaneous $\Pi_{prop}$ clauses.
\end{prop}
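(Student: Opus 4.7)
The key structural fact to exploit is that, by construction, the witness register $W$ is never acted on by any of the unitaries $U_{s,k}$ appearing in the $\Pi_{prop}$ clauses of the TACC—the witness qudits are only touched by $\piinitcopy$ clauses at the very start. Using this, the plan is to show that the satisfiability condition on the full witness decomposes into an independent condition for each basis-state branch of $\ket{\psi_{\rm wit}}$, and then invoke \cref{prop:simultaneousprop} on the restricted state.

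First I would expand the witness as $\ket{\psi_{\rm wit}} = \sum_{b \in \{0,1\}^p} \alpha_b \ket{f(b)}_L \otimes \ket{b}_W$ and define, for each $b$ with $\alpha_b \neq 0$, the ``branch'' initial state $\ket{\phi_0^{(b)}} := \ket{0}^{\otimes q} \otimes \ket{f(b)}_L \otimes \ket{b}_W$, so that $\ket{\phi_0} = \sum_b \alpha_b \ket{\phi_0^{(b)}}$. By linearity, the state at time $t$ is $\ket{\phi_t} = \sum_b \alpha_b \ket{\phi_t^{(b)}}$ where $\ket{\phi_t^{(b)}} := U_{t,0}\cdots U_{0,0} \ket{\phi_0^{(b)}}$. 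Since none of the $U_{s,k}$ act on $W$, each branch factorizes as $\ket{\phi_t^{(b)}} = \ket{\chi_t^{(b)}}_{D\cup L} \otimes \ket{b}_W$ for some $\ket{\chi_t^{(b)}}$ on the data and logical registers.

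Next, I would plug this decomposition into the hypothesis $U_{t+1,i}\ket{\phi_t} = U_{t+1,j}\ket{\phi_t}$ (which holds by \cref{prop:simultaneousprop} since, by assumption, the simultaneous $\Pi_{prop}$ clauses are satisfiable). Because $U_{t+1,i}$ and $U_{t+1,j}$ also act trivially on $W$, this reads
\begin{equation*}
    \sum_{b \in \{0,1\}^p} \alpha_b \bigl(U_{t+1,i} - U_{t+1,j}\bigr)\ket{\chi_t^{(b)}} \otimes \ket{b}_W \; = \; 0.
\end{equation*}
The computational-basis states $\{\ket{b}_W\}$ are orthonormal, so each term must vanish separately, giving $U_{t+1,i}\ket{\chi_t^{(b)}} = U_{t+1,j}\ket{\chi_t^{(b)}}$ for every $b$ with $\alpha_b \neq 0$.

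Finally, I would note that replacing $\ket{\psi_{\rm wit}}$ by the basis state $\ket{f(b)} \otimes \ket{b}$ produces exactly the initial state $\ket{\phi_0^{(b)}}$ and hence the time-$t$ state $\ket{\phi_t^{(b)}} = \ket{\chi_t^{(b)}} \otimes \ket{b}_W$. The corresponding condition from \cref{prop:simultaneousprop} for the simultaneous clauses to be satisfied is $U_{t+1,i}\ket{\phi_t^{(b)}} = U_{t+1,j}\ket{\phi_t^{(b)}}$, which—tensoring the equation on $\ket{\chi_t^{(b)}}$ derived above with $\ket{b}_W$—holds. Thus the basis-state substitution still satisfies the simultaneous $\Pi_{prop}$ clauses, completing the proof. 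There is no real obstacle here beyond bookkeeping: the whole argument reduces to the observation that $W$ is a spectator register, so the satisfiability condition is ``diagonal'' in the $W$-basis and each branch satisfies it on its own.
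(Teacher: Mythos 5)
Your proof is correct and follows essentially the same route as the paper's: both decompose $\ket{\phi_t}$ along the computational basis of the spectator witness register $W$ (exploiting that no $U_{s,k}$ touches $W$) and then extract a condition on each branch with $\alpha_b \neq 0$. The only cosmetic difference is that you work directly with the vector equation $(U_{t+1,i}-U_{t+1,j})\ket{\phi_t}=0$ and orthogonality of $\{\ket{b}_W\}$, whereas the paper passes through the scalar form $\bra{\phi_t}U_{t+1,j}^\dagger U_{t+1,i}\ket{\phi_t}=1$ and argues that a $\norm{\alpha_b}^2$-weighted average of unit-modulus-bounded overlaps equaling $1$ forces each overlap to be $1$; your manipulation is marginally cleaner since it avoids that convexity step.
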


\begin{proof}
  Let us write the intermediate state of the computation as $\ket{\phi_t} = \sum_{b \in \{0,1\}^p} \alpha_b \ket{\xi_b^t} \otimes \ket{b}$, where $\ket{\xi_b^t} = U_{t,0} \ldots U_{0,0} \ket{0}^{\otimes q} \otimes \ket{f(b)}$. Then, by assumption

  \begin{equation*}
    \begin{aligned}
      1 & = \bra{\phi_t} U_{t+1,j}^\dagger U_{t+1,i} \ket{\phi_t}                                                                                        \\
        & = \sum_{b,b' \in \{0,1\}^p} \alpha_b \alpha_{b'}^* \bra{\xi_{b'}^t} \otimes \bra{b'} U_{t+1,j}^\dagger U_{t+1,i} \ket{\xi_b^t} \otimes \ket{b} \\
        & = \sum_{b \in \{0,1\}^p} \norm{\alpha_{b}}^2 \bra{\xi_b^t} U_{t+1,j}^\dagger U_{t+1,i} \ket{\xi_b^t}.
    \end{aligned}
  \end{equation*}
  \noindent
  For this equation to be true, it must be that $\bra{\xi_b^t} U_{t+1,j}^\dagger U_{t+1,i} \ket{\xi_b^t} = 1$ for all $b \in \{0,1\}^p$ with $\norm{\alpha_b} > 0$. Consequently, the state $\ket{\phi'_t} = \ket{0}^{\otimes q} \otimes \ket{f(b)} \otimes \ket{b}$, where $\ket{f(b)} \otimes \ket{b}$ is any basis state of $\ket{\psi_{\rm wit}}$ with $\norm{\alpha_b} > 0$, can satisfy the $\Pi_{prop}$ clauses.
\end{proof}
\noindent
An important observation here is that since any basis state $\ket{f(b)} \otimes \ket{b}$ with $\norm{\alpha_b} > 0$ can help satisfy a pair of $\Pi_{prop}$ clauses, any single state we either pick or receive as input will also help satisfy all other subsequent simultaneous $\Pi_{prop}$ clauses of the TACC.

For the remaining two types of sub-instances, we simply consider the second as it is more general than the first. For this type of sub-instance, the TACC encompasses the whole ACC, and the satisfying state is given by \cref{eqn:witnessedhist}, except now $T = L$. As a consequence, the $\Pi_{out}$ clauses are no longer trivially satisfied. As we have shown previously, any basis state $\ket{f(b)} \otimes \ket{b}$ in the superposition of the quantum witness makes a valid classical witness that leads to a history state that satisfies the $\piinitzero$, $\piinitcopy$, and $\Pi_{prop}$ clauses. To see that this state also satisfies the $\Pi_{out}$ clauses, we can simply generalize the analysis of the toy example. Here, the sub-instances may now have a more erratic arrangement of their $\piinitcopy$ and $\piinitzero$,  but as shown previously, the witness state can be always be expressed as $\ket{\psi_{\rm wit}} = \sum_{b \in \{0,1\}^p} \alpha_b \ket{f(b)} \otimes \ket{b}$. Then, following the same reasoning as in the toy example, it is easy to show that $\bra{\xi_b^L} \Pi^{(1)} \ket{\xi_b^L} = 1$ for all $b$, where now $\ket{\xi_b^L} = U_L \ldots U_1 \ket{0}^{\otimes q} \otimes \ket{f(b)}$. Therefore, any classical witness state $\ket{f(b)} \otimes \ket{b}$ with non-zero amplitude in $\ket{\psi_{\rm wit}}$ also leads to acceptance.\\

\medbreak
\noindent
\textbf{Instance-wide classical witness and perfect completeness} \\
\noindent
We have just shown that if the instance is satisfiable, for each sub-instance with a TACC and ``free'' qudits, there exists a classical state of its free qudits such that the sub-instance can be satisfied. Moreover, there may be more than one such state for each sub-instance. One might then think that the overall classical witness state $\ket{y_x}$ could be the union of any arbitrary combination of these classical witnesses, and while almost correct, there is one caveat: since the sub-instances may not be completely independent and may share logical and witness qudits, it is possible that the state of one qudit leads to a satisfying state in one sub-instance but not in the others. However, since the instance must be satisfiable, there should exist a state of the qudit that can help satisfy all of them. Thus, $\ket{y_x}$ must be some particular combinations of the classical witnesses of each individual sub-instance.

To show perfect completeness, observe that by the correctness of the lemmas in this section and the previous ones, if presented with $\ket{y_x}$, the algorithm---which is based on these lemmas---never wrongfully rejects.

\subsubsection{Soundness}

Now, let us show that when $x$ is a no-instance, the algorithm rejects the instance with high probability for any input classical witness state.

To begin, assume that there's a pair of clauses that cannot be mutually satisfied. If the problem lies in the roles of qudits, the clock component, or in the shared logical qudits between sub-instances where the clauses considered are $\{ \piinitzero, \Pi_{prop}, \Pi_{out}\}$, the algorithm always rejects regardless of the witness state. This again follows from the correctness of the analysis of clock components and shared logical qudits from previous sections of the paper. If the unsatisfiable clauses are those involving the witness and logical qudits, then no matter which classical witness state is input to the algorithm, the algorithm rejects the instance as it is based on \cref{lemma:witnessmultiplelogical,lemma:logicalmultiplewitness,lemma:differentinits} (and the other three which we did not explicitly write down) which cover all possible scenarios.

If $x$ is a no-instance and is not rejected before the quantum subroutine, then the unsatisfiable clauses must lie in one of the quantum sub-instances. We can also assume that the witness state can mutually satisfy the $\piinitcopy$ and $\piinitzero$ clauses as bad witnesses would be caught in steps (QCMAa) and (QCMAb) of the algorithm. Then, either the simultaneous propagation clauses cannot be satisfied, or the final state of the computation results in a state where the logical qudits present in $\Pi_{out}$ clauses are $\ket{0}$. This is the same scenario as in \cref{subsubsection:completenesssoundnessmono}, and since the quantum subroutine here is the same as in that section, the probability that the algorithm accepts the instance is $\leq 1/3$.

\subsection{\QCMA-hard}

Here, we show that \textsc{Witnessed} SLCT-QSAT is $\QCMA$-hard, thus completing the proof of \cref{thm:qcma}.

Let $U_x = U_L \ldots U_1$ with $U_i \in \mathcal{G}_8$ and $L = poly(n)$ be the $\QCMA$ verifier circuit that given an instance $x$ of a problem $A$, decides whether $x \in A_{yes}$ or $x \in A_{no}$. The input to the circuit is the $p$-qubit classical witness state $\ket{y_x}$, and the $q$-qubit ancilla register $\ket{0}^{\otimes q}$, where $p,q$ are two polynomials scaling with $n = \norm{x}$. Denote as $ans$ the logical qubit that when measured provides this decision. The reduction $x \mapsto x'$ requires $2p + q + L+1$ qudits and is the following:

\begin{enumerate}
  \item Choose $2p$ qudits to serve as the logical and witness qudits of the computation, choose other $q$ qudits to serve as the ancilla qudits, and $L+1$ qudits to serve as the clock qudits. For each class of qudit, define an ordering.
  \item For all $i \in [q]$, create a $\piinitzero$ clause acting on $c_0$ and $l_{i}$.
  \item For all $j \in [p]$, create a $\piinitcopy$ clause acting on $c_0$, $l_{q-1+j}$, and $w_j$.
  \item For all $k \in [L]$, create a $\Pi_{prop}$ clause with associated unitary $U_{k+1}$ acting on clock qudits $k$ and $k+1$ (with $k$ being the predecessor of $k+1$) and the two logical qudits that the unitary originally acts on.
  \item Create one $\Pi_{out}$ clause acting on $c_L$ and $l_{ans}$.
\end{enumerate}
\indent
Similarly to the previous problems, the resulting instance is one-dimensional, with no simultaneous propagation clauses and initialized logical qudits (some with $\piinitzero$ clauses and others with $\piinitcopy$).

\subsubsection{Completeness and soundness}

If $x$ is a yes-instance, then the \textsc{Witnessed SLCT-QSAT} instance $x'$ is satisfied with certainty by the history state

\begin{equation*}
  \ket{\psi_{hist}} = \frac{1}{\sqrt{L+1}} \sum_{t=0}^{L} U_t \ldots U_0 \ket{0}^{\otimes q} \otimes \ket{y_x} \otimes \ket{\underbrace{d \ldots d}_t a_t \underbrace{r \ldots r}_{L-t}}.
\end{equation*}
\indent
When $x$ is a no-instance, the only nontrivial change in \cref{eqn:hamilsoundnessfree} happens within $H_{init}$. This term now collects the $(I - \ket{0} \! \bra{0}) \otimes \ket{a} \! \bra{a}$ terms ($q$ of them), the $\ket{?} \! \bra{?} \otimes \ket{a} \! \bra{a}$ terms ($p$ of them), and the $(I-\ket{00}\!\bra{00} - \ket{11}\!\bra{11}) \otimes \ket{a} \! \bra{a}$ terms ($p$ of them). Thus, it is only necessary to show that $H|_{\mathcal{S}_{clock}}$ has no small eigenvalues in both cases where there is at least one logical qudit in the $\ket{?}$ state, and when there's no such state. We note that we cannot simply repeat the argument of Ref.\ \cite{bravyi2006efficient} for the $\QMA$-complete $3$-QSAT problem because the soundness of problem $A$---``for any classical witness state $\ket{y_x}$, $\Pr[U_n \textnormal{ accepts } (x,y_x)] \leq 1/3$''---is restricted to classical states instead of arbitrary quantum states. \\

\medbreak
\noindent
\textbf{At least one undefined logical qudit} \\
\noindent
This case is quite similar to the one in the $\BQP_1$ problem. We still have that the Hamiltonian $H$ within this subspace is $H|_{\mathcal{S}_{clock}} = H_{init} + H_{prop}(T,u)$. Then, it is easy to see that even with the additional terms, $H_{init}$ remains a diagonal matrix and $\gamma(H_{init}) \geq 1$. On the other hand, $H_{prop}$ remains unchanged and so $\gamma(H_{prop}) \geq c/T^2$ for some positive constant $c$. The null spaces of $H_{init}$ and $H_{prop}(T,u)$ are:

\begin{equation*}
  \mathcal{S}_{init} = \mathcal{D}_u \otimes \Span\{\ket{C_1}, \ldots, \ket{C_T}\} \hspace{2em} \textnormal{and} \hspace{2em} \mathcal{S}_{prop} = \mathcal{D}_u \otimes \frac{1}{\sqrt{T+1}} \sum_{t=0}^T \ket{C_t},
\end{equation*}
\noindent
where now $\mathcal{D}_u = (\mathbb{C}^2)^{ \otimes u} \otimes \ket{?}_u \otimes (\mathbb{C}^2)^{\otimes q+p-u-1} \otimes (\mathbb{C}^2)^{\otimes p}$ and $0 \leq u < q + p$ represents the position of the undefined logical qudit. As before, it is straightforward to show that $\alpha = \max_{\ket{\eta} \in \mathcal{S}_{prop}} \bra{\eta} \Pi_{\mathcal{S}_{init}} \ket{\eta} \geq 1-T^{-1}$, and so by the Geometric Lemma, $H|_{\mathcal{S}_{clock(T,u)}} \succeq c(2T^3)^{-1}$. \\

\medbreak
\noindent
\textbf{No undefined logical qudit} \\
\noindent
In this scenario, $H|_{\mathcal{S}_{clock}} = H_{init} + H_{prop} + H_{out}$. Defining $H_1 := H_{init} + H_{out}$ and $H_2 := H_{prop}$ as before, we can find that $\gamma(H_1) \geq 1$, $\gamma(H_2) \geq c/L^2$, and possess null spaces

\begin{equation*}
  \begin{aligned}
    \mathcal{S}_1 = & \ket{0}^{\otimes q} \otimes \Span\{\ket{b} \otimes \ket{b} \mid b \in \{0,1\}^p \} \otimes \ket{C_0} \oplus                         \\
                    & (\mathbb{C}^2)^{\otimes q} \otimes (\mathbb{C}^2)^{\otimes 2p} \otimes \Span\{\ket{C_1}, \ldots, \ket{C_{L-1}}\} \: \oplus          \\
                    & U^\dagger \left[ \ket{1}_{ans} \otimes (\mathbb{C}^2)^{\otimes q+p-1} \right] \otimes (\mathbb{C}^2)^{\otimes p} \otimes \ket{C_L},
  \end{aligned}
\end{equation*}
\noindent
and
\begin{equation*}
  \mathcal{S}_2 = (\mathbb{C}^2)^{\otimes q} \otimes (\mathbb{C}^2)^{\otimes 2p} \otimes \frac{1}{\sqrt{L+1}} \sum_{t=0}^L \ket{C_t}.
\end{equation*}
\indent
For the Geometric Lemma, we then require to calculate the quantity $\alpha = \max_{\ket{\eta} \in \mathcal{S}_2} \bra{\eta} \Pi_{\mathcal{S}_1} \ket{\eta}$. Following the steps of Ref.\ \cite{kitaev2002classical} or Ref.\ \cite{gharibian2015quantum}, and writing projector onto the null space of $\mathcal{S}_1$ as $\Pi_{\mathcal{S}_1} = \Pi_{\mathcal{S}_{1,1}} + \Pi_{\mathcal{S}_{1,2}} + \Pi_{\mathcal{S}_{1,3}}$, we can show that in our case

\begin{equation*}
  \max_{\ket{\eta} \in \mathcal{S}_2} \bra{\eta} \Pi_{\mathcal{S}_{1,2}} \ket{\eta} = \frac{L-1}{L+1},
\end{equation*}
\noindent
while the contribution of the other two terms can be bounded as

\begin{equation*}
  \max_{\ket{\eta} \in \mathcal{S}_2} \bra{\eta} \Pi_{\mathcal{S}_{1,1}} + \Pi_{\mathcal{S}_{1,3}} \ket{\eta} \leq \frac{1}{L+1} \left(1 + \sqrt{\max_{\ket{\beta} \in \mathcal{\tilde{S}}_{1,1}}  \bra{\beta} \Pi_{\mathcal{\tilde{S}}_{1,3}} \ket{\beta}} \right),
\end{equation*}
\noindent
where $\tilde{S}_{1,1} := \ket{0}^{\otimes q} \otimes \Span\{\ket{b} \otimes \ket{b} \mid b \in \{0,1\}^p \}$ and $\tilde{S}_{1,3} := U^\dagger \left[ \ket{1}_{ans} \otimes (\mathbb{C}^2)^{\otimes q-1} \otimes (\mathbb{C}^2)^{\otimes p} \right] \otimes (\mathbb{C}^2)^{\otimes p}$ are the nullspaces $S_{1,1}$ and $S_{1,3}$ without the clock register. Then, as any state $\ket{\beta} \in \mathcal{\tilde{S}}_{1,1}$ can be written as $\ket{\beta} = \ket{0}^{\otimes q} \otimes \sum_{b \in \{0,1\}^p} \alpha_b \ket{b} \otimes \ket{b}$ with $\sum_{b} \norm{\alpha_b}^2 = 1$, we have that

\begin{equation*}
  \begin{aligned}
    \max_{\ket{\beta} \in \mathcal{\tilde{S}}_{1,1}}  \bra{\beta} \Pi_{\mathcal{\tilde{S}}_{1,3}} \ket{\beta} & = \sum_{b,b' \in \{0,1\}^p} \alpha_{b'}^* \alpha_b \bra{0}^{\otimes q} \otimes \bra{b'} \otimes \bra{b'} \left[ U^\dagger (\ket{1}\!\bra{1}_{ans} \otimes I) U \otimes I \right] \ket{0}^{\otimes q} \otimes \ket{b} \otimes \ket{b} \\
                                                                                                              & = \sum_{b,b' \in \{0,1\}^p} \alpha_{b'}^* \alpha_b \bra{0}^{\otimes q} \otimes \bra{b'} \left[ U^\dagger (\ket{1}\!\bra{1}_{ans} \otimes I) U \right] \ket{0}^{\otimes q} \otimes \ket{b} \otimes \braket{b'|b}                      \\
                                                                                                              & = \sum_{b \in \{0,1\}^p} \norm{\alpha_b}^2 \bra{0}^{\otimes q} \otimes \bra{b} \left[ U^\dagger (\ket{1}\!\bra{1}_{ans} \otimes I) U \right] \ket{0}^{\otimes q} \otimes \ket{b}                                                     \\
                                                                                                              & \leq \epsilon \sum_{b \in \{0,1\}^p} \norm{\alpha_b}^2                                                                                                                                                                               \\
                                                                                                              & = \epsilon,
  \end{aligned}
\end{equation*}
\noindent
where in the fourth line we made use of the fact that $A$ is a problem in QCMA and accepts with probability less than $\epsilon = 1/3$ when $x$ is a no-instance. In other words, we observed that $\Pr[U_n \textnormal{ accepts } (x,b)] = \bra{0}^{\otimes q} \otimes \bra{b} U^\dagger (\ket{1}\!\bra{1}_{ans} \otimes I)U \ket{0}^{\otimes q} \otimes \ket{b} \leq \epsilon$ for any classical witness $b \in \{0,1\}^p$. Thus,

\begin{equation*}
  \alpha = \max_{\eta \in \mathcal{S}_2} \bra{\eta} \Pi_{\mathcal{S}_{1,1}} + \Pi_{\mathcal{S}_{1,2}} + \Pi_{\mathcal{S}_{1,3}} \ket{\eta} \leq \frac{L-1}{L+1} + \frac{1 + \sqrt{\epsilon}}{L+1} = 1 - \frac{1+\sqrt{\epsilon}}{L+1},
\end{equation*}
\noindent
and so, plugging this in the Geometric Lemma, we obtain that $H|_{S_{clock}} \succeq 2^{-1} \min\{\gamma(H_1), \gamma(H_2)\} (1 - \alpha) \succeq  c(1 - \sqrt{\epsilon})(2L^3)^{-1}$.

\section{\coRP-complete Problem} \label{section:corp}

By combining elements of the previous constructions, we can generate a problem that is $\coRP$-complete.

\subsection{Construction overview}

In the $\BQP_1$ constructions of \cref{section:monogamy,section:our}, we made use of a quantum circuit in order to decide the satisfiability of ``quantum'' sub-instances. In particular, this circuit verified the satisfiability of the simultaneous $\Pi_{prop}$ clauses and final $\Pi_{out}$ clauses. For the latter, the circuit executed the quantum circuit $U_T \ldots U_1 \ket{0}^{\otimes q}$, measured some of the qubits, and accepted or rejected depending on the measurement outcomes. Intuitively, to generate the $\coRP$ complete problem, we would like to replace the universal quantum circuit by a universal classical \textit{reversible} circuit $R = R_T \ldots R_1$ (reversibility is needed since the best potentially satisfying state is still a \textit{quantum} history state) and introduce randomness into the instance by initializing $p$ ancilla qubits to $\ket{+}$. Then, for these new sub-instances, we could analogously verify the $\Pi_{out}$ clauses by sampling a bitstring $b \in \{0,1\}^p$, evaluating the circuit $R$ on input $(0^q,b)$ and deciding on its satisfiability based on the final state of the bits. While this idea is close to the actual construction, it is not sufficient to decide the satisfiability of the simultaneous propagation clauses.

For the construction to work, we also incorporate elements of the $\QCMA$ problem from the previous section. Namely, we will modify the $\piinitcopy$ clause so it initializes both a witness (now referred to as \textit{auxiliary} qudit as we remove the freedom) and a logical qudit to the maximally entangled state $\ket{\Phi^+}$. This new clause is denoted $\piinitpair$.

At a later time we will demonstrate that this idea does allow verification of the simultaneous propagation clauses. For the moment, let us show in slightly more detail that this construction does allow us to verify the satisfiability of $\Pi_{out}$ clauses. Again consider the toy example of \cref{fig:qcmadrawings} where now the $\piinitcopy$ clauses are replaced by $\piinitpair$ clauses and the unitaries $U_i$ with reversible classical gates $R_i$. If satisfiable, the satisfying state must be the history state

\begin{equation*}
    \begin{aligned}
        \ket{\psi_{hist}} & = \frac{1}{\sqrt{5}} \sum_{t=0}^4 \big[ R_t \ldots R_{0} \ket{00} \otimes \ket{\Phi^+}^{\otimes 2} \big] \otimes \ket{\underbrace{d \ldots d}_t a_t \underbrace{r \ldots r}_{4-t}}  \\
                          & = \frac{1}{\sqrt{5}} \sum_{t=0}^4 \sum_{b \in \{0,1\}^2} \frac{1}{2} \ket{\xi_b^t} \otimes \ket{b}_{Aux} \otimes \ket{\underbrace{d \ldots d}_t a_t \underbrace{r \ldots r}_{4-t}},
    \end{aligned}
\end{equation*}
\noindent
where in the second line we observed that $\ket{\Phi^+}^{\otimes p}$ can be written as $\ket{\Phi^+}^{\otimes{p}} = 2^{-\frac{p}{2}}\sum_{b \in \{0,1\}^p} \ket{b}_L \otimes \ket{b}_{Aux}$ for any $p \in \mathbb{N}$, and defined $\ket{\xi_b^t} := R_t \ldots R_1 \ket{00} \otimes \ket{b}_L$. The $\Pi_{out}$ clause is satisfied if at time $t = 4$, the probability that the second qubit yields outcome ``1'' when measured is $1$. This probability is

\begin{equation*}
    \begin{aligned}
        \Pr(\textnormal{outcome } 1) & = \frac{1}{4} \sum_{b,b' \in \{0,1\}^2} \bra{\xi_{b'}^4} \otimes \bra{b'} \Pi^{(1)} \ket{\xi_b^4} \otimes \ket{b} \\
                                     & = \frac{1}{4} \sum_{b \in \{0,1\}^2} \bra{\xi_b^4} \Pi^{(1)} \ket{\xi_b^4},
    \end{aligned}
\end{equation*}
\noindent
from where it is evident that if the instance is satisfiable, $\bra{\xi_b^4} \Pi^{(1)} \ket{\xi_b^4} = 1$ for all $b \in \{0,1\}^2$. Hence, this shows that it is possible to verify the $\Pi_{out}$ clause by sampling one of the strings $b$, running circuit $R$ on input $(0^2, b)$, and measuring the state of the second qubit.

Another important consideration is that the classical reversible gate set must be chosen with care. In previous constructions, $\mathcal{G}$ was chosen for two reasons: first, so all gates in the set changed the basis states upon application, and second, so the unitary $V(\Pi_i)$, or alternatively the circuit $\mathcal{C}$ of \cref{fig:measurecircuit}, could be implemented perfectly with gates from this set. The first property played an important role in the proof of \cref{lemma:sharedhprop}, and in this construction we still desire to make use of this lemma. However, as we will soon show, circuit $\mathcal{C}$ will no longer be necessary. To make sure that the first property is satisfied, we choose

\begin{equation*}
    \mathcal{G} = \{X, (X \otimes X \otimes X) \textnormal{Toffoli}\},
\end{equation*}
\noindent
which is a universal gate set for reversible classical computation as the Toffoli alone is known to be universal. As a consequence of this choice, the quantum satisfiability problem we define below will have some $5$-local clauses since the unitary gate in the $\Pi_{prop}$ clauses may be a Toffoli. This is the best locality we can achieve as it is also well-known that any universal gate set for reversible quantum computation must include a $3$-bit gate.

Now that we have presented the general idea, let us define the problem more formally.

\subsection{Problem definition}

The problem we claim is $\coRP$-complete is the following:

\begin{defn}[\textsc{Classical SLCT-QSAT}] \label{defn:cslctqsat}
    The problem {\rm \textsc{Classical SLCT-QSAT}} is a QCSP defined on the $8$-dimensional Hilbert space.

    \begin{equation*}
        \mathcal{H} = \Span\{\ket{0_L},\ket{1_L},\ket{?_L}\} \oplus \Span\{\ket{0_{Aux}},\ket{1_{Aux}}\} \oplus \Span\{\ket{r_C},\ket{a_C},\ket{d_C}\},
    \end{equation*}
    which now consists of a logical, an auxiliary, and a clock subspace. Furthermore, here $\mathcal{G} = \{X, (X \otimes X \otimes X) \textnormal{Toffoli}\}$, and $\Pi_{prop}$ now acts on at most $5$ qudits. Besides the operators $O_{init}$, $O_{prop}$ and $O_{out}$ defined in \cref{eqn:Oinitfree,eqn:Opropfree,eqn:Ooutfree}, the problem contains a new $3$-local operator

    \begin{equation*}
        O_{init}^{\ket{\Phi^{\scaleto{+}{3pt}}}} := \Pi_{L,1} + \Pi_{Aux,2} + \Pi_{C,3} + \Pi_{start,3} + \ket{?}\!\bra{?}_1 \otimes \ket{a}\!\bra{a}_3 + (I - \ket{\Phi^+} \! \bra{\Phi^+})_{1,2} \otimes \ket{a} \! \bra{a}_3,
    \end{equation*}
    \noindent
    where $\Pi_{Aux} := I - (\ket{0} \! \bra{0}_{Aux} + \ket{1} \! \bra{1}_{Aux})$ is the projector to the auxiliary subspace. As before, $\piinitpair$ refers to the projector with the same null space as this operator.

    The promise remains the same as before. We are promised that for every instance considered either (1) there exists a {\rm quantum} state $\ket{\psi_{sat}}$ on $n$ $8$-dimensional qudits such that $\Pi_i \! \ket{\psi_{sat}} = 0$ for all $i$, or (2) $\Sigma_i \bra{\psi} \! \Pi_i \! \ket{\psi} \geq 1/poly(n)$ for all {\rm quantum} states $\ket{\psi}$.

    The goal is to output ``YES'' if (1) is true, or output ``NO'' otherwise.
\end{defn}

Before showing that \textsc{Classical SLCT-QSAT} is contained within $\coRP$ and is also $\coRP$-hard, let us make an observation about the problem.

\begin{remark}
    While the projectors of the problem are defined with a restricted set of gates, the algorithm that we use to decide the instance may use any reversible or irreversible classical gate set. Indeed, the circuit $R$ can be perfectly simulated by any other circuit employing a universal gate set with polynomial overhead. As mentioned earlier, this is not a feature present in quantum circuits.

    Thus, the claim is actually that this problem is complete for $\coRP$, rather than $\coRP^\mathcal{G}$, as in the $\BQP_1$ problems.
\end{remark}

\subsection{In \coRP} \label{subsubsection:incorp}

With the introduction of the new $\piinitpair$ clause, we have to verify that it does not break anything from the analysis of clock components and shared logical qudits of the construction of \cref{section:our}.

In the analysis of clock components of \cref{subsubsection:clockfree}, the lemmas were stated in terms of $\Pi_{init}$ and $\Pi_{out}$ clauses. However, the elements of these clauses that mattered for this analysis were the $\Pi_{start}$ and $\Pi_{stop}$ projectors within, which together required that one of the clock qudits in the chain was in an active state. Since $\piinitpair$ also contains the $\Pi_{start}$ projector, for these particular statements, $\piinitzero$ and $\piinitpair$ clause are equivalent. Thus, there are no new cases to consider in the analysis of clock components. On the other hand, there are four new cases to consider concerning the logical qudits of an instance: two for the logical qudits of a single clock component, and two slightly different versions of these where the logical qudit is shared across sub-instances.

\begin{lemma} \label{lemma:piinitplus1}
    Consider a clock component that has at least one $\piinitdot$ and one $\Pi_{out}$ clause. If a logical qudit is acted on by both a $\piinitzero$ and a $\piinitpair$ clause, stemming from the same clock qudit, the instance is unsatisfiable.
\end{lemma}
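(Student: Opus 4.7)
The plan is to show that the requirements imposed by the two initialization clauses are inconsistent whenever the shared clock qudit is active, and then invoke \cref{prop:activetacc} to argue that this activity is unavoidable. Let $l$, $a$, and $c_0$ denote the shared logical qudit, the auxiliary qudit of the $\piinitpair$ clause, and the shared clock qudit, respectively. Because the clock component contains both a $\piinitdot$ and a $\Pi_{out}$ clause, \cref{prop:activetacc} applies: any potentially satisfying state $\ket{\psi}$ of the instance must have at least one basis state in its superposition in which $c_0$ is in the state $\ket{a}$.

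First I would isolate this ``active'' branch of the superposition by writing $\ket{\psi} = \alpha \ket{\chi} \otimes \ket{a}_{c_0} + \ket{\chi^\perp}$, where $\ket{\chi^\perp}$ collects the basis states in which $c_0 \neq \ket{a}$ and where $\alpha \neq 0$ by \cref{prop:activetacc}. On this active branch, the term $(I - \ket{0}\!\bra{0})_l \otimes \ket{a}\!\bra{a}_{c_0}$ inside $O_{init}$ forces the reduced state on $l$ to be $\ket{0}$, so $\ket{\chi}$ restricted to $l$ and $a$ must factor as $\ket{0}_l \otimes \ket{\phi}_a$ for some (possibly improper) state $\ket{\phi}_a$. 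Simultaneously, the term $(I - \ket{\Phi^+}\!\bra{\Phi^+})_{l,a} \otimes \ket{a}\!\bra{a}_{c_0}$ inside $O_{init}^{\ket{\Phi^{\scaleto{+}{3pt}}}}$ forces the joint reduced state on $l,a$ in this branch to lie in the one-dimensional subspace spanned by $\ket{\Phi^+}_{l,a}$.

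The key step, and the only one that requires a short verification, is the observation that no product state $\ket{0}_l \otimes \ket{\phi}_a$ lies in $\Span\{\ket{\Phi^+}_{l,a}\}$. This is immediate from the Schmidt decomposition: $\ket{\Phi^+}$ has Schmidt rank $2$, while any state of the form $\ket{0}_l \otimes \ket{\phi}_a$ has Schmidt rank $1$. Equivalently, $\braket{\Phi^+ | 0,\phi} = \tfrac{1}{\sqrt{2}}\braket{0|\phi}$, and requiring $\ket{0,\phi} = e^{i\theta}\ket{\Phi^+}$ would force the $\ket{1,1}$ amplitude to vanish, contradicting the nonzero amplitude of $\ket{\Phi^+}$ on $\ket{11}$.

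Therefore, in the active branch, at least one of the two projectors $\piinitzero$ or $\piinitpair$ is violated with positive weight, so $\sum_i \bra{\psi}\Pi_i\ket{\psi} > 0$ for every candidate state $\ket{\psi}$ respecting \cref{prop:activetacc}; states not respecting it are already non-satisfying by that proposition. Hence no satisfying state exists and the instance is unsatisfiable. I do not foresee any real obstacle here: the entire content of the lemma is the entanglement-vs-product-state incompatibility, and all the supporting infrastructure (that $c_0$ is forced to be active, and that the clauses' penalties combine additively) is already established.
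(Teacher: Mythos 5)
Your proof is correct and follows the same route as the paper's: invoke \cref{prop:activetacc} to show that both initialization constraints must be simultaneously enforced on the active branch, then observe that $\ket{0}_l$ and $\ket{\Phi^+}_{l,a}$ are incompatible requirements on the shared logical qudit. You spell out the branch decomposition and the Schmidt-rank argument more explicitly, but the substance is the paper's one-liner $(I - \ket{0}\!\bra{0})_l\ket{\Phi^+}\neq 0$ viewed from the opposite direction.
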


\begin{proof}
    First, recall that \cref{prop:activetacc} shows that if the instance is satisfiable, both clauses must enforce their constraints on the logical qudit. Satisfying the $\piinitpair$ clause requires the logical and witness qudits to be in the state $\ket{\Phi^+}$. It is easy to see that $(I - \ket{0} \! \bra{0})_1 \ket{\Phi^+} \neq 0$ and so the $\Pi_{init}$ clause cannot be satisfied.
\end{proof}

\begin{lemma} \label{lemma:piinitplus2}
    Consider two clock components, both of which have at least one $\piinitdot$ and one $\Pi_{out}$ clause. If one clock component acts on a logical qudit with a $\piinitzero$ clause, and the other component also acts on this qudit with a $\piinitpair$ clause, the instance is unsatisfiable.
\end{lemma}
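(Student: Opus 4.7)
The plan is to proceed by contradiction: assume the instance admits a satisfying state $\ket{\psi}$ and derive incompatible constraints on the reduced state of the shared logical qudit $L$.

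First I dispose of the easy subcase where $L$ appears in at least one $\Pi_{prop}$ clause of either sub-instance. Since $L$ is simultaneously acted on by a $\piinitzero$ or $\piinitpair$ clause of the other sub-instance (which together with $\Pi_{out}$ makes the two TACCs overlap at a logical qudit), \cref{lemma:sharedhprop} applies directly and yields unsatisfiability.

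For the remaining subcase, $L$ does not appear in any $\Pi_{prop}$ clause of either sub-instance. I decompose $\ket{\psi} = \sum_t \ket{\Psi_t^{(1)}} \otimes \ket{C_t^{(1)}}$ in the legal clock basis of sub-instance~$1$'s TACC, with $\ket{\Psi_t^{(1)}}$ living on all other qudits (including $L$, $\mathrm{Aux}$, and the entirety of sub-instance~$2$). The null-space description of $\Pi_{prop,U}$ in \cref{eqn:propsat} forces $\ket{\Psi_{t+1}^{(1)}} = U_{t+1}^{(1)} \ket{\Psi_t^{(1)}}$ on the data register. Because each $U_{t+1}^{(1)}$ acts trivially on $L$ by hypothesis, and because $\piinitzero$ forces $\ket{\Psi_0^{(1)}} = \ket{0}_L \otimes \ket{\chi_0}$, induction on $t$ yields $\ket{\Psi_t^{(1)}} = \ket{0}_L \otimes \ket{\chi_t}$ for every legal $t$. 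Reassembling, $\ket{\psi}$ itself factors as $\ket{0}_L \otimes \ket{\phi_{\mathrm{rest}}}$, so $L$ is pinned to $\ket{0}$ and decoupled from every other qudit.

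I finish by invoking \cref{prop:activetacc} on sub-instance~$2$: the $c_0^{(2)} = \ket{a}$ component of $\ket{\psi}$ must have non-zero amplitude. On that component, the $\piinitpair$ clause requires $(L,\mathrm{Aux})$ to lie in $\ket{\Phi^+}$, whose reduced state on $L$ is $I/2 \neq \ket{0}\!\bra{0}$; a direct expansion using $\ket{0}_L \otimes \ket{\mathrm{Aux}\text{-stuff}}$ shows that $(I - \ket{\Phi^+}\!\bra{\Phi^+})_{L,\mathrm{Aux}}$ annihilates this component only when the component is zero, contradicting the conclusion from \cref{prop:activetacc}. The main technical hurdle is the propagation step: one must project onto sub-instance~$1$'s clock basis without yet imposing sub-instance~$2$'s constraints, and verify that the $\Pi_{prop}$ null-space correlation transports the $L$-factor coherently through every time slice even when $L$ may in principle be entangled with $\mathrm{Aux}$ via sub-instance~$2$'s clauses. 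Once that coherent-product structure is established, the contradiction with $\piinitpair$ is almost immediate.
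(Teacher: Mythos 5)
Your proof is correct and follows essentially the same approach as the paper's: split on whether the shared qudit $L$ appears in a $\Pi_{prop}$ clause (disposing of that case via \cref{lemma:sharedhprop}), and otherwise argue via \cref{prop:activetacc} that $L$ must simultaneously be pinned to $\ket{0}$ by the $\piinitzero$ clause and form a $\ket{\Phi^+}$ pair under the $\piinitpair$ clause, which is impossible. Your explicit time-slice decomposition showing that $L$ factors out of $\ket{\psi}$ is a more detailed rendering of the paper's terse remark that the qudit "must remain fixed to one state," but the underlying argument is identical.
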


\begin{proof}
    Just like in the previous proof, \cref{prop:activetacc} shows that if the instance is satisfiable, both clauses must enforce their constraints on the logical qudit. Now, recall that if the qudit is also present in a $\Pi_{prop}$ clause from any either clock component, the instance is unsatisfiable by \cref{lemma:sharedhprop}. Therefore, the qudit must then remain fixed to one state. Again, the qudit cannot be both $\ket{0}$ and part of a maximally entangled state $\ket{\Phi^+}$, and so the instance is unsatisfiable.
\end{proof}
\indent
One might think that there is a third scenario with a similar statement, which is when the $\piinitzero$ and $\piinitpair$ clause stem from different clock qudits of the same clock component (which may occur given the strangely shaped sub-instances like those of \cref{fig:semilinearsat}). For this case, we simple observe that by the analysis of \cref{subsubsection:clockfree}, if the sub-instance is satisfiable, these two clock qudits can be thought of as forming part of two smaller disjoint clock components. This is the scenario of \cref{lemma:piinitplus2}.

The other two new cases are:

\begin{lemma} \label{lemma:piinitplus3}
    Consider a clock component that has at least one $\piinitdot$ and one $\Pi_{out}$ clause. If two $\piinitpair$ clauses act on the same clock and auxiliary qudits, but act on distinct logical qudits, the instance is unsatisfiable. Similarly, if two $\piinitpair$ clauses act on the same clock and logical qudits, but act on distinct auxiliary qudits, the instance is unsatisfiable.
\end{lemma}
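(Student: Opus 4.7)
The plan is to derive a contradiction from monogamy of entanglement: no qudit can be maximally entangled with two distinct partners. Since the two halves of the lemma are symmetric, I would treat them uniformly, writing $S$ for the shared qudit (the common auxiliary qudit in the first statement, the common logical qudit in the second) and $Q_1, Q_2$ for the two distinct qudits appearing in the two $\piinitpair$ clauses. Both clauses involve the same clock qudit, which I call $c_0$.

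The first step is to apply \cref{prop:activetacc} to conclude that in any candidate satisfying state $\ket{\psi}$, the clock qudit $c_0$ must appear as $\ket{a}$ in at least one basis term of its clock-register expansion $\ket{\psi} = \sum_C \ket{\phi_C} \otimes \ket{C}$. The hypothesis that the clock component contains at least one $\piinitdot$ and one $\Pi_{out}$ clause is exactly what is needed to make the proposition applicable to $c_0$ (which is itself a $\piinitdot$-qudit). For every clock basis state $C$ in which the $c_0$-register reads $\ket{a}$, the two $\piinitpair$ clauses simultaneously impose $(I - \ket{\Phi^+}\!\bra{\Phi^+})_{Q_i, S} \ket{\phi_C} = 0$ for $i=1,2$; that is, $\ket{\phi_C}$ must lie in $\ket{\Phi^+}_{Q_1 S} \otimes \mathcal{H}_{\mathrm{rest}_1}$ and simultaneously in $\ket{\Phi^+}_{Q_2 S} \otimes \mathcal{H}_{\mathrm{rest}_2}$.

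The heart of the argument is then a textbook monogamy observation: if the marginal of $\ket{\phi_C}$ on $Q_1 S$ is the pure state $\ket{\Phi^+}\!\bra{\Phi^+}$, then $\ket{\phi_C}$ must factorize as $\ket{\Phi^+}_{Q_1 S} \otimes \ket{\chi}_{\mathrm{rest}}$, where ``rest'' contains $Q_2$. Tracing back down to $Q_2 S$ would then yield the separable marginal $(I_S/2) \otimes \rho_{Q_2}$, which cannot equal $\ket{\Phi^+}\!\bra{\Phi^+}$. Hence no non-zero $\ket{\phi_C}$ whose clock term has $c_0 = \ket{a}$ can satisfy both $\piinitpair$ clauses simultaneously. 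Combined with \cref{prop:activetacc}, which forces at least one such term to be non-zero, this rules out any satisfying state.

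The only mild obstacle I foresee is the bookkeeping of the clock-register superposition: I need to argue that the ``condition $c_0 = \ket{a}$'' really does reduce the $\piinitpair$ constraints to pure-state conditions on the individual $\ket{\phi_C}$'s (this follows because the penalty term of each $\piinitpair$ is a tensor product with $\ket{a}\!\bra{a}_{c_0}$, so it acts block-diagonally on the clock basis). Once that is handled, the monogamy step is a short one-liner and the proof is complete.
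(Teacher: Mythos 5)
Your proof is correct and follows essentially the same route as the paper's: invoke \cref{prop:activetacc} to force $c_0=\ket{a}$ on some branch of the superposition, then observe that both $\piinitpair$ clauses would demand the shared qudit be in a maximally entangled pair with two distinct partners, violating monogamy of entanglement. You spell out the monogamy step a bit more explicitly (pure bipartite marginal $\Rightarrow$ factorization $\Rightarrow$ separable complementary marginal), which the paper leaves implicit, but there is no substantive difference.
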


\begin{proof}
    This follows from monogamy of entanglement. First, observe that since the $\piinitpair$ clauses must enforce their constraints on the auxiliary and logical qudits (\cref{prop:activetacc}), the logical qudits cannot be $\ket{?}$ due to the $\ket{?}\!\bra{?} \otimes \ket{a}\!\bra{a}$ projector. Hence, the logical qudits become qubits. Then, in the former scenario, the auxiliary qubit is required to be maximally entangled with two logical qudits. In the latter scenario, it is the logical qubit that is required to be maximally entangled with two auxiliary qubits. Both violate monogamy of entanglement and so the instance is unsatisfiable.
\end{proof}

\begin{lemma} \label{lemma:piinitplus4}
    Consider two clock components, both of which have at least one $\piinitdot$ and one $\Pi_{out}$ clause. If one clock component acts on a logical qudit with a $\piinitpair$ clause, and the other component also acts on it with a $\piinitpair$ clause, the instance is unsatisfiable. Similarly, if this is also the case for an auxiliary qudit, the instance is unsatisfiable.
\end{lemma}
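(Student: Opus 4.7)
The proof plan extends the monogamy argument of \cref{lemma:piinitplus3} across clock components by invoking the initialization-activation principle used in \cref{lemma:piinitplus2}. I will treat Case 1 (shared logical qudit) first; Case 2 (shared auxiliary qudit) will follow by an analogous argument.

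For Case 1, suppose a logical qudit $l$ is acted on by two $\piinitpair$ clauses: one in clock component $\mathcal{K}_1$ with auxiliary partner $a_1$, and one in clock component $\mathcal{K}_2$ with auxiliary partner $a_2$, where $a_1 \neq a_2$. First I would apply \cref{prop:activetacc} to each component independently. Because each $\mathcal{K}_i$ contains at least one $\piinitdot$ and one $\Pi_{out}$ clause, its distinguished clock qudit $c_0^{(i)}$ must be $\ket{a}$ in at least one basis state of any satisfying superposition. Consequently, both $\piinitpair$ clauses enforce their constraints on $l$: the term $\ket{?}\!\bra{?}_1 \otimes \ket{a}\!\bra{a}_3$ forces $l$ away from the undefined state, and $(I - \ket{\Phi^+}\!\bra{\Phi^+})_{1,2} \otimes \ket{a}\!\bra{a}_3$ demands that $l$ form a $\ket{\Phi^+}$ Bell pair with its auxiliary partner at that moment. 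Next I would invoke \cref{lemma:sharedhprop} to rule out $l$ appearing in any $\Pi_{prop}$ clause of either component; otherwise the instance would already be unsatisfiable by that lemma. Since $l$ is therefore never acted on by a unitary, its reduced state is time-independent, so both Bell-pair requirements must hold simultaneously on the same reduced state of $l$. Monogamy of entanglement forbids $l$ from being maximally entangled with two distinct qudits $a_1$ and $a_2$ at once, so no satisfying state exists.

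For Case 2, the roles of logical and auxiliary qudits swap: the shared qudit $a$ sits in $\piinitpair$ clauses with distinct logical partners $l_1$ and $l_2$. The activation argument via \cref{prop:activetacc} is unchanged, and by the construction in \cref{defn:cslctqsat} an auxiliary qudit never participates in any $\Pi_{prop}$ clause, so the analog of \cref{lemma:sharedhprop} is automatic and $a$'s reduced state is fixed. The same monogamy obstruction then rules out a satisfying state. The only subtle point I anticipate is verifying that the two components' activation moments are compatible with a single global state on the shared qudit; this is handled by the observation that the shared qudit evolves trivially, so its reduced state is the same at all times and both Bell-pair constraints must apply in parallel.
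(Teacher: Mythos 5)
Your proof is correct and takes essentially the same route as the paper: both reduce to \cref{lemma:piinitplus3} by using \cref{prop:activetacc} to force both $\piinitpair$ clauses to be enforced and \cref{lemma:sharedhprop} (or, for auxiliary qudits, the fact that they never appear in $\Pi_{prop}$ clauses) to fix the shared qudit's reduced state, after which monogamy gives the contradiction. Your explicit treatment of the ``activation compatibility'' point is a slightly fuller rendering of the same idea the paper compresses into its one-line proof.
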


\begin{proof}
    The proof is the same as for \cref{lemma:piinitplus3} if we note that due to \cref{lemma:sharedhprop}---which states that a shared qudit cannot be acted on by any $\Pi_{prop}$ clause---these qudits must always remain fixed to a single state.
\end{proof}
\indent
Now, let us address the simultaneous propagation clauses. In \cref{prop:simultaneousprop}, we showed that $k$ simultaneous $\Pi_{prop}$ clauses were satisfiable iff $U_{t+1,i} \ket{\phi_t}= U_{t+1,j} \ket{\phi_t} \; \; \forall i,j \in \{1,\ldots,k\}$, where $\ket{\phi_t} = U_t \ldots U_1 \ket{\phi_0}$ and $\ket{\phi_0}$ was the initial state of the computation. In this construction we can extend this result by making use of the new clauses of the instance and the reversible nature of the gate set.

\begin{lemma}\label{lemma:simultaneouspropcorp}
    Suppose $k$ distinct $\Pi_{prop}$ clauses act on clock qudits $c_t$ and $c_{t+1}$ with $0 \leq t < T$, and let $R_{t+1,1}, \ldots, R_{t+1,k}$ denote the reversible gates associated with these $k$ clauses. The clauses are satisfiable iff $R_{t+1,i} \ket{\xi_b^t} = R_{t+1,j} \ket{\xi_b^t}$ for all $i,j \in \{1,\ldots,k\}$ and for all $b \in \{0,1\}^p$.
\end{lemma}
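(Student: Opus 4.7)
The plan is to reduce this lemma to \cref{prop:simultaneousprop} by expanding the privileged history state into a superposition indexed by the auxiliary basis and then exploiting the fact that the reversible gates never touch the auxiliary register.

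First I would invoke \cref{prop:simultaneousprop}, which already tells us that the $k$ simultaneous clauses are satisfiable if and only if $R_{t+1,i}\ket{\phi_t}=R_{t+1,j}\ket{\phi_t}$ for every $i,j\in\{1,\dots,k\}$, where $\ket{\phi_t}=R_{t,0}\cdots R_{0,0}\ket{\phi_0}$ and the initial state $\ket{\phi_0}$ now takes the form $\ket{0}^{\otimes q}\otimes\ket{\Phi^+}^{\otimes p}$ due to the new $\piinitpair$ clauses. Using the identity $\ket{\Phi^+}^{\otimes p}=2^{-p/2}\sum_{b\in\{0,1\}^p}\ket{b}_L\otimes\ket{b}_{Aux}$ and the fact that the $R$-gates act trivially on the $Aux$ register (since, by the discussion in \cref{subsection:satisfiability} and \cref{lemma:sharedhprop} applied to the shared auxiliary qudits, no $\Pi_{prop}$ clause acts on an auxiliary qudit), I can push every $R$-unitary through to act on the logical register alone. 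Writing $\ket{\xi_b^t}:=R_{t,0}\cdots R_{1,0}(\ket{0}^{\otimes q}\otimes\ket{b}_L)$ this gives
\begin{equation*}
\ket{\phi_t}=\frac{1}{2^{p/2}}\sum_{b\in\{0,1\}^p}\ket{\xi_b^t}\otimes\ket{b}_{Aux}.
\end{equation*}

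Next I would apply $R_{t+1,i}$ and $R_{t+1,j}$ to this decomposition and take the difference. Because the operators act only on the logical subspace and the $\{\ket{b}_{Aux}\}_{b\in\{0,1\}^p}$ form an orthonormal family, one obtains
\begin{equation*}
(R_{t+1,i}-R_{t+1,j})\ket{\phi_t}=\frac{1}{2^{p/2}}\sum_{b\in\{0,1\}^p}\bigl[(R_{t+1,i}-R_{t+1,j})\ket{\xi_b^t}\bigr]\otimes\ket{b}_{Aux},
\end{equation*}
and the orthogonality of the auxiliary branches forces this vector to vanish if and only if every summand vanishes, i.e.\ $R_{t+1,i}\ket{\xi_b^t}=R_{t+1,j}\ket{\xi_b^t}$ for each $b\in\{0,1\}^p$. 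Chaining this equivalence with the one from \cref{prop:simultaneousprop} yields the claim in both directions.

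The only delicate point I expect is bookkeeping: verifying that the auxiliary qudits really are passive under all $R_{s,0}$ for $s\le t$ and under $R_{t+1,i},R_{t+1,j}$, so that the factorisation of $\ket{\phi_t}$ into logical times auxiliary components is preserved at every step. This is guaranteed by the construction (the $\piinitpair$ clauses are the only ones acting on auxiliary qudits, so $\Pi_{prop}$ clauses on an auxiliary qudit would make the instance fall under \cref{lemma:sharedhprop} or fail the role-assignment projectors), but I would state it as a short preliminary remark before doing the index-by-$b$ expansion so that the equality of the two tensor-product vectors can be read off componentwise without ambiguity.
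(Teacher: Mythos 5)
Your proposal is correct and follows essentially the same route as the paper: invoke \cref{prop:simultaneousprop}, expand $\ket{\phi_t}$ in the Bell basis using the passivity of the auxiliary register, and then exploit orthogonality of the $\ket{b}_{Aux}$ branches. The only (cosmetic) difference is the final step—you expand the difference vector $(R_{t+1,i}-R_{t+1,j})\ket{\phi_t}$ componentwise, while the paper computes the overlap $\bra{\phi_t}R^\dagger_{t+1,j}R_{t+1,i}\ket{\phi_t}=1$ and notes a convex average of diagonal terms equals one only if each does; both rest on the same orthogonality fact.
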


\begin{proof}
    Since the classical reversible gates are unitaries, \cref{prop:simultaneousprop} still holds and these simultaneous propagation clauses are satisfiable iff $R_{t+1,i} \ket{\phi_t} = R_{t+1,j} \ket{\phi_t}$ for all $i,j \in \{1,\ldots,k\}$. For an initial state with $p$ $\ket{\Phi^+}$ states, we can write the intermediate state of the computation as $\ket{\phi_t} = 2^{-\frac{p}{2}} \sum_{b \in \{0,1\}^p} \ket{\xi_b^t} \otimes \ket{b}$, and so the clauses are satisfiable iff

    \begin{equation*}
        \begin{aligned}
            1 & = \bra{\phi_t} R^\dagger_{t+1,j} R_{t+1,i} \ket{\phi_t}                                                                            \\
              & = \frac{1}{2^p} \sum_{b,b' \in \{0,1\}^p} \bra{\xi_b^t} \otimes \bra{b'} R^\dagger_{t+1,j} R_{t+1,i} \ket{\xi_b^t} \otimes \ket{b} \\
              & = \frac{1}{2^p} \sum_{b \in \{0,1\}^p} \bra{\xi_b^t} R^\dagger_{t+1,j} R_{t+1,i} \ket{\xi_b^t}.
        \end{aligned}
    \end{equation*}
    \noindent
    Then, it is evident that for the right hand-side to be equal to $1$, $R_{t+1,i} \ket{\xi_b^t} = R_{t+1,j} \ket{\xi_b^t}$ for all basis states $b \in \{0,1\}^p$.
\end{proof}
\indent
In the algorithm of previous constructions, we had to evaluate the satisfiability of the simultaneous propagation clauses by executing a quantum circuit for every pair of clauses. Now, this new lemma will allow us to decide these constraints classically. This will be explained in detail at a later time.

The discussion for active dots remains the same as before.

Finally, let us make a small observation. In \cref{remark:trivial}, we mentioned two ``quantum'' sub-instances that were actually trivially decidable but were intentionally left for the quantum algorithm to decide. Because of the $\piinitpair$ clause, we now have an extra two such sub-instances. These are sub-instances with no undefined or simultaneous propagation clauses, where the logical qudits present in $\Pi_{out}$ clauses are (1) initialized to one half of a Bell pair $\ket{\Phi^+}$ but not acted by any $\Pi_{prop}$ clause, or $(2)$ not initialized. In the first case, the instance is trivially unsatisfiable since its reduced density matrix $\rho = 1/2 (\ket{0}\!\bra{0} + \ket{1} \! \bra{1})$ has support on $\ket{0}$. As before, the second case is trivially satisfiable. All four sub-instances will be decided in the randomized classical algorithm explained below.

\subsubsection{Algorithm}
The first step of the algorithm remains the same as in \cref{subsubsection:algorithm}. The second step requires a small change to indicate that either type of initialization clause is allowed.

\begin{enumerate}[label=(2)]
    \item \textit{Ignore the trivial sub-instances.} For every sub-instance, check whether it contains both $\piinitdot$ and $\Pi_{out}$ clauses. If it does not, ignore all clauses of the sub-instance for the rest of the algorithm. Correctness is given by \cref{lemma:noendpoints}.
\end{enumerate}
\noindent
We now add two new steps between (2) and (3). One to account for \cref{lemma:piinitplus1,lemma:piinitplus2}, and the second to account for \cref{lemma:piinitplus3,lemma:piinitplus4}.

\begin{enumerate}[label=(coRPa), leftmargin=2.2cm]
    \item For every logical qudit in the instance, check whether it is acted on by both a $\piinitzero$ and a $\piinitpair$ clause. If such a qudit exists, reject. Correctness is given by \cref{lemma:piinitplus1,lemma:piinitplus2}.
\end{enumerate}

\begin{enumerate}[label=(coRPb), leftmargin=2.2cm]
    \item For every logical qudit check whether it is connected to several auxiliary qudits. If it is, reject. Similarly, for every auxiliary qudit check whether it is connect to several logical qudits. If it is, reject. Correctness is given by \cref{lemma:piinitplus3,lemma:piinitplus4}.
\end{enumerate}
\noindent
These steps must take place in this specific point in the algorithm---after processing the trivial sub-instances, but before assigning the state of the logical qudits, which occurs in the following step.\footnote{Steps (coRPa) and (coRPb) also process cases related to the shared logical qudits of distinct clock components, which before would only be processed in step (4). Evaluating these sooner does not have any undesired effect.} Afterwards, step (3) requires a small modification to incorporate the $\piinitpair$ clause:

\begin{enumerate}[label=(3)]
    \item \textit{Identify the undefined $\Pi_{prop}$ clauses.} For every logical qudit, check whether it is acted on by at least one $\piinitzero$ or $\piinitpair$ clause. If it is not, mark all $\Pi_{prop}$ clauses this qudit is part of as undefined.
\end{enumerate}
\noindent
Subsequently, steps (4), (5), and (7) are the same as before. In step (6), we replace the quantum algorithm with a classical subroutine acting on ancilla and randomly sampled bits. Let us explain this step in more detail. \\

\medbreak
\noindent
\textbf{Classical subroutine} \\
\noindent
As before, the inputs to this classical subroutine are the sub-instances of \cref{fig:qinputs}, except now the unitaries are reversible classical gates. We let $p$ be the number of logical qudits present in the $\piinitpair$ clauses of a sub-instance and $q$ the number of logical qudits present in $\piinitzero$ clauses. Additionally, let $T$ denote the length of the sub-instance and $k_{t,t+1}$ denote the number of simultaneous $\Pi_{prop}$ clauses acting on clock qudits $c_t$ and $c_{t+1}$.

Ultimately, the goal of the quantum algorithm in previous constructions was to evaluate whether the simultaneous $\Pi_{prop}$ clauses and the final $\Pi_{out}$ clauses could be satisfied, as all other clauses could be determined to be satisfiable. To do so, we discussed an approach based on Bravyi's ideas for $k$-QSAT, which relied on generating the history state and measuring the eigenvalues of all projectors. This method nicely related to the promise of the problem which allowed us to demonstrate the $\BQP_1$ and $\QCMA$ soundness of the algorithm. However, we mentioned that the resulting algorithm was difficult to implement and wasteful. Instead, we calculated the acceptance/rejection probabilities of measuring the eigenvalue of a clause, and developed a more efficient method that reproduced the same statistics. To decide two simultaneous propagation clauses, we applied circuit $\mathcal{C}$ (more accurately $\mathcal{C}_{\mathcal{G}_8}$) shown in \cref{fig:measurecircuit}, and to decide $\Pi_{out}$ clauses, we executed the circuit and at the end of the computation measured the qubits present in these clauses. In this construction, we must rethink the algorithm since we cannot perform Bravyi's method with a classical computer as it requires creating the highly-entangled history state, and also, we cannot the repeat the same algorithm as before because $\mathcal{C}$ is a quantum circuit. We hence design a classical algorithm that yields the same probability as measuring the eigenvalue of a simultaneous $\Pi_{prop}$ clause.

\begin{claim}\label{claim:measuringpropcorp}
    Let $\Pi_{prop}^{(c_t,c_{t+1})}$ be a clause acting on clock qudits $c_t,c_{t+1}$ with associated reversible classical gate $R_{t+1, j}$, where $j \in [k_{t,t+1}]$. Measuring the eigenvalue of projector $\Pi_{prop}^{(c_t,c_{t+1})}$ in the state $\ket{\psi_{phist}}$ is equivalent to the following classical steps:
    \begin{enumerate}
        \item Create the basis state $\ket{\xi_b^t} = R_{t,0} \ldots R_{0,0} \ket{0}^{\otimes q} \otimes \ket{b}$, where $b \in \{0,1\}^p$ is a bitstring sampled uniformly at random.
        \item Flip a coin and compare $R_{t+1,0} \ket{\xi_b^t}$ with $R_{t+1,j} \ket{\xi_b^t}$. Reject if the coin is ``Heads'' and $R_{t+1,0} \ket{\xi_b^t} \neq R_{t+1,j} \ket{\xi_b^t}$.
    \end{enumerate}
\end{claim}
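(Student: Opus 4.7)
The first step is to write $\ket{\psi_{phist}}$ in a form suited to the measurement. Applying the Schmidt decomposition $\ket{\Phi^+}^{\otimes p} = 2^{-p/2}\sum_{b\in\{0,1\}^p}\ket{b}_L\otimes\ket{b}_{Aux}$ and using the fact that the reversible classical gates act only on the logical register, one obtains
\begin{equation*}
    \ket{\psi_{phist}} = \frac{1}{\sqrt{(T+1)2^p}}\sum_{t=0}^{T}\sum_{b\in\{0,1\}^p}\ket{\xi_b^t}_L\otimes\ket{b}_{Aux}\otimes\ket{C_t},
\end{equation*}
where, crucially, each $\ket{\xi_b^t} = R_{t,0}\cdots R_{0,0}\ket{0^q b}$ is itself a \emph{computational basis state}, since reversible classical circuits map basis states to basis states. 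This is the key structural fact that the quantum construction did not have.

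\textbf{Reducing to $\Pi_{work}$.} Next, I would argue that $\bra{\psi_{phist}}\Pi_{prop}^{(c_t,c_{t+1})}\ket{\psi_{phist}}$ collapses to $\bra{\psi_{phist}}\Pi_{work,R_{t+1,j}}\ket{\psi_{phist}}$. By construction $\ket{\psi_{phist}}$ already lies in the null space of every other summand of $O_{prop,R_{t+1,j}}$: role assignments are respected, data qudits never sit in the undefined state so $\Pi_D\otimes\Pi_D$ is saturated and $\Pi_{clock,?}$ is irrelevant, and legal clock transitions imply $\Pi_{clock,D}$ acts as zero. Hence the only component of $\ket{\psi_{phist}}$ that escapes the null space of $O_{prop}$ is exactly the component detected by $\Pi_{work,R_{t+1,j}}$. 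Because $\Pi_{work,R_{t+1,j}}$ is supported on the two-dimensional clock subspace $\mathrm{span}\{\ket{C_t},\ket{C_{t+1}}\}$, only the $t$- and $(t{+}1)$-time slices contribute, and the $\braket{b|b'}_{Aux}=\delta_{b,b'}$ orthogonality from the untouched auxiliary register diagonalises the expectation over $b$.

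\textbf{Key calculation.} Evaluating $\bra{\chi_b}\Pi_{work,R_{t+1,j}}\ket{\chi_b}$ on $\ket{\chi_b}=\ket{\xi_b^t}\ket{ar}+R_{t+1,0}\ket{\xi_b^t}\ket{da}$ is a routine two-by-two computation that yields $1-\mathrm{Re}\bra{\xi_b^t}R_{t+1,j}^{\dagger}R_{t+1,0}\ket{\xi_b^t}$. Because $R_{t+1,j}^\dagger R_{t+1,0}$ is a permutation of the computational basis, this overlap is $1$ when $R_{t+1,0}\ket{\xi_b^t}=R_{t+1,j}\ket{\xi_b^t}$ and $0$ otherwise. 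Collecting the $\tfrac{1}{T+1}\cdot\tfrac{1}{2^p}$ prefactor gives
\begin{equation*}
    \bra{\psi_{phist}}\Pi_{prop}^{(c_t,c_{t+1})}\ket{\psi_{phist}} = \frac{1}{T+1}\Pr_{b\sim\{0,1\}^p}\!\left[R_{t+1,0}\ket{\xi_b^t}\neq R_{t+1,j}\ket{\xi_b^t}\right].
\end{equation*}
The classical procedure rejects with probability $\tfrac{1}{2}\Pr_b[R_{t+1,0}\ket{\xi_b^t}\neq R_{t+1,j}\ket{\xi_b^t}]$. The two expressions therefore differ only by the universal prefactors $\tfrac{1}{T+1}$ versus $\tfrac{1}{2}$; both vanish iff the $k$ simultaneous $\Pi_{prop}$ clauses are jointly satisfiable (by \cref{lemma:simultaneouspropcorp}, taking $i=0$), and both are otherwise bounded below by an inverse polynomial in the instance size. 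This is precisely the sense of ``equivalence'' already used in \cref{claim:measuringprop}. Polynomial-time efficiency is immediate: sampling $b$, computing $\ket{\xi_b^t}$ by sequentially applying the classical gates $R_{0,0},\ldots,R_{t,0}$, and comparing two bitstrings are all classical operations.

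\textbf{Main obstacle.} The main technical subtlety is the reduction $\Pi_{prop}^{(c_t,c_{t+1})}\mapsto\Pi_{work,R_{t+1,j}}$ on the history state, since $\Pi_{prop}$ is \emph{defined} as the projector onto the orthogonal complement of $\mathrm{null}(O_{prop,R_{t+1,j}})$ rather than as a sum of the individual summands in $O_{prop,R_{t+1,j}}$. The argument is not difficult but does require careful bookkeeping of which terms act trivially on each time-slice of $\ket{\psi_{phist}}$, leveraging the structural properties established in \cref{subsection:satisfiability}. Once this reduction is in hand, everything else is a direct calculation exploiting that reversible classical gates produce only $0/1$ amplitudes.
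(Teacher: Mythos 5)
Your proof is correct and takes essentially the same route as the paper: decompose $\ket{\Phi^+}^{\otimes p}$ into the auxiliary basis, use the $\braket{b|b'}_{Aux}=\delta_{b,b'}$ orthogonality to diagonalize the expectation, and observe that $\bra{\xi_b^t}R^\dagger_{t+1,j}R_{t+1,0}\ket{\xi_b^t}\in\{0,1\}$ because reversible classical gates permute the computational basis. The one place where you are more careful than the paper is the normalization: you keep the $\tfrac{1}{T+1}$ prefactor on $\bra{\psi_{phist}}\Pi_{work}\ket{\psi_{phist}}$ and compare it honestly against the classical procedure's $\tfrac{1}{2}$, whereas the paper (reusing \cref{claim:measuringprop}) silently writes $\Pi_{work}\ket{\psi_{phist}}=\Pi_{work}\ket{\theta_{t,t+1}}$ and drops a $\sqrt{2/(T+1)}$ factor, so its stated ``same statistics'' is actually only proportionality by a known polynomial factor; since both quantities vanish together and are otherwise $\geq 1/\mathrm{poly}(n)$, the slip is harmless for completeness and soundness, but your bookkeeping is the tighter version.
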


\begin{proof}
    In the proof of \cref{claim:measuringprop}, we showed that the probability of measuring eigenvalue $1$ for this propagation clause was given by $\Pr (\textnormal{outcome } 1) = 2^{-1}(1 - \bra{\phi_t} U^\dagger_{t+1,j} U_{t+1,0} \ket{\phi_t})$. Then, by a similar calculation to that in the proof of \cref{lemma:simultaneouspropcorp}, in this construction, this probability becomes $\Pr (\textnormal{outcome } 1) = 2^{-1}(1 -  2^{-p} \sum_{b \in \{0,1\}^p} \bra{\xi_b^t} R^\dagger_{t+1,j} R_{t+1,0} \ket{\xi_b^t})$.

    To show that the classical steps above also reject the instance with this probability, first observe that $2^{-p} \sum_{b \in \{0,1\}^p} \bra{\xi_b^t} R^\dagger_{t+1,j} R_{t+1,0} \ket{\xi_b^t}$ is the fraction of bitstrings $b$ for which the basis state $\ket{\xi_b^t}$ satisfies $R_{t+1,0} \ket{\xi_b^t} = R_{t+1,j} \ket{\xi_b^t}$. In other words, this is the probability that in the algorithm we sample a bitstring $b$ for which $R_{t+1,0} \ket{\xi_b^t} = R_{t+1,j} \ket{\xi_b^t}$. Denote this event as $A$. Then, the probability the algorithm rejects, i.e.\ observes both ``Heads'' and $A^C$, is $\Pr(\textnormal{Heads} \cap A^C) = \Pr({\rm Heads}) \Pr(A^C) = 2^{-1} (1 - 2^{-p} \sum_{b \in \{0,1\}^p} \bra{\xi_b^t} R^\dagger_{t+1,j} R_{t+1,0} \ket{\xi_b^t})$.\footnote{$A^C$ denotes the event where $A$ does not occur.} Thus, both methods yield the same statistics and are equivalent.
\end{proof}
\indent
Making use of this claim, the subroutine is then the following:

\begin{enumerate}[label=(6)]
    \item For every sub-instance, do the following:
          \begin{enumerate}[label=(6.\arabic*)]
              \item Create the initial state $\ket{\xi_b^0}$ by sampling $p$ bits at random and initializing another $q$ bits to ``0''.
              \item For every $t \in [T]$, do the following:
                    \begin{enumerate}[label=(6.2.\arabic*)]
                        \item If $k_{t,t+1} \geq 1$:
                              \begin{itemize}
                                  \item[--] Choose an index $j \in [k_{t,t+1}]$. Create another copy of the current state of the computation to obtain two copies of $\ket{\xi_b^t}$. Apply $R_{t+1,0}$ to the first, and $R_{t+1,j}$ to the second.
                                  \item[--] Remove the $\Pi_{prop}$ clause with unitary $R_{t+1,j}$ from the instance.
                                  \item[--] Flip a coin. If $R_{t+1,0} \ket{\xi_b^t} \neq R_{t+1,j} \ket{\xi_b^t}$ and the coin is ``Heads'', reject. Otherwise, go back to step (6.1).
                              \end{itemize}
                        \item Otherwise, apply $R_{t+1,0}$ to the state to obtain $\ket{\xi_b^{t+1}}$.
                    \end{enumerate}
              \item \textit{At this step, the algorithm has produced the basis state $\ket{\xi_b^T}$.} For every logical qudit in the instance with a $\Pi_{out}$ clause, check if any of the corresponding logical bits are ``0''. If there is a such bit, reject.
          \end{enumerate}
\end{enumerate}
\indent
We note that although in this construction there is no need to restart the algorithm after checking whether two simultaneous propagation clauses can be simultaneously satisfied, we choose to do so in order to maintain some resemblance with the algorithms of the previous quantum satisfiability problems.

\subsubsection{Runtime}

The small changes in steps (2) and (3) and the new steps (coRPa), (coRPb), and (6) do not affect the algorithm's runtime significantly and so it still runs in polynomial time. In particular, in steps (coRPa) and (coRPb), the algorithm iterates over all $\mathcal{O}(n)$ logical qudits of the instance, and for each qudit, checks whether it is acted on by a $\piinitzero$ or $\piinitpair$ clause. These steps take $\mathcal{O}(poly(n))$ time. For step (6.1), observe that since $p \in \mathcal{O}(n)$, this step takes $\mathcal{O}(n)$ time. Step (6.2) is similar to step (6.2) of the first construction (see \cref{subsection:algorithmmono}), however, instead of executing circuit $C$ to evaluate the satisfiability of the simultaneous propagation clauses, here we create two copies of the state, apply a reversible gate, and compare each pair of bits. These steps clearly also take $\mathcal{O}(n)$ time, and given that there are at most $\mathcal{O}(poly(n))$ simultaneous propagation clauses, step (6.2) takes $\mathcal{O}(poly(n))$ time for the whole sub-instance. In step (6.3), the algorithm iterates over a subset of the logical qudits in the sub-instance and checks whether they are ``0''. This step takes $\mathcal{O}(n)$ time. Overall, the algorithm takes $\mathcal{O}(poly(n))$ time to decide each sub-instance, and given that there are at most $\mathcal{O}(poly(n))$ sub-instances, step (6) also takes $\mathcal{O}(poly(n))$ time.

\subsubsection{Completeness and soundness}

The only difference between this construction and the one of \cref{section:our}, is the new $\piinitpair$ clause. As mentioned in the beginning of \cref{subsubsection:incorp}, these clauses do not introduce new ways to create satisfiable or unsatisfiable instances in the analysis of clock components, but do create four new cases when analyzing the shared logical qudits. The $\coRP$ algorithm mentioned above is an extension of the one from \cref{section:our} and considers the four new cases. Based on the correctness of the algorithm in that previous section and \cref{lemma:piinitplus1,lemma:piinitplus2,lemma:piinitplus3,lemma:piinitplus4}, trivially unsatisfiable and trivially satisfiable sub-instances are verified deterministically and meet the completeness and soundness conditions. These are decided in steps (1)-(5) of the algorithm above.

The three types of non-trivial sub-instances are similar to those shown in \cref{fig:qinputs}. Our task is to demonstrate that if these sub-instances have a quantum satisfying state, we can reach this conclusion classically with prefect completeness and bounded soundness.

Recall that in \cref{section:monogamy}, we showed that the algorithm's acceptance/rejection probabilities (and therefore completeness and soundness) depended only on the outcomes of measuring the eigenvalues of the simultaneous $\Pi_{prop}$ and final $\Pi_{out}$ clauses of each sub-instance. In \cref{claim:measuringpropcorp} we showed that measuring the eigenvalues of the simultaneous propagation clauses could be done with a $\coRP$ circuit, and the same can be shown for the $\Pi_{out}$ clauses by a straightforward generalization of the analysis presented at the beginning of the section for longer chains.\footnote{In contrast to the previous $\QCMA$ problem where there were more differences between the toy instance and arbitrary sub-instances, here the sub-instances maintain the same structure: a logical qudit acted on by a $\piinitpair$ clause cannot be acted by another such clause or a $\piinitzero$ clause.} These ideas are implemented in step (6) of the algorithm above. Therefore, we can reach the same conclusions as in \cref{section:monogamy}: if $x$ is a yes-instance, the algorithm's acceptance probability is $AP=1$, while in the case that $x$ is a no-instance, $AP \leq 1-1/poly(n)$ which can then be improved to $\leq 1/3$ by repeating the algorithm several times and taking the majority vote.

\subsection{\coRP-hard}

\begin{figure}
    \begin{center}
        \begin{tikzpicture}
            \node (circuit1) at (-5.6,0) {
            \Qcircuit @C=1em @R=0.6em {
            & \qw  & \multigate{5}{\hspace{0.5em} Q_x \hspace{0.5em}} & \qw & \qw &\rstick{\hspace{-1em} 0/1?} \\
            & \qvdotss & \nghost{\hspace{0.5em} Q_x \hspace{0.5em}} & \qvdotss &   \\
            & \qw  & \ghost{\hspace{0.5em} Q_x \hspace{0.5em}} & \qw & \qw\\
            & \qw & \ghost{\hspace{0.5em} Q_x \hspace{0.5em}} & \qw & \qw \\
            & \qvdotss & \nghost{\hspace{0.5em} Q_x \hspace{0.5em}} & \qvdotss &   \\
            & \qw & \ghost{\hspace{0.5em} Q_x \hspace{0.5em}} & \qw & \qw
            \inputgroupv{1}{3}{1.2em}{1.4em}{0^n}
            \inputgroupv{4}{6}{1.2em}{1.4em}{r}
            }
            };

            \node (arrow1) at (-3.1,0) {\scalebox{1.5}{\(\Longleftrightarrow\)}};

            \node (circuit2) at (0,0) {
            \Qcircuit @C=1em @R=0.6em {
            & \qw  & \multigate{5}{\hspace{0.5em} R_x \hspace{0.5em}} & \qw & \qw & \rstick{\hspace{-1em} 0/1?} \\
            & \qvdotss & \nghost{\hspace{0.5em} R_x \hspace{0.5em}} & \qvdotss &   \\
            & \qw  & \ghost{\hspace{0.5em} R_x \hspace{0.5em}} & \qw & \qw\\
            & \qw & \ghost{\hspace{0.5em} R_x \hspace{0.5em}} & \qw & \qw \\
            & \qvdotss & \nghost{\hspace{0.5em} R_x \hspace{0.5em}} & \qvdotss &   \\
            & \qw & \ghost{\hspace{0.5em} R_x \hspace{0.5em}} & \qw & \qw
            \inputgroupv{1}{3}{1.2em}{1.4em}{0^q}
            \inputgroupv{4}{6}{1.2em}{1.4em}{r}
            }
            };

            \node (arrow2) at (2.7,0) {\scalebox{1.5}{\(\Longleftrightarrow\)}};

            \node (circuit3) at (6.7,0) {
            \Qcircuit @C=1em @R=0.6em {
            & \qw  & \multigate{5}{\hspace{0.5em} R_x \hspace{0.5em}} & \qw & \meter \\
            & \qvdotss & \nghost{\hspace{0.5em} R_x \hspace{0.5em}} & \qvdotss &   \\
            & \qw  & \ghost{\hspace{0.5em} R_x \hspace{0.5em}} & \qw & \qw\\
            & \qw & \ghost{\hspace{0.5em} R_x \hspace{0.5em}} & \qw & \qw \\
            & \qvdotss & \nghost{\hspace{0.5em} R_x \hspace{0.5em}} & \qvdotss &   \\
            & \qw & \ghost{\hspace{0.5em} R_x \hspace{0.5em}} & \qw & \qw \\
            &&&&\\
            & \qw & \qw & \qw & \qw \\
            &&&&\\
            & \qvdotss & & \qvdotss & \\
            &&&&\\
            & \qw & \qw & \qw & \qw
            \inputgroupv{1}{3}{1.2em}{1.4em}{\ket{0}^{\otimes q} \;\;\;\;\;\;\;}
            \inputgroupv{4}{12}{1.2em}{3.3em}{\ket{\Phi^+}^{\otimes p} \;\;\;\;\;\;\;\;\;}
            }
            };
        \end{tikzpicture}
    \end{center}
    \caption{Circuits illustrating the equivalence between $\coRP$ and $\coRP_q$. Left: A $\coRP$ circuit that decides an instance $x$ of an arbitrary promise problem. $Q_x$ is a deterministic classical circuit acting on $n$ ancilla bits and a random bitstring $r \sim \text{Unif}(\{0,1\}^p)$ of size $p(n)$. Middle: A $\coRP$ circuit where $Q_x$ is replaced by a classical reversible circuit $R_x$ now acting on $q(n) = n + poly(n)$ ancilla bits. Right: a $\coRP_q$ quantum circuit where $R_x$ now acts on $q(n)$ ancilla qubits and on the first subsystem of $p(n)$ maximally entangled states.}
    \label{fig:corpcircs}
\end{figure}
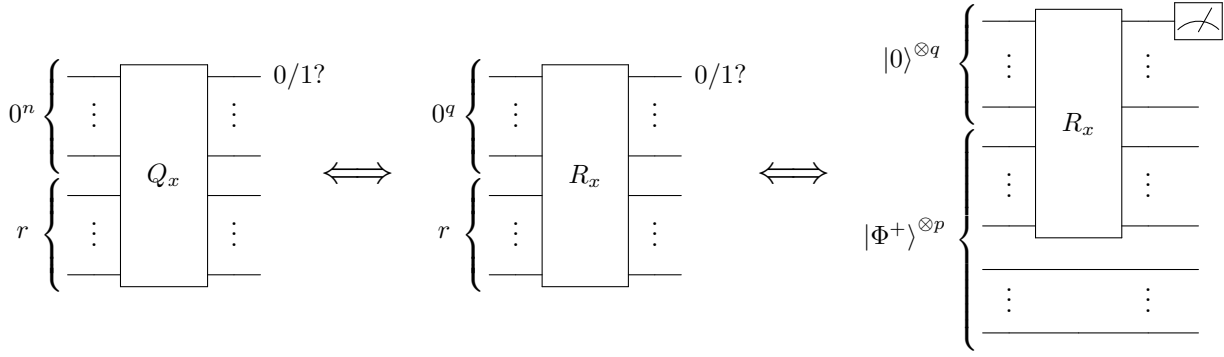

To show that \textsc{Classical SLCT-QSAT} is $\coRP$-hard and complete the proof of \cref{thm:corp}, we use similar ideas to Bravyi \etal \cite{stoquasticLH} when showing that the \textit{stoquastic} LH problem is $\MA$-hard. They show that to encode the evaluation of a classical circuit into a quantum instance, it is useful to first represent the classical circuit as a quantum circuit. To this end, they define $\MA_q$---the quantum version of this class composed purely of reversible classical gates and ancilla qudits $\ket{0}$ and $\ket{+}$---and show that $\MA_q = \MA$. Similarly, we define the complexity class $\coRP_q$ and show that it is equal to $\coRP$. However, instead of using $\ket{+}$ states, we use maximally entangled states $\ket{\Phi^+}$ in preparation for the reduction to a \textsc{Classical} SLCT-QSAT instance.

\begin{defn}[$\coRP_q$]
    A promise problem $A = (A_{yes}, A_{no})$ is in $\coRP_q$ iff there exist polynomials $p,q$ and a uniform family of polynomial-time classical reversible circuits $\{R_n\}$, that take as input a binary string $x$ with $\norm{x} = n$, $q(n)$ ancilla qubits in the state $\ket{0}$, and $p(n)$ Bell states $\ket{\Phi^+}$, such that each $R_n$ acts only on one of the halves of each Bell state, and:
    \begin{itemize}
        \item (Completeness) If $x \in A_{yes}$, then $\Pr \left[ R_n \textnormal{ accepts } x\right] = 1$.
        \item (Soundness) If $x \in A_{no}$, then $\Pr \left[ R_n \textnormal{ accepts } x\right] \leq 1/3$.
    \end{itemize}
\end{defn}

\begin{lemma} \label{lemma:corpq}
    $\coRP_q = \coRP$.
\end{lemma}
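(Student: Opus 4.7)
The plan is to prove both inclusions by observing that, because the reversible circuit only touches one half of each Bell pair, the $\ket{\Phi^+}$ states are computationally indistinguishable from a source of uniformly random classical bits. Concretely, writing $\ket{\Phi^+}^{\otimes p} = 2^{-p/2} \sum_{b \in \{0,1\}^p} \ket{b}_A \otimes \ket{b}_B$ and letting $R_n$ act only on the $A$-register and the $q(n)$ ancillas, linearity gives an acceptance probability
\begin{equation*}
    \Pr[R_n \text{ accepts } x] = \frac{1}{2^p} \sum_{b \in \{0,1\}^p} \Pr\bigl[R_n \text{ accepts } (x, 0^q, b)\bigr],
\end{equation*}
which is exactly the acceptance probability of the classical reversible circuit $R_n$ run with uniformly sampled random input $b$.

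For $\coRP_q \subseteq \coRP$, I would argue as follows. Given a $\coRP_q$ family $\{R_n\}$, define a $\coRP$ algorithm that samples $b \sim \text{Unif}(\{0,1\}^{p(n)})$, sets up the $q(n)$ ancillas to $0$, and simulates $R_n$ (which is just a reversible classical circuit) on input $(x, 0^q, b)$. By the displayed equation, the acceptance probability is preserved exactly, so perfect completeness and the $1/3$ soundness bound carry over directly.

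For $\coRP \subseteq \coRP_q$, I would start with an arbitrary $\coRP$ algorithm given by a family of (possibly irreversible) classical circuits $\{Q_n\}$ that uses $p(n)$ random bits. First, convert each $Q_n$ to an equivalent reversible classical circuit $R_n$ with $q(n) = n + poly(n)$ ancillas by the standard Bennett construction, preserving the output bit used for acceptance. Then, instead of feeding uniformly random bits to $R_n$, feed the $A$-halves of $p(n)$ fresh $\ket{\Phi^+}$ Bell pairs, leaving the $B$-halves untouched. The same displayed equation above now applies in reverse, so $\Pr[R_n \text{ accepts } x] = \Pr[Q_n \text{ accepts } x]$, and both error bounds transfer.

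I do not expect a significant obstacle here: the only thing to be a little careful about is verifying that Bennett-style reversibilization can be done with only polynomial ancilla overhead and without introducing extra error, and that measuring the output qubit in the computational basis after $R_n$ really does give the same distribution as measuring the corresponding bit of $R_n(x,0^q,b)$ with $b$ uniform (which follows because the reduced state of the $A$-register on its own is the maximally mixed state $2^{-p} I$, exactly the distribution of uniformly random bits). With both inclusions established, $\coRP_q = \coRP$.
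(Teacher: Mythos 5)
Your proposal is correct and takes essentially the same approach as the paper: in both directions you decompose $\ket{\Phi^+}^{\otimes p}$ over computational basis states, observe that the untouched $B$-register kills the off-diagonal cross terms (equivalently, its partial trace is $2^{-p}I$), and conclude that the $\coRP_q$ acceptance probability exactly equals the classical acceptance probability averaged over uniform $b$, after which both error bounds transfer immediately. The only cosmetic differences are that you invoke the Bennett reversibilization by name where the paper just cites exact simulation by Toffoli+X, and you phrase the acceptance-probability argument via the reduced density matrix rather than the explicit $\braket{b'|b}=\delta_{b,b'}$ cancellation; these are the same calculation.
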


\begin{proof}
    $\coRP \subseteq \coRP_q$: Let $A$ be an arbitrary (promise) problem in $\coRP$. In other words, there exists a probabilistic classical algorithm $Q_x = Q_L \ldots Q_0$ with $L = poly(n)$ using $p(n)$ random bits that decides an instance $x$ of $A$ with perfect completeness. As mentioned earlier, any $\coRP$ circuit can be perfectly and efficiently simulated by any other circuit using a universal gate set. Hence, there exists a reversible circuit $R_x = R_{L'} \ldots R_0$ with $L' = poly(L) = poly(n)$, acting on $p(n)$ random bits and an additional $poly(n)$ ancilla bits that is equivalent to $Q_x$. Moreover, we can take $R_x$ to be composed of Toffoli and X gates since this set is both universal and reversible.\footnote{The X gates are necessary because of our choice to have all ancillas initialized to bit $0$. Otherwise, the Toffoli gates would only act trivially.} We let $q(n) = n + poly(n)$ denote the total number of ancilla bits in the circuit. The $\coRP_q$ quantum circuit that decides $A$ is simply $R_x$, except each input random bit is replaced with a two-qubit Bell state $\ket{\Phi^+}$, and each ancilla bit is replaced by an ancilla qubit. These circuits are illustrated in \cref{fig:corpcircs}.

    To demonstrate the $\coRP_q$ circuit meets the completeness and soundness conditions, we again observe that it has the same structure as in previous subsections and the evolution of the input state can be broken down into the evolution of independent bitstrings. The probability the circuit accepts is given by

    \begin{equation} \label{eqn:paccept2}
        \begin{aligned}
            p_{acc} & = 2^{-p} \sum_{b,b' \in \{0,1\}^p} \bra{b'} \otimes \bra{\xi_b^L} \Pi^{(1)}  \ket{\xi_b^L} \otimes \ket{b} \\
                    & = 2^{-p} \sum_{b \in \{0,1\}^p}  \bra{\xi_b^L} \Pi^{(1)}  \ket{\xi_b^L}                                    \\
                    & = 2^{-p} \sum_{b \in \{0,1\}^p}  p_{acc,b},
        \end{aligned}
    \end{equation}
    where $\ket{\xi_b^L}$ is the state at the end of the computation for bitstring $b$ and $p_{acc,b} := \bra{\xi_b^L} \Pi^{(1)}  \ket{\xi_b^L} = \bra{b} \otimes \bra{0}^{\otimes q} R^\dagger_1 \ldots R^\dagger_L \Pi^{(1)}  R_L \ldots R_1 \ket{0}^{\otimes q} \otimes \ket{b}$ is the probability the circuit accepts $(x,b)$. From this equation it becomes apparent that $p_{acc}$ is the fraction of bitstrings $b$ that accept. Then, observe that if $x$ is a yes-instance, the classical circuit accepts the instance with certainty for any random bitstring $b \in \{0,1\}^p$, i.e.\ $p_{acc,b} = 1$ for all $b$. Hence, $p_{acc} = 1$ and the quantum circuit also accepts with certainty. On the other hand, if $x$ is a no-instance, we have that in the original circuit at most a third of the bitstrings lead to wrongful acceptance, implying that $p_{acc} \leq 2^{-p} (2^p/3) = 1/3$.

    $\coRP_q \subseteq \coRP$: Suppose $A$ is a problem in $\coRP_q$, and let the rightmost circuit in \cref{fig:corpcircs} be the $\coRP_q$ circuit decides an instance $x$ of $A$. The classical circuit that also decides this instance is the middle circuit of \cref{fig:corpcircs}, where $r$ is a bitstring sampled uniformly at random. To prove that it satisfies the completeness and soundness conditions, consider the acceptance probability of the original $\coRP_q$ circuit given by \cref{eqn:paccept2}. If $x$ is a yes-instance, this time we have that $p_{acc} = 1$. The r.h.s.\ of the equation can only be $1$ if $p_{acc,b} = 1$ for all $b$. Since this is the probability that the classical algorithm accepts with input bitstring $b$, the classical algorithm also accepts with certainty. On the other hand, if $x$ is a no-instance, $p_{acc} = 2^{-p} \sum_b p_{acc,b} \leq 1/3$. Since this is the fraction of bitstrings that accept, the classical algorithm---which samples one of these random bitstrings---also accepts with probability $\leq 1/3$.
\end{proof}

Now, we can more comfortably show how to encode a $\coRP$ circuit into an instance of our problem. Let $R = R_L \ldots R_1$ be the $\coRP_q$ circuit that given an instance $x$ of problem $A$, decides whether $x \in A_{yes}$ or $x \in A_{no}$. The input to the circuit are two ancilla registers $\ket{0}^{\otimes q}$ and $\ket{\Phi^+}^{\otimes p}$ and let $ans$ be the ancilla qubit that provides the decision. The reduction $x \mapsto x'$ then requires $q + p + (L + 1)$ qudits, each of dimension $6$, and is the following:

\begin{enumerate}
    \item Choose $q + p$ qudits to serve as the logical qudits of the computation, and $L+1$ to function as the clock qudits. Define an ordering for each group of qudits.
    \item For all $i \in [q]$, create a $\piinitzero$ clause acting on $c_0$ and $l_i$.
    \item For all $j \in [p]$, create a $\piinitpair$ clause acting on $c_0$ and $l_{q-1+j}$.
    \item For all $k \in [L]$, create a $\Pi_{prop}$ clause with associated unitary $R_{k+1}$ where the two clock qudits of the clause are $c_k$ and $c_{k+1}$ (with $c_k$ being the predecessor of $c_{k+1}$) and the two logical qudits are those that the gate originally acts on.
    \item Create one $\Pi_{out}$ clause acting on $c_L$ and $l_{ans}$.
\end{enumerate}
\noindent
The resulting instance $x'$ consists of a one-dimensional clock chain of length $L$ ($L+1$ clock qudits) with a unique direction, no simultaneous propagation clauses, and all logical qudits initialized to either $\ket{0}$ or $\ket{\Phi^+}$.

\subsubsection{Completeness and soundness}

In the case that $x \in A_{yes}$, $x'$ is satisfied by the history state

\begin{equation*}
    \ket{\psi_{hist}} = \frac{1}{\sqrt{L+1}} \sum_{t = 0}^L \left[ R_{t} \ldots R_{0} \ket{0}^{\otimes q} \ket{\Phi^+}^{\otimes p} \right] \otimes \ket{\underbrace{d \ldots d}_{t} a_t \underbrace{r \ldots r}_{L-t}},
\end{equation*}
\noindent
and so completeness is preserved.

When $x \in A_{no}$, the only nontrivial change in \cref{eqn:hamilsoundnessfree} is that $H_{init}$ now collects both the $(I - \ket{0} \! \bra{0}) \otimes \ket{a} \! \bra{a}$ terms ($q$ of them) and the $(I - \ket{\Phi^+} \! \bra{\Phi^+}) \otimes \ket{a} \! \bra{a}$ terms ($p$ of them). Again, we must only show that $H|_{\mathcal{S}_{init}}$ has no small eigenvalues in both scenarios where one of the logical qudits in the TACC is in the undefined state, and when there's no such logical qudit. Given that we have shown several similar arguments before, we only provide the key points. \\

\medbreak
\noindent
\textbf{At least one undefined logical qudit} \\
\noindent
Here, $H|_{\mathcal{S}_{init}} = H_{init} + H_{prop}(T,u)$. While $H_{init}$ is no longer diagonal, the $(I - \ket{0} \! \bra{0}) \otimes \ket{a} \! \bra{a}$ terms commute with the $(I - \ket{\Phi^+} \! \bra{\Phi^+}) \otimes \ket{a} \! \bra{a}$ terms. Analyzing each set of terms separately, we can observe that $\gamma(H_{init}) \geq 1$. The rest follows in the same way as in the corresponding section of the $\QCMA$ problem. \\

\medbreak
\noindent
\textbf{No undefined logical qudit} \\
\noindent
As before, $H|_{\mathcal{S}_{clock}} = H_1 + H_2$ where $H_1 = H_{init} + H_{prop}$ and $H_2 = H_{out}$. In this construction, $\gamma(H_{1}) \geq 1$, $\gamma(H_2) \geq c/T^2$, and their null spaces are given by

\begin{equation*}
    \begin{aligned}
        \mathcal{S}_1 = & \ket{0}^{\otimes q} \otimes \left( 2^{-p/2} \sum_{b \in \{0,1\}^p} \ket{b} \otimes \ket{b} \right) \otimes \ket{C_0} \oplus  \\
                        & (\mathbb{C}^2)^{\otimes q} \otimes (\mathbb{C}^2)^{\otimes 2p} \otimes \Span\{ \ket{C_1}, \ldots, \ket{C_{L-1}} \} \: \oplus \\
                        & U^\dagger [\ket{1}_{ans} \otimes (\mathbb{C}^2)^{\otimes q+p-1}] \otimes (\mathbb{C}^2)^{\otimes p} \otimes \ket{C_L},
    \end{aligned}
\end{equation*}
\noindent
and
\begin{equation*}
    \mathcal{S}_2 = (\mathbb{C}^2)^{\otimes q} \otimes (\mathbb{C}^2)^{\otimes 2p} \otimes \frac{1}{\sqrt{L+1}} \sum_{t=0}^L \ket{C_t}.
\end{equation*}
\noindent
We can again show that $\max_{\ket{\eta} \in \mathcal{S}_2} \bra{\eta} \Pi_{\mathcal{S}_{1,2}} \ket{\eta} = L-1/L+1$, but now
\begin{equation*}
    \max_{ \ket{\beta} \in \mathcal{\tilde{S}}_{1,1}} \bra{\beta} \Pi_{\mathcal{\tilde{S}}_{1,3}} \ket{\beta} = \frac{1}{2^p} \bra{0}^{\otimes q} \otimes \bra{b} [U^\dagger (\ket{1}\!\bra{1}_{ans} \otimes I)U] \ket{0}^{\otimes q} \otimes \ket{b}.
\end{equation*}
\noindent
This latter quantity represents the fraction of bitstrings that lead to acceptance, and so by the soundness property of the $\coRP$ problem $A$, $\max_{\ket{\beta} \in \mathcal{\tilde{S}}_{1,1}} \bra{\beta} \Pi_{\mathcal{\tilde{S}}_{1,3}} \ket{\beta} \leq \epsilon$ with $\epsilon = 1/3$. Plugging this in the Geometric Lemma, we again obtain that $H|_{\mathcal{S}_{clock}} \succeq c(1 - \sqrt{\epsilon})/(2L)^{-3}$, concluding that $H$ has no small eigenvalues.

\section{Universality of Qubits for QCSPs} \label{section:universality}

In the previous sections, we showed that there are three QSAT problems acting on qudits that are complete $\BQP_1^{\mathcal{G}_8}$, $\QCMA$, and $\coRP$. Here, we refine these statements and show that there are QSAT problems on qubits (albeit with higher locality) that are also complete for these classes. To achieve this, we show that any QCSP on qudits can be reduced to another QSAT problem on qubits using little computational power. We note that this section, apart from some changes in the exposition, stems directly from Meiburg's previous iteration of the paper.

At first glance, this statement may seem trivial as operations on qubits are universal for quantum computation, i.e.\ we can emulate a $d$-qudit with a $\lceil \log_2(d)\rceil$ qubits and carry out unitaries on those qubits. However, there is a nuance around reductions on QCSPs that has not come up before, as previous QCSPs were defined on qubits. To see this, consider one of our QSAT problems, say SLCT-QSAT. It is evident that any satisfiable/unsatisfiable instance of SLCT-QSAT when expressed as qubits keeps the same satisfiability status. However, it is not clear if all input instance generated with these qubit clauses are contained within this class. For a successful reduction, we must have both.

To see this, for a simple example, let us begin by representing the basis clock states using qubits: $\ket{r} = \ket{00}$, $\ket{a} = \ket{01}$, and $\ket{d} = \ket{10}$. The $\Pi_{start} = \ket{r} \! \bra{r}$ clause within $\mathcal{O}_{init}$ can now be written as $\Pi_{start} = \ket{00} \! \bra{00} + \ket{11} \! \bra{11}$, where the last term is to prevent the fourth basis state, which did not exist before. The clause $(\ket{00} \! \bra{00} + \ket{11} \! \bra{1 1})_{1,2} + (\ket{00} \! \bra{00} + \ket{11} \! \bra{1 1})_{2,3}$ acting on three qubits is now valid. The clauses can be satisfied by the state $\ket{0_1 0_2 1_3}$, but now there is some ambiguity on the state that they represent: either we have $\ket{r}$ on qubits 1 and 2, or $\ket{a}$ on qubits 2 and 3. In general, doing this for all of our clauses and considering all input instances that may occur adds a significant level of complexity to the problem, making it difficult to determine if it remains in $\BQP_1$. Moreover, we remark that this is not only particular to our QSAT problems, but in fact applies to all QCSPs and classical CSPs defined on qudits or non-Boolean variables! In general, the issue is that we cannot ensure that the new qubit clauses are applied to qubits in a consistent fashion based on its parent qudit problem. For example, a qubit clause might treat a particular qubit as ``qubit 1'' of a previous $d$-qudit, while another clause might refer to the same qubit as ``qubit 2''. Moreover, the qubit clauses could also ``mix and match'', combining ``qubit 1'' from one previous $d$-qudit with ``qubit 2'' from another $d$-qudit (as in the example with our problem). Overall, these lead to constraints that were unrealizable in the parent qudit problem.

Here, we show that with a different reduction, we can guarantee that an satisfiable/unsatisfiable instance on qubits maps to one on qudits with the same satisfiability status. This is something that is not known to be possible classically! In the quantum world, we can fix the issues mentioned above by again using monogamy of entanglement to bind together our constituent qubits into ordered, entangled larger systems. Each clause in the resulting problem is given a projector that forces this particular ordering of qubits, and any two clauses that try to use the same qubits in multiple ways are frustrated. Formally,

\begin{thm} [\cref{thm:dtob}; formal]
  \label{thm:q2formal}
  For any QCSP $\mathcal{C}$ on $d$-qudits, there is another QCSP $\mathcal{C}'$ on qubits, and $\AC^0$ circuits $f$ and $g$, such that $f$ reduces $\mathcal{C}$ to $\mathcal{C'}$, and $g$ reduces $\mathcal{C}'$ to $\mathcal{C}$. If $\mathcal{C}$ is $k$-local, then $\mathcal{C'}$ can be chosen to be $4\cdot 2^{\lceil \log_2\left(\lceil \log_2(d)\rceil \right) \rceil}k $ local (that is, $O(\log(d))$ times larger.)
\end{thm}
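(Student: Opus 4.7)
The plan is to import the monogamy-of-entanglement device from LCT-QSAT (\cref{section:monogamy}) at the level of binding qubits into qudit-groups. First I would fix $M := 2^{\lceil \log_2 \lceil \log_2(d) \rceil \rceil}$ and designate $M$ qubits (with a fixed linear order) as a single qudit-group: the first $\lceil \log_2(d) \rceil$ qubits carry the logical encoding of the $d$ computational basis states (overflow basis states being projected out), and each of the $M$ qubits is paired with three additional ``link'' qubits, totalling $4M$ qubits per qudit-group. Given the finite set $S$ of allowed $k$-local qudit projectors of $\mathcal{C}$, each $\Pi \in S$ acting on $k$ qudits is lifted to a single projector $\Pi' \in S'$ on the $4Mk$ associated qubits, whose null space is the intersection of (i) the null space of the encoded $\Pi$ on the logical subspaces and (ii) structural constraints requiring, within every qudit-group, that consecutive link registers form $\ket{\Phi^+}$ Bell pairs, in exact analogy to $O_{init}$, $O_{prop}$, $O_{out}$ of LCT-QSAT. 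The collection of all such $\Pi'$ is the finite set $S'$ defining $\mathcal{C}'$, and the locality is exactly $4 \cdot 2^{\lceil \log_2 \lceil \log_2(d) \rceil \rceil} k$.

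Next I would define $f : \mathcal{C} \to \mathcal{C}'$ by assigning each qudit of the input instance a fresh block of $4M$ qubits and each $\mathcal{C}$-clause its lifted $\mathcal{C}'$-clause; this transformation is clearly in $\AC^0$. Completeness follows because any satisfying qudit state tensored with the prescribed $\ket{\Phi^+}$ Bell pairs on all link registers satisfies every $\Pi'$. Soundness is obtained by the same geometric-lemma decomposition (\cref{lemma:geometric}) used in the LCT-QSAT soundness analysis of \cref{subsection:hardnessmono}: a state lying in the common null space of the structural constraints is unitarily equivalent to a qudit state and so incurs the $1/\mathrm{poly}(n)$ penalty guaranteed by $\mathcal{C}$, while any orthogonal component pays a constant penalty on some structural term.

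The reverse reduction $g : \mathcal{C}' \to \mathcal{C}$ is the main novelty because an arbitrary instance of $\mathcal{C}'$ need not respect any global qudit grouping. Each $\Pi'$-clause in an input instance carries an implicit annotation ``qubit $q$ plays role $j$ in qudit-group $i$''; $g$ performs constant-depth consistency checks across every shared qubit and, whenever two clauses disagree on its role, outputs a trivially unsatisfiable qudit instance. Otherwise the annotations are globally consistent, $g$ collapses each qudit-group back into one $d$-qudit, and emits the corresponding $\mathcal{C}$-clause. The key claim to prove here is that any ``inconsistent'' $\mathcal{C}'$ instance is itself unsatisfiable: conflicting role assignments force one link register to form a $\ket{\Phi^+}$ Bell pair with two different partners, which is forbidden by monogamy of entanglement exactly as in \cref{lemma:linearchainmono,lemma:uniqueclockmono} of the LCT-QSAT analysis.

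The hard part will be calibrating the link-register structure so that the structural projector unambiguously rules out \emph{every} inconsistent grouping (not merely the obvious edge-sharing cases), gracefully handles the padding case $d < 2^{\lceil \log_2 d \rceil}$, and remains an honest projector (not merely a positive semi-definite operator) so that $\mathcal{C}'$ truly qualifies as a QCSP. Once that calibration is set, the $\AC^0$ bounds on $f$ and $g$ are immediate, as all required operations reduce to local annotation lookups and unbounded-fan-in AND gates over pairwise consistency checks.
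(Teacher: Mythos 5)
Your high-level architecture is right---encode each $d$-qudit in a padded block of $4M$ qubits with $M = 2^{\lceil \log_2 \lceil \log_2 d \rceil \rceil}$, have each lifted clause carry its own structural projector that binds its block together, and have the backward reduction $g$ reject on inconsistent role annotations---but the gadget you propose to do the binding does not do the job, and that is the entire content of the theorem.

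Bell pairs between consecutive link registers give you exactly what they gave in LCT-QSAT: a \emph{chain}. Chains rule out forks (a link register cannot be maximally entangled with two partners), but they do not carve out rigid blocks. Consider two clauses in a $\mathcal{C}'$ instance where one interprets qubits $\{1,\dots,M\}$ as a qudit and another interprets qubits $\{2,\dots,M{+}1\}$ as a qudit, each imposing its own consecutive-Bell-pair pattern on the link registers that it owns. As long as the two patterns touch different link registers of the shared qubits (and with three link registers per qubit there is room), neither monogamy constraint is violated; you simply get one long chain whose overlapping $M$-windows are read off differently by the two clauses. Nothing forces the instance to be frustrated, so $g$ cannot safely map this to an unsatisfiable $\mathcal{C}$-instance, and soundness of the round trip fails. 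You flag this yourself (``the hard part will be calibrating the link-register structure so that the structural projector rules out \emph{every} inconsistent grouping''), but this is not a calibration: it is the crux of the proof, and the LCT-QSAT monogamy lemmas you cite only exclude branching, not overlapping or shifted windows.

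The paper closes this gap with two genuinely different gadgets, neither of which is a Bell-pair chain. For the $d$-qudit-to-4-qudit step, the structural term $T_2$ has a \emph{one-dimensional} null space spanned by a GHZ-like state $\bigl[1 \otimes R_X(\theta) \otimes R_X(2\theta) \otimes \cdots \otimes R_X((n{-}1)\theta)\bigr]\tfrac{\ket{0}^n + \ket{1}^n}{\sqrt{2}}$. Any bipartition of a pure one-dimensional ground space is impure, so two copies of $T_2$ on overlapping but distinct sets are automatically frustrated (this kills your shifted-window case); and the position-dependent $R_X(j\theta)$ angles make two copies on the \emph{same} set but in different order also disagree, killing permuted groupings. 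For the subsequent 4-qudit-to-qubit step the paper hand-builds a rank-4 null space on 4 qubits ($H_{4\to 2}$) and verifies by exhaustive inspection of all 840 overlapping placements that only the identical placement is zero-energy---a property the authors describe as ``monogamy for whole subspaces,'' which is strictly stronger than monogamy of a single maximally entangled pair. Both steps encode a rigidity your Bell-pair-and-link construction lacks, and without that rigidity the claim that inconsistent $\mathcal{C}'$ instances are unsatisfiable does not go through.

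The minor points in your writeup (locality bookkeeping $4Mk$, the padding projector for $d < 2^{\lceil \log_2 d \rceil}$, and the $\AC^0$ claims for $f$ and $g$) all match the paper and are fine; the issue is wholly in the structural gadget.
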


\begin{proof}
  First, we will show that any problem in $d$-qudits can be reduced to one on 4-qudits. After that, we will reduce to qubits, and finalize by showing that the reduction is in $\AC^0$.

  We will view a 4-qudit as the product of a ``data'' qubit and an ``entanglement'' qubit. A $d$-qudit will be replaced by $n=\,\lceil \log_2(d)\rceil$ many 4-qudits, and the state of the $d$-qudit will be encoded in the product of all the data qubits. If $d < 2^{\lceil \log_2(d)\rceil} \,= 2^n$, that is, if $d$ is not exactly a power of 2, we will have a Hamiltonian term $T_1$ in our clauses to ensure that the last $2^n - d$ states are not used. Acting on entanglement subspaces of the 4-qudits, consider a term $T_2$ whose null space consists of just the vector

  \begin{equation*}
    \left[1\otimes R_X(\theta) \otimes R_X(2\theta) \otimes \ldots \otimes R_X((n-1)\theta)\right]\frac{\ket{0}^n + \ket{1}^n}{\sqrt{2}},
  \end{equation*}
  \noindent
  where $R_X(\phi)$ is the single-qubit rotation around the X-axis by an angle $\phi$, and here we let $\theta = \frac{\pi}{2n}$. This is a kind of GHZ state, which uses a slightly different pair of basis states (instead of just $\ket{0}$ and $\ket{1}$) on each separate qubit. Any bipartition of this state is impure, but since $T_2$ has a one-dimensional nullspace, it cannot be satisfied by any mixed state. Thus the sum of two $T_2$ on any two overlapping but distinct sets of 4-qudits will be frustrated. If two copies of $T_2$ act on the same 4-qudits in a different order, they will apply the wrong angles $R_X(k\theta)$ at those sites, and the ground states will not align---also leading to frustration.

  Each clause $H$ of $\mathcal{C}$ is mapped to a new clause $H'$ that acts as $H$ on the data subspaces of each set of 4-qudits; that has $T_1 = 1 - \sum_i^d \ket{i} \! \bra{i}$ on each clumping of 4-qudits, to ensure that only the first $d$ states are used; and $T_2$ on each clumping of 4-qudits, to ensure that they will only be clustered to each other and in a particular order.

  Then, we want to reduce this from 4-qudits to qubits. Consider the following Hamiltonian on 4 qubits:

  \begin{equation*}
    H_{4\to 2} = 1 - \ket{\psi_1}\bra{\psi_1} - \ket{\psi_2}\bra{\psi_2}  - \ket{\psi_3}\bra{\psi_3}  - \ket{\psi_4}\bra{\psi_4},
  \end{equation*}
  \noindent
  where

  \begin{equation*}
    \begin{aligned}
      \ket{\psi_1} & = \frac{1}{2}\left(\frac{3}{5}\ket{0000} - \frac{4}{5}\ket{0001} + \ket{0100} + \ket{1010} + \frac{8}{17}\ket{1100} + \frac{15}{17}\ket{1111}\right)      \\
      \ket{\psi_2} & = \frac{1}{2}\left(\frac{4}{5}\ket{0000} + \frac{3}{5}\ket{0001} - \ket{0110} + \ket{1001} + \frac{20}{29}\ket{1101} + \frac{21}{29}\ket{1110}\right)     \\
      \ket{\psi_3} & = \frac{1}{2}\left(\frac{5}{13}\ket{0010} + \frac{12}{13}\ket{0011} - \ket{0111} + \ket{1000} - \frac{21}{29}\ket{1101} + \frac{20}{29}\ket{1110}\right)  \\
      \ket{\psi_4} & = \frac{1}{2}\left(\frac{-12}{13}\ket{0010} + \frac{5}{13}\ket{0011} - \ket{0101} + \ket{1011} - \frac{15}{17}\ket{1100} + \frac{8}{17}\ket{1111}\right).
    \end{aligned}
  \end{equation*}
  \noindent
  Each $\ket{\psi_i}$ is orthonormal, so $H_{4\to 2}$ has a null space of dimension four. By inspecting the 840 distinct ways to apply two copies of $H_{4\to 2}$ to seven qubits, it can be checked that each sum will have a ground state above zero, except for the case where they are applied in the same way. This is a kind of ``uniqueness'' property that could be very loosely interpreted as monogamy for whole subspaces, instead of just one state.

  By counting dimensions, one can check that this property is generic: it would hold almost always for any four random vectors. Unfortunately, for any simple and clean expressions one would write down, it would lack this property by one symmetry or another. This is why the simplest construction readily available, given above, is actually quite ugly.

  Given a problem on 4-qudits, we can replace each 4-qudit with a collection of four qubits. Each clause is modified to act on the $\ket{\psi_i}$ basis of qubits instead of $\ket{i}$ basis of the 4-qudits. Then, for each 4-qudit that the clause acted on, we add a copy of $H_{4\to 2}$. The above uniqueness property ensures that no other clause can act on the same qubits in any other order, or mix them with any other set of qubits.

  In the $\mathcal{C}'$ QCSP, any problem where qubits are mixed or applied in inconsistent orders, can immediately be rejected. Some qubits may not be acted on by any clause, and so not correspond to a $d$-qudit in $\mathcal{C}$, but then those qubits can simply be ignored. This leaves us with only correctly grouped qubits, in a certain subspace, that thus function equivalently to the $d$-qudits.

  Combined, this gives a faithful reduction from $d$-qudits to qubits, and back again. It turns a $k$-local Hamiltonian into a $4\lceil \log_2(d)\rceil k$ local Hamiltonian. It remains to check the complexity of the reductions.

  In the above description, the expansion factor is $4\lceil \log_2(d)\rceil$. To get a low circuit complexity, we want the expansion to be a power of two. Thus we round this up to $4\cdot 2^{\lceil \log_2\left(\lceil \log_2(d)\rceil \right) \rceil}$, which will denote $x$; the QCSP $\mathcal{C}'$ is augmented through just adding more qubits to increase the subspace dimension, and then $T_1$ is modified again to prevent those states from being occupied.

  An instance of a $k$-local QCSP $\mathcal{C}$ can be written down as a list of integers, each given by an integer clause type, and $k$ many integers representing the qudits they act on. The clause types of $\mathcal{C'}$ are in one-to-one correspondence with the clause types of $\mathcal{C}$, so those integers remain unchanged. Each clause acting on qudit $i$ now instead acts on qudits $[xi,xi+1,\dots xi+(x-1)]$ in that order. Thus a reduction circuit $f$ only needs to be able to replicate an integer several times, multiply by a constant power of 2, $x$, and add a number $i \in [0,x)$. This is in $AC^0$ (in fact it requires no gates at all).

  For a circuit $g$ to convert back from $\mathcal{C'}$ to $\mathcal{C}$, we need to map qubit numbers back to qudits numbers, and check that no qubits are used in inconsistent fashion. For each collection of qubits $[a_1,a_2,\dots a_x]$ that a clause in $\mathcal{C}'$ is acting on, we can map that to the $d$-qudit number $a_1$. Thus, many qudit numbers will go unused, but this does not affect the correctness: as long as all collections use the same numbers in the same order, they will all be mapped to $a_1$. To check that all qudits are used in a consistent fashion, we need to check for each pair of collections $([a_1,\dots a_x], [b_1,\dots b_x])$ that they do not use qubits in inconsistently. Logically, this reads:

  \begin{equation*}
    \big(((a_i = b_i) \wedge (a_j = b_j)) \vee ((a_i \neq b_i) \wedge (a_j \neq b_j))\big) \wedge (a_i \neq b_j) \wedge (a_j \neq b_i)
  \end{equation*}
  \noindent
  and this must be checked for every collection, for every $i$ and $j$, and then combined by an unbounded fan-in AND. The integer equalities $a_i = b_i$ and $a_i \neq b_i$ can be evaluated with unbounded fan-in AND and OR respectively. This check is all in $\AC^0$. If the check fails, the circuit outputs some fixed clause(s) that are unsatisfiable, otherwise it outputs a repetition of the first clause. This $\AC^0$ circuit checks that the qudits are used consistently, and if they are, gives an equivalent instance in the original language.
\end{proof}
\indent
If we do not care about having the reductions in $\AC^0$, and instead allow $\P$-reductions, then $4\lceil \log_2(d)\rceil k$ locality suffices. This reduction is optimal within a factor of 4, in the sense that encoding one $d$-qudit in several qubits requires at least $\lceil \log_2(d)\rceil$-many qubits.

Now, we can conclude:
\begin{coro}[\cref{coro:dtob}]
  \cref{thm:q2formal}, combined with the {\rm QSAT} problems from \cref{section:our,section:qcma,section:corp} imply:
  \begin{enumerate}
    \item There is a $\BQP_1^{\mathcal{G}_8}$-complete {\rm QSAT} problem on qubits with $48$-local interactions.
    \item There is a $\QCMA$-complete {\rm QSAT} problem on qubits with $48$-local interactions.
    \item There is a $\coRP$-complete {\rm QSAT} problem on qubits with $60$-local interactions.
  \end{enumerate}
\end{coro}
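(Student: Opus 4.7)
The plan is to derive the corollary as three direct applications of \cref{thm:q2formal}, using the P-reduction variant that yields locality $4\lceil \log_2(d)\rceil k$ (mentioned immediately after the proof of \cref{thm:q2formal}), and then verify that completeness transfers across the reduction for each of the three target complexity classes.

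First I would collect the relevant parameters from the earlier sections: SLCT-QSAT is $\BQP_1^{\mathcal{G}_8}$-complete with $k=4$ and $d=6$ (\cref{thm:bqpsmall}); \textsc{Witnessed SLCT-QSAT} is $\QCMA$-complete with $k=4$ and $d=8$ (\cref{thm:qcma}); and \textsc{Classical SLCT-QSAT} is $\coRP$-complete with $k=5$ and $d=8$ (\cref{thm:corp}). Substituting into the bound $4\lceil \log_2(d)\rceil k$ gives the locality numbers $4\cdot 3\cdot 4 = 48$, $4\cdot 3\cdot 4 = 48$, and $4\cdot 3\cdot 5 = 60$, respectively. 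Denoting the qubit variants produced by \cref{thm:q2formal} by appending a ``$_2$'' (e.g.\ SLCT-QSAT$_2$), these are precisely the three QSAT problems whose completeness is to be established.

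Next I would check that the two reductions $f$ and $g$ supplied by \cref{thm:q2formal} actually preserve completeness for each of the three classes. The reduction $f$ from the qudit QCSP $\mathcal{C}$ to the qubit QCSP $\mathcal{C}'$ immediately shows hardness: any instance of a problem hard for $\mathsf{A} \in \{\BQP_1^{\mathcal{G}_8},\QCMA,\coRP\}$ can first be reduced to the qudit problem and then pushed through $f$, and since $f$ is computable in $\P \supseteq \AC^0$ and each of the three classes contains $\P$, hardness is preserved. The reduction $g$ in the opposite direction gives containment: to decide the qubit problem $\mathcal{C}'$, one computes $g$ and then runs the (class-$\mathsf{A}$) decision algorithm for $\mathcal{C}$; the $\AC^0$ check built into $g$ that rejects inconsistently wired instances is just a classical polynomial-time preprocessor, so the overall procedure stays in $\mathsf{A}$.

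The main subtle point, and the place I expect to spend most care, is verifying that this transfer respects the perfect completeness requirement of $\BQP_1^{\mathcal{G}_8}$ and $\coRP$. For $g$ this is automatic since the classical preprocessing is deterministic. For $f$ one must ensure that the extra local terms $T_1$ and $T_2$ introduced in the proof of \cref{thm:q2formal} lie in the appropriate set $\mathcal{P}$ of perfectly measurable projectors from \cref{defn:P}, so that the witness-measurement subroutine of the containment algorithm (see \cref{subsubsection:inqma} and \cref{subsection:algorithmmono}) still admits an exact Clifford+T decomposition. Both $T_1$ (a sum of rank-one diagonal projectors) and $T_2$ (a rank-one projector onto a GHZ-like state defined using $R_X$ rotations by rational multiples of $\pi/2n$) have entries in a cyclotomic field of degree a power of $2$, so by Rudolph's strengthening of Giles--Selinger (cited in \cref{subsubsection:inqma}) they admit exact decomposition; and the modified clauses of $\mathcal{C}'$ are still $\mathcal{O}(1)$-local, keeping the containment algorithm efficient. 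Once these points are checked, assembling the three items of the corollary is just bookkeeping with the locality formula.
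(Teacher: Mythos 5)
Your locality arithmetic and the overall structure are correct and match what the paper intends: the corollary is really just a plug-in to the $\P$-reduction variant of \cref{thm:q2formal}, giving $4\lceil\log_2 6\rceil\cdot 4 = 48$, $4\lceil\log_2 8\rceil\cdot 4 = 48$, and $4\lceil\log_2 8\rceil\cdot 5 = 60$, combined with a routine check that completeness transports across the two $\AC^0$-computable maps $f$ and $g$. Your discussion of hardness via $f$ and containment via $g$ is also the right picture: one never decides $\mathcal{C}'$ by directly measuring its qubit projectors; one classically maps to a $\mathcal{C}$-instance with $g$ and then invokes the already-established containment of $\mathcal{C}$.

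Because of that last point, however, your final paragraph is a detour that should be dropped, and part of it is factually wrong. You don't need $T_1$ or $T_2$ (or $H_{4\to 2}$) to belong to the set $\mathcal{P}$ of perfectly measurable projectors: the containment algorithm for $\mathcal{C}'$ runs the deterministic preprocessor $g$ and then the containment algorithm for $\mathcal{C}$, so the only projectors whose exact Clifford+T measurability matters are those appearing in $\mathcal{C}$, which have already been handled in the earlier sections. Moreover, the specific claim that $T_2$ has entries in a cyclotomic field of degree a power of $2$ is false for the relevant parameters: for $d\in\{6,8\}$ we have $n=\lceil\log_2 d\rceil = 3$, so $\theta = \pi/(2n) = \pi/6$, and the rotations $R_X(\theta), R_X(2\theta)$ involve $\cos(\pi/12)$, $\cos(\pi/6)=\sqrt{3}/2$, etc., which live in $\mathbb{Q}(\zeta_{24})$ or $\mathbb{Q}(\zeta_{12})$, not in any $\mathbb{Q}(\zeta_{2^l})$. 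If this check were actually needed, the corollary would break; it is not needed, precisely because perfect completeness of the deciding algorithm for $\mathcal{C}'$ is inherited through $g$ from the algorithm for $\mathcal{C}$. I would simply delete the last paragraph and state directly that both reductions are class-preserving because every class under consideration contains $\P$ and is closed under polynomial-time mapping reductions on promise problems.
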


\section{Direct Sum and Direct Product Classes} \label{section:sumproduct}

There is a notion of {\em direct sum} (denoted by ``$\oplus$'') and {\em direct product} (denoted by ``$\otimes$'') on CSP languages. To be clear, this is an operation on a CSP, not on instances. For example, we can talk about the languages $\textsc{3-Colorable}\oplus\textsc{4-SAT}$ and $\textsc{3-Colorable}\otimes\textsc{4-SAT}$. This notion extends in a natural way to the quantum setting. This notion appears quite natural\iffalse\footnote{We justify the claim of naturalness by two criteria. First, some colleagues of the authors were able to immediately guess what we meant by "direct product of constraint problems" without other explanation. Second, asking large language models to define such a term, they provided the same definition that we had. However, both the human and machine denied having seen such a concept before.}\fi, but we were unable to find any sources actually discussing this precise notion---possibly because the classical theory is not as exciting, for reasons we will also discuss. Since sum and product QCSPs inherit completeness properties from their constituents, we will be able to construct QCSPs complete for additional classes in this way. We start with the classical case, since the quantum case follows almost identically.

\subsection{Direct product of constraint problems}

Let $L_1$ and $L_2$ be two CSPs, with domains $D_1$ and $D_2$, and allowed constraints $C_1$ and $C_2$.

\begin{defn}[Direct Product of Constraint Problems] \label{defn:directprod}
    Given the CSPs $L_1$ and $L_2$, their {\em direct product} $L_1 \otimes L_2$ is a CSP whose domain is the Cartesian product $D_1 \times D_2$. Each constraint $c_i \in C_1$ (resp.\ $C_2$) of locality $k$ leads to a constraint $c_i'$ in $L_1 \otimes L_2$, also of locality $k$, as follows. A tuple $(v_1,v_2,\dots, v_k) \in (D_1 \times D_2)^{k}$, where each entry $v_i = (v_{i,1}, v_{i,2})$, belongs to $c_i'$ if the tuple $(v_{i,1},\dots, v_{k,1})$ belongs to $c_i$. Each constraint in $L_1 \otimes L_2$ arises this way from a constraint in $L_1$ or $L_2$.
\end{defn}
As a direct product, we have {\em projection} functions $\pi_1$ and $\pi_2$ that map a constraint problem $\mathcal{P}$ in $L_1 \otimes L_2$ to problems $\mathcal{P}_1$ of $L_1$ and $\mathcal{P}_2$ of $L_2$. There are also projections that take variable assignments on $\mathcal{P}$ to variable assignments on $\mathcal{P}_1$ or $\mathcal{P}_2$, by projecting each variable component-wise to $D_1$ or $D_2$. The problem $\mathcal{P}$ is then satisfiable if and only if the projections $\mathcal{P}_1$ and $\mathcal{P}_2$ are both satisfiable. This immediately leads to the following properties.

\begin{thm}[Containment of Direct Products]\label{thm:classical-prod-contain}
    Let $\mathcal{C}$ be a complexity class that is closed under intersection, and powerful enough to compute projections of a direct product CSP. (For instance, any class closed under reductions by local functions.\footnote{Here by {\em local function} we mean an operation which at each symbol of input outputs zero or one symbols of output, using only the last $k$ symbols of input---for some fixed $k$. There is a similar notion of local functions for circuits, where each bit of output only depends on finitely many bits of input, but we include geometric locality. This is an extremely weak notion of reduction, much weaker than $AC^0$-reducibility. The fact that this reduction is weak means that our theorems apply to essentially all complexity classes of interest.}) If the satisfiability problems for CSPs $L_1$ and $L_2$ are both contained in $\mathcal{C}$, then the direct product $L_1 \otimes L_2$ is also contained in $\mathcal{C}$.
\end{thm}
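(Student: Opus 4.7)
The plan is to exhibit, for an arbitrary instance $\mathcal{P}$ of $L_1 \otimes L_2$, a decision procedure that runs in $\mathcal{C}$ by reducing to the satisfiability of $\pi_1(\mathcal{P})$ and $\pi_2(\mathcal{P})$ and combining the outputs. The key structural observation is baked into \cref{defn:directprod}: every constraint of $L_1 \otimes L_2$ is pulled back either from a constraint of $L_1$ (looking only at the first coordinates of its variables) or from one of $L_2$ (looking only at the second coordinates). Consequently, the set of constraints partitions cleanly into an ``$L_1$-part'' and an ``$L_2$-part,'' and they act on disjoint coordinates of the domain $D_1 \times D_2$.

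First I would formalize the equivalence $\mathcal{P} \text{ is satisfiable} \iff \pi_1(\mathcal{P}) \text{ and } \pi_2(\mathcal{P}) \text{ are both satisfiable}$. The forward direction is immediate: given a satisfying assignment $v : \text{Vars} \to D_1 \times D_2$, applying the coordinate projection componentwise yields assignments satisfying $\pi_1(\mathcal{P})$ and $\pi_2(\mathcal{P})$ by the definition of the pulled-back constraints. The reverse direction uses the fact that no constraint of $\mathcal{P}$ couples the two coordinates: given satisfying assignments $v_1$ for $\pi_1(\mathcal{P})$ and $v_2$ for $\pi_2(\mathcal{P})$, the pointwise pair $v(x) = (v_1(x), v_2(x))$ satisfies every constraint of $\mathcal{P}$.

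Next I would describe the algorithm. On input $\mathcal{P}$, compute $\pi_1(\mathcal{P})$ and $\pi_2(\mathcal{P})$; by hypothesis $\mathcal{C}$ is powerful enough to perform these projections (indeed, each projection is a local function of the encoded instance, since it just rewrites domain symbols and forgets one coordinate while keeping the same incidence structure of constraints). Let $A_i$ be the language ``$\pi_i(\mathcal{P})$ is satisfiable,'' for $i = 1, 2$. Each $A_i$ is in $\mathcal{C}$ because it is the composition of a projection reduction (which $\mathcal{C}$ can perform) with $L_i$-satisfiability (which lies in $\mathcal{C}$ by hypothesis). By the equivalence above, the satisfiability of $\mathcal{P}$ equals $A_1 \cap A_2$, and since $\mathcal{C}$ is closed under intersection, satisfiability of $L_1 \otimes L_2$ lies in $\mathcal{C}$.

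The main obstacle, and the reason we spell out the hypotheses precisely, is making sure that the two conditions on $\mathcal{C}$ really suffice; in particular, that projection onto $D_i$ is computable in $\mathcal{C}$. For standard classes this is transparent because projections are local rewrites of the instance encoding, but one should verify it for any class $\mathcal{C}$ one wants to apply the theorem to (this is the role of the parenthetical about local-function reductions). A secondary subtlety is that the partition of constraints into $L_1$- and $L_2$-types must be readable from the encoding, which again is immediate from \cref{defn:directprod} since each constraint carries a label identifying its source clause in $C_1$ or $C_2$. With those points handled, closure under intersection packages the two independent decisions into the single decision required of $L_1 \otimes L_2$.
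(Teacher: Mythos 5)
Your proof is correct and takes essentially the same route as the paper's: compute both projections, decide each in $\mathcal{C}$, and combine via closure under intersection. You spell out explicitly the biconditional between satisfiability of $\mathcal{P}$ and joint satisfiability of its projections, which the paper establishes in the prose immediately before the theorem and then treats as given in the proof.
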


\begin{proof}
    Given an instance $\mathcal{P}$ of $L_1 \otimes L_2$, the projection $\mathcal{P}_1$ can be computed, and then solved, in $\mathcal{C}$. This process forms one language $\mathcal{L}_1$ in $\mathcal{C}$. Likewise, there is a language $\mathcal{L}_2$ in $\mathcal{C}$ for the $L_2$ projection. Then the intersection of $\mathcal{L}_1$ and $\mathcal{L}_2$ is also in $\mathcal{C}$.
\end{proof}
Being closed under intersection is a very mild requirement, essentially equivalent to being able to compute a logical AND. It is satisfied by all classes discussed in this paper, with simple constructions. It is well-known that all classes discussed up to this point satisfy this requirement. An example of a nontrivial case is the closure of PP under intersection \cite{Beigel1991}. Similarly, closure under local functions is also a weak requirement, essentially expressing irrelevance up to local changes in problem encoding.
\par The direct product construction also works the other way, giving hardness:

\begin{thm}[Hardness of Direct Products]\label{thm:classical-prod-hard}
    Let $\mathcal{C}$ be a complexity class, so that the satisfiability problem for the CSP $L_1$ is hard for $\mathcal{C}$ under $M$-reductions, where $M$ is a class of functions closed under composition with computing projections of a CSP. (It suffices that $M$ is closed under composition with local functions.) Then, for any CSP $L_2$, the direct product $L_1 \otimes L_2$ is also hard for $\mathcal{C}$.
\end{thm}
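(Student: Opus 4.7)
The plan is to reduce directly from $L_1$-hardness. Fix an arbitrary promise problem $A \in \mathcal{C}$; by the hypothesis, there is an $M$-reduction $r$ that maps each instance $x$ of $A$ to an instance $\mathcal{P}_1 = r(x)$ of $L_1$ with the same yes/no answer. I would then compose this with a ``lifting'' map $\ell \colon L_1 \to L_1 \otimes L_2$ that takes an $L_1$-instance $\mathcal{P}_1$ on variables $\{x_j\}$ to the $(L_1 \otimes L_2)$-instance $\ell(\mathcal{P}_1)$ on variables $\{x_j'\}$ of domain $D_1 \times D_2$, obtained by replacing each constraint $c_i$ of $\mathcal{P}_1$ with the corresponding $L_1$-type constraint $c_i'$ in $L_1 \otimes L_2$ guaranteed by \cref{defn:directprod}. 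No $L_2$-type constraints are added. The final reduction is $x \mapsto \ell(r(x))$.

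For correctness, I would verify both directions of the equivalence between $\mathcal{P}_1$ and $\ell(\mathcal{P}_1)$. If $\mathcal{P}_1$ is satisfied by an assignment $a \colon \{x_j\} \to D_1$, pick any fixed element $d_2 \in D_2$ and define $a'(x_j') := (a(x_j), d_2)$; by \cref{defn:directprod}, each $c_i'$ accepts the tuple $(a'(x_{i,1}'),\dots,a'(x_{i,k}'))$ precisely because $c_i$ accepts $(a(x_{i,1}),\dots,a(x_{i,k}))$, so $\ell(\mathcal{P}_1)$ is satisfiable. Conversely, if $b'$ satisfies $\ell(\mathcal{P}_1)$, let $b(x_j) := \pi_1(b'(x_j'))$ denote the first-coordinate projection; again by \cref{defn:directprod}, each original constraint $c_i$ is satisfied by $b$, so $\mathcal{P}_1$ is satisfiable. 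Since the soundness/completeness of $r$ already ties $\mathcal{P}_1$'s satisfiability to membership of $x$ in $A$, this chain of equivalences gives the required reduction.

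It remains to check that $\ell \circ r$ still lies in $M$. The lifting $\ell$ is manifestly a local function on the encoding: each constraint descriptor of $\mathcal{P}_1$ is rewritten to the matching $c_i'$ descriptor independently, and each variable index is relabeled into the product domain. Hence $\ell \in M$ (or at least in the class of local functions, which by hypothesis $M$ is closed under composition with), and $\ell \circ r \in M$ follows from the stated closure property. This is actually the only content in the hypothesis on $M$ that gets used here, mirroring the way projections are used in \cref{thm:classical-prod-contain}.

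The ``hard part'' is purely a matter of bookkeeping rather than any nontrivial mathematics: one has to be careful that $\ell$ really is definable from the abstract direct-product data (it is, since \cref{defn:directprod} pins down $c_i'$ uniquely from $c_i$), and that no $L_2$-type constraints need be introduced --- the product instance $\ell(\mathcal{P}_1)$ simply uses the product variable domain but imposes only lifted $L_1$ constraints. I would also remark that the analogous statement holds with $L_2$ playing the role of $L_1$, so $L_1 \otimes L_2$ inherits hardness from either factor; and that the same argument, with the obvious quantum analogue of \cref{defn:directprod}, will carry over verbatim to the QCSP setting later in \cref{section:sumproduct}.
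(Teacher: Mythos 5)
Your proof is correct and takes essentially the same route as the paper: your lifting map $\ell$ is the paper's canonical injection $\iota_1$, and both arguments conclude by observing that $\iota_1 \circ r$ lies in $M$ by the closure hypothesis. You spell out the satisfiability equivalence in a bit more detail (choosing an arbitrary $d_2 \in D_2$ for the forward direction, projecting for the converse), but the substance matches.
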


\begin{proof}
    The hardness of $L_1$ for $\mathcal{C}$ under $M$-reductions means that for any language $\mathcal{L} \in \mathcal{C}$ there is a function $f \in M$ so that $f(s)$ describes a satisfiable instance of $L_1$ iff $s \in \mathcal{L}$. We can map from the instance of $L_1$ to an instance of $L_1 \otimes L_2$ with the canonical injection $\iota_1$. This produces an instance of $L_1 \otimes L_2$ which is satisfiable iff $s \in \mathcal{L}$ which is a function also in $M$, and so the composition $\iota_1 \circ f$ is also in $M$, and so $L_1 \otimes L_2$ is also hard for $\mathcal{C}$.
\end{proof}
Again, the requirement on the reduction is very mild; for instance, completeness under $P$- and $L$-reductions, the two most common notions of completeness, both satisfy this property. These two theorems together lead to a simple result about completeness of direct products.

\begin{defn}[Pairwise Intersection of Classes] \label{defn:PI}
    If $\mathcal{C}_1$ and $\mathcal{C}_2$ are two complexity classes (any sets of languages), then $\mathsf{PI}(\mathcal{C}_1,\mathcal{C}_2)$ is the class that denotes the {\em pairwise intersection} of $\mathcal{C}_1$ and $\mathcal{C}_2$. In other words, it is the class of languages that can be written as the intersection, i.e.\ the logical AND, of a language in $\mathcal{C}_1$ and a language in $\mathcal{C}_2$.
\end{defn}

\begin{thm}[Completeness of Direct Products]\label{thm:classical-prod-complete}
    Let $M$ be a set of functions closed under composition with local functions, and closed under concatenations (i.e.\ if some $f,\, g : \Sigma_1^* \to \Sigma_2^*$ are each in $M$, then $h : x \to f(x)g(x)$ is as well). Let $L_1$ be a CSP complete under $M$-reductions for a class $\mathcal{C}_1$, and likewise $L_2$ be complete for $\mathcal{C}_2$. Assume that each of $\mathcal{C}_1$ and $\mathcal{C}_2$ are closed under reductions by local functions, closed under intersections, and contain the language $\textsc{All}$ of all strings, $\Sigma^*$. Then, the direct product $L_1 \otimes L_2$ is complete under $M$-reductions for $\mathsf{PI}(\mathcal{C}_1,\mathcal{C}_2)$.
\end{thm}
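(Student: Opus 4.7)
The plan is to split the argument into containment and hardness, closely mirroring Theorems~\ref{thm:classical-prod-contain} and~\ref{thm:classical-prod-hard}, while being careful that $\mathsf{PI}(\mathcal{C}_1,\mathcal{C}_2)$ is defined as a pairwise intersection rather than as an intrinsically closed class. All the constructions I have in mind should be expressible via local formatting together with concatenations of the given $M$-reductions, so everything stays inside $M$.

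For containment, I would start from an arbitrary instance $\mathcal{P}$ of $L_1 \otimes L_2$ and extract its two projections $\mathcal{P}_1 := \pi_1(\mathcal{P})$ and $\mathcal{P}_2 := \pi_2(\mathcal{P})$ by a local function that drops clauses from $C_{3-i}$ and forgets the $D_{3-i}$-coordinate of each variable name. By Definition~\ref{defn:directprod}, $\mathcal{P}$ is satisfiable iff $\mathcal{P}_1$ and $\mathcal{P}_2$ are each satisfiable in $L_1$ and $L_2$, since any satisfying assignment splits coordinate-wise and conversely two independent satisfying assignments combine to one on $D_1 \times D_2$. Hence $\mathcal{L}_i := \{\mathcal{P} \colon \pi_i(\mathcal{P}) \text{ is satisfiable in } L_i\}$ is a local reduction to satisfiability of $L_i$, so $\mathcal{L}_i \in \mathcal{C}_i$ by closure under local reductions, and $L_1 \otimes L_2 = \mathcal{L}_1 \cap \mathcal{L}_2 \in \mathsf{PI}(\mathcal{C}_1,\mathcal{C}_2)$ directly from Definition~\ref{defn:PI}.

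For hardness, I would take any $\mathcal{L} \in \mathsf{PI}(\mathcal{C}_1,\mathcal{C}_2)$ and write $\mathcal{L} = \mathcal{L}_1 \cap \mathcal{L}_2$ with $\mathcal{L}_i \in \mathcal{C}_i$. Completeness of $L_i$ for $\mathcal{C}_i$ under $M$-reductions supplies $f_i \in M$ reducing $\mathcal{L}_i$ to satisfiability of $L_i$. I would then build $F \in M$ reducing $\mathcal{L}$ to $L_1 \otimes L_2$ by (i) forming $h(s) = f_1(s)\,f_2(s)$, which lies in $M$ by closure under concatenation, and (ii) composing with a local function $\ell$ that disjointly relabels the variables of the two halves (say by prepending a single marker bit indicating which half a variable comes from), tags each clause with its origin $C_1$ or $C_2$, and reformats the result as a single $L_1 \otimes L_2$ instance on variables valued in $D_1 \times D_2$. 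Then $F := \ell \circ h \in M$ by closure under composition with local functions. By Definition~\ref{defn:directprod}, the $L_1$-clauses of $F(s)$ constrain only the $D_1$-coordinates of the first half of the variables, and symmetrically for the $L_2$-clauses on the second half; the remaining coordinates are unconstrained and can be filled in arbitrarily. Therefore $F(s)$ is satisfiable iff both $f_1(s)$ and $f_2(s)$ are satisfiable, iff $s \in \mathcal{L}_1 \cap \mathcal{L}_2 = \mathcal{L}$.

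The main obstacle I anticipate is verifying that $\ell$ really can be implemented by a local function on the concatenated string $f_1(s)f_2(s)$, since a naive relabelling of variables of $f_2(s)$ would require knowing the length of $f_1(s)$; this is exactly what forces the trick of prepending the origin-marker bit to every variable name inside $f_1$ and $f_2$ before concatenation, so that no global arithmetic is needed. The hypothesis $\textsc{All} \in \mathcal{C}_i$ does not appear in the core construction but ensures the argument applies uniformly: languages $\mathcal{L} \in \mathcal{C}_1$ alone can be written as $\mathcal{L} \cap \textsc{All} \in \mathsf{PI}(\mathcal{C}_1,\mathcal{C}_2)$, so one may always assume both components of the intersection are available, reducing even the degenerate cases to the construction above (using a fixed trivially satisfiable $L_i$-instance concatenated via the same $h$).
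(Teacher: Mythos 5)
Your proof is correct and arrives at the same result, but the containment half takes a genuinely more economical route than the paper's. The paper first establishes that $\mathsf{PI}(\mathcal{C}_1,\mathcal{C}_2)$ is itself a ``reasonable'' class---using $\textsc{All} \in \mathcal{C}_i$ to show $\mathcal{C}_i \subseteq \mathsf{PI}(\mathcal{C}_1,\mathcal{C}_2)$, and using closure of each $\mathcal{C}_i$ under intersection and local reductions to show $\mathsf{PI}(\mathcal{C}_1,\mathcal{C}_2)$ inherits those closures---and then applies \cref{thm:classical-prod-contain} wholesale. You instead inline the argument: you exhibit $\mathcal{L}_i = \{\mathcal{P} : \pi_i(\mathcal{P}) \text{ satisfiable}\} \in \mathcal{C}_i$ via a local projection and observe that $L_1\otimes L_2$-SAT is literally $\mathcal{L}_1 \cap \mathcal{L}_2$, which lands you in $\mathsf{PI}(\mathcal{C}_1,\mathcal{C}_2)$ straight from \cref{defn:PI}. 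As you correctly flag, this bypasses the $\textsc{All}$ and intersection-closure hypotheses entirely for containment---a genuine simplification which shows those two hypotheses are in fact dispensable. Your hardness half is the same concatenation strategy the paper uses. One point where you work harder than necessary: you introduce disjoint relabelling via origin-marker bits to avoid variable-number collisions between $f_1(s)$ and $f_2(s)$. Because the direct product stores the $D_1$- and $D_2$-coordinates of any variable independently, and the $C_1$- and $C_2$-derived clauses in \cref{defn:directprod} each inspect only their own coordinate, the concatenated instance has the required satisfiability behavior even if variable indices overlap---which is why the paper concatenates ``as lists of constraints on numbered variables'' with no renumbering. You only need each emitted clause to carry its $C_1$-vs-$C_2$ type tag, and that tag is already part of the clause encoding in $L_1 \otimes L_2$, so the local post-processing $\ell$ is genuinely trivial and the delicate ``where is the cut point'' issue you anticipated does not arise.
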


\begin{proof}
    First we show containment, then hardness.

    \textbf{Containment:} Since $\textsc{All}$ is in $\mathcal{C}_2$, every language $\mathcal{L}_1$ in $\mathcal{C}_1$ is also in $\mathsf{PI}(\mathcal{C}_1,\mathcal{C}_2)$, as $\mathcal{L}_1 = \mathcal{L}_1 \cap \textsc{All}$. Likewise, every language in $\mathcal{L}_2$ is in $\mathsf{PI}(\mathcal{C}_1,\mathcal{C}_2)$.

    \par Since $\mathcal{C}_1$ and $\mathcal{C}_2$ are each individually closed under intersection, the class $\mathsf{PI}(\mathcal{C}_1,\mathcal{C}_2)$ will be closed under intersection as well. Since $\mathcal{C}_1$ and $\mathcal{C}_2$ are each closed under local functions, $\mathsf{PI}(\mathcal{C}_1,\mathcal{C}_2)$ is closed under local functions as well. As $\mathsf{PI}(\mathcal{C}_1,\mathcal{C}_2)$ satisfies the requirements of \cref{thm:classical-prod-contain}, and the problems $L_1$ and $L_2$ are both contained in $\mathsf{PI}(\mathcal{C}_1,\mathcal{C}_2)$, by this theorem, $L_1 \otimes L_2$ is contained in $\mathsf{PI}(\mathcal{C}_1,\mathcal{C}_2)$.

    \textbf{Hardness:} Any language $\mathcal{L}$ in $\mathsf{PI}(\mathcal{C}_1,\mathcal{C}_2)$ can be written as $\mathcal{L}_1 \cap \mathcal{L}_2$. Since $L_1$ is complete (and thus hard) for $\mathcal{C}_1$, and $L_2$ is complete for $\mathcal{C}_2$, by \cref{thm:classical-prod-hard}, their direct product $L_1 \otimes L_2$ is hard for both $\mathcal{C}_1$ and $\mathcal{C}_2$. Accordingly, there are $M$-computable functions mapping $\mathcal{L}_1$ and $\mathcal{L}_2$ to CSP instances that are satisfiable iff the original strings are in the language. By concatenating these instances (as lists of constraints on numbered variables), we get a new instance that is satisfiable iff both are, i.e.\ only if the string is in both $\mathcal{L}_1$ and $\mathcal{L}_2$. Since $M$ is closed under concatenation, this is an $M$-computable reduction, and $L_1 \otimes L_2$ is hard for $\mathsf{PI}(\mathcal{C}_1,\mathcal{C}_2)$.
\end{proof}

The assumptions on $M$ in \cref{thm:classical-prod-complete} are again very mild, satisfied for $AC^0$-reductions or even weaker. This essentially says that if we have constraint problems complete for ``reasonable" classes $\mathcal{C}_1$ and $\mathcal{C}_2$, then there is a constraint problem complete for $\mathsf{PI}(\mathcal{C}_1,\mathcal{C}_2)$.

\par When is this interesting? If $\mathcal{C}_1 \subseteq \mathcal{C}_2$, then $\mathsf{PI}(\mathcal{C}_1,\mathcal{C}_2) = \mathcal{C}_2$. This is the case with $\P$ and $\NP$, the two most widely studied classes, so there is nothing new. For instance, the direct product of 2-SAT and 3-SAT gives another $\NP$-complete CSP. Schaefer's Dichotomy Theorem states that all Boolean CSPs are either in co-NLOGTIME, $\mathsf{L}$-complete, $\NL$-complete, $\oplus \mathsf{L}$-complete, $\P$-complete or $\NP$-complete under $AC^0$ reductions \cite{Allender2009}. With exception of $\NL$ and $\oplus \mathsf{L}$, all possible pairs from this list have an obvious containment relation, so the only nontrivial consequence would be that there exists a CSP, on a domain of size four, that is $\mathsf{PI}(\oplus \mathsf{L},\NL)$-complete under $AC^0$ reductions. But under coarser reductions such as $P$-reductions, all CSPs collapse to $\P$ or $\NP$, so we would have nothing to say. This will become more interesting in the quantum case!

\subsection{Direct sum of constraint problems}

If direct products let us express (informally) a ``two-input logical AND" of two CSPs, then direct sums let us express ``unbounded-fanin AND of fanin-2 ORs".

\begin{defn}[Direct Sum of Constraint Problems] \label{defn:directsum}
    Given the CSPs $L_1$ and $L_2$, their {\em direct sum} $L_1 \oplus L_2$ is a CSP whose domain is the disjoint union $D_1 \cupdot D_2$. Each constraint in $L_1 \oplus L_2$ is either of the form
    $c_i \cup \big(D_2^k\big)$, where $c_i \in C_1$ is a constraint of locality $k$; or it is $c_i \cup \big(D_1^k\big)$ for some $c_i \in C_2$.
\end{defn}

Informally, a constraint $c_i \in C_1$ becomes a constraint in $L_1 \oplus L_2$, stating ``all variables in the support belong to $D_2$, or they are all belong to $D_1$, and in the latter case they form a satisfying assignment to $c_i$''. Given an instance $\mathcal{P}$ of $L_1 \oplus L_2$ with a single connected component,\footnote{A {\em connected component} of a CSP is a connected component in the graph for that CSP, where the vertices are variables, and there is an edge between variables if they share a constraint.} any satisfying assignment must put all variables in $D_1$ or put all variables in $D_2$. Each option can be inspected and reduces to solving one instance of $L_1$ or $L_2$. If either one is satisfiable, the component is as well. Solving the whole problem then amounts to doing this check for each connected component. This leads to the following property:

\begin{thm}[Containment of Direct Sums]\label{thm:classical-sum-contain}
    Let $\mathcal{C}$ be a complexity class that is closed under union and delimited concatenation. Furthermore assume it is powerful enough to compute projections of a direct sum CSP, and powerful enough to separate a CSP into its connected components. If the satisfiability problems for CSPs $L_1$ and $L_2$ are both contained in $\mathcal{C}$, then the direct sum $L_1 \oplus L_2$ is also contained in $\mathcal{C}$.
\end{thm}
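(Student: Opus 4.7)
The plan is to reduce satisfiability of $L_1\oplus L_2$ to a pattern of $\mathcal{C}$-decisions dictated by the connected-component structure of each instance. First, invoking the assumption that $\mathcal{C}$ can separate a CSP into its connected components, I would preprocess the input $\mathcal{P}$ into sub-instances $\mathcal{P}_1,\dots,\mathcal{P}_m$, one per component. The central structural observation, which follows directly from \cref{defn:directsum}, is that every constraint of $L_1 \oplus L_2$ accepts only tuples lying entirely in $D_1^k$ or entirely in $D_2^k$: mixed tuples are forbidden. Since each connected component of $\mathcal{P}$ is linked together by such constraints, any satisfying assignment restricted to one component must place \emph{every} variable of that component in a single domain.

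Next, I would branch on this dichotomy. If a component $K$ has all variables in $D_1$, then every constraint originating from $C_2$ is trivially satisfied (its support lies in $D_1^k$), while every constraint originating from $C_1$ imposes exactly its underlying relation from $L_1$; the component is thus satisfiable under this branch iff the projection $\pi_1(K)$ is satisfiable as an $L_1$ instance. The symmetric statement holds for the all-$D_2$ branch and $L_2$. Using the projection-computing assumption on $\mathcal{C}$, both $\pi_1(K)$ and $\pi_2(K)$ can be computed, and by hypothesis $L_1, L_2 \in \mathcal{C}$ can then decide them. Closure of $\mathcal{C}$ under union then yields, for each component $K$, a language $\mathcal{L}_K \in \mathcal{C}$ deciding whether $K$ is satisfiable.

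Finally, $\mathcal{P}$ is satisfiable iff \emph{every} component is. Viewing the decomposition of $\mathcal{P}$ as a delimited concatenation of the component descriptions and invoking closure of $\mathcal{C}$ under delimited concatenation, this global AND across the components can be carried out inside $\mathcal{C}$, completing the proof.

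The main obstacle is precisely this final aggregation step. In the product case, only a single two-way AND was needed, cleanly captured by closure under intersection; here the instance may split into polynomially many components, each requiring its own union-type decision, and all of them must be combined. This is exactly the role of the \emph{delimited concatenation} hypothesis, and the technical care lies in verifying that our application fits that closure cleanly---for instance, that the per-component outputs of the separation and projection steps can be concatenated with a separator interpretable by the target $\mathcal{C}$-machine, and that the union step commutes with this concatenation in the required sense.
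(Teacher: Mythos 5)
Your proposal is correct and follows essentially the same route as the paper's own proof: split into connected components, observe that each component must be assigned entirely within $D_1$ or entirely within $D_2$, decide each component by projecting to $L_1$ and $L_2$ and applying closure under union, and aggregate across components using closure under delimited concatenation. The only difference is that you spell out the single-domain dichotomy argument in a bit more detail than the paper (which states it informally just before the theorem), but the skeleton of the argument is identical.
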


\begin{proof}
    To solve an instance of $L_1 \oplus L_2$, the problem is first separated into its connected components. These become separate problems, but can be treated as one, concatenated and appropriately delimited. (In terms of strings, the problem would be written $\textsc{Component1}|\textsc{Component2}|\textsc{Component3}$, etc.) This is solvable if each component is solvable individually. Each component $C \in L_1 \oplus L_2$ is solved by computing the projections $\mathcal{P}_1 \in L_1$ and $\mathcal{P}_2 \in L_2$. The component is satisfiable if either projection is satisfiable, and since $\mathcal{C}$ is closed under union, it suffices that each of $L_1$ and $L_2$ are in $\mathcal{C}$.
\end{proof}
The first several requirements on $\mathcal{C}$ are again mild. Closure under delimited concatenation here means that if $L$ is a language in $\mathcal{C}$, then $L(dL)^*$ is also in $\mathcal{C}$, where $d$ is a fresh symbol not in the alphabet of $L$ and ${}^*$ is the Kleene star as before. In other words, take any combination of one or more strings from $L$, and concatenate them with a fresh symbol $d$ in between, segmenting them; the language of all such concatenations must also belong to $\mathcal{C}$. Intuitively, we can recognize this language by splitting at delimiter, checking ``each part'' of an input string individually and accepting only if they are all individually accepted. The assertion that $\mathcal{C}$ is closed under this operation means that machines recognizing $\mathcal{C}$ are able to carry out such a split-and-check operation.

\par On the other hand, computing connected components is slightly nontrivial, but as shown in Ref.\ \cite{Reingold2008}, it can be done in $\mathsf{L}$. Otherwise, these closure properties are satisfied by every ``reasonable" class above $\mathsf{L}$.

\par One might expect that, by analogy with the direct product, the direct sum should then be complete for the pairwise union of two classes. This would be the case if we only ever had to worry about problems that form a single connected component. By embedding a constraint problem from each of $L_1$ and $L_2$ on the same set of variables, we get a union-like problem. But by placing each constraint problem on disjoint sets of variables, we get an intersection-like problem, that must satisfy each. This motivates the following definition:

\begin{defn}[Star of Pairwise Unions of Classes] \label{defn:SoPU}
    If $\mathcal{C}_1$ and $\mathcal{C}_2$ are two complexity classes, then their {\em star of pairwise unions}, denoted $\mathsf{SoPU}(\mathcal{C}_1,\mathcal{C}_2)$, is a complexity class defined as follows: for each language $L_1 \in \mathcal{C}_1$ and $L_2 \in \mathcal{C}_2$, let $d$ be a fresh symbol that is not in the alphabet of $L_1$ or $L_2$. Then, the language $(dL_1|dL_2)^*$ is in $\mathsf{SoPU}(\mathcal{C}_1,\mathcal{C}_2)$. $\mathsf{SoPU}(\mathcal{C}_1,\mathcal{C}_2)$ is the closure of all such languages under $\mathsf{L}$ (logspace reductions).
\end{defn}
This definition merits a brief explanation. For a pair of languages $L_1$ and $L_2$, what do the strings in the language $L \equiv (dL_1|dL_2)^*$ look like? Given an input string like $d01001110d101101d101001$, it will belong to $L$ if and only if each of $\{01001110,101101,101001\}$ belongs to either $L_1$ or $L_2$. So, with an oracle for $L_1$ or $L_2$, the task becomes segmenting out these $s$-delimited substrings, querying both oracles to see if each segment passes, and accepting the string only if each segment is accepted. If $C$ is a complexity class powerful enough to recognize $L_1$ and $L_2$ itself, and do the segmenting, then $\mathsf{SoPU}(\mathcal{C}_1,\mathcal{C}_2) \in C$.

We now formalize and proof these facts.

\begin{thm}[Hardness of Direct Sums]\label{thm:classical-sum-hard}
    Let $\mathcal{C}$ be a complexity class and $L_1$ be a CSP, such that $L_1\mathsf{-SAT}$ is hard for $\mathcal{C}$ under $M$-reductions, where $M$ is a class of functions closed under composition with $AC^0$. Let $L_2$ be any CSP with at least one unsatisfiable instance. Then the direct sum $L_1 \oplus L_2$ is also hard for $\mathcal{C}$.
\end{thm}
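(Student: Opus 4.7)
The plan is to give a direct reduction: starting from the hypothesized $M$-reduction $f$ from an arbitrary $\mathcal{L} \in \mathcal{C}$ to $L_1\mathsf{-SAT}$, I compose $f$ with a simple transformation $g$ that takes an $L_1\mathsf{-SAT}$ instance and produces a satisfiability-equivalent $L_1 \oplus L_2$ instance. Since $M$ is closed under composition with $AC^0$, if $g$ is $AC^0$-computable then the composition $g \circ f$ is an $M$-reduction from $\mathcal{L}$ to $L_1 \oplus L_2\mathsf{-SAT}$, which establishes hardness.

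The construction of $g$ exploits the given unsatisfiable instance $\mathcal{U}$ of $L_2$ to ``block off'' the $D_2$ branch of the direct sum. Explicitly: given an $L_1\mathsf{-SAT}$ instance $I_1$ with variables $V = \{v_1,\dots,v_n\}$, for each $v_i$ I introduce a fresh copy $\mathcal{U}_i$ of $\mathcal{U}$ in which one distinguished variable is identified with $v_i$ and all remaining variables are globally fresh. The output $g(I_1)$ consists of all the original constraints of $I_1$ (interpreted as $L_1$-type constraints of $L_1 \oplus L_2$, i.e.\ either an $L_1$-satisfying assignment in $D_1$ or an all-$D_2$ assignment) together with all the constraints of every $\mathcal{U}_i$ (interpreted dually as $L_2$-type constraints).

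Correctness follows from the component analysis already given in the excerpt: in any satisfying assignment of an $L_1 \oplus L_2$ instance, every connected component must lie entirely in $D_1$ or entirely in $D_2$. Each variable $v_i \in V$ now lies in a component containing the variables of $\mathcal{U}_i$, so the all-$D_2$ option would require $\mathcal{U}_i$ to be satisfied as an $L_2$ instance---impossible, since $\mathcal{U}$ is unsatisfiable. Hence every variable of $V$ is assigned into $D_1$, and the $L_1$-type constraints reduce exactly to solving $I_1$ in $L_1$. Conversely, any $L_1$-satisfying assignment $\sigma: V \to D_1$ extends to a satisfying assignment of $g(I_1)$ by mapping all fresh $\mathcal{U}_i$-variables arbitrarily into $D_1$, trivially satisfying every $L_2$-type constraint. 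Thus $I_1$ is satisfiable in $L_1$ iff $g(I_1)$ is satisfiable in $L_1 \oplus L_2$.

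The main subtlety I expect is the complexity of $g$: the containment direction used that separating into connected components is in $\mathsf{L}$, but here we need an $AC^0$-computable transformation, and computing connected components is not in $AC^0$. The trick is to avoid ever needing the component structure of $I_1$: by attaching one copy of $\mathcal{U}$ to \emph{every} variable of $I_1$ rather than to each component, every component automatically becomes connected to some $\mathcal{U}_i$ and is forced into $D_1$. This local, variable-by-variable attachment is clearly $AC^0$-computable, since $\mathcal{U}$ is a fixed constant-sized object and the only work is renaming indices in parallel. Closure of $M$ under composition with $AC^0$ then completes the argument.
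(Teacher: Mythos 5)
Your proof is correct and takes essentially the same approach as the paper: both block the $D_2$ branch of the direct sum by attaching copies of a fixed unsatisfiable $L_2$ instance to every variable of the $L_1$ instance, then appealing to the per-component dichotomy of direct-sum satisfying assignments. The only difference is cosmetic---the paper pads to a multiple of $v_U$ and partitions the variables into groups that each receive one copy of $U$, whereas you attach a fresh copy of $\mathcal{U}$ (with fresh auxiliary variables) to each individual variable, which sidesteps the divisibility bookkeeping but uses more total variables; both are clearly $\AC^0$ and correct (in both cases one should take $\mathcal{U}$ connected without loss of generality so that the whole gadget sits in one component).
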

\begin{proof}
    The hardness of $L_1$ for $\mathcal{C}$ under $M$-reductions means that for any language $\mathcal{L} \in \mathcal{C}$ there is a function $f \in M$ so that $f(s)$ describes a satisfiable instance of $L_1$ iff $s \in \mathcal{L}$. We can map from the $f(s)$ instance of $L_1$ to an instance $s'$ of $L_1 \oplus L_2$ with the canonical injection $\iota_1$. This produces an instance of $L_1 \oplus L_2$ which is satisfiable whenever $s$ is, by mapping a satisfying assignment along $\iota_1$; but it is also trivially satisfiable by assigning variables any value in the domain of $L_2$.

    So, let $U$ be any unsatisfiable instance of $L_2$, and let $v_U$ be the number of variables in that instance. Modify $s'$ by adding dummy variables until the number is a multiple of $v_U$. Arbitrarily partitioning the variables of $s'$ into sets of size $v_U$, add constraints representing $v_{s'}/v_U$ many copies of $U$ to the problem.

    The resulting problem $s'$ is an instance of $L_1 \oplus L_2$. Since each variable belongs to one of the copies of $U$, there cannot be a satisfying instance of $s'$ that assigns any variable a value in the domain of $L_2$. The dummy variables can take any value in the domain of $L_1$, while the original variables produce a satisfying instance of $s$.

    In this way, we've produced an instance of $L_1 \oplus L_2$ which is satisfiable iff $s \in \mathcal{L}$. The initial mapping $f : \mathcal{L} \to L_1$ is in $M$, and the embedding from $L_1$ to $L_1 \oplus L_2$ is in $AC^0$, so the composition is also in $M$. Since this mapping is complete, sound, and in $M$ for any $\mathcal{L} \in \mathcal{C}$, we have proved that $L_1 \oplus L_2$ is also hard for $\mathcal{C}$.
\end{proof}

\begin{thm}[Completeness of Direct Sums]
    Let be $M$ a set of functions closed under composition with logspace-computable functions (such as the set of logspace functions themselves, $\mathsf{FL}$). Let $L_1$ be a CSP complete under $M$-reductions for a class $\mathcal{C}_1$, and likewise $L_2$ be $M$-complete for $\mathcal{C}_2$. Assume that each of $\mathcal{C}_1$ and $\mathcal{C}_2$ are closed under $M$-reductions, and contain the language $\textsc{None}$ of no strings, $\emptyset$. Then, the direct sum $L_1 \oplus L_2$ is complete under $M$-reductions for $\mathsf{SoPU}(\mathcal{C}_1,\mathcal{C}_2)$.
\end{thm}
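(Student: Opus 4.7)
The plan is to mirror the direct product completeness argument of \cref{thm:classical-prod-complete}, but exploiting the fact that a direct sum instance decomposes into connected components, each of which is satisfiable iff its $\pi_1$-projection is $L_1$-satisfiable \emph{or} its $\pi_2$-projection is $L_2$-satisfiable. This ``AND over components, OR inside each component'' structure is exactly what the class $\mathsf{SoPU}(\mathcal{C}_1,\mathcal{C}_2)$ of \cref{defn:SoPU} captures through its $(dL_1 \mid dL_2)^*$ form, so the proof splits naturally into a containment half and a hardness half.

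For containment, given an instance $\mathcal{P}$ of $L_1 \oplus L_2$, I would first separate it into its connected components $C_1,\dots,C_m$ in logspace via the connectivity algorithm of \cite{Reingold2008}, as was already done for \cref{thm:classical-sum-contain}. Since each clause of $L_1 \oplus L_2$ pins all of its variables into a single domain, a connected component is satisfiable iff its $\pi_1$-projection is a satisfiable $L_1$ instance or its $\pi_2$-projection is a satisfiable $L_2$ instance. Let $L_1^{\bullet}$ be the language of encoded components whose $\pi_1$-image is $L_1$-satisfiable; the projection is a local function and $\mathcal{C}_1$ is closed under $M$-reductions (hence under local reductions), so $L_1^{\bullet}\in\mathcal{C}_1$, and analogously $L_2^{\bullet}\in\mathcal{C}_2$. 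The logspace map $\mathcal{P} \mapsto d\, C_1\, d\, C_2 \cdots d\, C_m$ then reduces $L_1 \oplus L_2$ to $(dL_1^{\bullet}\mid dL_2^{\bullet})^*$, which lies in $\mathsf{SoPU}(\mathcal{C}_1,\mathcal{C}_2)$; since that class is itself closed under logspace reductions, containment follows.

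For hardness, fix $\mathcal{L}\in\mathsf{SoPU}(\mathcal{C}_1,\mathcal{C}_2)$, so some logspace reduction $f$ produces a segmented string $d s_1\cdots d s_m$ with $s\in\mathcal{L}$ iff every $s_i$ lies in some $L_1^{\star}\in\mathcal{C}_1$ or $L_2^{\star}\in\mathcal{C}_2$. Since $L_1$ is $M$-complete for $\mathcal{C}_1$ and $L_1^{\star}\in\mathcal{C}_1$, there is an $M$-reduction $g_1:L_1^{\star}\to L_1$, and similarly $g_2:L_2^{\star}\to L_2$. For each segment $s_i$ I would compute $P_{1,i}:=g_1(s_i)$ and $P_{2,i}:=g_2(s_i)$ on disjoint fresh variable sets $V_{1,i}, V_{2,i}$, and then identify one distinguished variable of $V_{1,i}$ with one distinguished variable of $V_{2,i}$ so that the combined instance $C_i$ is a single connected component. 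Because every clause in $L_1 \oplus L_2$ pins all its variables to one side of $D_1 \cupdot D_2$, the shared variable propagates the choice of domain through the whole component: using $D_1$ makes every $L_2$-derived clause vacuous via its $D_1^k$ branch and reduces to satisfying $P_{1,i}$, and using $D_2$ is symmetric. Hence $C_i$ is satisfiable iff $s_i\in L_1^{\star}\cup L_2^{\star}$, and the disjoint union of the $C_i$ gives an instance of $L_1 \oplus L_2$ satisfiable iff $s\in\mathcal{L}$. The pipeline is a logspace function followed by an $M$-function, which is itself in $M$ by the closure hypothesis.

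The main obstacle is the connected-component construction in the hardness direction: the ``OR'' between $P_{1,i}$ and $P_{2,i}$ must be forced using only the direct-sum grammar, without inventing new clause types. Identifying a single variable is the minimal structure that achieves this, and one has to verify that it does not spuriously constrain either sub-instance (it does not, since the identified variable in $V_{2,i}$ is completely unconstrained when the component uses $D_1$, and vice versa). The $\textsc{None}$ hypothesis handles the corner case in which $\mathcal{L}$ reduces to a one-sided star $(dL_1^{\star})^*$, realized as a $\mathsf{SoPU}$ instance by taking $L_2^{\star}=\emptyset$, and symmetrically. Everything else---delimited concatenation, disjoint renaming of variables, and checking closure of $M$ under logspace composition---is routine and follows from the stated hypotheses.
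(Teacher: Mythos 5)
Your containment argument is essentially the paper's: split $\mathcal{P}$ into connected components in logspace, and reduce to a star-of-union language over languages derived from $L_1$ and $L_2$ by projection. You keep each component $C_i$ intact and define $L_1^\bullet$, $L_2^\bullet$ by post-composing with $\pi_1,\pi_2$; the paper instead emits both projections $p_{i,1}$ and $p_{i,2}$ side by side with a second delimiter $t$, and then uses $L_1' := L_1 t\Sigma^*$ and $L_2' := \Sigma^* t L_2$. These are equivalent up to local relabelling and both work.

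The hardness half, however, has a genuine gap. You glue $P_{1,i}=g_1(s_i)$ and $P_{2,i}=g_2(s_i)$ together by identifying a \emph{single} distinguished variable, and claim ``the shared variable propagates the choice of domain through the whole component.'' That only holds if $P_{1,i}$ and $P_{2,i}$ are each already connected, which you have no right to assume: an instance of a CSP $L_1$ can have arbitrarily many disconnected components, and $L_1$ need not contain any clause that could be used to connect them. A component of $P_{1,i}$ that is \emph{not} connected to the distinguished variable remains, in $C_i$, a separate connected component of the direct-sum instance, and it can always be vacuously satisfied by assigning all of its variables into $D_2$. Concretely, let $D_1=D_2=\{0,1\}$ with unary constraints $\{x{=}0\}$, $\{x{=}1\}$. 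Take $P_{1,i}$ on disjoint variables $a,b$ with clauses $\{a{=}0\},\{a{=}1\},\{b{=}0\}$ (unsatisfiable), and $P_{2,i}$ on disjoint variables $c,d$ with clauses $\{c{=}0\},\{d{=}0\},\{d{=}1\}$ (unsatisfiable). Identify the distinguished variables $b$ and $c$. The resulting direct-sum instance has three connected components $\{a\}$, $\{b\}$, $\{d\}$: the $\{a\}$ component is satisfied by putting $a\in D_2$, the $\{d\}$ component by putting $d\in D_1$, and $\{b\}$ by $b=0_1$; so $C_i$ is satisfiable even though neither $P_{1,i}$ nor $P_{2,i}$ is. Your reduction is therefore not sound.

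The paper avoids this by decomposing $p_1 = g_1(s_i)$ into connected components $p_{1,1},\dots,p_{1,m}$ and $p_2$ into $p_{2,1},\dots,p_{2,n}$, and for every pair $(i,j)$ laying $p_{1,i}$ and $p_{2,j}$ on top of the \emph{same} set of $\max(|p_{1,i}|,|p_{2,j}|)$ fresh variables. Each such gadget is a single connected component, satisfiable iff $p_{1,i}$ is $L_1$-satisfiable or $p_{2,j}$ is $L_2$-satisfiable. The whole segment is then accepted iff $\bigwedge_{i,j}(p_{1,i}\vee p_{2,j}) = (\bigwedge_i p_{1,i})\vee(\bigwedge_j p_{2,j}) = (p_1\text{ sat})\vee(p_2\text{ sat})$, which is the desired OR. To repair your proof you would need this full pairwise-overlay construction; a single shared variable cannot encode the needed implication.
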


\begin{proof}
    First we show containment, then hardness.

    \textbf{Containment:} We might seek to apply \cref{thm:classical-sum-contain} to show that, since $L_1$ and $L_2$ are each in $\mathsf{SoPU}(\mathcal{C}_1,\mathcal{C}_2)$, their direct sum is too. This approach is limited by the fact that $\mathsf{SoPU}$ is not generally going to be closed under union, a necessary property for \cref{thm:classical-sum-contain}. But the same algorithm described in the proof of \cref{thm:classical-sum-contain} essentially applies.

    To solve an instance of $L_1 \oplus L_2$ in $\mathsf{SoPU}(\mathcal{C}_1,\mathcal{C}_2)$, first separate the CSP into its connected components, and join them in a delimited fashion, e.g. $s\textsc{Component1}s\textsc{Component2}s\textsc{Component3}$. The problem is solvable if each component is solvable individually. Each component $c_i$ is an instance of $L_1 \oplus L_2$ and has projections $p_{i,1} \in L_1$ and $p_{i,2} \in L_2$. The component is satisfiable if either projection is. So reduce this (in logspace) to the string $sp_{1,1}tp_{1,2}sp_{2,1}tp_{2,2}sp_{3,1}tp_{3,2}\dots$, where $t$ is another new fresh symbol. The resulting language is not the star-of-union of the languages $L_1$ and $L_2$, but of two other languages $L_1' := L_1t\Sigma^*$ and $L_2' := \Sigma^*tL_2$. Since $L_1'$ and $L_2'$ are equivalent to $L_1$ and $L_2$ under logspace reductions, and $L_1$ and $L_2$ belong to the classes $\mathcal{C}_1$ and $\mathcal{C}_2$ which are closed under logspace reductions, $L_1'$ and $L_2'$ belong to $\mathcal{C}_1$ and $\mathcal{C}_2$ as well, so this language belong to $\mathsf{SoPU}(\mathcal{C}_1,\mathcal{C}_2)$.

    \textbf{Hardness:} Any language $\mathcal{L}$ in $\mathsf{SoPU}(\mathcal{C}_1,\mathcal{C}_2)$ is equivalent under an $\mathsf{L}$ reduction to $(s\mathcal{L}_1|s\mathcal{L}_2)^*$. Since $L_1$ is complete for $\mathcal{C}_1$ and $\mathcal{L}_1 \in \mathcal{C}_1$, we can map from $\mathcal{L}_1$ to equivalent instances of $L_1$, and likewise for $L_2$. So given a string for which we want to test membership in $(s\mathcal{L}_1|s\mathcal{L}_2)^*$, we use the following logspace reduction to $L_1 \oplus L_2$:

    Segment the string by $s$. Each segment will produce a disconnected part of the resulting CSP. One string segment $x$ then gets mapped to instances $p_1$ of $L_1$ and $p_2$ of $L_2$. Separate $p_1$ and $p_2$ into their connected components, labelled $p_{1,i}$ and $p_{2,j}$. Then, for each pair of components $(i,j)$, output a set of $\max(|p_{1,i}|,\,|p_{2,j}|)$ many fresh variables, and map the constraints from $p_{1,i}$ and $p_{2,j}$ onto these variables. The resulting CSP will be satisfiable iff the string belonged to $(s\mathcal{L}_1|s\mathcal{L}_2)^* = (s(\mathcal{L}_1|\mathcal{L}_2))^*$: each individual segment $x \overset{?}\in (\mathcal{L}_1|\mathcal{L}_2)$ should be accepted only if at least one of $p_{1}$ or $p_{2}$ are satisfiable. $p_1$ is satisfiable iff all $p_{1,i}$ are satisfiable, and likewise for $p_2$ and $p_{2,j}$. Each output component $(i,j)$ is satisfiable only if $p_{1,i}$ or $p_{2,j}$ are, and taking the intersection of these conditions, we see that this is exactly equivalent to the original problem.
\end{proof}

\subsection{Quantum sums and products}
The above constructions transfer in a very natural way to the quantum setting. Now, instead of domains that are a Cartesian product or disjoint union, the Hilbert spaces are a tensor product or direct sum. The clauses are accordingly built as tensor products and direct sums.

\begin{defn}[Direct Product of Quantum Constraint Problems] \label{defn:prodqcsps}
    Given the QCSPs $L_1$ and $L_2$, their {\em direct product} $L_1 \otimes L_2$ is a QCSP whose domain is the tensor product Hilbert space $D_1 \otimes D_2$. Each operator $H_i$ in $L_1$ leads to an operator $H_i \otimes I$, a tensor product with the identity, and likewise for $L_2$.
\end{defn}

\begin{defn}[Direct Sum of Quantum Constraint Problems]  \label{defn:sumqcsps}
    Given the QCSPs $L_1$ and $L_2$, their {\em direct sum} $L_1 \cupdot L_2$ is a QCSP whose domain is the direct sum Hilbert space $D_1 \oplus D_2$. Each operator $H_i$ in $L_1$ leads to an operator $H_i \oplus 0$, a direct sum with the 0 operator, requiring that a frustration-free state lies in the nullspace of $H_i$ {\em or} the right half of the direct sum (or a linear combination). Likewise for operators in $L_2$.
\end{defn}
These have the same essential properties as the direct product and sum for classical CSPs, that we can produce product and sum instances that are satisfiable iff both (resp.\ either) of the original instances were satisfiable.

One might mistakenly think that, given all this discussion of languages and strings of symbols, we have to talk about quantum states and concatenations of strings of qubits. However, this is not the case: the strings of symbols are just the encoding of the constraints, which are classical data even for a QCSP. The only new property to check for the quantum case is that we can take tensor products of satisfying states and get a satisfying state, or embed a state into the direct sum to get a satisfying state; and the appropriate converse properties. These follow directly from the definition of tensor products and direct sums.

\subsection{Basic properties}
Above, we proved that $\mathsf{PI}(A,B)$ and $\mathsf{SoPU}(A,B)$ have a complete CSP, as long as the original classes $A$ and $B$ also have one. Motivated by this, we give some basic properties of these classes.

\begin{thm}\label{thm:pi-contains}
    If the class $B$ includes the language $\textsc{All}$ of all strings, then $A \subseteq \mathsf{PI}(A,B)$.
\end{thm}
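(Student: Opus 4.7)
The plan is to unfold the definition of $\mathsf{PI}$ and observe that the hypothesis on $B$ immediately supplies the second factor needed for the intersection. Specifically, recall from \cref{defn:PI} that $\mathsf{PI}(A,B)$ consists of all languages expressible as $L_A \cap L_B$ with $L_A \in A$ and $L_B \in B$.

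Given any language $L \in A$, I would write the trivial identity $L = L \cap \textsc{All}$. The first factor $L$ lies in $A$ by hypothesis, and the second factor $\textsc{All}$ lies in $B$ by the assumption of the theorem. Hence $L \in \mathsf{PI}(A,B)$, and since $L$ was arbitrary, this gives $A \subseteq \mathsf{PI}(A,B)$.

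There is no real obstacle here; the statement is essentially a definitional unwinding, and it parallels the analogous step used in the containment half of the proof of \cref{thm:classical-prod-complete}, where $\textsc{All} \in B$ was already invoked to put languages of $\mathcal{C}_1$ into $\mathsf{PI}(\mathcal{C}_1,\mathcal{C}_2)$. The only thing worth being careful about is that $\textsc{All}$ is interpreted over the same alphabet as $L$, so that the intersection is well-defined; this is harmless since $\textsc{All} = \Sigma^*$ contains every string over whatever alphabet one fixes.
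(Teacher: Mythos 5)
Your proof is correct and matches the paper's own argument: both observe that any $L \in A$ can be written as $L \cap \textsc{All}$ with $\textsc{All} \in B$, hence $L \in \mathsf{PI}(A,B)$. (Incidentally, the paper's proof has a small typographical slip, writing $\{a\} \cap \textsc{All}$ where $a \cap \textsc{All}$ is meant; your phrasing avoids that.)
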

\begin{proof}Since $\textsc{All} \in B$, for any $a \in A$, we have $a = \{a\} \cap \textsc{All} \in \mathsf{PI}(A,B)$.\end{proof}

\begin{thm}\label{thm:sopu-contains}
    If the class $B$ includes the language $\textsc{None}$ of no strings, then $A \subseteq \mathsf{SoPU}(A,B)$.
\end{thm}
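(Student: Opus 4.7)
The plan is to mirror the short proof of \cref{thm:pi-contains}, exploiting the fact that if $L_2 = \textsc{None}$, then the ``$B$-branch'' of $(dL_1|dL_2)^*$ contributes nothing and what remains is essentially $L_1$ itself, up to a trivial logspace reformatting.

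Concretely, I would fix any language $a \in A$ and set $L_1 := a$ and $L_2 := \textsc{None}$, which lies in $B$ by hypothesis. Forming a fresh delimiter $d$ not in the alphabet of $a$, the generating language $M := (dL_1 \mid dL_2)^*$ belongs to $\mathsf{SoPU}(A,B)$ by definition. Since $L_2 = \emptyset$, no string matches $dL_2$, and so $M = (dL_1)^*$: a string lies in $M$ iff it decomposes as $d w_1 d w_2 \cdots d w_k$ for some $k \ge 0$ with every $w_i \in a$.

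I would then exhibit the logspace reduction $f : x \mapsto dx$, which clearly runs in constant space. Because $d$ is a fresh symbol, no $d$ appears in $x$, so the only parsing of $dx$ as a concatenation of $dL_1$-blocks is the single-block parse $d \cdot x$; thus $f(x) \in M$ iff $x \in a$. Hence $a$ is $\mathsf{L}$-reducible to $M \in \mathsf{SoPU}(A,B)$, and since $\mathsf{SoPU}(A,B)$ is by definition closed under logspace reductions, $a \in \mathsf{SoPU}(A,B)$.

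There is no real obstacle here; the only point requiring care is to confirm that $M$ is genuinely $(dL_1)^*$ rather than something degenerate, and that the single-block parsing argument for $f(x) = dx$ is unambiguous, both of which follow immediately from the freshness of $d$. As $a \in A$ was arbitrary, this establishes $A \subseteq \mathsf{SoPU}(A,B)$.
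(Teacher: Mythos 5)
Your proposal is correct and matches the paper's own proof: both observe that $(dL_1\,|\,d\,\emptyset)^* = (dL_1)^*$ and then use a trivial logspace reduction to pass between $L_1$ and $(dL_1)^*$, invoking the built-in closure of $\mathsf{SoPU}$ under logspace reductions. If anything, your explicit map $x \mapsto dx$ and the unambiguous-parsing remark spell out a step the paper phrases more loosely.
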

\begin{proof}
    Take any $L_1 \in A$. Simplify $(s(L_1|\emptyset))^* = (sL_1)^*$, and then we can do a reduction in logspace to only look at the first $s$-delimited subsequence to get $L_1 \in \mathsf{SoPU}(A,B)$
\end{proof}

\begin{thm}\label{thm:pi-sopu-respect}
    $\mathsf{PI}$ and $\mathsf{SoPU}$ respect the inclusion order of complexity classes. That is, $A \subseteq C$ and $B \subseteq D$ implies $\mathsf{PI}(A,B) \subseteq \mathsf{PI}(C,D)$ and $\mathsf{SoPU}(A,B) \subseteq \mathsf{SoPU}(C,D)$.
\end{thm}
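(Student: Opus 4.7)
The plan is to prove both inclusions directly from the definitions, since both constructions are manifestly monotone in their arguments. No real obstacle is expected; the content is just unpacking what $\mathsf{PI}$ and $\mathsf{SoPU}$ mean and observing that nothing in the construction depends on the classes $A$ and $B$ beyond membership.

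For the $\mathsf{PI}$ inclusion, I would start with an arbitrary language $L \in \mathsf{PI}(A,B)$. By \cref{defn:PI}, there exist $L_A \in A$ and $L_B \in B$ with $L = L_A \cap L_B$. The hypothesis $A \subseteq C$ yields $L_A \in C$, and $B \subseteq D$ yields $L_B \in D$. Thus $L$ is expressible as the intersection of a language in $C$ and a language in $D$, so $L \in \mathsf{PI}(C,D)$ by the same definition. Since $L$ was arbitrary, $\mathsf{PI}(A,B) \subseteq \mathsf{PI}(C,D)$.

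For the $\mathsf{SoPU}$ inclusion, I would first handle the ``generator'' languages and then invoke closure under logspace reductions. Take any $L_A \in A$ and $L_B \in B$, and let $d$ be a fresh symbol. Then $A \subseteq C$ and $B \subseteq D$ imply $L_A \in C$ and $L_B \in D$, so the generating language $(dL_A\,|\,dL_B)^*$ lies in the family of languages used to define $\mathsf{SoPU}(C,D)$ according to \cref{defn:SoPU}. Hence every such generator of $\mathsf{SoPU}(A,B)$ is already in $\mathsf{SoPU}(C,D)$. Taking the logspace closure of both sides and noting that $\mathsf{SoPU}(C,D)$ is itself defined as a logspace-closed class, we conclude $\mathsf{SoPU}(A,B) \subseteq \mathsf{SoPU}(C,D)$.

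The only subtlety worth flagging is that the definition of $\mathsf{SoPU}(\cdot,\cdot)$ involves a reduction closure, so one should be explicit that closing under a fixed reduction class is itself monotone in the set being closed: if $X \subseteq Y$, then $\mathsf{L}\text{-closure}(X) \subseteq \mathsf{L}\text{-closure}(Y)$. This is immediate from the definition of closure. With that observation in hand, both inclusions reduce to one-line arguments, and the theorem follows.
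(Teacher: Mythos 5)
Your proof is correct and is exactly the argument the paper has in mind: the paper's proof reads, in full, ``Immediate from the definitions,'' and your write-up is simply a careful unpacking of that, including the right observation that logspace closure is monotone.
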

\begin{proof}Immediate from the definitions.\end{proof}

The $\mathsf{SoPU}$ construction generally leads to a more powerful class than the $\mathsf{PI}$ construction.

\begin{thm}\label{thm:pi-subseteq-sopu}
    If classes $A$ and $B$ are closed under reduction by local functions, then $\mathsf{PI}(A,B) \subseteq \mathsf{SoPU}(A,B)$.
\end{thm}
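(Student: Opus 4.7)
The plan is, for each $L \in \mathsf{PI}(A,B)$, to produce an explicit logspace reduction from $L$ to a star-of-pairwise-union language built from members of $A$ and $B$; since $\mathsf{SoPU}(A,B)$ is by definition closed under logspace reductions (\cref{defn:SoPU}), this will place $L$ in $\mathsf{SoPU}(A,B)$. Writing $L = L_1 \cap L_2$ with $L_1 \in A$ and $L_2 \in B$, the conceptual obstacle is that the semantics of $(dL_1'|dL_2')^*$ is \emph{disjunctive} on each segment---a segment only needs to be in $L_1'$ or in $L_2'$---whereas I need to certify a \emph{conjunctive} condition, namely $x \in L_1$ and $x \in L_2$. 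My resolution will be to introduce disjoint tag symbols that force one copy of $x$ to be checked only against $L_1$, and the other only against $L_2$.

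Concretely, I will pick three fresh symbols $d, t_1, t_2$ not appearing in the alphabets of $L_1$ or $L_2$, and define the tagged languages $L_1' := \{t_1 y : y \in L_1\}$ and $L_2' := \{t_2 y : y \in L_2\}$ over the extended alphabet (with all other strings treated as non-members). The proposed reduction is $\phi(x) := d\,t_1 x\,d\,t_2 x$, which is plainly computable in logspace (in fact in $\AC^0$), since it merely duplicates the input with two fixed prefixes. The string $\phi(x)$ has exactly two $d$-delimited segments, namely $t_1 x$ and $t_2 x$. A segment beginning with $t_1$ cannot belong to $L_2'$ (whose strings must begin with $t_2$), so it lies in $L_1' \cup L_2'$ iff $x \in L_1$; symmetrically the $t_2$-tagged segment passes iff $x \in L_2$. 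Hence $\phi(x) \in (dL_1'|dL_2')^*$ precisely when $x \in L_1 \cap L_2$, as needed.

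The remaining verification is that $L_1' \in A$ and $L_2' \in B$, which is where the hypothesis on closure under local reductions enters: each $L_i'$ is obtained from $L_i$ by prepending a fixed symbol and treating all other strings over the extended alphabet as non-members, and these are transformations a local function can invert (strip the tag if present, otherwise map to a canonical non-member of $L_i$). I expect this step to be almost routine. The only subtlety I anticipate is the edge case where $L_i$ is empty or universal; in both situations $L$ collapses to a language already known to lie in $\mathsf{SoPU}(A,B)$ via \cref{thm:sopu-contains}, so those cases can be dispatched separately if necessary. The main creative step is therefore simply the tagging trick that converts the disjunctive segment-semantics of $\mathsf{SoPU}$ into a conjunctive test; once this is in hand, the rest of the argument should be mechanical.
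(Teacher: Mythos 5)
Your proof is correct and takes essentially the same approach as the paper: tag the two copies of $x$ with distinguishing prefix symbols so that each $d$-delimited segment can pass only via the intended language, and observe that the tagged languages remain in $A$ and $B$ by closure under local reductions. The only cosmetic difference is that the paper reuses alphabet symbols $0$ and $1$ as tags (forming $0L_A$ and $1L_B$ and reducing $x \mapsto s0xs1x$) whereas you introduce fresh symbols $t_1, t_2$, which makes the disjointness of the two tagged languages slightly more self-evident but changes nothing substantive.
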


\begin{proof}
    A language in $\mathsf{PI}(A,B)$ is of the form $L_A \cap L_B$. Since $A$ and $B$ are closed under reduction by local functions, the languages $0L_A$ and $1L_B$---the languages where we add one additional 0 or 1 to the start of all strings---are also in $A$ and $B$ respectively. Applying the $\mathsf{SoPU}$ construction to this pair of languages, we get the language $(s0L_A|s1L_B)^*$. Then, we can apply the logspace reduction (while staying in $\mathsf{SoPU}(A,B)$) of mapping any string $x$ (over an alphabet without $s$) to the string $s0xs1x$, so that $x$ is only accepted if it belongs to both $L_A$ and $L_B$. Thus we have a mechanism to recognize $L_A \cap L_B$ in $\mathsf{SoPU}(A,B)$.
\end{proof}

$\mathsf{PI}$ and $\mathsf{SoPU}$ do not increase the power of classes by combining with something weaker. Formally:

\begin{thm}\label{thm:pi-sopu-dni}
    If $A$ is closed under intersection and $B \subseteq A$, then $\mathsf{PI}(A,B) \subseteq A$. If $A$ is closed under logspace reductions, unions, and delimited concatenation, then $\mathsf{SoPU}(A,B) \subseteq A$.
\end{thm}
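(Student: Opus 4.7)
The plan is to handle the two implications separately. For the $\mathsf{SoPU}$ half I will implicitly carry over the hypothesis $B \subseteq A$ from the first half, since it is needed (if $L_2 \in B \setminus A$, then taking $L_1 = \emptyset \in A$ and using closure of $\mathsf{SoPU}(A,B)$ under logspace reductions would recover $L_2$ from $(dL_1 \mid dL_2)^*$ and contradict $L_2 \notin A$).

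For the $\mathsf{PI}$ containment, take any $\mathcal{L} \in \mathsf{PI}(A,B)$. By \cref{defn:PI} we can write $\mathcal{L} = \mathcal{L}_1 \cap \mathcal{L}_2$ with $\mathcal{L}_1 \in A$ and $\mathcal{L}_2 \in B \subseteq A$. Closure of $A$ under intersection immediately gives $\mathcal{L} = \mathcal{L}_1 \cap \mathcal{L}_2 \in A$, which is all that is required.

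For the $\mathsf{SoPU}$ containment, pick any $\mathcal{L} \in \mathsf{SoPU}(A,B)$. By \cref{defn:SoPU}, $\mathcal{L}$ logspace-reduces to a language of the form $(dL_1 \mid dL_2)^*$ for some $L_1 \in A$ and $L_2 \in B$; since $A$ is closed under logspace reductions, it therefore suffices to show $(dL_1 \mid dL_2)^* \in A$. I would first rewrite this language as $(d(L_1 \cup L_2))^*$, so that (using $L_2 \in B \subseteq A$ and closure of $A$ under unions) the language $L := L_1 \cup L_2$ lies in $A$. Closure of $A$ under delimited concatenation then yields $L(dL)^* \in A$.

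What remains is to pass from $L(dL)^*$ to $(dL)^*$ itself, using the identity $(dL)^* = \{\epsilon\} \cup d \cdot L(dL)^*$. Here the plan is a simple logspace many-one reduction: on input $x$, if $x$ begins with the delimiter $d$, output the suffix $x[1:]$ and query membership in $L(dL)^*$; if $x = \epsilon$, output any fixed accepted string of $L(dL)^*$ (which exists whenever $L \neq \emptyset$; the degenerate case $L = \emptyset$ makes $(dL)^* = \{\epsilon\}$, trivially in $A$); otherwise output a fixed rejected string. Closure of $A$ under logspace reductions then places $(dL)^*$ in $A$ and completes the argument. The main (and only) subtlety is this last step: many-one reductions cannot ``accept outright,'' so I must patch the two edge cases ($x = \epsilon$ and $x$ not beginning with $d$) by hardcoded yes- and no-instances of $L(dL)^*$; verifying that such witnesses are available for the classes $A$ of interest is the one spot where the proof meets the assumptions head-on rather than being purely symbolic.
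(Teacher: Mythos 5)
Your proof is correct and follows the same overall route as the paper, but it is considerably more careful. The paper disposes of both halves in two lines (``The first is immediate from the definition. The second comes from applying each closure property to $(s\mathcal{L}_A|s\mathcal{L}_B)^*$''), whereas you spell out exactly which closure property is applied at each step.

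Two of your observations are genuine improvements on the paper's presentation. First, you correctly note that the $\mathsf{SoPU}$ half needs the hypothesis $B \subseteq A$, which the theorem as stated drops from the second sentence; your counterexample sketch (take $L_1 = \emptyset$ and recover $L_2$ by logspace reduction) shows the hypothesis is not merely convenient but necessary. Second, you notice a mismatch that the paper's proof glosses over: delimited-concatenation closure, as defined in the paper, produces languages of the form $L(dL)^*$, while the $\mathsf{SoPU}$ definition produces languages of the form $(dL_1\mid dL_2)^* = (d(L_1\cup L_2))^*$. These differ by a leading delimiter and whether the empty string is allowed, so an extra logspace reduction is genuinely needed, and you supply it via the identity $(dL)^* = \{\epsilon\} \cup d\cdot L(dL)^*$. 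Your handling of the two degenerate edge cases (the empty string, and inputs not starting with $d$) via hardcoded yes/no instances is the standard trick and is valid so long as $L(dL)^*$ has both an accepted and a rejected string; you flag the degenerate case $L = \emptyset$ correctly, and the remaining tacit assumption that $A$ contains at least one non-degenerate language so that $\{\epsilon\}\in A$ holds for every class the paper considers. In short: same strategy, but you fill in a real gap between $L(dL)^*$ and $(dL)^*$ that the paper waves past.
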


\begin{proof}The first is immediate from the definition. The second comes from applying each closure property to $(s\mathcal{L}_A|s\mathcal{L}_B)^*$ to see that it is still in $A$.
\end{proof}

Combining \cref{thm:pi-sopu-dni} with \cref{thm:pi-contains,thm:sopu-contains}, these imply that for well-behaved classes where $B \subseteq A$, $\mathsf{PI}(A,B) = \mathsf{SoPU}(A,B) = A$. This explains why these constructions have not been of much interest until the quantum setting. In the classical setting, the only two complexity classes (up to $\mathsf{P}$-reduction) that arise from constraint problems are $\mathsf{P}$ and $\mathsf{NP}$. Both are closed under all the properties discussed above (logspace, unions, intersections, and delimited concatenation), and $\mathsf{P} \subseteq \mathsf{NP}$. In particular:

\begin{coro}\label{thm:pi-sopu-p-np}
    $\mathsf{PI}(\mathsf{P},\mathsf{NP}) = \mathsf{SoPU}(\mathsf{P},\mathsf{NP}) = \mathsf{NP}$.
\end{coro}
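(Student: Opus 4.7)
The plan is to derive this corollary by a direct combination of \cref{thm:pi-sopu-dni,thm:pi-contains,thm:sopu-contains} with standard folklore facts about $\mathsf{NP}$. First I would record a one-line symmetry observation: both operations are evidently symmetric in their two arguments, since set intersection is commutative and the regular expression $(d L_1 \mid d L_2)^\ast$ is invariant under swapping $L_1$ and $L_2$. Consequently $\mathsf{PI}(\mathsf{P},\mathsf{NP}) = \mathsf{PI}(\mathsf{NP},\mathsf{P})$ and likewise for $\mathsf{SoPU}$, so throughout the argument I can take $A = \mathsf{NP}$ and $B = \mathsf{P}$, which is compatible with $B \subseteq A$ as required by \cref{thm:pi-sopu-dni}.

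For the upper bounds $\mathsf{PI}(\mathsf{P},\mathsf{NP}) \subseteq \mathsf{NP}$ and $\mathsf{SoPU}(\mathsf{P},\mathsf{NP}) \subseteq \mathsf{NP}$, I would apply \cref{thm:pi-sopu-dni} after verifying the four closure hypotheses for $\mathsf{NP}$: closure under intersection (run the two verifiers in parallel on a single pair of witnesses and AND the outputs), closure under union (nondeterministically guess which verifier to use), closure under logspace reductions (compose a deterministic logspace transducer with an $\mathsf{NP}$ verifier), and closure under the delimited concatenation $L \mapsto L(dL)^\ast$ (the verifier locates the fresh delimiters, partitions the input into segments, and guesses a certificate for each segment, which it checks in polynomial time). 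Each of these is a textbook property of $\mathsf{NP}$.

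For the lower bounds $\mathsf{NP} \subseteq \mathsf{PI}(\mathsf{P},\mathsf{NP})$ and $\mathsf{NP} \subseteq \mathsf{SoPU}(\mathsf{P},\mathsf{NP})$, I would apply \cref{thm:pi-contains} and \cref{thm:sopu-contains} respectively with $A = \mathsf{NP}$ and $B = \mathsf{P}$, after noting that the trivial languages $\textsc{All} = \Sigma^\ast$ and $\textsc{None} = \emptyset$ are both decidable in constant time and therefore lie in $\mathsf{P}$. There is no substantive obstacle here: the corollary is a short bookkeeping argument, and the only subtleties are the symmetry observation of the first paragraph and the (routine) verification that $\mathsf{NP}$ meets the closure hypotheses of \cref{thm:pi-sopu-dni}. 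The conceptual payoff is what the corollary already records in the surrounding text, namely that until the quantum setting enlarges the menu of available complexity classes beyond the chain $\mathsf{P} \subseteq \mathsf{NP}$, the operations $\mathsf{PI}$ and $\mathsf{SoPU}$ do not produce anything new.
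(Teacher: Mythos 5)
Your proposal is correct and takes essentially the same route as the paper, which simply invokes Theorems~\ref{thm:pi-sopu-dni}, \ref{thm:pi-contains}, and \ref{thm:sopu-contains} together with the observation that $\mathsf{P}$ and $\mathsf{NP}$ satisfy all the required closure hypotheses and $\mathsf{P} \subseteq \mathsf{NP}$. The only thing you make explicit that the paper leaves tacit is the symmetry of $\mathsf{PI}$ and $\mathsf{SoPU}$ in their two arguments (needed because the corollary writes $\mathsf{PI}(\mathsf{P},\mathsf{NP})$ whereas Theorem~\ref{thm:pi-sopu-dni} is phrased for $B \subseteq A$ in the order $\mathsf{PI}(A,B)$), which is a harmless and slightly more careful piece of bookkeeping.
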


\subsection{New complete classes}

The 7 classes discussed so far as having complete constraint problems are $\mathsf{P}$, $\mathsf{coRP}$, $\mathsf{BQP}_1$, $\mathsf{NP}$, $\mathsf{QCMA}_1$, and $\mathsf{QMA}_1$. Recognize that all of these have all of the closure properties discussed in this section so far: union, intersection, logspace reductions, and delimited concatenation; and they all include the trivial problems $\textsc{ALL}$ and $\textsc{NONE}$.

Among these seven classes, most pairs $\{A,B\}$ have $A \subseteq B$, in which case $\mathsf{PI}(A,B) = \mathsf{SoPU}(A,B) = B$. However, there are three pairs that are not known to contain each other: $\mathsf{coRP} \overset{?}{\subseteq} \mathsf{NP}$, $\mathsf{BQP}_1 \overset{?}{\subseteq} \mathsf{NP}$, and $\mathsf{BQP}_1 \overset{?}{\subseteq} \mathsf{MA}$. Each of these pairs leads to two new classes $\mathsf{PI}(A,B)$ and $\mathsf{SoPU}(A,B)$, that are not obviously equal to some other known class, for a total of six more complexity classes with complete quantum constraint problems. The resulting thirteen classes are shown in \cref{fig:inclusions}.

Notably, $\coRP \overset{?}{\subseteq} \NP$ involves only classical classes, and accordingly one could hope that the $\mathsf{PI}(\coRP, \NP)$ and $\mathsf{SoPU}(\coRP, \NP)$ would have more existing theory already developed around them. There are at least two interesting cases where a collapse can occur:
\begin{enumerate}
    \item If $\P=\RP$ (derandomization), then $\coRP = \P \subseteq \NP$ and so
          \begin{equation*}
              \NP \subseteq \mathsf{PI}(\coRP, \NP) \subseteq \mathsf{SoPU}(\coRP, \NP) = \mathsf{SoPU}(\NP, \NP) = \NP.
          \end{equation*}
          \noindent
          An even weaker version of derandomization where $\NP = \MA$ would also lead to a collapse, because $\coRP \subseteq \MA$:
          \begin{equation*}
              \NP \subseteq \mathsf{PI}(\coRP, \NP) \subseteq \mathsf{SoPU}(\coRP, \NP) \subseteq \mathsf{SoPU}(\MA, \NP) = \mathsf{SoPU}(\NP, \NP) = \NP.
          \end{equation*}
    \item Similarly, if $\NP = \coNP$ (concise refutations), then $\coRP \subseteq \coNP = \NP$ and we have a collapse.
\end{enumerate}

The complexity class $\DP = \mathsf{PI}(\NP, \coNP)$ has been studied before, and forms the second layer of the {\em boolean hierarchy} $\BH$: $\DP = \BH_2$ \cite{Cai1988,Papadimitriou1984}. \cref{thm:pi-sopu-respect} tells us that $\mathsf{PI}(\coRP, \NP) \subseteq \DP$ as a result. No inclusions are known between $\DP$ and $\MA$.

$\mathsf{SoPU}(\coRP, \NP)$ does not obviously lie in the Boolean hierarchy. If it was a simple pairwise union, it would lie in $\BH_3$, the pairwise union of $\DP$ and $\NP$. But the long list of checks to do means that we cannot obviously condense down the queries. Queries to an $\NP$ oracle do not need to depend on each other adaptively, though, so we do know that $\mathsf{SoPU}(\coRP, \NP)$ is contained in $\P^{\vert\vert \NP} = \P^{\NP[\log]}$, studied in \cite{Buss1991,Hemachandra1989}; problems that can be solved by a $\P$ machine with polynomially many nonadaptive queries, or logarithmically many adaptive queries. This would generally be seen as a harder class than $\MA$, but since $\MA \overset{?}{\subseteq} \Delta_2\P = \P^\NP$ is not even known, this is not implied by $\mathsf{SoPU}(\coRP, \NP) \subseteq \MA$.

For the quantum classes that involve $\BQP_1$, $\NP$, or $\MA$, it seems difficult to say anything at all about---besides the fact that they lie above $\BQP_1$ and below $\QCMA$.

\section*{Acknowledgements}

We thank Alex Grilo and Dorian Rudolph for helpful discussions, and the authors of Ref.\ \cite{cade2024glh} for the inspiration of \cref{fig:projection}. RRC was supported by projects APVV-22-0570 (DeQHOST) and VEGA No.\ 2/0164/25 (QUAS). AM was supported by the Natural Sciences and Engineering Research Council of Canada through grant number RGPIN-2019-04198. IQC and the Perimeter Institute are supported in part by the Government of Canada through ISED and the Province of Ontario.

\newpage

\printbibliography

\begin{appendices}
  \section{Proofs of Lemmas: Section~\ref{section:monogamy}} \label[appendix]{appendix:monogamy}

\singletypequditmono*

\begin{proof}
    Let $\ket{\psi}$ and $\ket{\phi}$ be states in the null space of $\Pi_\alpha$ and $\Pi_\alpha'$ respectively. Since $\Pi_\alpha$ and $\Pi_{\alpha'}$ act on orthogonal subspaces, it follows that $\ket{\psi}$ and $\ket{\phi}$ are orthogonal and $\ket{\psi} \neq \ket{\phi}$. Therefore, there is no state that can simultaneously satisfy both $\Pi_\alpha$ and $\Pi_{\alpha'}$, making the instance unsatisfiable.
\end{proof}

\subsection{Clock component}

\linearchainmono*

\begin{proof}
    Recall that any clause from $\{\Pi_{init}, \Pi_{prop}, \Pi_{out}\}$ acting on a clock qudit requires one of the clock qudit's auxiliary subspaces to form a Bell pair. A clock qudit is only equipped with two such subspaces ($C\!A$ and $C\!B$) and so a clock qudit that is connected to more than two clock or endpoint qudits necessarily entails a violation of monogamy of entanglement.
\end{proof}

\directionmono*

\begin{proof}
    Recall that satisfying a $\Pi_{prop}$ clause between two clock qudits $c_0$ and $c_1$ where $c_0$ is the predecessor of $c_1$ requires forming a Bell pair between the $C\!B$ subspace of $c_0$ and the $C\!A$ subspace of $c_1$. Then, in the case where a clock qudit has two predecessors, satisfying both $\Pi_{prop}$ clauses demands that its $C\!A$ subspace is maximally entangled with the $C\!B$ subspace of two different clock qudits. This is a violation of monogamy of monogamy of entanglement. Similarly, if a clock qudit has two successors, its $C\!B$ subspace would also be required to be maximally entangled with two different clock qudits.
\end{proof}

\uniqueclockmono*

\begin{proof}
    Satisfying a single $\Pi_{init}$ or $\Pi_{out}$ clause requires the endpoints' single subspace $EC$ to be maximally entangled with one of the subspaces of the clock qudit. Then, any two clauses that require entangling $EC$ to two different subspaces (even if its the same $C\!A$ or $C\!B$ of two distinct clock particles) would entail a violation of monogamy of entanglement.
\end{proof}

\uniqueendpointmono*

\begin{proof}
    Satisfying a $\Pi_{init}$ clause implies that the $C\!A$ subspace of the clock qudit present in the clause is maximally entangled with the $EC$ subspace of the endpoint qubit. Similarly, $\Pi_{prop}$ clauses require that the $C\!B$ subspace of the predecessor is maximally entangled with the $C\!A$ subspace of its successor. Then, in a one-dimensional chain, there is only a single free $C\!A$ subspace on clock qudit $c_0$. Therefore, any $\Pi_{init}$ clause that acts on a clock qudit other than $c_0$ violates the monogamy of entanglement conditions. The same argument applies to $\Pi_{out}$, the $C\!B$ subspace, and $c_L$.
\end{proof}

\noendpointsmono*

\begin{proof}
    If the clock component has no $\Pi_{init}$ ($\Pi_{out}$), the assignment where all clock qudits are set to $\ket{r}$ ($\ket{d}$) is a satisfying assignment. The $\Pi_{init}$ and $\Pi_{out}$ clauses contain the $\Pi_{start}$ and $\Pi_{stop}$ projectors forbidding clock qudits to be in the inactive states $\ket{r}$ and $\ket{d}$. Then, if one clause is missing in the clock component, the clock qudits can be set to the inactive state that is not forbidden. Furthermore, this assignment does not violate any of the $\Pi_{clock}$ clauses (whether they are $\Pi_{clock,D}$ or $\Pi_{clock,?}$) since these allow both the successor and predecessor to be in the same inactive state. Additionally, the logical qudits of the $\Pi_{prop}$ clauses remain unconstrained since there is not active state to demand otherwise.

    Clearly, if the instance has neither of these two clauses, then either assignment of the clock qudits suffices.
\end{proof}
\noindent

\activetaccmono*

\begin{proof}
    For the sake of contradiction, assume that a satisfying state cannot have clock qudit $c_0$ in the state $\ket{a}$. Let us show that all other assignments of the clock qudits result in either a violation of the assumption or a violation of a clause in the sub-instance.

    To begin, observe that $c_0$ cannot be $\ket{r}$ as this would violate $\Pi_{start}$. $c_0$ must then be permanently fixed to $\ket{d}$. Now, observe that if $\Pi_{prop}^{(c_0, c_1)}$ acts on an undefined logical qudit, this results in a violation of the $\Pi_{clock,?}$ clause within, since $\ket{d_0 \textnormal{x}_1}$ is not in the null space of this clause for any $\textnormal{x} \in \{r,a,d\}$ (see \cref{eqn:nullspacehprop}). Thus, suppose this first $\Pi_{prop}$ clause is well-defined. In this case, the $\Pi_{clock,D}$ clause only allows $c_1$ to be either $\ket{d}$ or $\ket{a}$. If we let $c_1$ be $\ket{a}$, satisfying $\Pi_{prop}^{(c_0, c_1)}$ requires the active spot to shift towards the left (see \cref{eqn:propsat}), in which case $c_0$ becomes $\ket{a}$, contradicting the assumption. Therefore $c_1$ must also be $\ket{d}$. By induction, these steps show that the only state of the clock that can satisfy the $\Pi_{prop}$ (and $\Pi_{clock}$ clauses within) must be $\ket{d_0 d_1 \ldots d_L}$. However, this state violates the $\Pi_{stop}$ clause acting on clock qudit $c_L$. Therefore, there is no state that avoids contradicting the assumption or a violation of one the clauses.
\end{proof}

\uniqueactivetaccmono*

\begin{proof}
    Observe that since all $\Pi_{prop}$ clauses within the chain are oriented towards $c_T$, the only assignment of the clock qudits that agrees with the $\Pi_{clock,D}$ clauses is that where qudits $c_i$ with $i < a$ are in the state $\ket{d}$ and qudits $c_i$ with $i > a$ are in the state $\ket{r}$. Therefore, if $c_b$ is also in the state $\ket{a}$, there must be a violation of a $\Pi_{clock,D}$ clause.
\end{proof}

\truncated*

\begin{proof}
    To show that this state is a satisfying assignment of the sub-instance, we first consider the satisfying assignment of the TACC. Afterwards, we determine which states the other qudits should be in order to satisfy the remaining clauses.

    According to \cref{prop:activetaccmono} and \cref{prop:uniqueactivetaccmono}, if a satisfying assignment of the TACC exists, then it must must contain exactly one clock qudit in an active state at any given time. These conditions mirror those of an active clock chain devoid of undefined clauses, which we know is most likely satisfied by a history state. The only uncertainty is whether this state also satisfies the final $\Pi_{out}$ clauses. For the TACC considered here, there is no such uncertainty as the $\Pi_{out}$ clauses on qudit $c_L$ are never part the chain. Thus, this TACC is satisfied by the history state expressing the computation of circuit $U_{T} \ldots U_1$.

    For the satisfiability of the clauses outside the TACC, observe that the undefined $\Pi_{prop}^{(c_T, c_{T+1})}$ clause can be satisfied alongside the TACC by fixing the state of the successor qudit to $\ket{r}$. Similarly, the $\Pi_{prop}$ clauses acting on the clock qudits that follow ($c_{T+1}, \ldots, c_L$) can also be satisfied by fixing the clock qudits to state $\ket{r}$, regardless of whether they are well-defined or undefined. As a consequence, the clauses outside the TACC do not exert any constraints on the logical qudits. Finally, observe that since the logical qudits in the set $S \setminus D$ do not form part of clauses within the TACC, they may be in any state $\ket{\phi}$. \cref{eqn:truncatedhist} is the state just described.

    The instance is trivially satisfiable because the $\Pi_{out}$ clauses lie outside the TACC and hence do not demand that the logical qudits they act on are in the state $\ket{1}$ at the end of the computation. Consequently, there is no need for a quantum computer to verify whether the circuit produces such results.
\end{proof}

\subsection{Logical Qudits}

\sharedhprop*

\begin{proof}
    To begin, recall that the unique satisfying state of a TACC is the history state corresponding to the circuit expressed by the chain. This state is pure. If we suppose that the chain has length $T \geq 1$, the history state is given by $\ket{\psi_{hist}} = \frac{1}{\sqrt{T+1}} \sum_{t = 0}^T \ket{\psi_t} \otimes \ket{C_t}$, where we use the shorthand $\ket{C_t}$ to represent the legal states of the clock register, and $\ket{\psi_t} := U_t \ldots U_0 \ket{0}^{\otimes q}$. Notice that due to our choice of universal gate set, when a logical qudit is acted on by a unitary for the first time, the state of the qudit necessarily changes. That is, if we suppose a unitary acts on a fresh logical qudit at time $\tau \in [T]$, then $\ket{\psi_{\tau+1}} \neq \ket{\psi_\tau}$, meaning that at time $\tau+1$, the logical and clock register become entangled.

    Now, consider an instance with two disjoint clock components $C_1$ and $C_2$, each with a respective set of logical qudits $L_1$ and $L_2$ with a non-empty intersection $L_3$. Furthermore, suppose that $C_1 \cup L_1$ is the component that surely acts with a unitary on a shared logical qudit. For the sake of contradiction, assume that $C_1 \cup L_1$ and $C_2 \cup L_2$ have satisfying history states. Under these conditions, $L_3$ must be entangled with $C_1$, implying that any subsystem including only one of $L_3$ or $C_1$ will inevitably be impure. $C_2 \cup L_2$ includes $L_3$ but not $C_1$, leading to the conclusion that this solution must be impure. This is a contradiction.
\end{proof}

\mutualboth*

\begin{proof}
    Satisfying the $\Pi_{init}$ clause demands that the logical qudit must be in the state $\ket{0}$ at the beginning of the computation of the first component. As there cannot be any unitaries acting on the qudit, the logical qudit should remain fixed to this state. Therefore, the logical qudit cannot satisfy the $\Pi_{out}$ clause requiring that it is either $\ket{1}$ or $\ket{?}$.
\end{proof}

\mutualsingle*

\begin{proof}
    In the former case, the $\Pi_{init}$ clauses can be satisfied by setting the clock qudit to the state $\ket{0}$. In the latter case, the $\Pi_{out}$ clauses can be satisfied by setting the clock qudit to either $\ket{1}$ or $\ket{?}$.
\end{proof}

\subsection{Simultaneous Propagation}

\simultaneousprop*

\begin{proof}
    First, recall that as shown in \cref{eqn:propsat}, a $\Pi_{prop}$ clause with associated unitary $U$ can be satisfied by three distinct states: either $\ket{\phi} \otimes \ket{rr}$, $\ket{\phi} \otimes \ket{dd}$, or $2^{-1/2}(\ket{\phi} \otimes \ket{a d} + U \ket{\phi} \otimes \ket{da})$. Here, $\ket{\phi}$ is an arbitrary state of the two logical qudits. As we have established in \cref{prop:activetaccmono}, all clock qudits of the TACC must be active at some point in time and so here we must consider this latter case.

    $(\Rightarrow)$: Assume the $k$ $\Pi_{prop}$ clauses are satisfiable. Then, as just mentioned, the $i$-th $\Pi_{prop}$ clause must be satisfied by a state of the form $2^{-1/2}(\ket{\phi_t} \otimes \ket{a_t d_{t+1}} + U_{t+1,i} \ket{\phi_t} \otimes \ket{d_{t} a_{t+1}})$. To simultaneously these clauses, it is then evident that $U_{t+1,i} \ket{\phi_t} = U_{t+1,j} \ket{\phi_t}$ for all $i,j \in [k]$.

    $(\Leftarrow)$: Assume that $U_{t+1,i} \ket{\phi_t} = U_{t+1,j} \ket{\phi_t}$ for all $i,j \in [k]$. In other words, $U_{t+1,i} \ket{\phi_t} = \ket{\phi_{t+1}}$ for all $i \in [k]$. Then, the state $2^{-1/2}(\ket{\phi_t} \otimes \ket{a_t r_{t+1}} + \ket{\phi_{t+1}} \otimes \ket{d_t a_{t+1}})$ satisfies all $k$ $\Pi_{prop}$ clauses.
\end{proof}

\measuringprop*

\begin{proof}
    Before starting the proof, there is a clarification that needs to be made. The claim states that we measure the eigenvalue of projector $\Pi_{prop}$ acting on clock qudits $c_t$ and $c_{t+1}$, yet, it has never been explicitly defined. Really, we mean that we measure the eigenvalue of the $\Pi_{work}$ projector within the $O_{prop}$ operator. This is the only projector of interest as all other projectors have already been checked by the classical algorithm. Indeed, the privileged history state $\ket{\psi_{phist}}$ has a correct assignment of the clock, logical, and endpoint qudits, all of its auxiliary subspaces are entangled accordingly, all of the clauses within the chain are well-defined, and the clock register represents a legal state of the clock at all times. $\Pi_{work}$ is the only term left that has not been verified.

    To simplify things further, observe that since the projector is only concerned with two clock qudits $\Pi_{work} \ket{\psi_{phist}} = \Pi_{work} \ket{\theta_{t,t+1}}$, where $\ket{\theta_{t,t+1}} := 2^{-1/2} (\ket{\phi_t} \otimes \ket{C_t} + U_{t+1,j} \ket{\phi_t} \otimes \ket{C_{t+1}})$. Thus, we only care about measuring the eigenvalue of $\Pi_{work}$ in the state $\ket{\theta_{t,t+1}}$.

    To begin the proof, recall that the probability of measuring eigenvalue $1$ of a projector $\Pi_i$ on a state $\ket{\psi}$ is given by $\bra{\psi} \Pi_i \ket{\psi}$ (see \cref{eqn:probmeasure}). In our case, this probability is given by $\bra{\theta_{t,t+1}} \Pi_{work} \ket{\theta_{t,t+1}}$. First, note that the action of the projector on the state is given by

    \begin{equation*}
        \begin{aligned}
            \Pi_{work} \ket{\theta_{t,t+1}} & = \frac{1}{2\sqrt{2}} \left[\ket{\phi_t} \otimes \ket{C_t} + U_{t+1,0} \ket{\phi_t} \otimes \ket{C_{t+1}} - U_{t+1,j} \ket{\phi_t} \otimes \ket{C_{t+1}} - U_{t+1,j}^\dagger U_{t+1,0} \ket{\phi_t} \otimes \ket{C_t} \right] \\
                                            & = \frac{1}{2\sqrt{2}} \left[ (I - U_{t+1,j}^\dagger U_{t+1,0}) \ket{\phi_t} \otimes \ket{C_t} + (U_{t+1,0} - U_{t+1,j}) \ket{\phi_t} \otimes \ket{C_{t+1}} \right],
        \end{aligned}
    \end{equation*}
    \noindent
    and so the probability of measuring eigenvalue $1$ is
    \begin{equation*}
        \begin{aligned}
            \bra{\theta_{t,t+1}} \Pi_{work} \ket{\theta_{t,t+1}} & = \frac{1}{4} \left[  \bra{\phi_t} I - U_{t+1,j}^\dagger U_{t+1,0} \ket{\phi_t} + \bra{\phi_t} I - U_{t+1,0}^\dagger U_{t+1,j} \ket{\phi_t} \right] \\
                                                                 & = \frac{1}{2} \Re( \bra{\phi_t} I - U_{t+1,j}^\dagger U_{t+1,0} \ket{\phi_t})                                                                       \\
                                                                 & = \frac{1}{2} - \frac{1}{2} \Re(\bra{\phi_t} U_{t+1,j}^\dagger U_{t+1,0} \ket{\phi_t}).
        \end{aligned}
    \end{equation*}
    On the other hand, the action of circuit $\mathcal{C}$ on the state $\ket{\phi_t}$ yields

    \begin{equation*}
        \begin{aligned}
            \mathcal{C} \ket{\phi_t} & = (XH \otimes I) \left[ \frac{\ket{0} \otimes \ket{\phi_t} - \ket{1} \otimes U_{t+1,j}^\dagger U_{t+1,0} \ket{\phi_t}}{\sqrt{2}} \right]                     \\
                                     & = \frac{1}{2}\left[ \ket{1} \otimes (I - U_{t+1,j}^\dagger U_{t+1,0}) \ket{\phi_t} + \ket{0} \otimes (I + U_{t+1,j}^\dagger U_{t+1,0}) \ket{\phi_t} \right],
        \end{aligned}
    \end{equation*}
    \noindent
    from which it follows that the probability of measuring outcome ``1'' is

    \begin{equation*}
        \begin{aligned}
            \Pr(\textnormal{outcome 1}) & = \bra{\phi_t} \mathcal{C}^\dagger (\ket{1} \! \bra{1} \otimes I) \mathcal{C} \ket{\phi_t}                                                    \\
                                        & = \frac{1}{4} \left[ \bra{\phi_t} (I - U_{t+1,0}^\dagger U_{t+1,j}) (I - U_{t+1,j}^\dagger U_{t+1,0}) \ket{\phi_t} \right]                    \\
                                        & = \frac{1}{4} \left[  2 - \bra{\phi_t}U_{t+1,0}^\dagger U_{t+1,j}\ket{\phi_t} - \bra{\phi_t} U_{t+1,j}^\dagger U_{t+1,0} \ket{\phi_t} \right] \\
                                        & = \frac{1}{2} - \frac{1}{2} \Re(\bra{\phi_t} U_{t+1,j}^\dagger U_{t+1,0} \ket{\phi_t}).
        \end{aligned}
    \end{equation*}
    \noindent
    Since both methods produce the same statistics, we can say that they are equivalent.

    Although the Hadamard gates originally belong to $\mathcal{G}$, the controlled version of $U_{t+1,0}$, $U_{t+1,j}$ and the Pauli-X gate do not. However, they can be decomposed perfectly into gates of this set. Indeed, observe that in the computational basis, the $3$- or $2$-local controlled unitaries can be written as $\ket{0} \! \bra{0} \otimes I + \ket{1} \! \bra{1} \otimes U$, and the $X$ gate as $\ket{0}\!\bra{1} + \ket{1}\!\bra{0}$. In this form it becomes clear that the matrix elements are of the form $\frac{1}{2^s}(a + ib + \sqrt{2}c + i\sqrt{2}d)$ with $s \in \mathbb{N}$ and $a,b,c,d \in \mathbb{Z}$. Therefore, by the result of Giles and Selinger \cite{giles2013exact}, these gates can be implemented perfectly with a constant number of gates from the set $\mathcal{G}_8 = \{H,T,\textnormal{CNOT}\}$ if given access to an additional ancilla qubit.
\end{proof}

\section{Proofs of Lemmas: Section~\ref{section:our}} \label[appendix]{appendix:nomonogamy}

\singletypequdit*

\begin{proof}
    The proof of \cref{lemma:singletypequditmono} is still valid for this statement if we simply omit $\Pi_E$.
\end{proof}

\noendpoints*

\begin{proof}
    Since the proof of \cref{lemma:noendpointsmono} is agnostic to the direction of the $\Pi_{prop}$ clauses, it is still valid for this statement.
\end{proof}

\activetacc*

\begin{proof}
    The proof is similar to that of \cref{prop:activetaccmono} except that here the $\Pi_{prop}$ clauses within the chain may point in different directions, and there might be other $\Pi_{init}$ clauses in the chain. As we will see, the latter does not influence the arguments below in any way.

    For the sake of contradiction, assume the satisfying state cannot have clock qudit $c_0$ in the state $\ket{a}$. We now show that all other assignments of the clock qudits result in either a violation of the assumption or a violation of a clause in the sub-instance.

    To begin, observe that $c_0$ cannot be $\ket{r}$ as this would violate $\Pi_{start}$. Thus, $c_0$ must be permanently fixed to $\ket{d}$. Next, let us show that under this assumption, there can be no undefined clause.

    Suppose that $\Pi_{prop}^{(c_t, c_{t+1})}$ with $0 \leq t < T$ is the first undefined clause of the chain and points in an arbitrary direction. As we have just established, if $c_t = c_0$, it cannot be in the active state. Then, for a different $c_t$, there must be a chain of clock qudits $c_0, \ldots, c_{t}$ connected by well-defined $\Pi_{prop}$ clauses. Observe that if any of these qudits are in an active state, satisfying the $\Pi_{prop}$ clauses (regardless of their direction) requires this state to shift towards the left until $c_0$ becomes $\ket{a}$, contradicting the initial assumption. This chain hence cannot have any active states, and moreover they must all be $\ket{d}$ in order to satisfy the $\Pi_{clock,D}$ clauses. However, this assignment includes $c_t$, and as seen in \cref{eqn:nullspacehprop}, $\ket{d_t \textnormal{x}_{t+1}}$ does not satisfy $\Pi_{clock,?}^{(c_t,c_{t+1})}$ for any state of the qudit $c_{t+1}$. This shows that $c_0, \ldots, c_L$ cannot have an undefined clause.

    Then, suppose the chain only contains well defined clauses. Given that the ACC has at least one $\Pi_{init}$ and a one $\Pi_{out}$ clause, there must be an active state somewhere. Observe that because the clauses are well-defined, to satisfy the $\Pi_{prop}$ clauses, this state is again free to shift towards the left until $c_0$ becomes active. Therefore, we can conclude that the assumption cannot hold.

    We remark that the extra $\Pi_{init}$ clauses in the interior of the chain can only set the constraints that their clock qudits cannot be in state $\ket{r}$. This does not influence any of the arguments above.
\end{proof}

\hinitdirections*

\begin{proof}
    To show this, first recall that by \cref{prop:activetacc} this clock qudit must be in state $\ket{a}$ at some point in time. Then, as shown in \cref{eqn:nullspacehprop}, there is no state that satisfies an undefined clause if the successor qudit is in state $\ket{a}$.

    On the other hand, if the $\Pi_{prop}$ clause is well-defined, satisfying this clause requires the active state to shift towards the right. Then, to keep a valid state of the clock, $c_0$ must become $\ket{r}$ which is forbidden by the $\Pi_{start}$ clause.
\end{proof}

\begin{prop} \label{prop:uniquesucc}
    Let $c_0, \ldots, c_T$ with $1 < T \leq L$ be the clock qudits of a TACC of non-zero length. If a qudit $c_j$ with $0 < j < T$ has more than one well-defined successor or predecessor, the instance is unsatisfiable.
\end{prop}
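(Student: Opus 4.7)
The plan is to derive a contradiction via a clash between the propagation structure of the two well-defined $\Pi_{prop}$ clauses incident to $c_j$ and the clock constraints of those same clauses. I assume for contradiction that the instance is satisfied by some state $\ket{\psi}$, and that $c_j$ has two distinct well-defined successors $c_a, c_b$; the predecessor case is handled by the same argument after renaming the relevant legal clock pairs.

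The first step is to establish that $c_j = \ket{a}$ in at least one basis state of $\ket{\psi}$. By \cref{prop:activetacc}, $c_0$ is active in some basis state. Each well-defined $\Pi_{prop}$ clause along the TACC propagates this property to its other endpoint: the propagation mode $\bigl(\ket{\phi}\otimes\ket{ar} + U\ket{\phi}\otimes\ket{da}\bigr)/\sqrt{2}$ pairs the amplitudes carrying the active label on its two clock sides regardless of arrow direction, so an inductive walk along the TACC from $c_0$ to $c_j$ yields a basis state of $\ket{\psi}$ in which $c_j$ is active.

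The second step is the heart of the argument. Pick any basis state with $c_j = \ket{a}$; the only legal pair of each well-defined clause with $c_j = \ket{a}$ is $(a,r)$, so $c_a = c_b = \ket{r}$ in this basis state. Because $\Pi_{prop}^{(c_j,c_a)}$ does not touch $c_b$, its null space decomposes as a propagation mode on $(c_j,c_a,L_1)$ tensored with an arbitrary state of $c_b$ and the remaining qudits; the $\ket{ar}$ and $\ket{da}$ halves of this mode therefore come with the same state on $c_b$. Consequently, the nonzero amplitude on the clock triple $(c_j,c_a,c_b) = (a,r,r)$ is paired with an equally nonzero amplitude on $(c_j,c_a,c_b) = (d,a,r)$. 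But $(c_j,c_b) = (d,r)$ is not an allowed clock pair for the null space of the well-defined clause $\Pi_{prop}^{(c_j,c_b)}$ (which admits only $(r,r), (d,d), (a,r),$ and $(d,a)$), so the presence of this amplitude contradicts the satisfaction of $\Pi_{prop}^{(c_j,c_b)}$.

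The main obstacle I anticipate is the first step: showing that $c_j$ is active in some basis state is subtle because the proposition precedes \cref{lemma:direction}, so the propagation clauses along the TACC may a priori point in either direction. However, the null space of a well-defined clause is symmetric between its two clock endpoints with respect to carrying the active label, which makes the inductive transport of the active state along the TACC work uniformly in both orientations.
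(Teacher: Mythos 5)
Your proof is correct and follows the same approach as the paper's: find a basis state with $c_j$ active, conclude both well-defined neighbors are $\ket{r}$ (resp.\ $\ket{d}$ in the predecessor case), use the propagation pairing of one clause to produce a term with $(c_j,c_b)=(d,r)$ (resp.\ $(r,d)$), and observe that this violates the $\Pi_{clock,D}$ component of the other clause. Your first step is in fact slightly more careful than the paper's, which simply asserts that $c_j$ is active at some time because it belongs to a TACC; your direction-agnostic argument via the $\ket{ar}\leftrightarrow\ket{da}$ pairing correctly supplies that justification, which matters here since \cref{lemma:direction} is not yet available.
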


\begin{proof}
    \cref{fig:uniquesucc} illustrates these two cases. Since $c_j$ is part of a TACC, there will be a time where it is  $\ket{a}$. Without loss of generality, let us assume that this is the initial state of the clock. Then, in the case where the qudit has two successors, the only valid starting state of the clock is $\ket{\ldots r a r \ldots }$. Then, observe that the state that satisfies $\Pi_{prop}^{(1)}$, given by $\ket{\dots r a r \ldots } + \ket{\ldots a d r \ldots }$, violates $\Pi_{clock}^{(2)}$ since the second basis state has a $\ket{d}$ preceded by $\ket{r}$. For the case where the qudit has two predecessors, the valid starting state of the clock is $\ket{\dots d a d \ldots}$. In a similar way as before, the state that satisfies $\Pi_{prop}^{(1)}$ is $\ket{\dots d a d \ldots} + \ket{\ldots a r d \ldots}$, which has a troublesome term where $\ket{r}$ is preceded by $\ket{d}$.
\end{proof}

\singlewelldefined*

\begin{proof}
    \cref{prop:activetacc} states that this clock qudit must be in the state $\ket{a}$ at some point in time. Satisfying the well-defined $\Pi_{prop}$ clause then requires the clock qudit to become $\ket{d}$. If we suppose the second $\Pi_{prop}$ clause is undefined, then this already presents a violation of the undefined clause since as can be seen in \cref{eqn:nullspacehprop}, there is no satisfying state where the clock qudit is in this state. On the other hand, if the second $\Pi_{prop}$ clause is also well-defined, the clock qudit has two successors and can be shown to be unsatisfiable by \cref{prop:uniquesucc}.
\end{proof}

\uniqueactivetacc*

\begin{proof}
    Depending on how the $\Pi_{prop}$ clauses are oriented in the instance, there are many cases in which such a state already violates numerous $\Pi_{clock}$ clauses. Let us assume that the $\Pi_{prop}$ clauses are oriented in such a way that the state does not immediately violate any $\Pi_{clock}$ clause.\footnote{An example of such instances can be seen in \cref{fig:uniquesucc}. Suppose the TACC only consists of these three clock qudits and let the left-most and right-most qudits be in the state $\ket{a}$. Then, $\ket{ada}$ (for the left image) and $\ket{ara}$ (for the right image) do not immediately violate any $\Pi_{clock}$ clause of the instance.} Then, given that the $\Pi_{prop}$ clauses within the chain are well-defined, satisfying the $\Pi_{prop}$ clauses in between qudits $c_a$ and $c_b$ by shifting the left active spot towards $c_b$ implies that at some moment, we must reach a basis state of the form $\ket{\ldots a_{b-1} a_b \ldots}$. This state violates $\Pi_{clock}^{(c_{b-1}, c_b)}$.
\end{proof}

\begin{figure}
    \centering
    \includegraphics[width=0.9\textwidth]{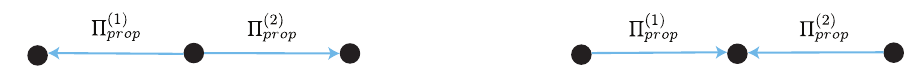}
    \caption{Left: A clock qudit in a chain with two successors. Right: A clock qudit in a chain with two predecessors.}
    \label{fig:uniquesucc}
\end{figure}

\linearchain*

\begin{proof}
    Since $c_t$ represents a qudit within the TACC that is not situated at the ends of the chain, it must be connected to two other qudits via well-defined $\Pi_{prop}$ clauses. As stated by the previous lemma, one of these qudits must serve as the predecessor, and the other as the successor. Then, depending on the direction of the third clause, $c_t$ results in having either two successors or two predecessors. To show that this unsatisfiable, we must consider the cases where the clause could be well-defined or undefined. \cref{prop:uniquesucc} states that the former case is unsatisfiable. The latter requires a deeper analysis.

    Let $c_{t'}$ refer to the other clock qudit of this undefined clause. Without loss of generality, assume that the starting state of the clock is when $c_{t+1}$ is in state $\ket{a}$. Then, the state of the clock that satisfies the $\Pi_{clock}$ clauses is $\ket{ \ldots d_{t} a_{t+1} r_{t+1} \ldots \textnormal{x}_{t'}}$ where we have appended the state of qudit $c_{t'}$ at the end of the clock register. This already poses a problem since $\ket{d_{t} \textnormal{x}_{t'}}$ is not in the null space of the undefined $\Pi_{prop}^{(c_t,c_{t'})}$ clause for any $\textnormal{x} \in \{r,a,d\}$.
\end{proof}

\direction*

\begin{proof}
    To begin, let us assume that $\Pi_{prop}^{(c_0,c_1)}$ is one of the clauses that points towards $c_0$. As shown in \cref{prop:activetacc}, $c_0$ must be $\ket{a}$ at some point in time, meaning that $\ket{ad\ldots}$ must be a basis state in the superposition. Satisfying this clause requires the active state to shift towards the right, implying that $c_0$ must become $\ket{r}$ in order to maintain a valid state of the clock. This clock state violates the $\Pi_{start}$ clause. Therefore, $\Pi_{prop}^{(c_0,c_1)}$ clause must point away from $c_0$.

    Now, suppose that $\Pi_{prop}^{(c_t,c_{t+1})}$ with $1 < t \leq T$ is the first clause that points towards $c_0$.
    Observe that since $\Pi_{prop}^{(c_0,c_1)}$ must point away from $c_0$, and $\Pi_{prop}^{(c_t,c_{t+1})}$ is the first clause of the chain pointing towards this qudit, $c_t$ must have two predecessors. By \cref{prop:uniquesucc}, the instance is then unsatisfiable.
\end{proof}

\outsidetacc*

\begin{proof}
    Case (1) is trivial. By assumption, $c_0$ is present in a well-defined clause of the TACC that points away from it. This is the same setting as in \cref{lemma:singlewelldefined}, which concludes that the instance is unsatisfiable.

    Consider case (2) and let $c_{T+1}$ be the other qudit of the external $\Pi_{prop}$ clause. Here, we have to show that for both possible directions of the clause, the instance is unsatisfiable. (I) Suppose that $\Pi_{prop}^{(c_T,c_{T+1})}$ is well-defined and points towards $c_T$. Recall that since the clauses of the TACC are all well-defined, we can assume that the clock begins in state $\ket{d_0 \ldots d_{T-1} a_T d_{T+1}}$. Then, regressing this clock to satisfy $\Pi_{prop}^{(c_{T-1}, c_T)}$ yields $\ket{d_0 \ldots d_{T-2} a_{T-1} r_T d_{T+1}}$, which violates $\Pi_{clock}$ as there is a $\ket{r}$ preceded by a $\ket{d}$. (II) If the clause points the other way, the state that satisfies $\Pi_{clock}^{(c_T, c_{T+1})}$ becomes $\ket{d_0 \ldots d_{T-1} d_T a_{T+1}}$, which violates the $\Pi_{stop}$ clause on $c_T$.

    Consider case (3). \cref{prop:activetacc} and \cref{prop:uniqueactivetacc} established that exactly one of the qudits in the TACC must be in an active state. Then, given that all $\Pi_{prop}$ clauses within the TACC are well-defined, satisfying these clauses requires the active state to switch places towards the right until $c_T$ eventually becomes $\ket{a}$. As shown in \cref{eqn:nullspacehprop}, there is no state that satisfies an undefined clause if the successor qudit is in state $\ket{a}$. Therefore, the instance must be unsatisfiable.
\end{proof}

\outsidedot*

\begin{proof}
    The first case was already covered in \cref{lemma:hinitdirections}. For the second case, recall that $c_d$ must be $\ket{a}$ at all times. Satisfying the $\Pi_{prop}$ clause requires the active state to shift towards the other clock qudit, leaving $c_d$ to be in the state $\ket{d}$. If the active dot occurs because there is both a $\Pi_{init}$ and $\Pi_{out}$ clause acting on the qudit, having $c_d$ as $\ket{d}$ causes a violation of the $\Pi_{stop}$ clause. On the other hand, if the active dot occurs because the ACC is immediately truncated by an undefined clause, then this undefined $\Pi_{prop}$ clause is infringed since, as shown in \cref{eqn:nullspacehprop}, there is no state in the null space of the clause that has a $\ket{d}$ in the clock register.
\end{proof}

\section{Choice in Logical Qudits?} \label[appendix]{appendix:choice}

Now, why should we set the state of logical qudits in the instance since the very beginning (globally) and not for each individual sub-instance? To observe why this is, consider an instance where two clock components $A$ and $B$ share a single logical qudit $l_x$. Suppose $A$ acts on $l_x$ only with a $\Pi_{init}$ clause. Then, suppose that $B$ does not act on it with a $\Pi_{init}$ clause, but acts on it with a $\Pi_{prop}$ clauses stemming from its ACC. Let us look at the instance from the point of view of the individual clock components (unaware that $l_x$ is also shared with another system). For component $A$, \cref{prop:activetaccmono} demonstrates that this qudit must be in the state $\ket{0}$ at some point in time. Moreover, as there are no other clauses that change its state, we conclude that it must be fixed to $\ket{0}$. On the other hand, for component $B$, this qudit is unconstrained so it could be anything. Because of this freedom, it labels it as $\ket{?}$ so the component is trivial. Then, in a future step of the analysis of the instance where the shared logical qudits are analyzed, it becomes clear that there is an apparent conflict with this qudit as it cannot be in both states. As system $A$ actually constrains this qudit, to satisfy this constraint, this qudit should be set to $\ket{0}$. This is more complicated than if we had originally considered both components as one and determined that the logical qudit should be set to $\ket{0}$ and then proceeded to evaluate the individual components.
\end{appendices}

\end{document}